\newtheorem{Lemma}{Lemma}
\newtheorem{Theorem}{Theorem}
\newtheorem{Definition}{Definition}
\newtheorem{Proposition}{Proposition}
\newtheorem{Remark}{Remark}
\newtheorem{Assumption}{Assumption}
\newtheorem{Corollary}{Corollary}
\DeclareMathOperator*{\E}{\mathbb{E}}
\let\Pr\relax
\DeclareMathOperator*{\Pr}{\mathbb{P}}
\DeclareMathOperator*{\Bc}{\Bigm|}
\DeclareMathOperator*{\A}{\mathcal{A}}
\DeclareMathOperator*{\X}{\mathcal{X}}
\newcommand{\norm}[1]{\|#1\|}
\DeclareMathOperator*{\R}{\mathbb{R}}
\DeclareMathOperator*{\argmax}{arg\,max}
\DeclareMathOperator*{\argmin}{arg\,min}
\title{Bayesian Learning of Optimal Policies in Markov Decision Processes with Countably Infinite State Space}
\author{Saghar Adler \\ University of Michigan 
	 \and Vijay Subramanian \\ University of Michigan}
\begin{document}

 	\maketitle

 	\begin{abstract}
Models of many real-life applications, such as queueing models of communication networks  
or computing systems, have a countably infinite state-space.
Algorithmic and learning procedures that have been developed to produce optimal policies mainly focus on finite state settings, and do not directly apply to these models. 
 To overcome this lacuna, in this work we study the problem of optimal control of a family of discrete-time countable state-space Markov Decision Processes (MDPs) governed by an unknown parameter $\theta\in\Theta$, 
 and defined on a countably-infinite state-space $\X=\mathbb{Z}_+^d$, with finite action space $\A$, and an unbounded cost function. 
 We take a Bayesian perspective with the random unknown parameter $\boldsymbol{\theta}^*$ generated via a given fixed prior distribution on $\Theta$. 
 To optimally control the unknown MDP, we propose an algorithm based on Thompson sampling with dynamically-sized episodes: 
 at the beginning of each episode, the posterior distribution formed via Bayes' rule is used to produce a parameter estimate, which then decides the policy applied during the episode. 
 To ensure the stability of the Markov chain obtained by following the policy chosen for each parameter,
 we impose ergodicity assumptions. 
 From this condition and using the solution of the average cost Bellman equation, we establish an $\tilde O(dh^d\sqrt{|\mathcal A|T})$ upper bound on the Bayesian regret of our  algorithm, where $T$ is the time-horizon.
    Finally, to elucidate the applicability of our algorithm,
    we consider two different queueing models with unknown dynamics, and show that our algorithm can be applied to develop approximately optimal control algorithms.
  \end{abstract}

\newpage
\tableofcontents
\newpage

\section{Introduction}

Many real-life applications, such as communication networks, supply chains, semiconductor manufacturing systems, and computing systems, are modeled using queueing models with countably infinite state space. In the existing queueing theoretic analysis of these systems, the models are assumed to be known, but despite this, developing optimal control schemes is hard, with only a handful of examples worked out in detail; see \cite{kumar2015stochastic,arapostathis1993discrete,stidham1993survey}. 
  Nevertheless, knowing the model, algorithmic procedures exist to produce approximately optimal policies (using ideas from value iteration, linear programming, and others); see \cite{kumar2015stochastic}. Given the success of data-driven optimal control design, in particular Reinforcement Learning (RL), we aim to explore the use of such methodologies for the countable state space controlled Markov processes, i.e., Markov Decision Processes (MDPs). 
  However, current RL and other data-driven optimal control methodologies mainly focus on finite-state settings or stylized models such as linear MDPs for general state spaces, and therefore do not directly apply to the types of models and applications discussed above. 
  
  With the model unknown, our goal is to develop a meta-learning scheme that is RL-based but obtains good performance by taking advantage of algorithms developed when models are known. Specifically, we study the problem of optimal control of a family of discrete-time countable state space MDPs governed by an unknown parameter $\theta$ from a  general parameter space $\Theta$ with each MDP evolving on a common countably-infinite state space $\mathcal{X}=\mathbb{Z}_+^d$ and finite action space $\A$. 
	The cost function is unbounded and polynomially dependent on the state, similar to the examples encountered in practice aimed at minimizing customers' waiting time in queueing systems. Taking a  Bayesian view of the problem, we assume that the model is governed by an unknown parameter $\boldsymbol{\theta}^* \in \Theta$ generated from a fixed and known prior distribution. 
	We aim to learn a policy $\pi: \mathcal X\rightarrow \A$ that minimizes the optimal infinite-horizon average cost over a given class of policies $\Pi$ 
with low Bayesian regret with respect to the (parameter-dependent) optimal algorithm in $\Pi$.

	To avoid many technical difficulties in countably infinite state space settings, it is crucial to establish certain assumptions regarding the class of models from which the unknown system is drawn. Some examples of these challenges are: i) the number of deterministic stationary policies is not finite; and ii) in average cost optimal control problems, without stability/ergodicity assumptions, 
 an optimal policy may not exist \cite{maitra1963dynamic}, and when it exists, it may not be stationary or deterministic \cite{fisher1968example}. With these in mind, we assume that for any state-action pair, the transition kernels in the model class are categorical and skip-free to the right, i.e., with finite support with a bound depending on the state only in an additive manner; both are common features of queueing models where an increase in state is due to arrivals (with only a finite number of arrivals possible at any arrival instance). 
 A second set of assumptions ensure stability by assuming that the Markov chains obtained by using different policies in $\Pi$ are geometrically ergodic with uniformity across $\theta\in\Theta$. {In contrast to finite-state  MDPs, where stability (i.e., positive recurrence or existence of a stationary distribution) and ergodicity are assured in a simple manner---via irreducibility  or the existence of a single recurrent class, and aperiodicity---, the countable state space setting needs additional conditions to ensure that the  Markov process resulting from following a stationary policy $\pi \in \Pi$ is positive recurrent or ergodic. 
Furthermore, recovering after either using an unstable policy or starting in a transient state can be problematic---for example, with a countable set of transient states, the expected time to exit this set can be infinite. See \cite{sennott1989average,cavazos1992comparing} for more discussion on the importance of stability.}
 From these assumptions, moments on hitting times are derived in terms of Lyapunov functions for polynomial ergodicity, which exist due to geometric ergodicity. 
 These assumptions also yield a solution to the average cost optimality equation (ACOE) \cite{arapostathis1993discrete}, and also provide a characterization of this solution, both of which are needed for our analysis. 
		
	\textbf{Contributions:} To optimally control the unknown MDP, in  \Cref{sec:alg_TSDE}, we propose an algorithm  based on Thompson sampling with dynamically-sized episodes; posterior sampling is used based on its broad applicability and computational efficiency \cite{osband2013more,osband2017posterior}.
  At the beginning of each episode, a posterior distribution is formed using Bayes' rule, and an estimate is realized from this distribution which then determines the policy used throughout the episode. To evaluate the performance of our proposed algorithm, we use the metric of Bayesian regret, which compares the expected total cost achieved by a learning policy $\pi_L$ until time horizon $T$ with the policy that achieves the optimal infinite-horizon average cost in the policy class $\Pi$. 
 {We consider three different settings as follows and provide regret guarantees for each case: 
 \begin{enumerate}[leftmargin=*]
     \item In \Cref{thm:main}, for $\Pi$ being the set of all policies and assuming that we have oracle access to the optimal policy for each parameter,  we establish an  $\tilde O(dh^d\sqrt{|\mathcal A|T})$  upper bound on the Bayesian regret of our proposed algorithm compared to the optimal policy.
     \item In \Cref{cor:main}, where class $\Pi$ is a subset of all stationary policies and where we know the best policy within this subset for each parameter via an oracle,  we prove an  $\tilde O(dh^d\sqrt{|\mathcal A|T})$  upper bound on the Bayesian regret of our proposed algorithm, relative to the best-in-class policy.
     \item In \Cref{thm:Q2_eps_optimal}, we explore a scenario where we have access to an approximately optimal policy, rather than the optimal policy in policy class $\Pi$ (which are all assumed to be stationary policies too). When
    these approximately optimal policies satisfy Assumptions \ref{assump:geom_erg} and \ref{assump:poly_erg}, we again show an $\tilde O(dh^d\sqrt{|\mathcal A|T})$ regret bound for \Cref{alg:TSDE}.
 \end{enumerate}} 

Finally, to provide examples of our framework, we consider two different queueing models that meet our technical conditions, showing the applicability of our algorithm in developing approximately optimal control algorithms for stochastic systems with unknown dynamics. The first example discussed in \Cref{sec:evaluation} and shown in \Cref{fig:QS1}, is a continuous-time queueing system with two heterogeneous servers with unknown service rates and a common infinite buffer, with the decision being the use of the slower server. Here, the optimal policy that minimizes the average waiting time is a threshold policy~\cite{lin1984optimal} which yields a queue length after which the slower server is always used. The second model detailed in \Cref{sec:evaluation} and shown in \Cref{fig:QS2}, is a two-server queueing system, each with separate infinite buffers, to one of which a dispatcher routes an incoming arrival. Here, the optimal policy to minimize the waiting time is unknown for general parameter values, so we aim to find the best policy within a commonly used set of policies that assign the arrival to the queue with minimum weighted queue length; these are Max-Weight policies~\cite{tassiulas1992jointly,tassiulas1993dynamic}. For both models, we verify our assumptions for the class of optimal/best-in-class  policies corresponding to different service rates and conclude that our proposed algorithm can be used to learn the optimal/best-in-class policy. 

\textbf{Related Work:} Thompson sampling~\cite{thompson1933likelihood}, also called posterior sampling, has been widely applied in the field of RL to various contexts involving unknown MDPs~\cite{strens2000bayesian,ortega2010minimum} and also partially observed MDPs~\cite{jahromi2022online};  for a comprehensive survey
refer to tutorials~\cite{ghavamzadeh2015bayesian,russo2018tutorial}. It has often been used in the parametric learning context~\cite{agrawal2019learning} with the goal of minimizing either Bayesian~\cite{osband2013more,osband2017posterior,ouyang2017learning,abbasi2014bayesian,theocharous2017posterior,theocharous2018scalar} or frequentist~\cite{agrawal2017optimistic,gopalan2015thompson} regret. The bulk of the literature, including \cite{agrawal2017optimistic,gopalan2015thompson,ouyang2017learning}, analyzes finite-state and finite-action models but with different parameterizations such that a general dependence of the models on the parameters is allowed. The work in \cite{theocharous2018scalar} studies general state space MDPs but with a scalar parameterization with a Lipschitz dependence of the underlying models. Our problem formulation specifically considers countable state space models with the dependence between the models allowed to be fairly general but related via ergodicity, which we believe is a natural choice. Our focus on parametric learning is also connected to older work in adaptive control~\cite{AgrawalTeneketAnant1989,graves1997asymptotically}, which investigate asymptotically optimal learning for general parameter settings but with either a finite or countably infinite number of policies.  

 The study of learning-based asymptotically optimal control in queues has a long history~\cite{lai1995machine,kumar2015stochastic}, but more recently, there has been a line of research that also characterizes finite-time regret performance with respect to a well-known good policy or sometimes the optimal policy; see \cite{walton2021learning} for a survey. Notably, several works have studied learning with Max-Weight policies to achieve stability and linear regret~\cite{neely2012max,krishnasamy2018augmenting} or stability alone ~\cite{yang2023learning}. 
{Moreover, an example of a work that utilizes Lyapunov function-based arguments is \cite{dai2022queueing}, which studies the problem of finding the optimal parameterized average cost policy in countable-state MDPs with \textbf{known} transition kernels. The authors impose geometric ergodicity for the entire class of parameterized policies and utilize Lyapunov function-based arguments to analyze their proposed Proximal Policy Optimization (PPO) policy's performance.}
 In a finite or countable state space setting of specific queueing models where the parameters can be estimated, several works~\cite{adler2022learning,cohen2024learning,stahlbuhk2020,Krishnasamy2021,cmu1,choudhury2021,freund2022efficient} have employed forced exploration, resulting in regret that is either constant or scaling as the logarithm of the time horizon. We extend the scope of these prior works to a more general problem setting, allowing our analysis to be applicable to a broader class of queueing models. 

 {Another related line of work studies the problem of learning the optimal policy in an undiscounted finite-horizon MDP with a bounded reward function. The authors of \cite{zanette2020frequentist} use a Thompson sampling-based learning algorithm with linear value function approximation to study an MDP with a bounded reward function in a finite-horizon setting. 
 Reference \cite{chowdhury2021reinforcement} considers an episodic finite-horizon MDP with known bounded rewards but unknown transition kernels modeled using linearly parameterized exponential families with unknown parameters. A maximum likelihood-based algorithm coupled with exploration done by constructing high probability confidence sets around the maximum likelihood estimate is used to learn the unknown parameters. 
In another work, \cite{ouhamma2023bilinear} extends the problem setting of \cite{chowdhury2021reinforcement} to an episodic finite-horizon MDP with unknown rewards and transitions modeled using parametric bilinear exponential families. To learn the unknown parameters, they use a maximum likelihood-based algorithm with exploration done with explicit perturbation. To compare these works with our problem, we first note that
all mentioned works consider a finite-horizon problem. In contrast, our work considers an average cost problem, which is an infinite-horizon setting, and provides finite-time performance guarantees.
In addition, these works focus on an MDP with a bounded reward function.
Our focus, however, is learning in MDPs with unbounded rewards with the goal of covering practical examples encountered in queueing systems. We also note that the parameterization of transition kernels used in \cite{ouhamma2023bilinear,chowdhury2021reinforcement} can be used within the framework of our problem. However, similar to our work, additional assumptions---importantly, stability conditions proposed in our problem---are necessary to guarantee asymptotic learning and sub-linear regret. As there aren't general necessary and sufficient conditions on the parameters to ensure stability, posterior updates can be complicated with this parameterization. 
Another issue with exponential families of transition kernels is that they do not allow for $0$ entries (except through parameters increasing without bound), which will not be directly applicable to queueing models such as our examples.

In another work, \cite{shah2020stable} studies discounted MDPs with unknown dynamics, and unbounded state space, but with bounded rewards, and learns an online policy that satisfies a specific notion of stability. It is also assumed that a Lyapunov function ensuring stability for the optimal policy exists. We note that
\cite{shah2020stable} ignores optimality and focuses on finding a stable policy, which contrasts with our work that evaluates performance relative to the optimal policy. Secondly, \cite{shah2020stable} considers a discounted reward problem, essentially a finite-time horizon problem (given the geometrically distributed lifetime). Average cost/reward problems (as studied by us) are infinite-time horizon problems, so connections to discounted problems can only be made in the limit of the discount parameter going to $1$ and after normalizing the total discounted reward by $1$ minus the discount parameter. 
Moreover, \cite{shah2020stable} considers a bounded reward function, simplifying their analysis but which is not a practical assumption for many queueing examples. Further, the assumption of a stable optimal policy with a Lyapunov function (as in \cite{shah2020stable}) is highly restrictive for bounded reward settings with discounting. For example, if the rewards increase to a bounded value as the state goes to infinity, then the stationary optimal policy (if it exists) will likely be unstable as the goal will be to increase the state as much as possible. 
Additionally, with bounded cost/rewards average cost problems need strong state-independent recurrence conditions for the existence of (stationary) optimal solutions, which many queueing examples don't satisfy; see \cite{cavazos1989necessary}. Further complications can also arise with bounded costs: e.g., \cite{fisher1968example} shows that a stationary average cost optimal policy may not exist. 
 }

\section{Problem formulation}
\label{subsec:Preliminaries}

We consider a family of discrete-time Markov Decision Processes (MDPs) governed by parameter $\theta\in\Theta$ with the MDP for parameter $\theta$ described by $\left(\mathcal{X},\mathcal{A},c,P_{\theta}\right)$. For exposition purposes, we assume that all the MDPs evolve on (a common) countably infinite state space $\mathcal{X}=\mathbb{Z}_+^d$. We denote the finite action space by  $\mathcal{A}$, the transition kernel by $P_{\theta}:\X \times \A \rightarrow \Delta(\X)$, and the cost function by  $c:\mathcal{X} \times \mathcal{A} \rightarrow \mathbb{R}_+$. 
As mentioned earlier, we will take a Bayesian view of the problem and assume that the model is generated using an unknown parameter $\boldsymbol{\theta}^*\in \Theta$, which is generated from a given fixed prior distribution $\nu(\cdot)$ on $\Theta$. Our goal is to find a policy $\pi: \mathcal X\rightarrow \A$ that tries to achieve Bayesian optimal performance in policy class $\Pi$, i.e., minimizes the expected regret with $\boldsymbol{\theta}^*$ chosen from the prior distribution $\nu(\cdot)$. For each value $\theta\in\Theta$, the minimum infinite-horizon average cost is defined as 
\begin{equation}\label{eq:average_workload}
J(\theta)=\inf_{\pi\in\Pi} \limsup_{T\rightarrow\infty} \frac{1}{T}  \E \Big[ \sum_{t=1}^{T} c(\boldsymbol X(t),A(t))\Big],
\end{equation}
where we optimize over a given class of policies $\Pi$ and $\boldsymbol X(t)=(X_1(t),\ldots,X_d(t)) \in \mathcal X$ and $A(t) \in \A$ are the state and action at $t \in \mathbb N$. Typically, we set this class to be all (causal) policies, but it is also possible to consider $\Pi$ to be a proper subset of all policies as we will explore in our results. For a learning policy $\pi_L$ that aims to select the optimal control 
without knowledge of the underlying model 
but with knowledge of $\Theta$ and the prior $\nu$, the Bayesian regret until time horizon $T \geq 2$ is defined as 
\begin{equation}\label{eq:regret_def}
	R(T,\pi_L) = \E\Big[ \sum_{t=1}^{T} \Big[ c(\boldsymbol X(t),A(t))  - J(\boldsymbol{\theta}^*)\Big] \Big],
\end{equation}
where the expectation is taken over $\boldsymbol{\theta}^* \sim \nu$ and the dynamics induced by $\pi_L$.
Owing to underlying challenges in countable state space MDPs, we require the below assumptions on the cost function. 

\begin{Assumption}\label{assump:costfn}
The cost function $c:\mathcal{X} \times \mathcal{A} \rightarrow \mathbb{R}_+$ is assumed to satisfy the following two conditions:
\begin{enumerate}[leftmargin=*]
\item 
For every state-action pair $(\boldsymbol x,a) \in \mathcal{X} \times \mathcal{A}$ and every $z\geq 0$, we assume that the cost function $c( \boldsymbol x,a)$ is greater than or equal to $z$ outside a finite subset of $\mathcal{X}$ (which can depend on $z$).  
\item 
The cost function is upper-bounded by a multivariate polynomial $f_c:\mathbb{Z}_+^d \rightarrow \mathbb{R}_+$ which is increasing in every component of $\boldsymbol{x}\in \mathbb{Z}_+^d$ and has maximum degree of $r\; (\geq 1)$ in any direction.  We can assume that $f_c(\boldsymbol{x})= K \sum_{i=1}^d (x_i)^r$ for some $K>0$, where $\boldsymbol{x}=(x_1,\dotsc,x_d)$. \label{subassump:poly}
\end{enumerate}
\end{Assumption}
Thus, the cost function increases without bound (in the state) at a polynomial rate. Many examples of cost functions used in practice, say holding costs in queueing models of communication networks or manufacturing systems, depend polynomially on the state, are unbounded, and fall under this setting; we will discuss a few in our evaluation section. 
To avoid technical issues, the infinite state space setting also necessitates some assumptions on the class from which the unknown model is drawn. 
For instance, irreducibility of Markov chains on such state spaces 
does not ensure positive recurrence (and ergodicity); thus, positive recurrence needs to be ensured using additional conditions. 
Moreover, for average cost optimal control problems, without stability several issues can arise: an optimal control policy may not exist \cite{maitra1963dynamic}, and when it exists, it may not be stationary or deterministic \cite{fisher1968example}, the average cost optimality equation (ACOE), the equivalent Bellman or dynamic programming equation, may not have a solution \cite{arapostathis1993discrete}, and many others. To address some of these challenges, 
we impose the following assumption, which ensures
a skip-free behaviour for transitions, which holds in many queueing models, where an increase in state corresponds to (new) arrivals (either external or internal), 
and this increment being bounded is thus reasonable. More generally, we can encapsulate this property as a maximum bound on the distance of the transitions, instead of just being in one direction. 

\begin{Assumption}\label{assump:skipfrt}

From any state-action pair $(\boldsymbol x,a)$, we assume that 
the transition is to a finite number of states; in essence, each such distribution is assumed to be a categorical distribution. 
We also assume that all 
transition kernels are skip-free to the right: for some $h\geq 1$ 
which is independent of $\theta\in\Theta$ and $(\boldsymbol x,a)\in\X\times\A$, we have $P_\theta(\boldsymbol x^\prime;\boldsymbol x,a)=0$ for all $\boldsymbol x^\prime\in \{\tilde{\boldsymbol x} \in \mathbb{Z}_+^d: {\norm{\tilde{\boldsymbol x}}}_1>{\norm{\boldsymbol x}}_1+h \}$. 
\end{Assumption}
Learning necessitates some commonalities within the class of models so that using a policy well-suited to one model provides information on other models too. For us, these are in the form of constraints on the transition kernels of the models and stability assumptions. For the policies that will be used, these stability assumptions will also ensure the existence of moments of certain functionals. 
In our setting, we consider a class of models, each with a policy being well-suited to at least one model in the class, and use the set of policies to search within. 
Using a reduced set of policies is necessary as the number of deterministic stationary policies is infinite. To learn correctly while restricting attention to this subset policy class,  requires some regularity assumptions when a policy well-suited to one model is tried on a different model. Our ergodicity assumptions are one convenient choice; see  \Cref{subsec:erg_not} for details. 
These assumptions let us characterize the distributions of the first passage times or hitting times 
of the Markov processes via stability conditions; see Lemmas  \ref{lem:hitting_time} and \ref{lem:dist_hit_0}. 
\begin{Assumption} \label{assump:geom_erg}

    For any MDP $\left(\mathcal{X},\mathcal{A},c,P_{\theta}\right)$ with parameter $\theta \in \Theta$, there exists a unique optimal policy $\pi^*_{\theta}$ that minimizes the infinite-horizon average cost within the class of policies $\Pi$. 
    Furthermore, for any $\theta_1,\theta_2 \in \Theta$, the Markov process with transition kernel $P_{\theta_1}^{\pi^*_{\theta_2}}$ obtained from the MDP $\left(\mathcal{X},\mathcal{A},c,P_{\theta_1}\right)$ by following policy $\pi^*_{\theta_2}$ is irreducible, aperiodic, and geometrically ergodic with geometric ergodicity coefficient $\gamma^g_{\theta_1,\theta_2}\in (0,1)$ and stationary distribution $\mu_{\theta_1,\theta_2}$. This is equivalent to the existence of finite set $C^g_{\theta_1,\theta_2}$ and Lyapunov function $V_{\theta_1,\theta_2}^{g}:\X \rightarrow [1,+\infty)$ satisfying 
\begin{equation*}
   \begin{cases}
\Delta V_{\theta_1,\theta_2}^g(\boldsymbol x)  \leq -\big(1-\gamma^g_{\theta_1,\theta_2} \big)V_{\theta_1,\theta_2}^g(\boldsymbol x), \quad &\boldsymbol x \in \mathcal{X}\setminus C^g_{\theta_1,\theta_2} \\
 P_{\theta_1}^{\pi^*_{\theta_2}}V_{\theta_1,\theta_2}^g(\boldsymbol x) < +\infty, \quad &\boldsymbol x \in C^g_{\theta_1,\theta_2}, 
   \end{cases}
\end{equation*}
where $\Delta  V_{\theta_1,\theta_2}^g(\boldsymbol x)  :=P_{\theta_1}^{\pi^*_{\theta_2}} V_{\theta_1,\theta_2}^g(\boldsymbol x) - V_{\theta_1,\theta_2}^g(\boldsymbol x) $. 
Setting $b^g_{\theta_1,\theta_2}:=\max_{\boldsymbol x \in C^g_{\theta_1,\theta_2}}P_{\theta_1}^{\pi^*_{\theta_2}}V_{\theta_1,\theta_2}^g(\boldsymbol x)+V_{\theta_1,\theta_2}^g(\boldsymbol x)$ yields
\begin{equation}\label{eq:geom_drift_eq}
		\Delta V_{\theta_1,\theta_2}^g(\boldsymbol x) \leq -\big(1-\gamma^g_{\theta_1,\theta_2} \big)V_{\theta_1,\theta_2}^g(\boldsymbol x)+b^g_{\theta_1,\theta_2}\mathbb{I}_{C^g_{\theta_1,\theta_2}}(\boldsymbol x), \quad \boldsymbol x \in \mathcal{X}. 
	\end{equation}
    Then, we have the following assumptions relating all the models in $\Theta$:
    \begin{enumerate}[leftmargin=*]
        \item The geometric ergodicity coefficient is uniformly bounded below $1$:  $\gamma_*^g:=\sup_{\theta_1,\theta_2 \in \Theta} \gamma^g_{\theta_1,\theta_2}<1$.
        \item We assume that $\{0^d\} \subseteq \cap_{\theta_1,\theta_2 \in \Theta} C_{\theta_1,\theta_2}^g$, that is, $0^d$ is common to all $C_{\theta_1,\theta_2}^g$
       and $C_*^g=\cup_{\theta_1,\theta_2 \in \Theta}C_{\theta_1,\theta_2}^g$ is a finite set.
        We further assume that $b_*^g:=\sup_{\theta_1,\theta_2} b_{\theta_1,\theta_2}^g < + \infty$.
    \end{enumerate}
\end{Assumption}
\begin{Remark}\label{eq:rem_stat_geom}
  An implication of the assumptions above is that for every $ \theta_1, \theta_2 \in \Theta$
\begin{align*}
 \mu_{\theta_1,\theta_2}(V^g_{\theta_1,\theta_2}) \leq \frac{b_{\theta_1,\theta_2}^g}{1-\gamma_{\theta_1,\theta_2}^g} \leq \frac{b^g_{*}}{1-\gamma_*^g} < +\infty.
\end{align*}  
\end{Remark}

\begin{Remark}
{The uniqueness of the optimal policy is not essential for the validity of  our results, provided that all optimal policies satisfy our assumptions. When this condition is not met, we will need to select an optimal policy that is geometrically ergodic for all parameters. This could entail searching over all optimal policies when non-uniqueness holds. This issue can be avoided by using a smaller subset of policies, such as Max-Weight for scheduling, for which the ergodicity can be established for all policies.}
\end{Remark}

$V^g$- geometric ergodicity implies that all moments of the hitting time of state $0^d$, say $\tau_{0^d}$, from any initial state  $\boldsymbol x \neq 0^d$ are finite as $\mathbb{E}_{\boldsymbol x}[\kappa^{\tau_{0^d}}] \leq c_1V^g(\boldsymbol x)$ (for specific $\kappa>1$ and $c_1$), and so,  $\mathbb{E}_{\boldsymbol x}[\tau_{0^d}^k]\leq  c_1V^g(\boldsymbol x) k!/\log^k(\kappa)$ and is finite for all $k\in\mathbb{N}$; see \Cref{subsec:hitting_time_geom_lyap_2}.
In practice, the Lyapunov function $V^g$ that ensures geometric ergodicity  can be exponential in some norm of the state, resulting in an exponential bound for moments of hitting times and a poor regret bound. 
However, we will need a polynomial upper bound on the moments of hitting times to improve the regret bound. 
To that end, we will use a different drift equation with function $V^p: \X \rightarrow [1,+\infty)$ that bounds certain polynomial moments of $\tau_{0^d}$ from any state $\boldsymbol x\in\mathcal{X}$, but (typically) with a polynomial dependence on some norm of the state. 

\begin{Assumption}\label{assump:poly_erg}
    Given any pair  $\theta_1,\theta_2 \in \Theta$, the Markov process with transition kernel $P_{\theta_1}^{\pi^*_{\theta_2}}$ obtained from the MDP $\left(\mathcal{X},\mathcal{A},c,P_{\theta_1}\right)$ by following policy $\pi^*_{\theta_2}$ is irreducible, aperiodic, and  polynomially ergodic with stationary distribution $\mu_{\theta_1,\theta_2}$
    through the Foster-Lyapunov criteria: there exists a finite set $C^p_{\theta_1,\theta_2}$, constants $\beta^p_{\theta_1,\theta_2}$, $b^p_{\theta_1,\theta_2}>0$, $ r/(r+1) \leq \alpha^p_{\theta_1,\theta_2}<1$, and a function $V^p_{\theta_1,\theta_2} :\X \rightarrow [1,+\infty)$ satisfying ($r$ is defined in \Cref{assump:costfn})
	\begin{equation}\label{eq:poly_drift_eq}
		\Delta V^p_{\theta_1,\theta_2}(\boldsymbol x) \leq -\beta^p_{\theta_1,\theta_2} \big(V_{\theta_1,\theta_2}^p 
(\boldsymbol x)\big)^{\alpha^p_{\theta_1,\theta_2}}+b^p_{\theta_1,\theta_2}\mathbb{I}_{C^p_{\theta_1,\theta_2}}(\boldsymbol x), \quad \boldsymbol x\in \mathcal{X}.
\end{equation}
     Then, we have the following assumptions relating all the models in $\Theta$:
    \begin{enumerate}[leftmargin=*]
     \item $V^p_{\theta_1,\theta_2}$ is a polynomial with positive coefficients, maximum degree (in any direction) $r^p_{\theta_1,\theta_2}$, and sum of coefficients $s^p_{\theta_1,\theta_2}$. 
     We assume $r_*^p=\sup_{\theta_1,\theta_2} r^p_{\theta_1,\theta_2}<\infty$ and  $s_*^p=\sup_{\theta_1,\theta_2} s^p_{\theta_1,\theta_2}<\infty$.
     \item We assume that $\{0^d\} \subseteq \cap_{\theta_1,\theta_2 \in \Theta} C_{\theta_1,\theta_2}^p$, that is, $0^d$ is common to all $C_{\theta_1,\theta_2}^p$
    and  $C_*^p=\cup_{\theta_1,\theta_2 \in \Theta}C_{\theta_1,\theta_2}^p$ is a finite set. We further assume that $\beta_*^p:=\inf_{\theta_1,\theta_2}  \beta^p_{\theta_1,\theta_2}>0$ and $b_*^p:=\sup_{\theta_1,\theta_2}   b^p_{\theta_1,\theta_2}<\infty$.
     \item Let      $K_{\theta_1,\theta_2}(\boldsymbol x):= \sum_{n=0}^{\infty} 2^{-n-2}\big(P_{\theta_1}^{\pi^*_{\theta_2}}\big)^n(\boldsymbol x,0^d)$, which is positive for  any pair  $\theta_1,\theta_2 \in \Theta$ by irreducibility.
     We assume that it is strictly positive in $\Theta$:  $K_*:=\inf_{\theta_1,\theta_2} \min_{\boldsymbol x\in C^p_*} K_{\theta_1,\theta_2}(\boldsymbol x)>0$. 
\end{enumerate}
\end{Assumption}
\begin{Remark}\label{eq:rem_stat_poly}
   An implication of the assumptions above is that for every $ \theta_1, \theta_2 \in \Theta$
\begin{align*}
 \mu_{\theta_1,\theta_2}\left( (V^p_{\theta_1,\theta_2})^{\alpha^p_{\theta_1,\theta_2}}\right) \leq \frac{b^p_{\theta_1,\theta_2}}{\beta^p_{\theta_1,\theta_2}} \leq \frac{ b^p_{*}}{\beta_*^p} < +\infty.
\end{align*} 
\end{Remark}
Note that $V^g_{\theta_1,\theta_2}$ satisfies the Foster-Lyapunov criterion in \Cref{assump:poly_erg} for every $\alpha^p_{\theta_1,\theta_2} \in (0,1)$, but as it can be exponential in the state space, we seek a different function. Incidentally, \Cref{assump:geom_erg} and \Cref{assump:poly_erg} hold in many queueing models of supply chains, manufacturing systems, or communication networks; see our examples in \Cref{sec:queueing_models}. 
As average cost optimality is our design criterion, we need to ensure the existence of solutions to the ACOE when $\Pi$ is the set of all policies, or the Poisson equation when $\Pi$ is a proper subset of all policies; 
{see \Cref{subsec:exist_pois_eqn} for related definitions. We utilize the solutions of the ACOE (the Poisson equation) for the optimal (the best-in-class policy) in our regret analysis in \Cref{sec:main_results}. Specifically, to upper bound the regret, we study the cost $c(\boldsymbol X(t),A(t))$, where $A(t)$ will be chosen  according to the optimal (the best-in-class) policy corresponding to the posterior estimate, for which the ACOE (the Poisson equation) is guaranteed to hold as below.}

\textit{\bfseries Case 1: $\Pi$ is the set of all policies.}
    For any parameter $\theta \in \Theta$, the MDP $\left(\mathcal{X},\mathcal{A},c,P_{\theta}\right)$ is said to satisfy the ACOE if there exists a constant $J(\theta)$ and a unique function $v(\cdot;\theta): \mathcal{X} \rightarrow\mathbb{R}$ such that
    \begin{equation*} 
			J(\theta)+v(\boldsymbol x;\theta)=\min_{a \in \A} \big\{  c(\boldsymbol x,a)+  \sum_{\boldsymbol y \in \mathcal{X}} P_{\theta}(\boldsymbol y |\boldsymbol  x,a) 	v(\boldsymbol y;\theta)  \big\} \text{ with } \; v(0^d;\theta)=0.
		\end{equation*} 
From \cite{cavazos1989weak} if the conditions below hold, then the ACOE has a solution, $J_\theta$ is the optimal infinite-horizon average cost, 
and there is an optimal stationary policy achieving the minimum at the right-hand side of the above equation: 
    (i) for every $z \geq 0$ and valid action $a$ in state $\boldsymbol x$, the cost function $c(\boldsymbol x,a)$ 
    is greater than or equal to $z$ 
    outside a finite subset of $\mathcal{X}$; 
    (ii) there is a stationary policy with an irreducible and aperiodic Markov process with finite average cost; and 
    (iii) from every state-action pair $(\boldsymbol x,a)$  
    transition to a finite number of states is possible.
  From \Cref{assump:costfn} on the cost function, \Cref{assump:skipfrt}, and \Cref{assump:geom_erg}, the above conditions hold, there exists an average cost optimal stationary policy,  and the ACOE has a solution.
  
\textit{\bfseries Case 2: $\Pi$ is a proper subset of stationary policies.} 
Here, we posit that for every $\theta\in\Theta$ and its best in-class policy $\pi^*_{\theta}$, there exists a constant $J(\theta)$, the average cost of  $\pi^*_{\theta}$, and a function $v(\cdot;\theta): \mathcal{X} \rightarrow\mathbb{R}$ with
\begin{equation}\label{eq:ACOE_case2} 
			J(\theta)+v(\boldsymbol x;\theta)= c(\boldsymbol x,\pi^*_{\theta}(\boldsymbol x))+  \sum_{\boldsymbol y \in \mathcal{X}} P_{\theta}(\boldsymbol y | \boldsymbol x,\pi^*_{\theta}(\boldsymbol x)) 	v(\boldsymbol y;\theta). 
\end{equation} 
This holds by the solution of the Poisson equation with the appropriate forcing function. For a Markov process $\boldsymbol{X}$ on the space $\mathcal{X}$ with time-homogeneous 
transition kernel $P$ and cost function $\bar{c}(\cdot)$ (which will be the forcing function below), 
a solution to the Poisson equation~\cite{makowski2002poisson} is a scalar $J$ and function $v(\cdot):\mathcal{X}\mapsto \mathbb{R}$ such that
   $J+v = \bar{c} + Pv$,
where $v(\boldsymbol z)=0$ for some $\boldsymbol z \in \mathcal{X}$. 
Just like for the ACOE, if $(v,J)$ is a solution to the Poisson equation, then so is $(v+b,J)$ for any scalar $b$. Hence, it is common to seek solutions such that $v(\boldsymbol z)=0$ for some specific $\boldsymbol z \in \mathcal{X}$. 
In our setting using \cite[Sections 9.6 and 9.8]{makowski2002poisson},  for a model governed by $\theta \in \Theta$ following policy $\pi^*_{\theta}$, we show a solution to the Poisson equation  exists   and is given by  $v(0^d;\theta)=0$ and  
\begin{align}\label{eq:PoissonEq}
    J(\theta)=\frac{\bar{C}^{\pi^*_{\theta}}(0^d)}{\mathbb{E}^{\pi^*_{\theta}}_{0^d}[\tau_{0^d}]} \text{ and } v(\boldsymbol x;\theta)=\bar{C}^{\pi^*_{\theta}}(\boldsymbol x) - J(\theta) \mathbb{E}^{\pi^*_{\theta}}_{\boldsymbol x}[\tau_{0^d}], \quad \forall \boldsymbol x\in\mathcal{X}, 
\end{align}
{where } $\bar{C}^{\pi^*_{\theta}}(\boldsymbol x)=\mathbb{E}^{\pi^*_{\theta}}_{\boldsymbol x}\Big[ \sum_{i=0}^{\tau_{0^d}-1} {c}(\boldsymbol X(i),\pi^*_{\theta}(\boldsymbol X(i))) \Big],\notag  $ and  the expectation is over trajectories of Markov chain $\boldsymbol{X}$ with transition kernel $P_{\theta}^{\pi^*_{\theta}}$ starting in state $\boldsymbol x$. In \Cref{subsec:exist_pois_eqn}, we show that from Assumptions \ref{assump:geom_erg} and \ref{assump:poly_erg}, the requirements for the existence and finiteness of the solutions to Poisson equation are satisfied.
Finally, we assume $\sup_{\theta \in \Theta} J(\theta)$ is finite, which typically holds as a result of the boundedness assumptions over all models in $\Theta$ 
stated in Asumptions \ref{assump:geom_erg} or \ref{assump:poly_erg}, along with \Cref{assump:costfn}; this will be clear in our evaluation examples, but we mention it separately for completeness.  

\begin{Remark}
    In Assumption~\ref{assump:poly_erg} we can use any other policy $\pi_{\theta_2}$ such that the Markov process obtained from MDP $\left(\mathcal{X},\mathcal{A},c,P_{\theta_1}\right)$ by following policy $\pi_{\theta_2}$ is irreducible and 
   polynomially ergodic via the Foster-Lyapunov criteria with the uniformity discussed. Irreducibility is important as the policy will be used at times when the state is not known in advance, specifically at Steps 14-17 in Algorithm~\ref{alg:TSDE}.
\end{Remark}

\begin{Assumption}\label{assump:sup_J_V}
We assume that $J^*:=\sup_{\theta \in \Theta} J(\theta) <+\infty$.
\end{Assumption}

\section{Thompson sampling based learning algorithm}\label{sec:alg_TSDE}

  \begin{algorithm}[t]
		\caption{Thompson Sampling with Dynamic Episodes (TSDE)} \label{alg:TSDE}
		\begin{algorithmic}[1]
			\STATE Input: $\nu_0$ 
			\STATE Initialization: $\boldsymbol{X}(1)=0^d$, $t\leftarrow 1$ 
			\FOR{ episodes $k=1,2,...$}
			\STATE{$t_{k}  \leftarrow t$}
			\STATE{Generate $\theta_{k} \sim \nu_{t_k}$}
			\WHILE{$t \leq t_k+\tilde{T}_{k-1}$ and $N_t (\boldsymbol x,a) \leq 2N_{t_{k}}(\boldsymbol x,a)$ for all $(\boldsymbol x,a) \in \mathcal X\times \mathcal A$} \label{line:stopping_time}
			\STATE{Apply action $A(t) = \pi^*_{\theta_k}\left( \boldsymbol{X}\left(t\right)\right)$ } \label{line:action_1}
            \STATE{$N_t\left(\boldsymbol X\left(t\right),A\left(t\right)\right) \leftarrow N_t\left(\boldsymbol X\left(t\right),A\left(t\right)\right)+1$}
			\STATE{Observe new state $\boldsymbol X\left(t+1\right)$}
			\STATE{Update $\nu_{t+1}$ according to \eqref{eq:Bayes}}
			\STATE{$t  \leftarrow t+1$}
			\ENDWHILE
			\STATE{$\tilde{T}_{k} \leftarrow t - t_{k}$}
			\WHILE{$\boldsymbol X\left(t\right) \neq 0^d$}\label{line:empty_system}
			\STATE{Apply action $A(t) = \pi^*_{\theta_k}\left( \boldsymbol{X}\left(t\right)\right)$} \label{line:action_2}
 			\STATE{Observe new state $\boldsymbol X\left(t+1\right)$}  
			\ENDWHILE
            \STATE{${T}_{k} \leftarrow t - t_{k}$}
			\ENDFOR
		\end{algorithmic}
	\end{algorithm}

		We will use the Thompson-sampling based algorithm from \cite{ouyang2017learning} 
  to learn the unknown parameter $\boldsymbol{\theta}^* \in \Theta$ and the corresponding policy, $\pi^*_{\boldsymbol{\theta}^*}$, but suitably modify it for our countable state space setting. {Consider the prior distribution $\nu_0=\nu$ defined on $\Theta$ from which $\boldsymbol{\theta}^*$ is sampled.  At each time $t \in \mathbb{N}$, the posterior distribution $\nu_t$ is updated according to Bayes' rule as }
	\begin{equation}  \label{eq:Bayes}
			\nu_{t+1}(d \theta)=\frac{\Pr_{\theta}\left(\boldsymbol X\left(t+1\right) |\boldsymbol X\left(t\right),A\left(t\right)\right)	\nu_{t}(d \theta)}{\int_{\theta' \in \Theta} \Pr_{\theta'}\left(\boldsymbol X\left(t+1\right) |\boldsymbol X\left(t\right),A\left(t\right)\right)	\nu_{t}(d \theta')},
	\end{equation}
 and the posterior estimate ${\theta}_{t+1}$, if generated, is from the posterior distribution $\nu_{t+1}$. 
The Thompson-sampling with dynamically-sized episodes algorithm (TSDE)  is presented in \Cref{alg:TSDE}. 
The TSDE algorithm operates in episodes: at the beginning of each episode $k$, parameter $\theta_k$ is sampled from the posterior distribution $\nu_{t_k}$ and during episode $k$, actions are generated from the stationary policy according to $\theta_k$, i.e., $\pi^*_{\theta_k}$ and applied according to \Cref{subfig:alg_2}.
Notice that  $\pi^*_{\theta_k}$ is the optimal policy that minimizes the average expected cost of \eqref{eq:average_workload} in MDP $\left(\mathcal{X},\mathcal{A},c,P_{\theta_k}\right)$ either over all policies or a given set of policies.
 Let $t_k$ be the time the $k$-th episode begins. Define $\tilde t_{k+1}$ as the first time after $t_k$ that the conditions of Line \ref{line:stopping_time} of \Cref{alg:TSDE} is triggered and $ t_{k+1}$ as the first time at or after $\tilde t_{k+1}$ where state $0^d$ is visited; for the last episode started before or at $T$, we ensure that $t_k$ and $\tilde t_k$ are less than or equal $T+1$. Explicitly,  $t_1=1$ and for $k >1$,
		${t}_{k} = \min\{  t\geq \tilde{t}_{k}:\, \boldsymbol X\left(t\right)=0^d \text{ or } t > T
		\}.$
	Let $T_k = t_{k+1}-t_{k}$ be the length of the $k$-th episode and set $\tilde T_k =\tilde  t_{k+1}-t_{k}$  with the convention $\tilde T_0 = 1$. 
	 The length of each episode $k$ is determined in Line \ref{line:stopping_time} of \Cref{alg:TSDE} and 
  is  not fixed as it depends on the evolution of the Markov process determined by the true parameter $\boldsymbol{\theta}^*$ and the policy $\pi^*_{\theta_k}$ being used.
	For any state-action pair $(\boldsymbol x,a)$, we define $N_1(\boldsymbol x,a)=0$ and for $t>1$,
	\begin{align*}
	N_t(\boldsymbol x,a) = \big|\{t_{k} \leq i < \tilde  t_{k+1} \leq t   \text{ for some } k\geq 1:   \left(\boldsymbol X(i),A(i)\right) = (\boldsymbol x,a)\}\big|.
	\end{align*}
	Notice that for all state-action pairs $(\boldsymbol x,a)$ and $\tilde  t_{k+1}\leq t \leq t_{k+1}$, we have ${N}_t(\boldsymbol x,a)={N}_{\tilde  t_{k+1}}(\boldsymbol x,a) $.
 We can represent $\tilde{t}_k $ in terms of $N_t(\boldsymbol x,a)$ as follows: 
	\begin{align*}
		\tilde{t}_{k+1} = \min\{  t>\min(t_{k},T):  t > \min(T, t_{k} + \tilde{T}_{k-1}) \text{ or }N_t(\boldsymbol x,a) > 2N_{t_{k}}(\boldsymbol x,a) \text{ for some }(\boldsymbol x,a) \}.
	\end{align*}  
We denote $K_T$ as the number of episodes started by or at time $T$, or $K_T=\max\{k:t_k \leq T\}$.
	 The length of episode $k <K_T$ is not fixed and is determined according to two stopping criteria: (1) $t > t_k+\tilde{T}_{k-1}$, \label{eq:first_stop_criterion}
      (2) $N_t(\boldsymbol x,a) > 2N_{t_{k}}(\boldsymbol x,a)$ for some state-action pair $(\boldsymbol x,a)$.\label{eq:second_stop_criterion}
	After either criterion is met, the system will still follow policy $\pi^*_{\theta_k}$ until the first time at which state $0^d$ is visited; see Line \ref{line:empty_system} and \Cref{fig:alg_1}. We use this settling period to $0^d$ because the system state can be arbitrary when the first stopping criterion is met. As the countable state space setting precludes a simple union-bound argument to overcome this uncertainty (as in the literature for finite state settings), we let the system reach the special state $0^d$. Another (essentially equivalent) option is to wait until the state hits the finite set $C_*^g$ or $C_*^p$ and then use a union bound argument for all states in either set. 
	For analytical convenience, we only use the state samples observed before arrival $\tilde t_{k+1}$ to update the posterior distribution, and not the samples of the system after time $\tilde t_{k+1}$ and before the beginning of episode $k+1$, i.e., $t_{k+1}$. 
 The posterior update is halted during the settling period to $0^d$ as we have no control on the states visited during it, despite it being finite in duration (by our assumptions).

  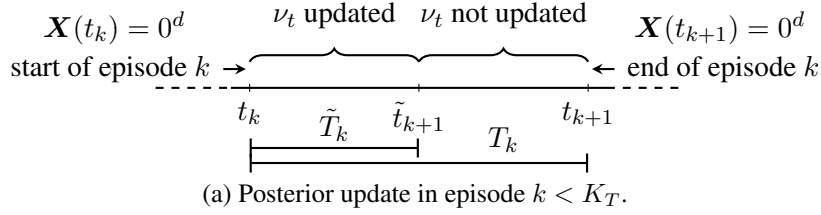
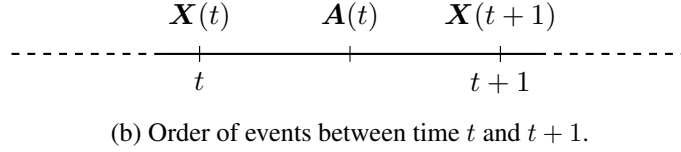
\begin{figure}[t]
    \centering 
        \begin{subfigure}[]{1\textwidth}
        \centering
    \begin{tikzpicture}[scale=0.5]
        \draw[thick] (0,0) -- (10,0);
        \draw[thick,dashed] (-2,0) -- (0,0);
        \draw[thick,dashed] (10,0) -- (12,0);
	    \draw (0.5,3pt) -- (0.5,-3pt) node[below]{$ t_k $};
      \draw (5,3pt) -- (5,-3pt) node[below]{$\tilde t_{k+1} $};
      \draw (9.5,3pt) -- (9.5,-3pt) node[below]{$ t_{k+1} $};
        \draw (-3.1,1.6) -- (-3.1,1.6) node[]{$\boldsymbol X( t_k)=0^d$};
             \draw (13,1.6) -- (13,1.6) node[]{$\boldsymbol X( t_{k+1})=0^d$};
         \draw (-3.2,0.5) -- (-3.2,0.5)node[]{start of episode $k$};
        \draw [-stealth,thick](-0.2,0.5) -- (0.4,0.5);
          \draw (13.1,0.5) -- (13.1,0.5) node[]{end of episode $k$};
        \draw [-stealth,thick](10.2,0.5) -- (9.6,0.5);
        	\draw [decorate,decoration={brace,amplitude=5pt,raise=2ex},thick]
	(0.5,0) -- (5,0) node[pos=0.5,yshift=2.5em]{$\nu_t$ updated};
        	\draw [|-|,thick]
	(0.5,-1.6) -- (5,-1.6) node[pos=0.5,yshift=0.8em]{$\tilde T_k$};
        	\draw [decorate,decoration={brace,amplitude=5pt,raise=2ex},thick]
	(5,0) -- (9.5,0) node[pos=0.5,yshift=2.5em]{$\nu_t$ not updated};
        	\draw [|-|,thick](0.5,-2) -- (9.5,-2)node[pos=0.75,yshift=0.8em]{$T_k$};
    \end{tikzpicture}
    \caption{Posterior update in episode $k <K_T$.}  \vspace{0.5cm} \end{subfigure}
    \begin{subfigure}[]{1\textwidth}
        \centering
    \begin{tikzpicture}
        \draw[thick] (0,0) -- (5,0);
        \draw[thick,dashed] (-2,0) -- (0,0);
        \draw[thick,dashed] (5,0) -- (7,0);
	    \draw (0.5,3pt) -- (0.5,-3pt) node[below]{$ t $};
     \draw (4.5,3pt) -- (4.5,-3pt) node[below]{$ t+1 $};
     	    \draw (0.5,3pt) -- (0.5,-3pt) node[above,yshift=0.7em]{$ \boldsymbol X(t) $};
     \draw (4.5,3pt) -- (4.5,-3pt) node[above,yshift=0.7em]{$ \boldsymbol X(t+1)  $};
          \draw (2.5,3pt) -- (2.5,-3pt) node[above,yshift=0.7em]{$ \boldsymbol A(t)  $};
    \end{tikzpicture}
        \caption{Order of events between time $t$ and $t+1$. }\label{subfig:alg_2}\end{subfigure}
    \caption{MDP evolution in episode $k<K_T$.}\label{fig:alg_1}
\end{figure}

\section{Regret analysis of Algorithm 2} \label{sec:main_results}

The performance of any learning policy $\pi_L$ is evaluated using the metric of expected regret compared to the optimal expected average cost of true parameter  $\boldsymbol{\theta}^*$, namely, $ J(\boldsymbol{\theta}^*)$. 
In this section, we evaluate the performance of \Cref{alg:TSDE} and derive an upper bound for $R(T,\pi_{TSDE})$, its expected regret up to time $T$.
In \Cref{subsec:Preliminaries}, we argued that at time $t$ in  episode $k$ ($t_k \leq t < t_{k+1}$), there exist a constant $J(\theta_k)$ and a unique function $v(\cdot;\theta_k): \mathcal{X} \rightarrow\mathbb{R}$ such that $v\left(0^d;\theta_k\right)=0$ and
\begin{equation} \label{eq:ACOE}
	J(\theta_k)+v(\boldsymbol X(t);\theta_k)= c(\boldsymbol X(t),\pi^*_{\theta_k}( \boldsymbol X(t)))+  \sum_{\boldsymbol y \in \mathcal{X}} P_{\theta_k}(\boldsymbol y | \boldsymbol X(t),\pi^*_{\theta_k}( \boldsymbol X(t))) 	v(\boldsymbol y;\theta_k),
\end{equation}
in which $\pi^*_{\theta_k}$ is the optimal or  best-in-class policy (depending on the context) according to parameter $\theta_k$ and $J(\theta_k)$ is the average cost for the Markov process obtained from MDP $\left(\mathcal{X},\mathcal{A},c,P_{\theta_k}\right)$ by following $\pi^*_{\theta_k}$. 
We derive a bound for the expected regret $R(T,\pi_{TSDE})$ following the proof steps of \cite{ouyang2017learning} while extending it to the countable state-space setting of our problem.
Using  \eqref{eq:ACOE}, 
the regret is decomposed into three terms and each term is bounded separately:
	\begin{align}
		\label{eq:regret_decomposition}
		R(T,\pi_{TSDE})&= \E \big[\sum_{k=1}^{K_T} \sum_{t=t_k}^{t_{k+1}-1} c(\boldsymbol X(t),\pi^*_{\theta_k}( \boldsymbol X(t)))\big]- T\E \left[J\left(\boldsymbol \theta^*\right)\right]=R_0+R_1+R_2,\\
		\text{with } R_0  = & \E\big[ \sum_{k=1}^{K_T}T_k J(\theta_k) \big]- T \E[J(\boldsymbol\theta^*) ],\label{eq:R_0}
		\\
		R_1  = &\E\big[\sum_{k=1}^{K_T}\sum_{t=t_k}^{t_{k+1}-1}  \big[v(\boldsymbol X(t);\theta_k) -v(\boldsymbol X(t+1);\theta_k)\big]\big],\label{eq:R_1}\\
		R_2  = & \E\big[\sum_{k=1}^{K_T}\sum_{t=t_k}^{t_{k+1}-1}  \big[ 
		v(\boldsymbol X(t+1);\theta_k)- \sum_{\boldsymbol y \in \mathcal{X}} P_{\theta_k}(\boldsymbol y | \boldsymbol X(t),\pi^*_{\theta_k}( \boldsymbol X(t))) 	v(\boldsymbol y;\theta_k)\big] \big].\label{eq:R_2}
	\end{align}
	Before bounding the above regret terms, we address the complexities arising from the countable state-space setting. 
	Firstly, we need to study the maximum state (with respect to the $\ell_{\infty}$-norm) visited up to time $T$ in the MDP $\left(\mathcal{X},\mathcal{A},c,P_{\boldsymbol\theta^*}\right)$ following \Cref{alg:TSDE}; we denote this maximum state by $M^T_{\boldsymbol\theta^*}$. 
We state the results that characterize the maximum $l_\infty$-norm of the state vector achieved up until and including time $T$, and the resulting bounds on the number of episodes executed until time $T$. 
	The results are listed as below: 
	\begin{enumerate}[leftmargin=*]
		\item In \Cref{lem:max_geom_rv}, we bound the moments of the maximum length of recurrence times of $0^d$, or $ \max_{1\leq i \leq T}  \tau_{0^d}^{(i)}$, 
		using the ergodicity assumptions \ref{assump:geom_erg} and \ref{assump:poly_erg}. This, along with the skip-free property, allows us to 
		prove that the $p$-th moment of $ \max_{1\leq i \leq T}  \tau_{0^d}^{(i)}$ and $M^T_{\boldsymbol\theta^*}$ are both of order $O(\log^p T)$.
		\item  In \Cref{lem:number_macro_episodes}, we find an upper bound for the number of episodes in which the second stopping criterion is met or there exists a state-action pair for which $N_t(\boldsymbol x,a)$ has increased more than twice in terms of random variable $M^T_{\boldsymbol\theta^*}$ and other problem-dependent constants. 
		\item In \Cref{lem:number_episodes}, we  bound  the total number of episodes $K_T$ by time $T$ by bounding the number of episodes triggered by the first stopping criterion, using the fact that in such episodes, $\tilde{T}_k = \tilde{T}_{k-1}+1$. Moreover, to account for the settling time of each episode, we use geometric ergodicity and \Cref{lem:max_geom_rv}. It follows that the expected value of the number of episodes $K_T$ is of the order $\tilde O(h^d\sqrt{|\mathcal A|T})$.
	\end{enumerate}

		Another challenge in analyzing the regret is that the relative value function $v(\boldsymbol x;\theta)$ is unlikely to be bounded in the countable state-space setting. Hence, in \eqref{eq:bound_rel_val} and \eqref{eq:bound_rel_val_low}, we find bounds for the relative value function in terms of hitting time $\tau_{0^d}$ from the initial state $\boldsymbol x$. Based on these results, we provide an upper bound for the regret of \Cref{alg:TSDE} in \Cref{thm:main}.

\subsection{Maximum state norm under polynomial and geometric ergodicity} 

We start with deriving upper bounds on the hitting times of state $0^d$ using the ergodicity conditions of Assumptions \ref{assump:geom_erg} and \ref{assump:poly_erg}. 
Previous works \cite{hajek1982hitting,hordijk1992ergodicity,jarner2002polynomial} have already established bounds on hitting times in geometrically and polynomially ergodic chains in terms of their corresponding Lyapunov function. However, our objective is to provide a precise characterization of all constants included in these bounds in terms of the constants of the drift equations \ref{eq:geom_drift_eq} and \ref{eq:poly_drift_eq}. This characterization allows us to derive uniform bounds across the model class. 
In \Cref{subsec:hit_poly}, using the polynomial Lyapunov function provided in \Cref{assump:poly_erg}, we establish upper bounds on the $i$-th moment of hitting time of state $0^d$ from any state $\boldsymbol x\in \mathcal X$ and for $1 \leq i \leq r+1$. Importantly, the derived bound is polynomial in terms of any component of the state $x_i$.
Additionally, in \Cref{subsec:hit_geom}, we characterize the tail probabilities of the return time to state $0^d$ starting from $0^d$ in terms of the geometric Lyapunov function of \Cref{assump:geom_erg}. The derived tail bounds will be used in Lemma \ref{lem:max_geom_rv} to derive upper bounds for all moments of hitting times in the model class. 
These bounds, along with the skip-free behavior of the model, allow us to study the maximum state (with respect to $\ell_{\infty}$-norm) achieved up to time $T$ in MDP $\left(\mathcal{X},\mathcal{A},c,P_{\boldsymbol\theta^*}\right)$ following \Cref{alg:TSDE} as follows.
\begin{Lemma}\label{lem:max_geom_rv}
	For $p \in \mathbb{N}$, the $p$-th moment of $ \max_{1\leq i \leq T}  \tau_{0^d}^{(i)}$ and $M^T_{\boldsymbol\theta^*}$, that is the maximum $\ell_{\infty}$-norm of the state vector achieved up until and including time $T$ is $O(\log^p T)$.
\end{Lemma}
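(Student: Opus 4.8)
The plan is to control the two quantities in sequence: first the maximum recurrence time $\max_{1\le i\le T}\tau_{0^d}^{(i)}$ via an extreme-value estimate driven by the geometric tail of a single return to $0^d$, and then the maximum state norm $M^T_{\boldsymbol\theta^*}$ by converting time into space through skip-freeness. The structural fact I would record at the outset is that under \Cref{alg:TSDE} each episode begins at $0^d$ and applies a single fixed policy $\pi^*_{\theta_k}$ throughout (including the settling period of Lines 13--16), so that episode boundaries are $0^d$-visit times and every excursion between consecutive visits to $0^d$ lies entirely inside one episode. Hence each $\tau_{0^d}^{(i)}$ is a genuine return time from $0^d$ governed by a kernel $P_{\boldsymbol\theta^*}^{\pi^*_{\theta_k}}$, i.e.\ a $(\theta_1,\theta_2)=(\boldsymbol\theta^*,\theta_k)$ instance of \Cref{assump:geom_erg}, which lets me use the \emph{uniform} constants. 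From $V^g$-geometric ergodicity, $\E_{0^d}[\kappa^{\tau_{0^d}}]\le c_1 V^g(0^d)$; since $0^d\in C^g_{\theta_1,\theta_2}$ forces $V^g_{\theta_1,\theta_2}(0^d)\le b^g_{\theta_1,\theta_2}\le b^g_*$, and $\gamma^g_*<1$, $b^g_*<\infty$ allow $\kappa$ and $c_1$ to be chosen uniformly (as quantified in \Cref{subsec:hit_geom}), Markov's inequality gives a single pair $C<\infty$, $\kappa_*>1$ with $\Pr(\tau_{0^d}>t)\le C\kappa_*^{-t}$ for every excursion. Conditioning on the history at the start of excursion $i$ (at which point the state is $0^d$ and the active policy is already measurable) and applying the strong Markov property, this tail holds for each $\tau_{0^d}^{(i)}$ marginally, so no independence across excursions is required.

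Next I would pass to the maximum by a union bound, $\Pr\big(\max_{1\le i\le T}\tau_{0^d}^{(i)}>t\big)\le TC\kappa_*^{-t}$, and integrate $\E\big[(\max_{1\le i\le T}\tau_{0^d}^{(i)})^p\big]=\int_0^\infty p\,t^{p-1}\Pr\big(\max_{1\le i\le T}\tau_{0^d}^{(i)}>t\big)\,dt$, splitting at the threshold $t^\star=\log_{\kappa_*}(CT)$, which is of order $\log T$. Bounding the probability by $1$ for $t\le t^\star$ produces a contribution of order $(t^\star)^p=O(\log^p T)$, while bounding it by $TC\kappa_*^{-t}$ for $t>t^\star$ and substituting $t=t^\star+s$ (so that $TC\kappa_*^{-t}=\kappa_*^{-s}$) leaves $\int_0^\infty p(t^\star+s)^{p-1}\kappa_*^{-s}\,ds=O((t^\star)^{p-1})=O(\log^{p-1}T)$. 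This yields $\E\big[(\max_{1\le i\le T}\tau_{0^d}^{(i)})^p\big]=O(\log^p T)$, the first assertion. The polynomial hitting-time moments of \Cref{assump:poly_erg} and \Cref{lem:hitting_time} give the same growth rate more crudely and can be used to certify finiteness of the intermediate quantities.

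For the second assertion I would convert time to space using \Cref{assump:skipfrt}. Since transitions are skip-free to the right, $\norm{\boldsymbol x'}_1\le\norm{\boldsymbol x}_1+h$, so along an excursion that starts at $0^d$ the state after $j$ steps obeys $\norm{\boldsymbol X}_1\le jh$ by induction. Any time $t\le T$ lies within an excursion of (eventual) length $\tau_{0^d}^{(i)}$ for some $i\le T$ (at most $T$ excursions can start by time $T$), and the elapsed steps since the last visit to $0^d$ are at most $\tau_{0^d}^{(i)}$; combining this with $\norm{\cdot}_\infty\le\norm{\cdot}_1$ gives the pathwise bound $M^T_{\boldsymbol\theta^*}\le h\max_{1\le i\le T}\tau_{0^d}^{(i)}$. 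Raising to the $p$-th power and taking expectations, $\E[(M^T_{\boldsymbol\theta^*})^p]\le h^p\,\E\big[(\max_{1\le i\le T}\tau_{0^d}^{(i)})^p\big]=O(\log^p T)$.

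The main obstacle I anticipate is not the extreme-value computation, which is routine once the tail is available, but justifying that the geometric tail holds \emph{uniformly} across all excursions despite the posterior-driven policy changing between episodes and despite the number of excursions being random and history-dependent. The resolution is precisely the structural observation above: the policy is constant within an episode and excursions never straddle episode boundaries, so each $\tau_{0^d}^{(i)}$ is a return time from $0^d$ under one of the kernels covered by the uniform constants $\gamma^g_*,b^g_*$, and the union bound needs only the conditional marginal tails, making it insensitive to the dependence structure.
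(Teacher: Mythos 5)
Your proposal is correct and follows the same overall architecture as the paper's proof: establish a uniform geometric tail for each excursion from $0^d$ (using that every excursion lies within one episode and is therefore governed by a single kernel $P_{\boldsymbol\theta^*}^{\pi^*_{\theta_k}}$ covered by the uniform constants), bound the maximum over at most $T$ excursions, and convert time to space via skip-freeness to get $M^T_{\boldsymbol\theta^*}\le h\max_{1\le i\le T}\tau_{0^d}^{(i)}$. Where you genuinely diverge is in how the moments of the maximum are extracted. The paper asserts that the $\tau_{0^d}^{(i)}$ are independent, uses the product formula $\Pr(\max_i\tau_{0^d}^{(i)}\le n)=\prod_i\Pr(\tau_{0^d}^{(i)}\le n)$ together with a Riemann-sum estimate for the first moment, and for $p\ge 2$ shifts by $n_0$, invokes stochastic domination by i.i.d.\ geometric variables, and cites an external result of Szpankowski for the $p$-th moment of the maximum of $T$ i.i.d.\ geometrics. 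You instead use a union bound $\Pr(\max_i\tau_{0^d}^{(i)}>t)\le TC\kappa_*^{-t}$ and integrate the tail with a split at $t^\star=\log_{\kappa_*}(CT)$, which handles all $p$ in one stroke, needs no external citation, and—importantly—only requires the \emph{conditional} marginal tail bound given the history at the start of each excursion. This last point is a real advantage: since the policy in episode $k$ is drawn from a posterior that depends on earlier excursions, the excursion lengths are not unconditionally independent, so the paper's independence claim is delicate (it can be repaired by conditioning sequentially, but your union-bound route sidesteps the issue entirely). Two minor imprecisions on your side, neither fatal: the bound $\mathbb{E}_{\boldsymbol x}[\kappa^{\tau_{0^d}}]\le c_1V^g(\boldsymbol x)$ is stated in the paper for $\boldsymbol x\neq 0^d$, so for the return time from $0^d$ you need one step of conditioning through the drift inequality at $0^d\in C^g$; and the uniformity of the tail constants over the model class rests not only on $\gamma^g_*<1$ and $b^g_*<\infty$ but also on the uniform bound on $\max_{\boldsymbol u\in C^g_*}\mathbb{E}_{\boldsymbol u}[\tau_{0^d}]$ supplied by \Cref{lem:hitting_time} and \Cref{assump:poly_erg}, which the paper makes explicit in \eqref{eq:unif_bound_tau_0} and which you only gesture at.
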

In the proof of \Cref{lem:max_geom_rv} given in \Cref{subsec:proof_max_geom_rv}, we make use of geometric ergodicity of the chain and the fact that hitting times have geometric tails to find an upper bound for moments of $M^T_{\boldsymbol\theta^*}$.
Using this, we aim to bound the number of episodes started before or at $T$, denoted by $K_T$. We first find an upper bound for the number of episodes 
in which the second stopping criterion is met or there exists a state-action pair for which $N_t(\boldsymbol x,a)$ has increased more than twice.  In the following lemma, we bound  the number of such episodes, which we denote by $K_M$, in terms of random variable $M^T_{\boldsymbol\theta^*}$
and other problem-dependent constants. Proof of \Cref{lem:number_macro_episodes} is given in \Cref{subsec:proof_number_macro_episodes}. 
\begin{Lemma}\label{lem:number_macro_episodes}
	The number of episodes triggered by the second stopping criterion and  started before or at time $T$, denoted by $K_M$, satisfies 
	$ K_M \leq  2|\mathcal A| (M^T_{\boldsymbol\theta^*}+1)^d  \log_2 T$ \textit{a.s.}
\end{Lemma}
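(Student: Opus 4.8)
The plan is to combine a simple counting bound on the number of distinct state-action pairs reachable by time $T$ with a per-pair doubling argument. First, since $M^T_{\boldsymbol\theta^*}$ is by definition the maximum $\ell_\infty$-norm of the state achieved up to time $T$, every visited state $\boldsymbol X(t)$ with $t\le T$ lies in the box $\{0,1,\dots,M^T_{\boldsymbol\theta^*}\}^d$, which contains exactly $(M^T_{\boldsymbol\theta^*}+1)^d$ points. Hence the set of pairs $(\boldsymbol x,a)$ that are ever visited up to time $T$ has cardinality at most $|\mathcal A|\,(M^T_{\boldsymbol\theta^*}+1)^d$, and the second stopping criterion can only ever be triggered by such pairs.

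Next, I would fix a visited pair $(\boldsymbol x,a)$ and bound the number of episodes whose second criterion is triggered by it. Suppose these episodes are $k_1<k_2<\cdots<k_m$, all with $t_{k_j}\le T$. By the definition of the second criterion, at the end of the active phase of episode $k_j$ we have $N_{\tilde t_{k_j+1}}(\boldsymbol x,a)>2\,N_{t_{k_j}}(\boldsymbol x,a)$. Writing $e_j:=N_{\tilde t_{k_j+1}}(\boldsymbol x,a)$, and using that the counts $N_t(\boldsymbol x,a)$ are non-decreasing in $t$ together with $t_{k_j}\ge \tilde t_{k_{j-1}+1}$ (which holds because $k_j\ge k_{j-1}+1$), I get $N_{t_{k_j}}(\boldsymbol x,a)\ge e_{j-1}$ and therefore $e_j>2e_{j-1}$ for every $j\ge 2$, while $e_1\ge 1$. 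Iterating yields $e_m>2^{m-1}$. Since the total number of active steps up to time $T$ is at most $T$, we have $e_m\le T$, so $2^{m-1}<T$, giving $m\le \log_2 T+1$.

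Finally, I would sum over all visited pairs, noting that an episode triggered simultaneously by several pairs is counted once in $K_M$ but possibly multiple times in the sum, so the sum is an upper bound: $K_M\le |\mathcal A|\,(M^T_{\boldsymbol\theta^*}+1)^d\,(\log_2 T+1)$. Using $T\ge 2$, so that $\log_2 T\ge 1$ and hence $\log_2 T+1\le 2\log_2 T$, delivers the claimed almost-sure bound $K_M\le 2|\mathcal A|\,(M^T_{\boldsymbol\theta^*}+1)^d\,\log_2 T$, with the random factor $(M^T_{\boldsymbol\theta^*}+1)^d$ being exactly the number of reachable states, whose moments are controlled by \Cref{lem:max_geom_rv}.

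The main obstacle is the bookkeeping around the definition of $N_t(\boldsymbol x,a)$: because it records visits only from completed active phases $[t_k,\tilde t_{k+1})$ and freezes during the settling periods to $0^d$, I must verify carefully that the \emph{across-episode} doubling inequality $e_j>2e_{j-1}$ is valid, i.e.\ that the count at the start of a triggering episode dominates the count at the end of the previous triggering episode, and that the base case $e_1\ge 1$ holds even when the pair's count was $0$ at the start (where the criterion $N_t>2\cdot 0$ fires on the first visit). Once this monotonicity is pinned down, the remainder is elementary: geometric growth of the per-pair counts caps the number of triggerings at logarithmically many, and the counting over the reachable box closes the argument.
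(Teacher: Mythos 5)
Your proposal is correct and follows essentially the same route as the paper: bound the number of reachable state-action pairs by $|\mathcal A|(M^T_{\boldsymbol\theta^*}+1)^d$ using the definition of $M^T_{\boldsymbol\theta^*}$, use the doubling of $N_t(\boldsymbol x,a)$ to cap the number of triggerings per pair at $1+\log_2 T$ (the paper phrases this as $K_M(\boldsymbol x,a)\le \log_2 N_{T+1}(\boldsymbol x,a)+1$), and sum over pairs, absorbing the $+1$ via $T\ge 2$. Your explicit across-episode monotonicity check of the counts $e_j$ is a slightly more careful rendering of a step the paper simply asserts, but the argument is the same.
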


We next bound the total number of episodes $K_T$ 
by bounding the number of episodes triggered by the first stopping criterion, using the fact that in such episodes, $\tilde{T}_k = \tilde{T}_{k-1}+1$. Moreover, to address the settling time of each episode $k$, shown by $E_k={T}_k-\tilde{T}_k$, we use the geometric ergodicity property and \Cref{lem:max_geom_rv}.
Finally, the proof of \Cref{lem:number_episodes} is given in \Cref{subsec:proof_number_episodes}. 

\begin{Lemma}\label{lem:number_episodes}
	The number of episodes started by  $T$ satisfies 
	$K_T \leq  2\sqrt{|\mathcal A| (M^T_{\boldsymbol\theta^*}+1)^d T \log_2 T  }$ \textit{a.s.}
\end{Lemma}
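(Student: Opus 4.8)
The plan is to count episodes by separating the two stopping rules of \Cref{alg:TSDE} and exploiting that the first rule lets the (active) episode lengths grow by only one at a time. First I would record the basic recursion: since the inner \textbf{while} loop runs only while $t \le t_k + \tilde T_{k-1}$, every episode obeys $\tilde T_k \le \tilde T_{k-1}+1$, with equality exactly when the episode ends under the first stopping criterion; an episode that ends under the second criterion ($N_t(\boldsymbol x,a) > 2N_{t_k}(\boldsymbol x,a)$) can only be cut short, so $\tilde T_k \le \tilde T_{k-1}$. The number of second-criterion episodes is exactly $K_M$, which is already controlled by \Cref{lem:number_macro_episodes}.

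Next I would group the episodes into \emph{macro-episodes}, each a maximal block of consecutive episodes delimited by the second-criterion triggers, so that the number $M$ of macro-episodes satisfies $M \le K_M + 1$. Within one macro-episode the first criterion is the active one, so the active lengths increase by one from episode to episode; since each active length is at least one, a macro-episode containing $n_j$ episodes accrues at least $1+2+\dots+n_j = n_j(n_j+1)/2 \ge n_j^2/2$ active steps. Because the active phases of distinct episodes are disjoint sub-intervals of $\{1,\dots,T\}$ (settling phases only add further, non-overlapping steps and can only lower the episode count), we have $\sum_{k=1}^{K_T}\tilde T_k \le T$, and hence $\sum_{j=1}^{M} n_j^2 \le 2\sum_{k}\tilde T_k \le 2T$.

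Finally I would combine the two ingredients via Cauchy--Schwarz:
\[
K_T = \sum_{j=1}^{M} n_j \le \sqrt{M \sum_{j=1}^M n_j^2} \le \sqrt{2MT} \le \sqrt{2(K_M+1)T}.
\]
Substituting the bound $K_M \le 2|\mathcal A|(M^T_{\boldsymbol\theta^*}+1)^d \log_2 T$ from \Cref{lem:number_macro_episodes} and simplifying (the additive term is dominated because $|\mathcal A|(M^T_{\boldsymbol\theta^*}+1)^d\log_2 T \ge 1$) yields the claimed $K_T \le 2\sqrt{|\mathcal A|(M^T_{\boldsymbol\theta^*}+1)^d T \log_2 T}$.

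I expect the main obstacle to be the macro-episode bookkeeping: making the quadratic lower bound on the active time of each macro-episode fully rigorous (in particular handling the final, possibly truncated, second-criterion episode of each block, and confirming $M \le K_M+1$), and then pinning down the constant so that the lower-order additive terms genuinely collapse into the clean factor of $2$ rather than a $2\sqrt 2$. Some care is also needed to justify $\sum_k \tilde T_k \le T$ for the last episode, whose active phase is capped near time $T$ by \Cref{alg:TSDE}; and one should note that the settling phases, being a.s.\ finite by the geometric ergodicity of \Cref{assump:geom_erg}, never invalidate the counting, even though it is precisely these phases, via \Cref{lem:max_geom_rv}, that later let one pass from this almost-sure bound to the $\tilde O(h^d\sqrt{|\mathcal A|T})$ bound on $\mathbb E[K_T]$.
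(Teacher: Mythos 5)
Your argument is the paper's own proof: the same macro-episode decomposition at the second-criterion triggers, the same quadratic lower bound $n_j(n_j+1)/2$ on each macro-episode's duration from $\tilde T_j = \tilde T_{j-1}+1$, and the same Cauchy--Schwarz step, so the substance is correct. The only difference is the bookkeeping you flagged: the paper indexes the macro episodes as exactly $K_M$ (with $t_{n_{K_M+1}}=T+1$ as a terminal marker) and uses $\sum_k T^M_k = T$, so Cauchy--Schwarz yields $K_T \le \sqrt{2K_M T}$ directly and the clean factor $2$ follows from \Cref{lem:number_macro_episodes} without the $+1$ that leaves you between $2$ and $2\sqrt{2}$.
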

From \Cref{lem:number_episodes}, the upper bound given in \Cref{lem:max_geom_rv} for moments of $M^T_{\boldsymbol\theta^*}$, and Cauchy–Schwarz inequality, it follows that the expected value of the number of episodes $K_T$ is of the order $\tilde O(h^d\sqrt{|\mathcal A|T})$. This term has a crucial role in determining the overall order of the total regret up to time $T$. 

\begin{Remark}
	The skip-free to the right property in \Cref{assump:skipfrt} yields a polynomially-sized subset of the underlying state-space that can be explored as a function of $T$. This polynomially-sized subset can be viewed as the effective finite-size of the system in the worst-case, and then, directly applying finite-state problem bounds \cite{ouyang2017learning} would result in a regret of order $\tilde O(T^{d+0.5})$; since $d\geq 1$, such a coarse bound is not helpful even for asserting asymptotic optimality! However, to achieve a regret of $\tilde O(\sqrt{T})$, it is essential to carefully understand and characterize the distribution of $M^T_{\boldsymbol \theta^*}$ and then its moments, as demonstrated in \Cref{lem:max_geom_rv}.
\end{Remark}

\begin{Remark}\label{rem:irr}
	The derived regret bound can be extended to a larger class of MDPs which consist of  transient states in addition to the single irreducible class. Specifically, for any $\theta_1,\theta_2 \in \Theta$, the Markov process with transition kernel $P_{\theta_1}^{\pi^*_{\theta_2}}$ obtained from the MDP $\left(\mathcal{X},\mathcal{A},c,P_{\theta_1}\right)$ by following policy $\pi^*_{\theta_2}$ has a single irreducible class $I_{\theta_1,\theta_2 }$ and a set of transient states $T_{\theta_1,\theta_2 }$. Furthermore, Assumptions \ref{assump:geom_erg} and \ref{assump:poly_erg} hold for the single irreducible class. The reasoning behind the proof remains true in this case using the  following argument: each episode $k$ starts at $0^d$ which is in the irreducible set for the chosen policy $\pi^*_{\theta_k}$, hence, throughout the episode the algorithm remains in the irreducible set that is positive recurrent and never visits any transient states. In other words, episodes starting and ending at $0^d$ with a fixed episode dependent policy implies that reachable set of $0^d$ is all that can be explored, which is positive recurrent by our assumptions. As a result, we can restrict our proof derivations to the subset that is reachable from $0^d$ in each episode and follow the same analysis. The Lyapunov function based bounds apply to the positive recurrent states, and hence, restricting attention to states reachable from $0^d$ within each episode, we can use these bounds for our assessment of regret using norms of the state. Thereafter, the coarse bounds on the norms of the state can be applied as carried out in our proof.
\end{Remark}

\begin{Remark}
	By problem-dependent parameters, we refer to the parameters that characterize the complexity or size of the model class $\Theta$. These parameters are not just a function of the size of the state-space and diameter of the MDP (as mentioned in the literature on finite-size problems\cite{agrawal2017optimistic,gopalan2015thompson,ouyang2017learning}), as stability needs to be accounted for in the countable state-space setting. The dependence is, thus, more complex and requires the inclusion of stability parameters, such as Lyapunov functions, petite sets, and ergodicity coefficients that are discussed in Assumptions \ref{assump:costfn}-\ref{assump:poly_erg}.
\end{Remark}

\subsection{Regret analysis}\label{subsec:regret_analysis}
We first note a key property of Thompson sampling from \cite{ouyang2017learning}, which states that 
for any episode $k$,  
measurable function $f$, and $\mathcal H_{t_k}-$measurable random variable $Y$, we have 
\begin{align} \label{eq:TS_property}
	\E\Big[ f(\theta_{k},Y)\Big]= & \E \Big[ f(\boldsymbol\theta^*,Y) \Big],
\end{align}  where $\mathcal H_t:=\sigma\left(\boldsymbol X\left(1\right),\ldots,\boldsymbol X\left(t\right),A\left(1\right),\ldots,A\left(t-1\right)\right)$ for all $t \in \mathbb{N}$. 
Next, we bound regret terms $R_0$, $R_1$ and $R_2$ using the approach of \cite{ouyang2017learning} along with additional arguments to extend their result to a countably infinite state-space. We consider the relative value function $v(\boldsymbol x;\theta)$ of policy  $\pi^*_{\theta}$ introduced for the optimal policy in ACOE or for the best in-class policy in the Poisson equation. In either of these cases, policy  $\pi^*_{\theta}$ satisfies  \eqref{eq:ACOE_case2}, which  is the corresponding Poisson equation with forcing function ${c}(\boldsymbol x,\pi^*_{\theta}(\boldsymbol x))$ in a Markov chain with transition matrix $P_{\theta}^{\pi^*_{\theta}}$. 
In \eqref{eq:PoissonEq}, we presented the solution $(J,v)$ to the Poisson equation, which yields the following upper bound for the relative value function, as argued in \Cref{subsec:exist_pois_eqn}:
\begin{equation} \label{eq:bound_rel_val}
	v(\boldsymbol x;\theta)\leq \bar{C}^{\pi^*_{\theta}}(\boldsymbol x) \leq  \mathbb{E}^{\pi^*_{\theta}}_{\boldsymbol x}\left[  Kd  \left(\norm{\boldsymbol x}_{\infty}+h\tau_{0^d}\right)^r \tau_{0^d}\right].
\end{equation}
We can similarly lower bound the relative value function using \Cref{assump:sup_J_V} as
\begin{align} \label{eq:bound_rel_val_low}
	v(\boldsymbol x;\theta)\geq - J(\theta) \mathbb{E}^{\pi^*_{\theta}}_{\boldsymbol x}[\tau_{0^d}]\geq - J^* \mathbb{E}^{\pi^*_{\theta}}_{\boldsymbol x}[\tau_{0^d}].
\end{align}
From \Cref{assump:geom_erg}, all moments of $\tau_{0^d}$ and thus, the derived bounds are finite. Also, in \Cref{lem:hitting_time} we bound the moments of  $\tau_{0^d}$ of order $i\leq r+1$ using the polynomial Lyapunov function $V^p_{\theta_1,\theta_2}$, which is then used to bound the expected regret.  We next bound the first regret term $R_0$ from the first stopping criterion in terms of the number of episodes $K_T$ and the settling time of each episode $k$. 

\begin{Lemma}\label{lem:R_0_bound}
	The first regret term $R_0$ satisfies 
	$R_0 \leq   J^*  \E[K_T(\max_{1\leq i  \leq T} \tau_{0^d}^{(i)}+1)].$
\end{Lemma}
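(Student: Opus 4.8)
The plan is to start from the decomposition $R_0 = \E\big[\sum_{k=1}^{K_T} T_k J(\theta_k)\big] - T\,\E[J(\boldsymbol\theta^*)]$ in \eqref{eq:R_0} and to exploit the Thompson-sampling identity \eqref{eq:TS_property}, one hypothesis of which is that the multiplier of $J(\theta_k)$ be $\mathcal H_{t_k}$-measurable. The obstruction is precisely that the realized episode length $T_k$ is \emph{not} $\mathcal H_{t_k}$-measurable, since it is determined by the trajectory observed during episode $k$ under the unknown true dynamics. I would resolve this by splitting each episode into its active phase of length $\tilde T_k$ and its settling phase $E_k := T_k - \tilde T_k$, and then replacing the active length by a predictable surrogate.

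First I would invoke the first stopping criterion on Line~\ref{line:stopping_time}, which forces $\tilde T_k \le \tilde T_{k-1}+1$, to write $T_k \le (\tilde T_{k-1}+1) + E_k$. The quantity $\tilde T_{k-1}+1$ is $\mathcal H_{t_k}$-measurable, as it is fixed once the active phase of episode $k-1$ ends, which happens before $t_k$. Applying \eqref{eq:TS_property} with $f(\theta,Y)=Y\,J(\theta)$ and $Y=\tilde T_{k-1}+1$ (folding the $\mathcal H_{t_k}$-measurable indicator $\{t_k\le T\}$ into $Y$ to accommodate the random number of episodes) then yields $\E\big[\sum_{k}(\tilde T_{k-1}+1)J(\theta_k)\big] = \E\big[\sum_{k}(\tilde T_{k-1}+1)J(\boldsymbol\theta^*)\big]$.

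Next I would handle the settling contribution by the crude bound $J(\theta_k)\le J^*$ from \Cref{assump:sup_J_V}: the settling phase of episode $k$ runs under the fixed policy $\pi^*_{\theta_k}$ from the state at $\tilde t_{k+1}$ until $0^d$ is next visited, so its length $E_k$ is at most the length of the $0^d$-excursion straddling $\tilde t_{k+1}$, giving $E_k \le \max_{1\le i\le T}\tau_{0^d}^{(i)}$ and hence $\E\big[\sum_k E_k J(\theta_k)\big] \le J^*\,\E\big[K_T\max_{1\le i\le T}\tau_{0^d}^{(i)}\big]$. For the remaining, now-predictable active term I would use that the active phases are disjoint subsets of $\{1,\dots,T\}$ (the last one ends at $\tilde t_{K_T+1}-1\le T$), so that $\sum_{k=1}^{K_T}\tilde T_k \le T$; together with $\tilde T_0=1$ and $\tilde T_{K_T}\ge 1$ this gives $\sum_{k=1}^{K_T}(\tilde T_{k-1}+1)\le T + K_T$. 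Since $0\le J(\boldsymbol\theta^*)\le J^*$, it follows that $\E\big[\sum_k(\tilde T_{k-1}+1)J(\boldsymbol\theta^*)\big] - T\,\E[J(\boldsymbol\theta^*)] \le \E\big[J(\boldsymbol\theta^*)\,K_T\big]\le J^*\,\E[K_T]$. Combining the active and settling estimates yields $R_0 \le J^*\,\E\big[K_T(\max_{1\le i\le T}\tau_{0^d}^{(i)}+1)\big]$.

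The main obstacle is the second step: making the $J(\theta_k)$-multiplier predictable without losing too much. The clean way is the active/settling split, which isolates the part governed by the first stopping criterion (controllable by the $\mathcal H_{t_k}$-measurable bound $\tilde T_{k-1}+1$) from the genuinely unpredictable settling excursion, whose length is then absorbed uniformly by the largest recurrence time $\max_{1\le i\le T}\tau_{0^d}^{(i)}$ studied in \Cref{lem:max_geom_rv}. A secondary technical point worth stating carefully is that summing the per-episode identity \eqref{eq:TS_property} over the random index $K_T$ is legitimate because $\{k\le K_T\}=\{t_k\le T\}$ is $\mathcal H_{t_k}$-measurable.
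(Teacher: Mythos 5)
Your proposal is correct and follows essentially the same route as the paper: split $T_k$ into the active length $\tilde T_k$ and the settling time $E_k$, replace $\tilde T_k$ by the $\mathcal H_{t_k}$-measurable surrogate $\tilde T_{k-1}+1$ so that the Thompson-sampling identity \eqref{eq:TS_property} applies, bound the residual active contribution by $\E[J(\boldsymbol\theta^*)K_T]$, and absorb each $E_k$ into $\max_{1\le i\le T}\tau_{0^d}^{(i)}$ using \Cref{assump:sup_J_V}. The only cosmetic difference is the bookkeeping for the residual term (you use $\sum_k \tilde T_k \le T$ with $\tilde T_{K_T}\ge 1$, while the paper subtracts $\sum_k T_k = T$ and drops $T_{K_T}\ge 1$ and $E_k\ge 0$), which yields the same bound.
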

Proof of \Cref{lem:R_0_bound} is given in \Cref{subsec:R_0_bound}.  From \Cref{lem:max_geom_rv}, all moments of $\max_{1\leq i  \leq T} \tau_{0^d}^{(i)}$ are bounded by a polylogarithmic function. Futhermore, as a result of \Cref{lem:number_episodes}, expected value of the number of episodes $K_T$ is of the order $\tilde O (h^d\sqrt{|\mathcal A|T})$, which leads to a $\tilde O (h^d\sqrt{|\mathcal A|T})$ regret term $R_0$. Next, an upper bound on $R_1$ defined in \eqref{eq:R_1} is derived.
In the proof of \Cref{lem:R_1_bound} we argue that as the relative value function is equal to $0$ at all time instances $t_k$ for $k \leq K_T$, the only term that contributes to the regret is the value function at the end of time horizon $T$. We use the lower bound derived in \eqref{eq:bound_rel_val_low} to show that  the second regret term $R_1$ is $\tilde O (1)$; the proof is given in \Cref{subsec:R_1_bound}. 
\begin{Lemma}\label{lem:R_1_bound}
	The second regret term $R_1$ satisfies 
	$ R_1 \leq c_2 \E [( M^T_{\boldsymbol\theta^*})^{r_*^p}]+c_3$, where $c_2={J^*2^{r_*^p} s_*^p}(\beta^p_*)^{-1} $ and $c_3={J^*}(\beta^p_*)^{-1}\big( s_*^p \left(2h\right)^{r_*^p} + {b^p_{*}}(K_*)^{-1}\big)$. 
\end{Lemma}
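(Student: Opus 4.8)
The plan is to exploit the per-episode telescoping structure of $R_1$ together with the fact that \Cref{alg:TSDE} pins the state to $0^d$ at the start of every episode (and, for every completed episode, at its end as well). For a fixed episode $k$ the inner sum telescopes,
\[
\sum_{t=t_k}^{t_{k+1}-1}\big[v(\boldsymbol X(t);\theta_k)-v(\boldsymbol X(t+1);\theta_k)\big]=v(\boldsymbol X(t_k);\theta_k)-v(\boldsymbol X(t_{k+1});\theta_k).
\]
Since $\boldsymbol X(1)=0^d$ and the settling phase (Lines 14--17) drives the chain to $0^d$ before each new episode begins, we have $\boldsymbol X(t_k)=0^d$ for every $k$, and $\boldsymbol X(t_{k+1})=0^d$ for every completed episode $k<K_T$. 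Combined with the normalization $v(0^d;\theta)=0$, this annihilates every contribution except the right endpoint of the last episode, whose stopping time is the horizon, i.e.\ $\boldsymbol X(t_{K_T+1})=\boldsymbol X(T+1)$. Hence $R_1=-\E\big[v(\boldsymbol X(T+1);\theta_{K_T})\big]$.

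It then remains to control $v$ at the single, possibly large, terminal state $\boldsymbol X(T+1)$. I would invoke the lower bound \eqref{eq:bound_rel_val_low}, namely $v(\boldsymbol x;\theta)\ge -J^*\,\E^{\pi^*_\theta}_{\boldsymbol x}[\tau_{0^d}]$, with $\boldsymbol x=\boldsymbol X(T+1)$ and $\theta=\theta_{K_T}$, to get $R_1\le J^*\,\E\big[\E^{\pi^*_{\theta_{K_T}}}_{\boldsymbol X(T+1)}[\tau_{0^d}]\big]$. The inner quantity is a deterministic function of $(\boldsymbol X(T+1),\theta_{K_T})$ — the expected first-passage time to $0^d$ of the self-chain $P_{\theta_{K_T}}^{\pi^*_{\theta_{K_T}}}$ started at $\boldsymbol X(T+1)$ — so no delicate coupling between the sampled and the true dynamics is needed here.

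The final step converts this hitting-time bound into a moment of $M^T_{\boldsymbol\theta^*}$. I would use the first-moment case of \Cref{lem:hitting_time}: off the common finite set $C^p_*$ the drift gives $\Delta V^p\le -\beta^p_*(V^p)^{\alpha^p}\le -\beta^p_*$ (as $V^p\ge 1$ and the exponent is in $(0,1)$), while $K_*$ accounts for the residual passage from $C^p_*$ to $0^d$, yielding the uniform bound $\E^{\pi^*_\theta}_{\boldsymbol x}[\tau_{0^d}]\le (\beta^p_*)^{-1}\big(V^p(\boldsymbol x)+b^p_*/K_*\big)$. Since $V^p$ has nonnegative coefficients of total mass $\le s^p_*$ and degree $\le r^p_*$, we have $V^p(\boldsymbol x)\le s^p_*\,\norm{\boldsymbol x}_\infty^{r^p_*}$; and by the skip-free \Cref{assump:skipfrt}, the single transition into $\boldsymbol X(T+1)$ gives $\norm{\boldsymbol X(T+1)}_\infty\le M^T_{\boldsymbol\theta^*}+h$, so that $(M^T_{\boldsymbol\theta^*}+h)^{r^p_*}\le 2^{r^p_*}\big((M^T_{\boldsymbol\theta^*})^{r^p_*}+h^{r^p_*}\big)$. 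Substituting and taking expectations collects exactly the stated constants $c_2=J^*2^{r^p_*}s^p_*/\beta^p_*$ and $c_3=(J^*/\beta^p_*)\big(s^p_*(2h)^{r^p_*}+b^p_*/K_*\big)$; together with the $O(\log^{r^p_*}T)$ moment bound of \Cref{lem:max_geom_rv} for $\E[(M^T_{\boldsymbol\theta^*})^{r^p_*}]$, this confirms $R_1=\tilde O(1)$.

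The main obstacle is conceptual rather than computational: unlike the finite-state analysis where $v$ is bounded and the telescoped endpoints cancel outright, the surviving term cannot simply be zeroed, because the final endpoint $\boldsymbol X(T+1)$ is an arbitrary state produced by the true chain and $v$ is unbounded on $\X$. The crux is therefore to replace the pointwise control of $v$ by a first-passage-time bound that is \emph{uniform over the entire model class} — leaning on the uniformity constants $\beta^p_*,s^p_*,r^p_*,b^p_*,K_*$ of \Cref{assump:poly_erg} — and to invoke the skip-free property so that the last transition keeps $\norm{\boldsymbol X(T+1)}_\infty$ within $h$ of $M^T_{\boldsymbol\theta^*}$, whose moments are already controlled polylogarithmically.
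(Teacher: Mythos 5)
Your proposal is correct and follows essentially the same route as the paper's proof: telescope $R_1$ to $-\E[v(\boldsymbol X(T+1);\theta_{K_T})]$ using $\boldsymbol X(t_k)=0^d$ and $v(0^d;\theta)=0$, lower-bound $v$ via \eqref{eq:bound_rel_val_low}, control the hitting time by the uniform first-moment bound $\E^{\pi^*_\theta}_{\boldsymbol x}[\tau_{0^d}]\le(\beta^p_*)^{-1}(s^p_*\norm{\boldsymbol x}_\infty^{r^p_*}+b^p_*/K_*)$ from \Cref{lem:hitting_time} (this is exactly \eqref{eq:unif_bound_tau_0}), and finish with $\norm{\boldsymbol X(T+1)}_\infty\le M^T_{\boldsymbol\theta^*}+h$ and $(a+b)^r\le 2^r(a^r+b^r)$. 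The constants you obtain match $c_2$ and $c_3$ exactly, so no gap remains.
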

From \Cref{lem:max_geom_rv}, $\E[( M^T_{\boldsymbol\theta^*})^{r_*^p}]$ is $O(\log^{r_*^p} T)$; hence, $R_1$ is upper bounded by a polylogarithmic function of the order $r_*^p$.
Finally, in \Cref{lem:R_2_bound}, we derive an upper bound for the third regret term $R_2$ defined in \eqref{eq:R_2}
using the bound derived for the relative value function in \eqref{eq:bound_rel_val}. To bound $R_2$, we characterize it in terms of the difference between the empirical and true unknown transition kernel and following the concentration method used in \cite{weissman2003inequalities,auer2008near,ouyang2017learning,akbarzadeh2022learning}, we argue that with high probability the total variation distance between the two distributions is small; for
proof, see \Cref{subsec:R_2_bound}.

\begin{Lemma}\label{lem:R_2_bound}
	For problem-dependent constant $c_{p_3}$ and polynomial $Q(T)= c_{p_3} (Th)^{r+r_*^p}/48$, the second regret term $R_2$ satisfies  
	\begin{align*}
		R_2 
		&\leq    (\log (hT+h)+1)^d + c_{p_3}\sqrt{|\mathcal A| T }\log_2 \left({ 2|\mathcal A|T^2Q(T) }\right)\E\big[   {(M^T_{\boldsymbol\theta^*}+h)^{d+r+r_*^p}}\big(\max_{1\leq i  \leq T} \tau_{0^d}^{(i)} \big)\big].
	\end{align*}
\end{Lemma}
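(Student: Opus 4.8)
The plan is to adapt the $R_2$ analysis of \cite{ouyang2017learning} to the countable state space by carrying the unbounded relative value function and the random maximal state $M^T_{\boldsymbol\theta^*}$ through each estimate. The term $R_2$ in \eqref{eq:R_2} is the one-step prediction error incurred by using the sampled kernel $P_{\theta_k}$ rather than the true kernel $P_{\boldsymbol\theta^*}$. First I would condition on $\mathcal H_t$ together with $\theta_k$: since $A(t)=\pi^*_{\theta_k}(\boldsymbol X(t))$ and the realized transition is driven by the true parameter, $\E[v(\boldsymbol X(t+1);\theta_k)\mid \mathcal H_t,\theta_k]=\sum_{\boldsymbol y}P_{\boldsymbol\theta^*}(\boldsymbol y\mid \boldsymbol X(t),A(t))v(\boldsymbol y;\theta_k)$, so \eqref{eq:R_2} becomes
\[
R_2=\E\Big[\sum_{k=1}^{K_T}\sum_{t=t_k}^{t_{k+1}-1}\sum_{\boldsymbol y\in\mathcal X}\big(P_{\boldsymbol\theta^*}(\boldsymbol y\mid \boldsymbol X(t),A(t))-P_{\theta_k}(\boldsymbol y\mid \boldsymbol X(t),A(t))\big)v(\boldsymbol y;\theta_k)\Big].
\]
Everything then reduces to bounding the inner product of the kernel mismatch with $v(\cdot;\theta_k)$ along the trajectory.

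Next I would control the two factors of this inner product separately. By \Cref{assump:skipfrt} the summand is supported on the finitely many $\boldsymbol y$ with $\norm{\boldsymbol y}_1\le\norm{\boldsymbol X(t)}_1+h$, and every visited state obeys $\norm{\boldsymbol X(t)}_\infty\le M^T_{\boldsymbol\theta^*}$, so $\norm{\boldsymbol y}_\infty\le M^T_{\boldsymbol\theta^*}+h$; this converts the sum over $\boldsymbol y$ into a total-variation distance multiplied by a uniform envelope on $v$. For that envelope I would use \eqref{eq:bound_rel_val} together with the polynomial hitting-time moment bounds of \Cref{lem:hitting_time}, replacing the hitting time from $\boldsymbol y$ by the episode-wide maximum $\max_{1\le i\le T}\tau_{0^d}^{(i)}$; after expanding $(\norm{\boldsymbol y}_\infty+h\tau_{0^d})^r$ this yields a bound on $|v(\boldsymbol y;\theta_k)|$ of order $(M^T_{\boldsymbol\theta^*}+h)^{r+r_*^p}$ times a power of $\max_{1\le i\le T}\tau_{0^d}^{(i)}$. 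For the total-variation factor I would introduce the empirical kernel $\hat P_t$ built from the visit counts $N_t(\boldsymbol x,a)$ and split $\norm{P_{\boldsymbol\theta^*}-P_{\theta_k}}_1\le\norm{P_{\boldsymbol\theta^*}-\hat P_t}_1+\norm{\hat P_t-P_{\theta_k}}_1$; the Thompson-sampling identity \eqref{eq:TS_property} equates the two contributions in expectation, so it suffices to control the deviation of $\hat P_t$ from the truth.

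I would then invoke the Weissman et al.\ inequality \cite{weissman2003inequalities} at each state-action pair $(\boldsymbol x,a)$: on an alphabet of size at most $(M^T_{\boldsymbol\theta^*}+h+1)^d$---which, by the deterministic skip-free bound $\norm{\boldsymbol X(t)}_1\le hT$, is at most $(hT+h)^d$---the probability that $\norm{\hat P_t(\cdot\mid\boldsymbol x,a)-P_{\boldsymbol\theta^*}(\cdot\mid\boldsymbol x,a)}_1$ exceeds a threshold of order $\sqrt{1/N_t(\boldsymbol x,a)}$ decays exponentially in $N_t(\boldsymbol x,a)$. Taking the confidence level of order $1/(T^2Q(T))$ with $Q(T)=c_{p_3}(Th)^{r+r_*^p}/48$ a deterministic ceiling for the value envelope, a union bound over the at most $|\mathcal A|(hT+h)^d$ relevant pairs and over the horizon produces the $\log_2(2|\mathcal A|T^2Q(T))$ factor, while on the complementary rare event the crude bound $|v|\le Q(T)$ together with its negligible probability, plus the first-visit pairs where no concentration is available, contribute only the additive polylogarithmic term $(\log(hT+h)+1)^d$ after taking expectations and applying \Cref{lem:max_geom_rv}.

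Finally I would sum the per-step errors, which on the good event are of order $\sqrt{1/N_t(\boldsymbol X(t),A(t))}$ times the value envelope. The doubling guard $N_t(\boldsymbol x,a)\le 2N_{t_k}(\boldsymbol x,a)$ enforced by the second stopping criterion makes the usual pigeonhole/telescoping estimate $\sum_t 1/\sqrt{N_t(\boldsymbol X(t),A(t))}=O(\sqrt{|\mathcal A|(M^T_{\boldsymbol\theta^*}+h)^dT})$ applicable; using $M^T_{\boldsymbol\theta^*}+h\ge1$ to absorb the resulting $(M^T_{\boldsymbol\theta^*}+h)^{d/2}$ into $(M^T_{\boldsymbol\theta^*}+h)^{d}$ and combining with the value envelope gives the stated $\sqrt{|\mathcal A|T}\,(M^T_{\boldsymbol\theta^*}+h)^{d+r+r_*^p}\max_{1\le i\le T}\tau_{0^d}^{(i)}$ dependence inside the expectation. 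The main obstacle is precisely this bookkeeping: unlike the finite-state setting where $v$ is bounded by the diameter and the alphabet is fixed, here both the value envelope and the concentration alphabet grow with the random state, so the concentration threshold, the pigeonhole sum, and the polynomial value bound must all be coupled to the same random quantities $M^T_{\boldsymbol\theta^*}$ and $\max_{1\le i\le T}\tau_{0^d}^{(i)}$ and only converted to $\tilde O(\sqrt T)$ at the very end via the moment bounds of \Cref{lem:max_geom_rv}. Care is also needed to ensure the rare concentration-failure event does not interact badly with the unbounded $v$, which is why the confidence level is tied to the deterministic value ceiling $Q(T)$.
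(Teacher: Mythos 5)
Your overall architecture is the same as the paper's: reduce $R_2$ to a kernel-mismatch inner product, split through an empirical kernel with Weissman-type confidence sets, use the Thompson-sampling identity \eqref{eq:TS_property} to transfer concentration from $\boldsymbol\theta^*$ to $\theta_k$, run the doubling/pigeonhole sum, and tie the confidence level to a deterministic ceiling $Q(T)$ on $|v|$. However, two of your steps have genuine gaps. The first concerns the value-function envelope. In \eqref{eq:bound_rel_val} the hitting time $\tau_{0^d}$ sits inside the expectation $\mathbb{E}_{\boldsymbol y}^{\pi^*_{\theta_k}}[\cdot]$, i.e., it is the hitting time of the \emph{hypothetical} chain with kernel $P_{\theta_k}^{\pi^*_{\theta_k}}$ started at $\boldsymbol y$, used only to define $v(\cdot;\theta_k)$. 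It is not a random variable on the trajectory the algorithm actually generates (which evolves under $P_{\boldsymbol\theta^*}^{\pi^*_{\theta_k}}$), so "replacing" it by the realized episode-wide maximum $\max_{1\le i\le T}\tau_{0^d}^{(i)}$ has no justification; there is no domination or coupling between the two. The correct move---which you cite but do not actually carry out---is \Cref{lem:hitting_time}: it bounds $\mathbb{E}_{\boldsymbol y}^{\pi^*_{\theta_k}}[(\tau_{0^d})^{r+1}]$ by a deterministic polynomial of degree $r_*^p$ in $\norm{\boldsymbol y}_\infty$, uniformly over $\Theta$, which is what yields $v_M\le c_{p_2}(M^T_{\boldsymbol\theta^*})^{r+r_*^p}$ with no $\tau$ factor at all.

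The second gap is the settling periods. For $t\in[\tilde t_{k+1},t_{k+1})$ the counts $N_t(\boldsymbol x,a)$ are frozen by construction, so the telescoping estimate $\sum_t 1/\sqrt{\max(1,N_t(\boldsymbol Z(t)))}=O\big(\sqrt{|\mathcal A|(M^T_{\boldsymbol\theta^*}+h)^dT}\big)$ only covers $t\in[t_k,\tilde t_{k+1})$; the remaining steps must be bounded crudely by $\sum_k E_k\le K_T\max_{1\le i\le T}\tau_{0^d}^{(i)}$ and then by \Cref{lem:number_episodes}. This---not the value envelope---is where the factor $\max_{1\le i\le T}\tau_{0^d}^{(i)}$ in the stated bound actually originates. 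Your bookkeeping lands near the right final expression only because the spurious $\tau$ factor from the first gap happens to stand in for the missing settling-time term; done consistently, your envelope would carry an extra power of $\max_i\tau_{0^d}^{(i)}$ beyond what the lemma states, while the settling steps would be unaccounted for. Both issues are repairable with the tools you already name, but as written neither step goes through.
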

The above Lemma results in a $\tilde O(Kr d J^* h^{d+2r+r_*^p}\sqrt{|\mathcal A| T })$ regret term as a result of  \Cref{lem:max_geom_rv}, where  $h$ is the skip-free parameter defined in \Cref{assump:skipfrt}, $d$ is the dimension of the state-space,  $K$ and $r$ are the cost function parameters defined in \Cref{assump:costfn},  $J^*$ is the supremum on the optimal cost,  $r_*^p$ is defined in \Cref{assump:poly_erg}, and where $\tilde O$ hides logarithmic factors in problem parameters one of which is $\log^{d+r+r_*^p+2}(T)$. For simplicity, we have not included the Lyapunov functions related parameters in the regret. Finally, from Lemmas \ref{lem:R_0_bound}, \ref{lem:R_1_bound}, \ref{lem:R_2_bound}, along with the Cauchy-Schwarz inequality, we conclude that the regret of \Cref{alg:TSDE} $R(T,\pi_{TSDE})(=R_0+R_1+R_2)$ is $\tilde O(Kr d J^* h^{d+2r+r_*^p}\sqrt{|\mathcal A| T })$; for brevity, we will state that regret is of the order $\tilde O(dh^d\sqrt{|\mathcal A|T})$.
\begin{Theorem} \label{thm:main}
	Under Assumptions \ref{assump:costfn}-\ref{assump:sup_J_V},	the regret of \Cref{alg:TSDE}, $R(T,\pi_{TSDE})$, is $\tilde O(dh^d\sqrt{|\mathcal A|T})$.
\end{Theorem}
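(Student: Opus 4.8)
The plan is to assemble the regret bound from the three component lemmas that have already been stated, after verifying that all the hypotheses they require hold under Assumptions~\ref{assump:costfn}--\ref{assump:sup_J_V}. The decomposition $R(T,\pi_{TSDE}) = R_0 + R_1 + R_2$ in \eqref{eq:regret_decomposition} is exact and follows directly from the Poisson/ACOE identity \eqref{eq:ACOE}, so the entire theorem reduces to controlling each piece and adding. First I would invoke \Cref{lem:R_0_bound} to get $R_0 \leq J^* \,\E[K_T(\max_{1\leq i\leq T}\tau_{0^d}^{(i)}+1)]$. Since \Cref{lem:number_episodes} gives $K_T \leq 2\sqrt{|\mathcal A|(M^T_{\boldsymbol\theta^*}+1)^d T\log_2 T}$ almost surely, and \Cref{lem:max_geom_rv} gives that every moment of both $M^T_{\boldsymbol\theta^*}$ and $\max_{1\leq i\leq T}\tau_{0^d}^{(i)}$ is $O(\log^p T)$, a single application of Cauchy--Schwarz to split the product $K_T \cdot \max_i \tau_{0^d}^{(i)}$ yields $R_0 = \tilde O(h^d\sqrt{|\mathcal A|T})$; the $h^d$ enters through the $(M^T+1)^d$ factor whose moments carry the $h^d\log^d T$ scaling.

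Next I would dispatch $R_1$ and $R_2$. For $R_1$, \Cref{lem:R_1_bound} already shows $R_1 \leq c_2\,\E[(M^T_{\boldsymbol\theta^*})^{r^p_*}] + c_3$, and by \Cref{lem:max_geom_rv} the expectation is $O(\log^{r^p_*}T)$, so $R_1 = \tilde O(1)$ and contributes nothing to the polynomial-in-$T$ order. For $R_2$, I would take the bound of \Cref{lem:R_2_bound} verbatim: the leading factor is $\sqrt{|\mathcal A|T}$ multiplied by a logarithmic term $\log_2(2|\mathcal A|T^2 Q(T))$ and by $\E[(M^T_{\boldsymbol\theta^*}+h)^{d+r+r^p_*}(\max_i\tau_{0^d}^{(i)})]$. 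Applying Cauchy--Schwarz once more to separate the two random factors and then \Cref{lem:max_geom_rv} to bound each moment by a polylogarithm, the whole expectation is polylogarithmic in $T$, so $R_2 = \tilde O(h^{d+2r+r^p_*}\sqrt{|\mathcal A|T})$, with the stated constants $K$, $r$, $d$, $J^*$ absorbed into the $\tilde O$. Summing the three, the dominant term is the common $\tilde O(h^d\sqrt{|\mathcal A|T})$ order (after the coarser exponent $h^{d+2r+r^p_*}$ and the problem-dependent constants are suppressed into $\tilde O$), which is precisely the claimed bound.

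The main obstacle is not any single inequality but making the Cauchy--Schwarz splittings rigorous while keeping the moment estimates uniform across the model class. Concretely, in both $R_0$ and $R_2$ the expectation is of a \emph{product} of two correlated random quantities---$K_T$ (or a power of $M^T_{\boldsymbol\theta^*}$) and the maximal recurrence time $\max_i\tau_{0^d}^{(i)}$---and one must ensure the moments used in the Cauchy--Schwarz step are exactly those furnished by \Cref{lem:max_geom_rv}, i.e.\ that $M^T_{\boldsymbol\theta^*}$ and $\max_i\tau_{0^d}^{(i)}$ have \emph{all} finite moments of the stated polylogarithmic order, uniformly over $\boldsymbol\theta^*\sim\nu$. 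This uniformity is exactly what Assumptions~\ref{assump:geom_erg} and~\ref{assump:poly_erg} were designed to deliver (through $\gamma^g_*$, $\beta^p_*$, $b^p_*$, $s^p_*$, $r^p_*$, $K_*$), so the argument is really a bookkeeping exercise confirming that each lemma's hidden constants depend only on the class-level suprema/infima and not on the realized parameter. Once that is checked, the theorem is immediate, and I would close by noting that the exponent simplification from $h^{d+2r+r^p_*}$ down to the displayed $h^d$ is purely a matter of which constants the authors choose to hide inside $\tilde O$.
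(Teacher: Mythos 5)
Your proposal is correct and follows essentially the same route as the paper: decompose the regret as $R_0+R_1+R_2$ via the ACOE/Poisson identity, invoke Lemmas~\ref{lem:R_0_bound}, \ref{lem:R_1_bound}, and \ref{lem:R_2_bound} together with Lemmas~\ref{lem:max_geom_rv} and \ref{lem:number_episodes}, and use Cauchy--Schwarz to split the products of $K_T$ (or powers of $M^T_{\boldsymbol\theta^*}$) and $\max_i\tau_{0^d}^{(i)}$ before bounding each factor's moments polylogarithmically. Your closing remark about the exponent $h^{d+2r+r_*^p}$ being collapsed to $h^d$ inside the $\tilde O$ also matches the paper's explicit ``for brevity'' convention.
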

\Cref{thm:main} can be extended to the problem of finding the best policy within a sub-class of policies in set $\Pi$, which may or may not contain the optimal policy. In \Cref{subsec:Preliminaries}, we stated that Assumptions \ref{assump:geom_erg} and \ref{assump:poly_erg} hold for policies in $\Pi$ and we used this to argue that the Poisson equation has a solution given in \eqref{eq:PoissonEq}. As a result, repeating the  same arguments as in \Cref{thm:main} with the modification that $\pi^*_{\theta}$ is the best in-class policy of the MDP governed by parameter $\theta$, yields the following corollary. 

\begin{Corollary}\label{cor:main}
	Under Assumptions \ref{assump:costfn} through \ref{assump:sup_J_V},	the regret of \Cref{alg:TSDE} when using the best in-class policy is $\tilde O(dh^d\sqrt{|\mathcal A|T})$. 
\end{Corollary}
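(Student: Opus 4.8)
The plan is to show that the entire regret analysis underlying \Cref{thm:main} never invokes the \emph{global} optimality of $\pi^*_\theta$ as such, but only two ingredients: (i) that the policy applied in each episode satisfies a Poisson equation of the form \eqref{eq:ACOE}, and (ii) that Assumptions \ref{assump:geom_erg} and \ref{assump:poly_erg} hold for the policies actually used. Since both ingredients remain available in Case 2 once $\pi^*_\theta$ is reinterpreted as the best-in-class policy, the proof of \Cref{thm:main} transfers essentially verbatim, and the only work is to check that no individual step secretly exploited optimality over all policies.

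First I would recall that for the best-in-class policy $\pi^*_\theta$, the pair $(J(\theta),v(\cdot;\theta))$ written explicitly in \eqref{eq:PoissonEq} solves the Poisson equation \eqref{eq:ACOE_case2}, whose existence and finiteness follow from Assumptions \ref{assump:geom_erg} and \ref{assump:poly_erg} (as argued in \Cref{subsec:exist_pois_eqn}). Consequently equation \eqref{eq:ACOE}, the sole identity driving the regret decomposition \eqref{eq:regret_decomposition}--\eqref{eq:R_2}, still holds, now with $J(\theta_k)$ the best-in-class average cost and $v(\cdot;\theta_k)$ its relative value function. The benchmark $J(\boldsymbol\theta^*)$ in \eqref{eq:regret_def} is correspondingly the best-in-class cost, so the split into $R_0+R_1+R_2$ is unchanged, and the benchmark is ``fair'' in the sense of comparing against the best achievable within $\Pi$.

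Next I would verify that each auxiliary result carries through unchanged. The maximum-state-norm bound (\Cref{lem:max_geom_rv}) and the episode-count bounds (Lemmas \ref{lem:number_macro_episodes} and \ref{lem:number_episodes}) depend only on the skip-free property (\Cref{assump:skipfrt}) and on geometric/polynomial ergodicity of $P_{\boldsymbol\theta^*}^{\pi^*_{\theta_k}}$, which the assumptions guarantee for best-in-class policies with the \emph{same} uniform constants. The relative-value-function bounds \eqref{eq:bound_rel_val} and \eqref{eq:bound_rel_val_low} are derived directly from the Poisson solution \eqref{eq:PoissonEq} together with \Cref{assump:sup_J_V}, so they hold identically. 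Finally, the Thompson-sampling identity \eqref{eq:TS_property} is a property of posterior sampling alone and is indifferent to how $\pi^*_\theta$ is selected, so the bounds on $R_0$, $R_1$, and $R_2$ in Lemmas \ref{lem:R_0_bound}--\ref{lem:R_2_bound} follow with no modification. Assembling them exactly as in \Cref{thm:main} gives $R(T,\pi_{TSDE})=R_0+R_1+R_2=\tilde O(dh^d\sqrt{|\mathcal A|T})$.

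The one point that requires care---and what I would flag as the main, though minor, obstacle---is confirming that the uniform ergodicity constants $\gamma_*^g,\beta_*^p,b_*^p,s_*^p,r_*^p,K_*$ are genuinely attained over the best-in-class policy family rather than only over the globally optimal policies, since these constants are what make the bounds hold uniformly in $\theta$. Because Assumptions \ref{assump:geom_erg} and \ref{assump:poly_erg} are stated for the policies in $\Pi$ and the \Cref{subsec:Preliminaries} discussion posits exactly this for the best-in-class policies, this holds by hypothesis and needs no separate argument; the corollary then follows by reading the proof of \Cref{thm:main} with $\pi^*_\theta$ denoting the best-in-class rather than the optimal policy.
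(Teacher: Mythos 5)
Your proposal is correct and follows essentially the same route as the paper: the paper's own justification for \Cref{cor:main} is precisely that \eqref{eq:ACOE} is used only in its Poisson-equation form (never the minimization over actions), that Assumptions \ref{assump:geom_erg} and \ref{assump:poly_erg} together with \eqref{eq:PoissonEq} supply the solution $(J(\theta),v(\cdot;\theta))$ for the best-in-class policy, and that the proof of \Cref{thm:main} therefore repeats verbatim with $\pi^*_\theta$ reinterpreted. Your additional step-by-step audit of the auxiliary lemmas and the uniformity of the ergodicity constants is a more careful writeup of the same argument, not a different one.
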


\subsection{Requirement of an optimal policy oracle}
To implement our algorithm, 
we need to find the optimal policy for each model sampled by the algorithm---optimal policy for \Cref{thm:main} and optimal policy within policy class $\Pi$ for \Cref{cor:main}. 
In the finite state-space setting, \cite{ouyang2017learning} provides a schedule of $\epsilon$ values and selects $\epsilon$-optimal policies to obtain $\tilde{O}(\sqrt{T})$ regret guarantees. The issue with extending the analysis of \cite{ouyang2017learning} to the countable state-space setting is that we need to ensure (uniform) ergodicity for the chosen $\epsilon$-optimal policies. 
In other words, we must verify ergodicity assumptions for a potentially large set of close-to-optimal algorithms whose structure is undetermined. Another issue is that, to the best of our knowledge, there isn't a general structural characterization of all $\epsilon$-optimal stationary policies for countable state-space MDPs or even a characterization of the policy within this set that is selected by any computational procedure in the literature; current results only discuss characterization of the stationary optimal policy. In the absence of such results, stability assumptions with the same uniformity across models as in our submission will be needed, which are likely too strong to be useful. However, 
if we could verify the stability requirements of Assumptions \ref{assump:geom_erg} and \ref{assump:poly_erg} for a subset of policies, the optimal oracle is not needed, and instead, by choosing approximately optimal policies within this subset, we can follow the same proof steps as \cite{ouyang2017learning} to guarantee regret performance similar to \Cref{cor:main} (without knowledge of model parameters). 
Thus, 
in \Cref{thm:Q2_eps_optimal} we extend the previous regret guarantees to the algorithm employing  $\epsilon$-optimal policy;  
proof is given  in \Cref{subsec:proof_Q2_eps_optimal}.

\begin{Theorem}\label{thm:Q2_eps_optimal}
	Consider a non-negative sequence $\{\epsilon_k\}_{k=1}^{\infty}$ such that for every $k \in \mathbb{N}$, $\epsilon_k$ is bounded above by  $\frac{1}{k+1}$ and an $\epsilon_k$-optimal policy satisfying Assumptions \ref{assump:geom_erg} and \ref{assump:poly_erg} is given.
	The regret incurred by \Cref{alg:TSDE} while using the $\epsilon_k$-optimal policy during any episode $k$ is $\tilde O(dh^d\sqrt{|\mathcal A|T})$. 
\end{Theorem}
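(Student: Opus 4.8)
The plan is to mirror the regret decomposition of \Cref{thm:main}, but replace the optimal policy $\pi^*_{\theta_k}$ by the $\epsilon_k$-optimal policy actually used in episode $k$, which I denote $\hat\pi_{\theta_k}$. Because $\hat\pi_{\theta_k}$ is assumed to satisfy Assumptions \ref{assump:geom_erg} and \ref{assump:poly_erg}, the Poisson equation \eqref{eq:ACOE_case2} still admits a solution $(\hat J_k(\theta_k),\hat v(\cdot;\theta_k))$ of the form \eqref{eq:PoissonEq}, now with forcing function $c(\boldsymbol x,\hat\pi_{\theta_k}(\boldsymbol x))$ and with $\hat J_k(\theta_k)$ the average cost of $\hat\pi_{\theta_k}$ on $P_{\theta_k}$. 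First I would rewrite the realized cost of episode $k$ via this Poisson equation and split the regret as $\hat R_0 + R_1 + R_2$, where $R_1$ and $R_2$ are the expressions in \eqref{eq:R_1}--\eqref{eq:R_2} with $v$ replaced by $\hat v$, and $\hat R_0 = \E[\sum_{k=1}^{K_T} T_k\hat J_k(\theta_k)] - T\,\E[J(\boldsymbol\theta^*)]$.

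The key observation is that the arguments behind \Cref{lem:R_1_bound} and \Cref{lem:R_2_bound} rely only on the Poisson equation holding, on the uniform ergodicity of Assumptions \ref{assump:geom_erg}--\ref{assump:poly_erg}, and on the relative-value-function bounds \eqref{eq:bound_rel_val}--\eqref{eq:bound_rel_val_low}. Since $\hat\pi_{\theta_k}$ satisfies those assumptions and $\hat J_k(\theta_k)\le J^*+1$ is still a finite uniform bound, the hitting-time and maximum-state estimates of \Cref{lem:max_geom_rv} and \Cref{lem:number_episodes} apply verbatim to the chains induced by $\hat\pi_{\theta_k}$, so $R_1=\tilde O(1)$ and $R_2=\tilde O(dh^d\sqrt{|\mathcal A|T})$ exactly as before. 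For the first term I would further split $\hat R_0 = \E[\sum_{k=1}^{K_T} T_k(\hat J_k(\theta_k)-J(\theta_k))] + R_0$, where $R_0$ is the original term controlled by \Cref{lem:R_0_bound}; this split deliberately avoids invoking the Thompson-sampling identity \eqref{eq:TS_property} on the episode-dependent quantity $\hat J_k$.

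It then remains to bound the extra approximation term. Using $\epsilon_k$-optimality, $0\le \hat J_k(\theta_k)-J(\theta_k)\le \epsilon_k\le 1/(k+1)$, so the term is at most $\E[\sum_{k=1}^{K_T} T_k/(k+1)]$. I would split $T_k=\tilde T_k+E_k$ into active and settling parts. The first stopping criterion forces $\tilde T_k\le\tilde T_{k-1}+1$, whence $\tilde T_k\le k+1$ and $\sum_{k=1}^{K_T}\tilde T_k/(k+1)\le K_T=\tilde O(h^d\sqrt{|\mathcal A|T})$ by \Cref{lem:number_episodes}. For the settling parts, each $E_k$ is a single return time to $0^d$, so $E_k\le\max_{1\leq i \leq T}\tau_{0^d}^{(i)}$ and $\sum_{k=1}^{K_T} E_k/(k+1)\le \big(\max_{1\leq i \leq T}\tau_{0^d}^{(i)}\big)(1+\log K_T)$, whose expectation is polylogarithmic in $T$ by \Cref{lem:max_geom_rv} together with a Cauchy--Schwarz step. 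Hence the approximation term is $\tilde O(h^d\sqrt{|\mathcal A|T})$, and summing the three contributions yields the claimed $\tilde O(dh^d\sqrt{|\mathcal A|T})$ bound.

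The main obstacle is the bookkeeping needed to ensure every ergodicity-based estimate derived for the optimal policy survives the substitution by $\hat\pi_{\theta_k}$; this is precisely what the hypothesis that the $\epsilon_k$-optimal policies satisfy Assumptions \ref{assump:geom_erg}--\ref{assump:poly_erg} provides, but one must check that the uniform constants ($\gamma_*^g$, $\beta_*^p$, $s_*^p$, $K_*$, and $J^*$) can be taken common to this enlarged, $k$-indexed family of policies. The genuinely new estimate is the telescoping inequality $\sum_{k=1}^{K_T}\tilde T_k/(k+1)\le K_T$, which crucially exploits that the schedule $\epsilon_k\le 1/(k+1)$ decays at exactly the rate at which active episode lengths can grow.
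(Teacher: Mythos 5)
Your overall strategy matches the paper's: reuse the regret decomposition and Lemmas \ref{lem:R_0_bound}--\ref{lem:R_2_bound} (which survive because the $\epsilon_k$-optimal policies are assumed to satisfy Assumptions \ref{assump:geom_erg} and \ref{assump:poly_erg}), isolate an extra approximation term of the form $\E[\sum_{k=1}^{K_T} T_k\epsilon_k]$, and control it by splitting $T_k=\tilde T_k+E_k$, using $\tilde T_k\le k+1$ against $\epsilon_k\le 1/(k+1)$ to get $\E[K_T]$, and using $E_k\le\max_i\tau_{0^d}^{(i)}$ with the harmonic sum and \Cref{lem:max_geom_rv} for the settling parts. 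That final bookkeeping is identical to the paper's.

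The one place you deviate is how $\epsilon_k$ enters the decomposition, and it is where your argument has a gap. The paper keeps the optimal policy's pair $(J(\theta_k),v(\cdot;\theta_k))$ throughout and applies the definition of $\epsilon_k$-optimality \emph{pointwise}: $c(\boldsymbol x,\pi^{\epsilon_k}(\boldsymbol x))+P_{\theta_k}^{\pi^{\epsilon_k}}v(\boldsymbol x;\theta_k)\le J(\theta_k)+v(\boldsymbol x;\theta_k)+\epsilon_k$, so the additive $\epsilon_k$ per time step appears with no further claims. You instead introduce a fresh Poisson equation $(\hat J_k,\hat v)$ for the approximate policy and need the intermediate inequality $\hat J_k(\theta_k)-J(\theta_k)\le\epsilon_k$. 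That inequality does not follow directly from the paper's definition of $\epsilon$-optimality (which is stated in terms of the optimal policy's relative value function $v$, not in terms of average costs); to derive it you must integrate the pointwise inequality against the stationary distribution $\hat\mu$ of the chain induced by $\pi^{\epsilon_k}$ and use $\hat\mu(P^{\pi^{\epsilon_k}}v)=\hat\mu(v)$, which requires justifying $\hat\mu(|v(\cdot;\theta_k)|)<\infty$. Since $|v(\cdot;\theta_k)|$ is only known to be bounded by a polynomial of degree $r+r_*^p$ via \eqref{eq:bound_rel_val}--\eqref{eq:bound_rel_val_low}, while \Cref{eq:rem_stat_poly} only guarantees integrability of $(V^p)^{\alpha^p}$, you would need to invoke the geometric Lyapunov function (or a separate moment argument) to close this step. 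It is fillable, but it is an extra burden the paper's pointwise substitution avoids entirely, so I would recommend adopting the paper's entry point for $\epsilon_k$ and keeping the rest of your argument as is.
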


\section{Evaluation: Application of Algorithm 2 to queueing models}
\label{sec:evaluation}

Next, we present an evaluation of our algorithm. We study two different queueing models shown in \Cref{fig:QS}, each with Poisson arrivals at rate $\lambda$, and two heterogeneous servers with exponentially distributed service times with unknown service rate vector $\boldsymbol \theta^*=(\theta_1^*, \theta_2^*)$. 
Vector $\boldsymbol \theta^*$ is sampled from the prior distribution $\nu$ defined on the space $\Theta$ given as
\begin{equation*}
	\Theta= \Big\{(\theta_1,\theta_2) \in {\R}_+^2: \frac{\lambda}{\theta_1+\theta_2} \leq \frac{1-\delta}{1+\delta}, 1 \leq \frac{\theta_1}{\theta_2} \leq R \Big\},
\end{equation*} for fixed $R\geq 1$ and $\delta \in (0,0.5)$. The first condition ensures the stability of the queueing models, while the second guarantees the compactness of the parameter space of the parameterized policies. In both systems, the goal of the dispatcher is to minimize the expected sojourn time of jobs, which by Little's law \cite{srikant2013communication} is equivalent to minimizing the average number of jobs in the system. After verifying Assumptions \ref{assump:costfn}-\ref{assump:sup_J_V} in \Cref{sec:queueing_models} for the cost function $c(\boldsymbol x)= \norm{\boldsymbol x}_1$, \Cref{thm:main} yields a  Bayesian regret of order $\tilde O(\sqrt{|\mathcal A|T})$ for \Cref{alg:TSDE}.

\textit{\bfseries Model 1. Two-server queueing system with a common buffer.} 
We consider the continuous-time queueing system of \Cref{fig:QS1}, where the  countable state-space is
$\mathcal{X}=\{\boldsymbol x=(x_0,x_1,x_2)\in \mathbb{Z}_+ \times \{ 0,1\}^2\}$, where $x_0$ is the queue length, and $x_i$, $i=1,2$ equal $1$ if server $i$ is busy. The action space is $\mathcal{A}=\{h, b, 1,2\}$, where $h$ means no action, $b$ sends a job to both servers, and $i=1,2$ assigns a job to server $i$. In \cite{lin1984optimal}, when the system parameters are known, it is shown that by uniformization \cite{lippman1975applying}
and sampling the continuous-time Markov process at rate $\lambda+\theta_1^*+ \theta_2^*$, a discrete-time Markov chain is obtained, which converts the original continuous-time  problem to an equivalent discrete-time problem where we need to minimize $\limsup_{T\rightarrow\infty} T^{-1} \sum_{t=0}^{T-1}  \norm{\boldsymbol{X}(t)}_1$.	 
Further, \cite{lin1984optimal} shows that the optimal policy achieving the infimum average number of jobs
is a threshold policy $\pi_{t(\boldsymbol \theta^*)}$ with optimal finite threshold $t(\boldsymbol \theta^*) \in \mathbb{N}$: always assign a job to the faster (first) server when free, and to the second server if it is free and $\norm{\boldsymbol x}_1 >t(\boldsymbol \theta^*)$, and take no action otherwise. 
In \Cref{subsec:queue_1}, we argue that the discrete-time Markov process governed by $\theta \in \Theta$ and following threshold policy $\pi_t$ for any threshold $t$ belonging to a compact set  satisfies Assumptions \ref{assump:costfn}-\ref{assump:sup_J_V}.

\begin{figure*}[t]
	\centering
	\begin{subfigure}{0.44\textwidth}
		\centering \vspace{0.4cm}
		{\includegraphics[width=1\linewidth]{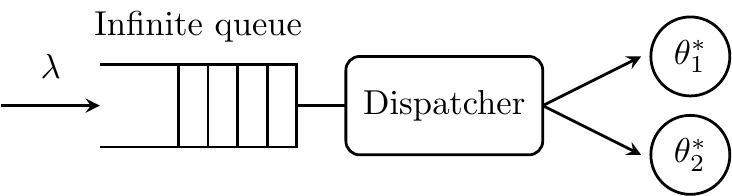}}
		\caption{Queueing system with a common buffer.}
		\label{fig:QS1}
	\end{subfigure} \hspace{0.7cm}
	\begin{subfigure}{0.41\textwidth}
		\centering
		{\includegraphics[width=1\linewidth]{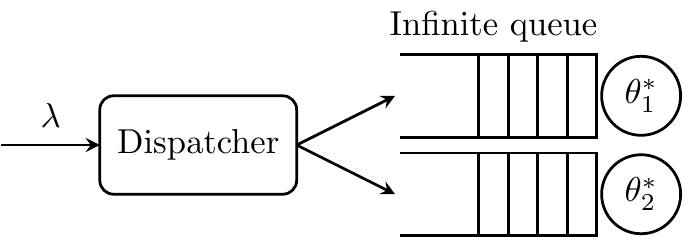}}
		\caption{Two parallel queues. }
		\label{fig:QS2}
	\end{subfigure}%
	\caption{Two-server queueing systems with heterogeneous service rates.
	}
	\label{fig:QS}
\end{figure*}

\begin{figure}[t]
	\centering
	\begin{subfigure}{0.45\textwidth}
		\centering
		{\includegraphics[width=1\linewidth]{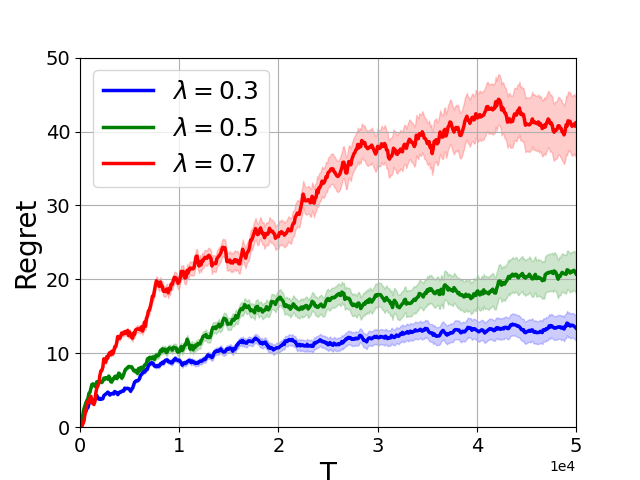}}
		\caption{Queueing system of \Cref{fig:QS1}.}
		\label{fig:regret_QS1}
	\end{subfigure}\hspace{1cm}
	\begin{subfigure}{0.45\textwidth}
		\centering
		{\includegraphics[width=1\linewidth]{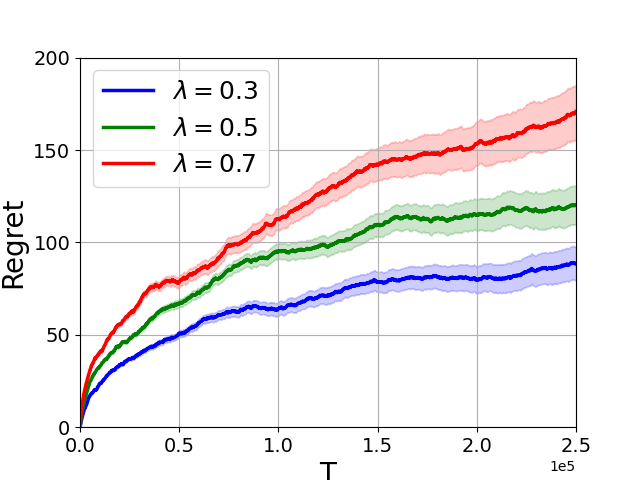}}
		\caption{Queueing system of \Cref{fig:QS2}.}
		\label{fig:regret_QS2}
	\end{subfigure}%
	\caption{Regret performance for $\lambda=0.3,0.5,0.7$. Shaded region shows the  $\pm \sigma$ area of mean regret.}
	\label{fig:regret}
\end{figure}

\textit{\bfseries Model 2. Two heterogeneous parallel queues.} 
We consider the continuous-time queueing system of \Cref{fig:QS2} with   countable state-space
$\mathcal{X}=\{\boldsymbol x =(x_1,x_2)\in \mathbb{Z}_+^2 \}$, where 
$x_i$ is the number of jobs in the server-queue pair $i$. The action space is $\mathcal{A}=\{ 1,2\}$,
where action $i$ sends the arrival to queue $i$. We obtain the discrete-time MDP by sampling the queueing system at the arrivals, and then aim to find the average cost minimizing policy within the class 
$\Pi=\{\pi_{\omega}; {\omega}\in [(c_RR)^{-1},c_RR]\}$, 
$c_R \geq 1$. Policy $\pi_{\omega} : \X \rightarrow \mathcal A$ routes arrivals based on the weighted queue lengths:
$\pi_{\omega} (\boldsymbol x)=\argmin \left(1+x_1,{\omega} \left(1+x_2\right)\right)$ with ties broken for $1$. 
Even with the transition kernel fully specified (by the values of arrival  and service rates), the optimal policy 
in $\Pi$ is not known except when $\theta_1=\theta_2$ where the optimal value is ${\omega}=1$, and so, to learn it, we will use Proximal Policy Optimization for countable state-space MDPs~\cite{dai2022queueing}. 
Note that \cite{dai2022queueing} requires full model knowledge, which holds in our scheme as we use parameters sampled from the posterior for choosing the policy at the beginning of each episode. In \Cref{subsec:queue_2}, we argue that the discrete-time Markov process governed by parameter $\theta \in \Theta$ and following policy $\pi_{\omega}$ for $\omega \in[(c_RR)^{-1},c_RR]$ satisfies Assumptions \ref{assump:costfn}-\ref{assump:sup_J_V}.

Next, we report the numerical results of \Cref{alg:TSDE}  in the two queueing models of \Cref{fig:QS} and calculate regret using \eqref{eq:regret_def}. The regret is averaged over 2000 simulation runs and plotted against the number of transitions in the sampled discrete-time Markov process. \Cref{fig:regret} shows the behavior of the regret of the two queueing models for three different arrival rates and service rates distributed according to a Dirichlet prior over $[0.5,1.9]^2$. We observe that the regret is sub-linear in time and grows as the arrival rate increases.
For the queueing model of \Cref{fig:QS1}, the minimum average cost $J(\theta)$ and optimal policy $\pi^*_{\theta}$ are known explicitly~\cite{lin1984optimal} for every $\theta \in \Theta$, which are used in \Cref{alg:TSDE} and for regret calculation. Conversely, for the second queueing model, $J(\theta)$ and $\pi^*_{\theta}$ are not known. The PPO algorithm~\cite{dai2022queueing} is used to empirically find both the optimal weight and the policy's average cost.  As expected from our theoretical guarantees, we observe that the regret is sub-linear in time.  Furthermore, it grows as the arrival rate increases and the normalized load on the system converges to $1$, which is expected since the system gets closer to the stability boundary. As discussed in \Cref{sec:main_results},  our bound on the expected regret is linearly dependent on $J^*$ and, thus, will increase with the arrival rate. 
Additional details of the simulations and more plots are presented in \Cref{sec:num_res}.

\subsection{Comparison of \Cref{alg:TSDE} with other learning algorithms}

We first note that due to the countably infinite state-space setting of our problem, we are unable to directly compare our algorithm to other learning algorithms proposed in the literature. 
One potential candidate algorithm uses the reward biased maximum likelihood estimation (RBMLE) \cite{kumar1982new,kumar1982optimal,borkar1990kumar,mete2021reward},
which estimates the unknown model parameter with the likelihood perturbed a vanishing bias towards parameters with a larger long-term average reward (i.e., optimal value). This scheme also uses the principle of “optimism in the face of uncertainty” in how it perturbs the maximum likelihood estimate. The naive version of the RMBLE algorithm does not apply to our examples due the following key assumption: over all parameters (and the control policies used for them), the transition probabilities are assumed to be mutually absolutely continuous; this is critical for the proofs and also allows the use of log-likelihood functions for computations. Similarly, naive use of the algorithms in \cite{lai1995machine}  and \cite{graves1997asymptotically} is not possible, again due to a similar absolutely continuity assumption which is critical for the proofs. Our posterior computations avoid such issues as the true parameter always has non-zero mass during the execution of the algorithm: episode $k$ always starts in state $0^d$ which is positive recurrent for the Markov chain with true parameter $\boldsymbol{\theta}^*$ and policy used $\pi^*_{\theta_k}$. The RBMLE algorithm has yet another issue in that it requires knowledge of the optimal value function, and hence, for our examples, it may only apply to Model 1 for which the value function is known analytically. Finally, whereas we do get to observe inter-arrival times for both model, we never directly observe completed service times owing to the sampling employed, and this precludes the direct use of Upper-Confidence-Bound based parameter estimation followed by certainty equivalent control algorithms. Owing to these issues, at this point in time, we're unable to perform empirical comparisons of \Cref{alg:TSDE} to other candidate algorithms with theoretical performance guarantees in a countable state setting.

\begin{figure}[t]
	\centering
	\begin{subfigure}{0.45\textwidth}
		\centering
		{\includegraphics[width=1\linewidth]{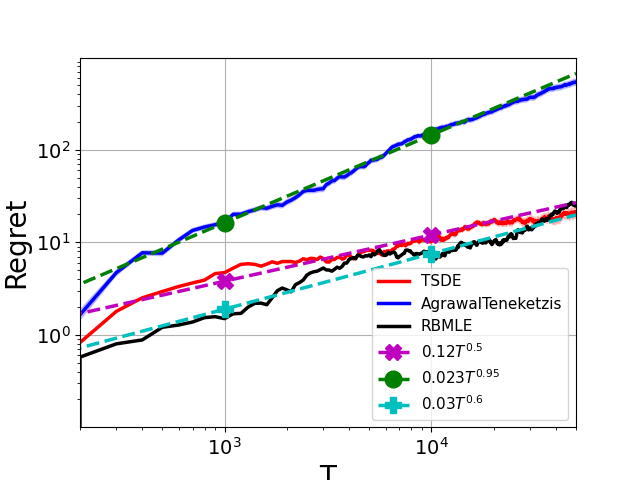}}
		\caption{Queueing system of \Cref{fig:QS1}.}
		\label{fig:compare_QS1}
	\end{subfigure}\hspace{1cm}
	\begin{subfigure}{0.45\textwidth}
		\centering
		{\includegraphics[width=1\linewidth]{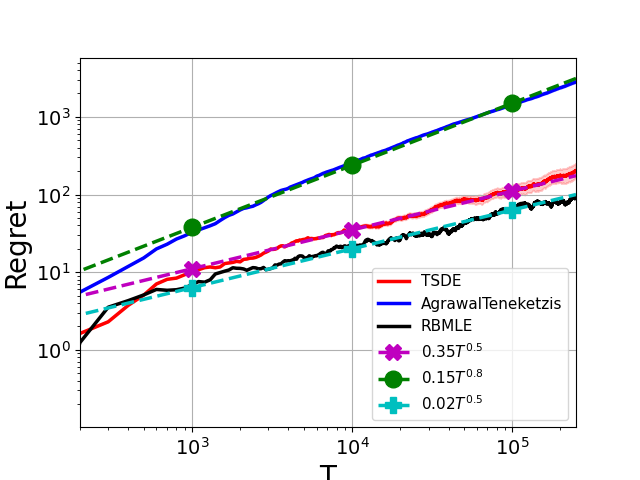}}
		\caption{Queueing system of \Cref{fig:QS2}.}
		\label{fig:compare_QS2}
	\end{subfigure}%
	\caption{Comparison of the regret performance of Algorithm 2 (referred to as TSDE) with the algorithm proposed by \cite{agrawal1989certainty} (denoted as AgrawalTeneketzis) and the algorithm proposed by \cite{kumar1982new} (denoted as RBMLE) for the queueing models of Figure 3.2. }
	\label{fig:compare}
\end{figure}

As discussed in the previous paragraph, learning algorithms with theoretical performance guarantees are established in the finite state setting.  One such algorithm is the certainty equivalence control with forcing, which is proposed and discussed in detail in \cite{agrawal1989certainty}.  To assess the finite-time performance of our algorithm, in \Cref{fig:compare}, we compare the performance of our proposed learning algorithm, denoted as TSDE, with the algorithm introduced in \cite{agrawal1989certainty}, referred to as AgrawalTeneketzis. Reference \cite{agrawal1989certainty} proposes a certainty equivalence control law with forced exploration, which operates in episodes with increasing lengths and a priori fixed sequences of forcing times.  Specifically, at the beginning of each episode, all possible stationary control laws are explored for one recurrence interval of state $(0,0)$.  Subsequently, based on this exploration, an empirical estimate of the average collected reward is formed, and the control law resulting in the maximum average reward is implemented for the remainder of the episode.  The length of the episodes are determined according to sequence $\{a_i\}_{i=0}^{\infty}$ defined as following:
\begin{align*}
	&a_0=0, \\
	&a_i=\sum_{k=1}^i b_k +ip, \qquad \text{ for } i \geq 1,
\end{align*} where  $p$ is the number of possible stationary control laws and $b_i=\big\lfloor  \exp\big(i^{\frac{1}{1+\delta}}\big)\big \rfloor$ for any $\delta>0$. Specifically, episode $i$ terminates after completing additional $a_i-a_{i-1}$ recurrence intervals to state $(0,0)$.

Another algorithm implemented in \Cref{fig:compare} is Reward Biased MLE (RBMLE), which biases the maximum likelihood estimate towards the parameter with a smaller optimal average cost. In our setting, at each arrival $t$, we choose the estimate for unknown parameter $\theta$ as follows:
$$ \theta_t \in \argmax_{\Theta} \sum_{i=1}^{t-1} \log \left( P_{\theta}\left(\boldsymbol X(i+1) | \boldsymbol X(i),\pi^*_{\theta_i}( \boldsymbol X(i))\right) \right) -\alpha J(\theta) \log(t),
$$ where $\alpha$ is a positive constant. A closed-form expression for the optimal average cost $J(\theta)$ is not available in the second model; instead, we rely on the estimated average cost obtained through the PPO algorithm (refer to \Cref{tab:mytable}).

Both algorithms are implemented in the two queueing systems of \Cref{fig:QS}, where the arrival rate is $\lambda=0.5$ and service rates are distributed according to a Dirichlet prior over $[0.5,1.9]^2$.  In Figures \ref{fig:compare_QS1} and \ref{fig:compare_QS2}, we set $\delta=3.5$ and $\delta=3$, respectively, and $\alpha=0.5$. These parameters are chosen to optimize the performance of the corresponding algorithms.
Moreover, in \Cref{fig:compare_QS2}, the goal is to find the optimal weight $w$ in the set $\{1.5,2,2.5,3,3.5\}$. The results in \Cref{fig:compare} show that both algorithms exhibit a sublinear regret performance.  Specifically, \Cref{alg:TSDE}, TSDE, achieves an $\tilde{O}(\sqrt{T})$ as predicted in our theoretical results of \Cref{thm:main} and \Cref{cor:main}.
Furthermore, in both queueing models, our proposed algorithm either outperforms the other algorithms (AgrawalTeneketzis and RBMLE) in terms of regret order or attains the same regret order.

\begin{figure}[t]
	\centering
	\begin{subfigure}{0.45\textwidth}
		\centering
		{\includegraphics[width=1\linewidth]{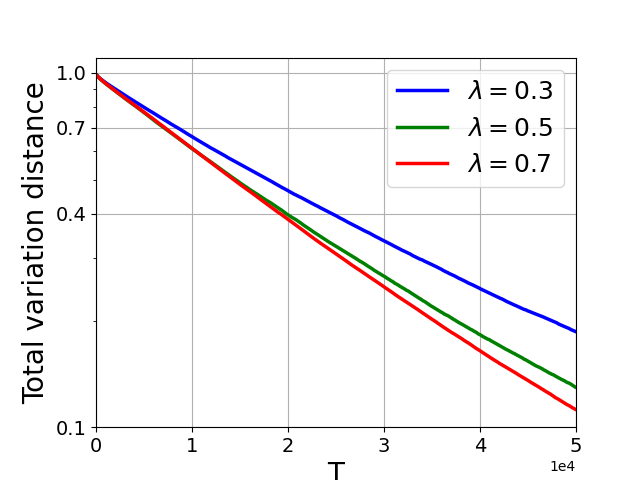}}
		\caption{Queueing system of \Cref{fig:QS1}.}
		\label{fig:post_QS1}
	\end{subfigure}\hspace{1cm}
	\begin{subfigure}{0.45\textwidth}
		\centering \vspace{0.455cm}
		{\includegraphics[width=1\linewidth]{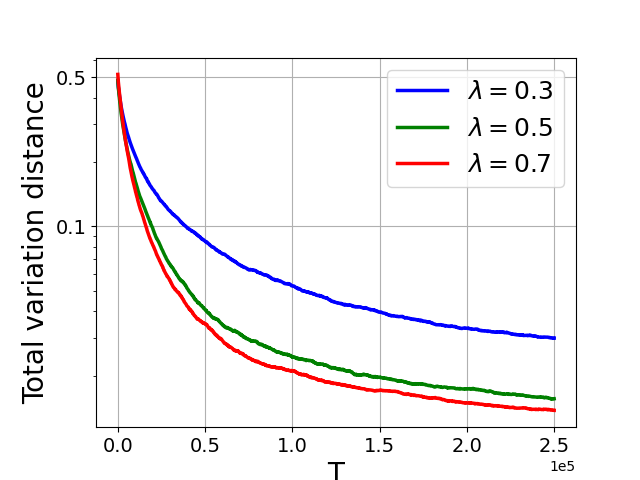}}
		\caption{Queueing system of \Cref{fig:QS2}.}
		\label{fig:post_QS2}
	\end{subfigure}%
	\caption{Total variation distance between the posterior and real distribution for $\lambda=0.3,0.5,0.7$. The y axis is plotted on a logarithmic scale to display the differences clearly.}
\label{fig:post}
\end{figure}

\begin{figure}[t]
\centering
\begin{subfigure}{0.45\textwidth}
	\centering
	{\includegraphics[width=1\linewidth]{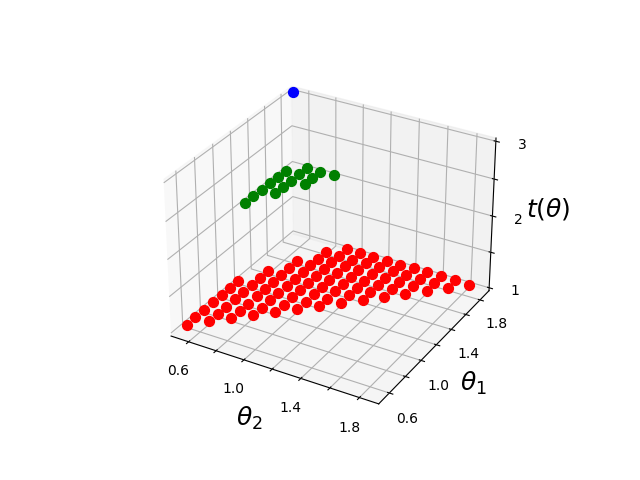}}
	\caption{Queueing system of \Cref{fig:QS1}.}
	\label{fig:thres_QS1}
\end{subfigure}\hspace{1cm}
\begin{subfigure}{0.45\textwidth}
	\centering
	{\includegraphics[width=1\linewidth]{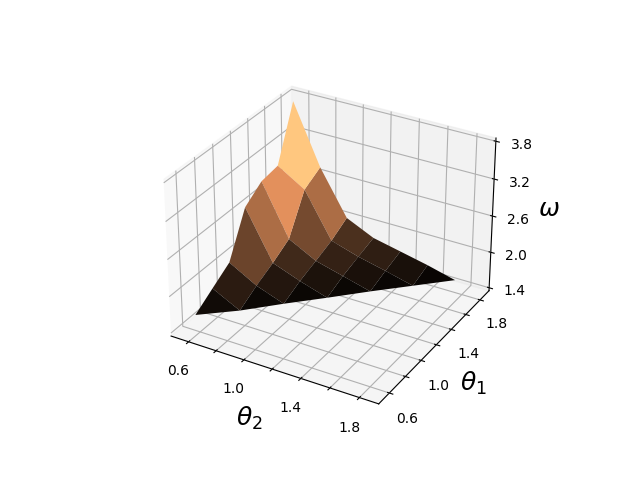}}
	\caption{Queueing system of \Cref{fig:QS2}.}
	\label{fig:thres_QS2}
\end{subfigure}%
\caption{Optimal policy parameters for different service rate vectors in the two exemplary queuing systems in Model 1 and Model 2 with $\lambda=0.5$. }
\label{fig:thresh}
\end{figure}

\begin{figure}[t]
\centering
\includegraphics[width=0.5\linewidth]{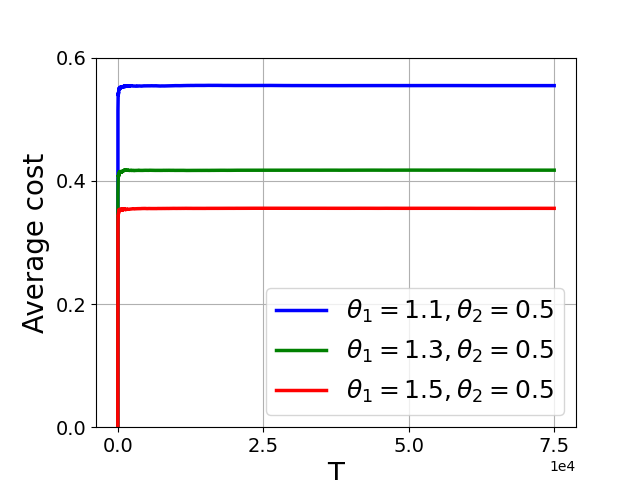}
\caption{Estimated average cost of Model 2 for three different service rate vectors.} 
\label{fig:av_cost}
\end{figure}

\subsection{Additional simulation details and discussion}\label{sec:num_res}

\textbf{Model 1: Two-server queueing system with a common buffer.} 
\Cref{fig:regret_QS2} illustrates the behavior of the regret of Model 1 for three different arrival rate values and averaged over 2000 simulation runs. In these simulations, the parameter space is selected as  
\begin{equation*}
\Theta= \left\{(\theta_1,\theta_2) \in [0.5,0.6,\ldots,1.9]^2: {\lambda}<{\theta_1+\theta_2} , \theta_2<\theta_1  \right\},
\end{equation*} which results in a prior size of $105$. As depicted in \Cref{fig:regret_QS1}, the regret has a sub-linear behavior and increases with the arrival rate. 
The total variation distance between the posterior and real distribution, a point-mass on the random $\boldsymbol{\theta}^*$, are plotted in \Cref{fig:post_QS1}. As expected, the distance diminishes towards 0, indicating the learning of the true parameter.
As mentioned in \Cref{subsec:queue_1}, the optimal policy minimizing the average number of jobs in a system with parameter $\theta$, is a threshold policy $\pi_{t(\theta)}$ with optimal finite threshold $t(\theta) \in \mathbb{N}$, which can be numerically determined as the smallest $i \in \mathbb{N}$ for which $J^i (\theta)< J^{i+1}(\theta)$, calculated in \cite{lin1984optimal}. We compute the optimal threshold $t(\theta)$ for every $\theta \in \Theta$ and present the results in \Cref{fig:thres_QS1}. We can see that the threshold increases as the ratio of the service rates grows. Specifically, this is why in \Cref{sec:num_res}, we imposed conditions on $\Theta$ to ensure that the ratio between the service rates is both upper and lower bounded. 

\textbf{Model 2: Two heterogeneous parallel queues.}
\Cref{fig:regret_QS2} illustrates the behavior of the regret of Model 2 for three different arrival rate values  and averaged over 2000 simulation runs. We note that the regret is sub-linear and increases with higher arrival rates. In these simulations, the parameter space is selected as  
\begin{equation*}
\Theta= \left\{(\theta_1,\theta_2) \in [0.5,0.7,\ldots,1.9]^2: {\lambda}<{\theta_1+\theta_2} , \theta_2<\theta_1  \right\},
\end{equation*} which results in a prior size of $28$. 
As discussed earlier, our goal is to find the average cost minimizing policy within the class of policies 
$\Pi=\{\pi_{\omega}; {\omega}\in [(c_RR)^{-1},c_RR]\}$, 
$c_R \geq 1$, where 
$$\pi_{\omega} (\boldsymbol x)=\argmin \left(1+x_1,{\omega} \left(1+x_2\right)\right)$$ with ties broken for $1$. 
As discussed before, even with the transition kernel fully specified (by the values of arrival  and service rates), the optimal policy 
in $\Pi$ is not known except when $\theta_1=\theta_2$ where the optimal value is ${\omega}=1$, and so, to learn it, we will use Proximal Policy Optimization with approximating martingale-process (AMP) method for countable state-space MDPs~\cite{dai2022queueing}. We run the algorithm for $200$ policy iterations,   
using $20$ actors for each iteration. We take the state $(0,0)$ as a regeneration state and simulate $1500$ independent regenerative cycles per actor in each algorithm iteration. To approximate the value function, we employ a fully connected feed-forward neural network with one hidden layer consisting of $10\times10$ units and ReLU activation functions. The AMP method is also employed for variance reduction in value function estimation.
The optimal $\omega$ for every $\theta \in \Theta$ is shown in \Cref{fig:thres_QS2}, indicating that $\omega$ increases as the ratio of the service rates grows. Therefore, it is necessary to ensure that the ratio between the service rates is bounded from above and below. Furthermore, to evaluate the regret numerically, the value of $J(\theta)$ is required for every $\theta \in \Theta$, which is not known. Thus, after finding the optimal $\omega$ using the PPO algorithm, we perform a separate simulation to approximate the optimal average cost. In \Cref{fig:av_cost}, we plot the estimated average cost for three different service rate vectors, demonstrating that the optimal average cost decreases as the service rates increase. In \Cref{fig:post_QS2} we also depict the  total variation distance between the posterior and real distribution, which is a point-mass on the random $\boldsymbol{\theta}^*$, and observe that the distance is converging to zero.

\section{Conclusions and future work}

We studied the problem of learning optimal policies in countable state-space MDPs governed by unknown parameters. We proposed a learning policy based on Thompson sampling and established finite-time performance guarantees on the Bayesian regret. We highlighted the practicality of our proposed algorithm by considering two different queuing models and showing that our algorithm can be applied to develop optimal control policies. For future work we plan two directions to explore: to generalize our algorithm to consider polices that might not all be stabilizing, and also to simplify the algorithm using ideas from \cite{theocharous2018scalar,tang2023efficient}.

\section{Acknowledgements}
SA’s research was supported by NSF via grants ECCS2038416, CCF2008130, and CNS1955777, and
a grant from General Dynamics via MIDAS at the University of Michigan, Ann Arbor. VS’s research
is supported in part by NSF via grants ECCS2038416, CCF2008130, CNS1955777, and CMMI2240981.

 \newpage
 
 \bibliographystyle{plain}
 \bibliography{Bibliography}

 \newpage
\appendix
\clearpage
\section{Details and proofs related to problem formulation}

\subsection{Ergodicity definitions} \label{subsec:erg_not}
Suppose that Markov process $\boldsymbol{X}$ on $\mathcal{X}$ with transition kernel $P$ is irreducible, aperiodic and positive recurrent with stationary distribution $\mu$ and let $f:\mathcal{X}\mapsto[1,\infty)$ be a measurable function such that $\mu(f):=\mathbb{E}_\mu[f(Y)]<+\infty$ with $Y\sim \mu$. 
We are interested in conditions under which for a sequence of positive numbers $\rho:=(\rho(n))_{n\geq 0}$,
\begin{equation}\label{eq:erg_rate_r}
    \lim_{n\rightarrow\infty} \rho(n) \| P^n(\boldsymbol x,\cdot) - \mu(\cdot) \|_f = 0, \qquad \forall \boldsymbol x \in \X,
\end{equation}
where for a signed measure $\tilde{\mu}$ on $\mathcal{X}$, $\| \tilde{\mu} \|_f := \sup_{|g|\leq f} |\tilde{\mu}(g)|$. The sequence $\rho$ is interpreted as the rate function, and three different notions of ergodicity are distinguished based on the following rate functions: $\rho(n)\equiv1$, $\rho(n)=\zeta^n$ for $\zeta>1$, and $\rho(n)=n^{\zeta-1}$ for $\zeta \geq 1$. Further, for each rate function $\rho$, we state the Foster-Lyapunov characterization of ergodicity of the Markov process $\boldsymbol{X}$, which provides sufficient conditions for \eqref{eq:erg_rate_r} to hold.
\begin{enumerate}[leftmargin=*]
    \item If $\rho(n)=1$ for all $n \geq 0$, the Markov process $\boldsymbol{X}$ satisfying \eqref{eq:erg_rate_r} is said to be $f$-\textbf{ergodic}. From \cite{meyn2012markov}, for an  irreducible and aperiodic chain, $f$-ergodicity is \textit{equivalent} to the existence of a function $V:\mathcal{X}\mapsto [0,\infty)$, a finite set $C$,  and positive constant $b$ such that
    \begin{equation}\label{eq:drift_f_erg}
        \Delta V \leq -f + b \mathbb{I}_{C},
    \end{equation}where $\Delta V := PV-V$ with $PV( \boldsymbol x) := \sum_{ \boldsymbol x^\prime \in \mathcal{X}} P(\boldsymbol x,\boldsymbol x^\prime) V(\boldsymbol x^\prime)$. The drift condition \eqref{eq:drift_f_erg} implies positive recurrence of the Markov process, existence of a unique stationary distribution $\mu$, and $\mu(f)\leq b<+\infty$ (\cite{meyn2012markov}, Theorem 14.3.7).
    \item If $\rho(n)=\zeta^n$ for some $\zeta>1$, 
    the Markov process $\boldsymbol{X}$ satisfying \eqref{eq:erg_rate_r} is said to be $f$-\textbf{geometrically ergodic}. From \cite{meyn2012markov}, for an  irreducible and aperiodic chain, $f$-geometric ergodicity is \textit{equivalent} to the existence of a function $V:\mathcal{X}\mapsto [1,\infty)$, a finite set $C$, a constant $\gamma \in (0,1)$ and positive constant $b$ such that 
    \begin{equation}\label{eq:geomergodicA}
        \Delta V \leq -(1-\gamma) V + b \mathbb{I}_{C}.
    \end{equation}
     The drift condition \eqref{eq:geomergodicA} implies positive recurrence of the Markov process, existence of a unique stationary distribution $\mu$, and $\mu(V)\leq \tfrac{b}{1-\gamma}<+\infty$ (\cite{meyn2012markov}, Theorem 14.3.7). Moreover, if $f(\cdot)\equiv1$ in \eqref{eq:erg_rate_r}, then the Markov process $\boldsymbol{X}$ is called \textbf{geometrically ergodic}.
    \item If $\rho(n)=n^{\zeta-1}$ for some $\zeta\geq 1$, the Markov process $\boldsymbol{X}$ satisfying \eqref{eq:erg_rate_r} is said to be $f$-\textbf{polynomially ergodic}. From \cite{meyn2012markov,jarner2002polynomial},  for an  irreducible and aperiodic chain, the existence of a function $V:\mathcal{X}\mapsto [1,\infty)$, a finite set $C$, a constant $\alpha \in [0,1)$, and positive constants $\beta$ and $b$ such that
    \begin{equation}\label{eq:poly_erg}
        \Delta V \leq - \beta V^\alpha + b \mathbb{I}_{C}
    \end{equation}
    \textit{implies} $V_\zeta$-polynomial ergodicity of $\boldsymbol{X}$ at rate $\rho(n)=n^{\zeta-1}$ for all $\zeta \in [1,1/(1-\alpha)]$ with $V_\zeta=V^{1-\zeta(1-\alpha)}$. The drift condition \eqref{eq:poly_erg} implies positive recurrence of the Markov process, existence of a unique stationary distribution $\mu$, and  $\mu(V^\alpha)\leq \tfrac{b}{\beta}<+\infty$. 
\end{enumerate}
We will often say the Markov process $\boldsymbol{X}$ is either geometrically ergodic or polynomially ergodic without specifying the appropriate functions.


\subsection{\Cref{lem:hitting_time_geom_lyap_2}}\label{subsec:hitting_time_geom_lyap_2}

\begin{Lemma}\label{lem:hitting_time_geom_lyap_2}
    For any state $\boldsymbol x \neq 0^d$, there exists constants $\kappa>1$ and $c_1$ such that the following holds for the hitting time of state $0^d$, $\tau_{0^d}$,
$$ \mathbb{E}_{\boldsymbol x}[\kappa^{\tau_{0^d}}] \leq c_1 V^g({\boldsymbol x}).
$$
\end{Lemma}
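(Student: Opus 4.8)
The plan is to turn the geometric Foster--Lyapunov drift \eqref{eq:geom_drift_eq} into an exponential supermartingale, which immediately controls the time to reach the finite set $C^g$, and then to bridge the gap between hitting the set $C^g$ and hitting the single atom $0^d$ by a regeneration (geometric-trials) argument. Throughout I fix the parameters and work with the single kernel $P:=P_{\theta_1}^{\pi^*_{\theta_2}}$, writing $V:=V^g_{\theta_1,\theta_2}$, $\gamma:=\gamma^g_{\theta_1,\theta_2}\;(\le\gamma^g_*<1)$ and $C:=C^g_{\theta_1,\theta_2}$, a finite set containing $0^d$. The first step extracts the decay. From \eqref{eq:geom_drift_eq}, for $\boldsymbol x\notin C$ we have $PV(\boldsymbol x)\le\gamma V(\boldsymbol x)$. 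Fixing any $\kappa$ with $1<\kappa<1/\gamma$ (uniformly, any $\kappa<1/\gamma^g_*$ works across $\Theta$), the stopped process $M_n:=\kappa^{\,n\wedge\sigma_C}V(\boldsymbol X(n\wedge\sigma_C))$, with $\sigma_C:=\inf\{n\ge0:\boldsymbol X(n)\in C\}$, is a nonnegative supermartingale, since on $\{\boldsymbol X(n)\notin C\}$ one has $\mathbb{E}[\kappa^{n+1}V(\boldsymbol X(n+1))\mid\mathcal F_n]=\kappa^{n+1}PV(\boldsymbol X(n))\le(\kappa\gamma)\kappa^{n}V(\boldsymbol X(n))$. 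Optional stopping, $V\ge1$, and monotone convergence give $\mathbb{E}_{\boldsymbol x}[\kappa^{\sigma_C}]\le V(\boldsymbol x)$; the same bound holds from any start $\boldsymbol z$, and since $P^jV$ is finite on the finite set $C$ for each fixed $j$, the quantity $V^{(m)}_{\max}:=\max_{\boldsymbol y\in C}(P^mV)(\boldsymbol y)$ is finite.

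Next I pass from the set $C$ to the atom $0^d$. Because the chain is irreducible and $C$ is finite with $0^d\in C$, there exist $m\in\mathbb N$ and $\delta>0$ such that $\mathbb{P}_{\boldsymbol y}(\tau_{0^d}\le m)\ge\delta$ for every $\boldsymbol y\in C$ (take $m$ the maximum and $\delta$ the minimum over the finitely many states in $C$ of the corresponding reachability probabilities; uniformity across $\Theta$ is guaranteed by the lower bound $K_*$ of \Cref{assump:poly_erg}). I then run the chain in ``trials'': each trial starts in $C$ and lasts $m$ steps, succeeding if $0^d$ is hit within those $m$ steps (probability $\ge\delta$) and otherwise ending at a state $\boldsymbol X(m)$ from which the chain waits an excursion time $\sigma_C$ to return to $C$ before the next trial begins. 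The number of trials $N$ before success is stochastically dominated by a geometric variable with parameter $\delta$, and by the supermartingale bound the $\kappa$-exponential moment of each return leg, conditioned on the past, is at most $\mathbb{E}_{\boldsymbol y}[\kappa^{\sigma_C}\mathbb{I}\{\text{failure}\}]\le(P^mV)(\boldsymbol y)\le V^{(m)}_{\max}$ via the strong Markov property.

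Finally I assemble the exponential moment. Choosing $\hat\kappa\in(1,\kappa]$ small enough that the per-trial factor $(1-\delta)\,\hat\kappa^{\,m}V^{(m)}_{\max}$ is strictly below $1$, and peeling off one trial at a time with the strong Markov property, the resulting geometric series converges to a finite constant $B$ with $\mathbb{E}_{\boldsymbol y}[\hat\kappa^{\tau_{0^d}}]\le B$ for every $\boldsymbol y\in C$. For a general start $\boldsymbol x\ne0^d$ I decompose at the entry time $\sigma_C$ and combine with the first step: $\mathbb{E}_{\boldsymbol x}[\hat\kappa^{\tau_{0^d}}]\le\mathbb{E}_{\boldsymbol x}\big[\hat\kappa^{\sigma_C}\,\mathbb{E}_{\boldsymbol X(\sigma_C)}[\hat\kappa^{\tau_{0^d}}]\big]\le B\,\mathbb{E}_{\boldsymbol x}[\hat\kappa^{\sigma_C}]\le B\,V(\boldsymbol x)$, which is the claim with $\kappa\leftarrow\hat\kappa$ and $c_1=B$.

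The main obstacle is the bookkeeping in the assembly step: the trial durations are dependent, so the geometric series cannot be summed by naive independence but must be peeled off trial-by-trial through the strong Markov property, each step contributing a factor $\le(1-\delta)\hat\kappa^{\,m}V^{(m)}_{\max}$, and $\hat\kappa$ must be tuned to keep this factor strictly below $1$ while remaining above $1$. Equivalently, one may invoke the classical equivalence between $V$-geometric ergodicity and geometric tails of the return time to an accessible atom; the explicit construction above is what lets us express $\kappa$ and $c_1$ in terms of the drift constants ($\gamma^g_*$, $b^g_*$, the size of $C^g_*$, and $K_*$), which is precisely the uniform control over $\Theta$ required by the regret analysis.
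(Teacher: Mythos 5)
Your overall architecture—an exponential supermartingale giving $\mathbb{E}_{\boldsymbol x}[\kappa^{\sigma_C}]\leq V^g(\boldsymbol x)$ for the hitting time of the finite set $C^g$, followed by geometric trials to bridge from $C^g$ to the atom $0^d$—is sound and is a genuinely different route from the paper, which instead builds a new Lyapunov function $\tilde{V}^g$ out of the taboo-probability sums $\sum_n \prescript{}{0^d}{}P^nV^g$, shows it satisfies a geometric drift inequality whose exceptional set is the singleton $\{0^d\}$ itself, and then invokes \cite[Theorem 15.2.5]{meyn2012markov} directly. However, your assembly step contains a genuine gap. The quantity you must control when peeling off one failed trial is $\mathbb{E}_{\boldsymbol y}\big[\mathbb{I}_{F}\,\hat\kappa^{\,m+\sigma_C\circ\theta_m}\big]$, and the bound you assign to it, $(1-\delta)\hat\kappa^{m}V^{(m)}_{\max}$, is not valid: $\mathbb{I}_F$ and $\hat\kappa^{\sigma_C\circ\theta_m}$ are dependent (the chain may fail to hit $0^d$ precisely when it has wandered to states with large $V^g$, hence long return legs), so you cannot multiply the failure probability $1-\delta$ by the unconditional moment bound $V^{(m)}_{\max}=\max_{\boldsymbol y\in C^g}P^mV^g(\boldsymbol y)$. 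The only bound your argument actually delivers is $\mathbb{E}_{\boldsymbol y}[\mathbb{I}_F\hat\kappa^{\sigma_C\circ\theta_m}]\leq P^mV^g(\boldsymbol y)\leq V^{(m)}_{\max}$, with no factor of $1-\delta$. Worse, even if the product bound were legitimate, it could not be driven below $1$ by tuning $\hat\kappa$: since $V^g\geq 1$ we have $V^{(m)}_{\max}\geq 1$, and $(1-\delta)V^{(m)}_{\max}$ is typically much larger than $1$ already at $\hat\kappa=1$ (e.g.\ $V^{(m)}_{\max}=100$, $\delta=0.01$), so the geometric series you sum need not converge.

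The gap is repairable without changing your strategy. Having already established $\mathbb{E}_{\boldsymbol y}[\kappa^{\sigma_C\circ\theta_m}]\leq P^mV^g(\boldsymbol y)<\infty$ for the fixed $\kappa<1/\gamma^g_*$, dominated convergence (with dominating function $\kappa^{\sigma_C\circ\theta_m}$) gives $\mathbb{E}_{\boldsymbol y}[\mathbb{I}_F\hat\kappa^{\,m+\sigma_C\circ\theta_m}]\downarrow \mathbb{P}_{\boldsymbol y}(F)\leq 1-\delta<1$ as $\hat\kappa\downarrow 1$, uniformly over the finitely many $\boldsymbol y\in C^g$; alternatively a H\"older interpolation $\mathbb{E}[\mathbb{I}_F\hat\kappa^{\sigma}]\leq \mathbb{P}(F)^{1-1/p}\,\mathbb{E}[\hat\kappa^{p\sigma}]^{1/p}$ with $\hat\kappa^p=\kappa$ achieves the same limit. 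Either way the correct per-trial factor tends to $1-\delta$, not to $(1-\delta)V^{(m)}_{\max}$, and a sufficiently small $\hat\kappa>1$ closes the recursion. You should also note that this continuity-in-$\hat\kappa$ repair makes the resulting $\hat\kappa$ and $c_1$ harder to express explicitly in terms of the drift constants than you claim; the paper's taboo-kernel construction is precisely what yields the explicit values $\kappa\in\big[1,\tfrac{2\tilde b^g}{2\tilde b^g-1}\big]$ and $c_1=\tilde b^g\epsilon^{-1}(1+2\tilde b^g)$ needed for the uniform control over $\Theta$.
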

\begin{proof}
    We define
    $\tilde{V}:=\sum_{n=0}^{\infty} \prescript{}{0^d}{}P^n V^g
    $ where $\prescript{}{0^d}{}P^n$ is the $n$-step taboo probability  \cite{meyn2012markov} defined as
$$ \prescript{}{A}{}P^n_{\boldsymbol xB}={\Pr}_{\boldsymbol x}\left(\boldsymbol X_n \in B, \tau_A >  n \right),$$
for $A, B \subseteq \mathcal{X},$ and $\tau_A$ is the first hitting time of set $A$. We also let $\prescript{}{A}{}P^0_{\boldsymbol xB} = \mathbb{I}_B(\boldsymbol x )$. We have
\begin{align*}
   \prescript{}{0^d}{} P\tilde{V}({\boldsymbol x})&=\sum_{\boldsymbol y \neq 0^d} P_{\boldsymbol x \boldsymbol y} \tilde{V}({\boldsymbol y})=\sum_{n=0}^{\infty}\sum_{\boldsymbol y,\boldsymbol z \neq 0^d}  P_{\boldsymbol x \boldsymbol y}\prescript{}{0^d}{}P^n_{\boldsymbol y\boldsymbol z}{V}^g({\boldsymbol z}) \\
   &=\sum_{n=0}^{\infty}\sum_{\boldsymbol z \neq 0^d}  \prescript{}{0^d}{}P^{n+1}_{\boldsymbol x\boldsymbol z}{V}^g({\boldsymbol z})=\tilde{V}({\boldsymbol x}) -V^g({\boldsymbol x}).
\end{align*}
In \Cref{subsec:proof_dist_hit_0}, we argue that there exists $\tilde b^g >1$ such that $\tilde{V}^g(\boldsymbol y) \leq \tilde{b}^g {V}^g(\boldsymbol y)$ for all $\boldsymbol y \in \X$, which leads to
        \begin{equation}\label{eq:lem_1_tab_delta}
        \prescript{}{0^d}{}  P\tilde{V}=\tilde{V} -V^g \leq \tilde{V} -\frac{1}{\tilde{b}^g}\tilde{V} =\left(1 -\frac{1}{\tilde{b}^g}\right) \tilde{V}.
    \end{equation} 
Define Lyapunov function 
\begin{equation*}
    \tilde{V}^g( \boldsymbol x) =
    \begin{cases}
       (1+2\tilde{b}^g) \tilde{V} ( \boldsymbol x), & \text{ if } \boldsymbol x\neq 0^d, \\
       1+\left( 2\tilde{b}^g\right)^{-1}, & \text{ if } \boldsymbol x= 0^d.
    \end{cases}
\end{equation*}
From the above equation and \eqref{eq:lem_1_tab_delta}, we get
\begin{align*}
    P \tilde{V}^g( \boldsymbol x)&=\sum_{\boldsymbol y \neq 0^d} P_{\boldsymbol x \boldsymbol y} \tilde{V}^g({\boldsymbol y})+P_{\boldsymbol x  0^d} \tilde{V}^g({0^d}) \\
    &=\sum_{\boldsymbol y \neq 0^d} P_{\boldsymbol x \boldsymbol y}  (1+2\tilde{b}^g) \tilde{V}({\boldsymbol y})+P_{\boldsymbol x  0^d} \left( 1+\frac{1}{2\tilde{b}^g}\right)  \\
    &\leq \left( 1-\frac{1}{\tilde{b}^g}\right)  (1+2\tilde{b}^g)\tilde{V}({\boldsymbol x})+ 1+\frac{1}{2\tilde{b}^g}\\
    &\leq \left( 1-\frac{1}{\tilde{b}^g}\right) (1+2\tilde{b}^g)\tilde{V}({\boldsymbol x})+\left( 1+\frac{1}{2\tilde{b}^g}\right)\tilde{V}({\boldsymbol x})\\
    &=\left( 1-\frac{1}{2\tilde{b}^g}\right)(1+2\tilde{b}^g)\tilde{V}({\boldsymbol x}).
\end{align*} Thus, 
\begin{equation*}
P \tilde{V}^g( \boldsymbol x) \leq \left( 1-\frac{1}{2\tilde{b}^g}\right)\tilde{V}^g( \boldsymbol x) +\left( 1-\frac{1}{2\tilde{b}^g}\right)(1+2\tilde{b}^g)\tilde{V}(0^d)\mathbb{I}_{0^d}( \boldsymbol x),  \quad \boldsymbol x \in \X.
\end{equation*}
To find an upper bound for $ \mathbb{E}_{\boldsymbol x}[\kappa^{\tau_{0^d}}]$, we apply \cite[Theorem 15.2.5]{meyn2012markov}, which is a generalization of \Cref{lem:FL_bounds}. For any $1 \leq \kappa \leq \frac{2\tilde{b}^g}{2\tilde{b}^g-1} $, there exists $\epsilon>0$ such that
\begin{equation*}
    {\E}_{\boldsymbol x} \Big[ \sum_{i=0}^{\tau_{0^d}-1}\tilde{V}^g (\boldsymbol X_i)\kappa^i\Big] \leq \epsilon^{-1}\kappa^{-1
    } \tilde{V}^g (\boldsymbol x). 
\end{equation*}
As $\tilde{V}^g (\boldsymbol y)\geq 1$ for all $\boldsymbol y \in \X$, we have
\begin{align*}
    \mathbb{E}_{\boldsymbol x}[\kappa^{\tau_{0^d}}] &\leq \kappa {\E}_{\boldsymbol x} \Big[ \sum_{i=0}^{\tau_{0^d}-1}\tilde{V}^g (\boldsymbol X_i)\kappa^i\Big] \leq \epsilon^{-1} \tilde{V}^g (\boldsymbol x) \\
    &=\epsilon^{-1}\left(1+2\tilde{b}^g\right) \tilde{V} (\boldsymbol x) \leq \tilde{b}^g \epsilon^{-1}\left(1+2\tilde{b}^g\right) {V}^g (\boldsymbol x),
\end{align*} and the claim holds for  any $\kappa \in [1,\frac{2\tilde{b}^g}{2\tilde{b}^g-1}]$ and $c_1=\tilde{b}^g \epsilon^{-1}\left(1+2\tilde{b}^g\right)$.
\end{proof}

\subsection{Poisson equation}\label{subsec:exist_pois_eqn}

For an irreducible Markov process on the countably-infinite space $\mathcal{X}$ with time-homogeneous 
transition kernel $P$ and cost function $\bar{c}(\cdot)$, 
a solution pair to the Poisson equation~\cite{makowski2002poisson} is a scalar $J$ and function $v(\cdot):\mathcal{X}\mapsto \mathbb{R}$ such that
$J+v = \bar{c} + Pv$,
where $v(\boldsymbol z)=0$ for some $\boldsymbol z \in \mathcal{X}$. If the Markov process is also positive recurrent and $\E_{\boldsymbol  x}\left[ \sum_{i=0}^{\tau_{\boldsymbol y}-1} |\bar{c}(\boldsymbol X(i))|\right]<\infty$, where $\tau_{\boldsymbol y}$ is the first hitting time of some state ${\boldsymbol y} \in \mathcal X$, then solution pair $(J,v)$ given as 
\begin{align*}
	J=\frac{\E_{\boldsymbol y}\left[ \sum_{i=0}^{\tau_{\boldsymbol y}-1} |\bar{c}(\boldsymbol X(i))|\right]}{\mathbb{E}_{\boldsymbol y}[\tau_{\boldsymbol y}]} \text{ and } v(\boldsymbol  x)={\E}_{\boldsymbol y}\Big[ \sum_{i=0}^{\tau_{\boldsymbol x}-1} |\bar{c}(\boldsymbol X(i))|\Big] - J\mathbb{E}_{\boldsymbol x}[\tau_{\boldsymbol y}], \quad \forall {\boldsymbol x}\in\mathcal{X},
\end{align*} is a solution to the Poisson equation  $J+v = \bar{c} + Pv$ with $v(\boldsymbol z)=0$ \cite[Theorem 9.5]{makowski2002poisson}.

\begin{Lemma}\label{lem:exist_pois_eqn}
	Consider Markov Decision Processes $\left(\mathcal{X},\mathcal{A},c,P_{\theta}\right)$ governed by parameter $\theta\in\Theta$ following the best-in-class policy $\pi^*_{\theta}$. Then the pair $\left(J\left(\theta\right),  v^{\pi^*_{\theta}} \right)$ given as
	\begin{align*}
		J(\theta):=\frac{\bar{C}^{\pi^*_{\theta}}(0^d)}{\mathbb{E}^{\pi^*_{\theta}}_{0^d}[\tau_{0^d}]} \text{ and } v^{\pi^*_{\theta}}(\boldsymbol x)=\bar{C}^{\pi^*_{\theta}}(\boldsymbol x) - J(\theta) \mathbb{E}^{\pi^*_{\theta}}_x[\tau_{0^d}], \quad \forall \boldsymbol x\in\mathcal{X},
	\end{align*}
	is a solution to the Poisson equation 
	$ v + J = c + P_{\theta}^{\pi^*_{\theta}}v$,
	where $v^{\pi^*_{\theta}}(0^d)=0$ and $$\bar{C}^{\pi^*_{\theta}}(\boldsymbol x)=\mathbb{E}^{\pi^*_{\theta}}_{\boldsymbol x}\Big[ \sum_{i=0}^{\tau_{0^d}-1} {c}(\boldsymbol X(i),\pi^*_{\theta}(\boldsymbol X(i))) \Big].$$
\end{Lemma}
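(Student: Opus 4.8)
The plan is to reduce the claim to the general existence theorem for the Poisson equation cited just above the statement (Theorem~9.5 of \cite{makowski2002poisson}), applied with reference state $\boldsymbol y = 0^d$ and forcing function $\bar c(\boldsymbol x) = c(\boldsymbol x, \pi^*_{\theta}(\boldsymbol x))$. Since the cost function is non-negative by \Cref{assump:costfn}, the absolute values appearing in that theorem collapse, and its conclusion delivers verbatim the pair $(J(\theta), v^{\pi^*_{\theta}})$ of the statement together with the normalization $v^{\pi^*_{\theta}}(0^d)=0$. Thus the whole proof amounts to verifying the three hypotheses of the cited theorem for the chain with kernel $P_{\theta}^{\pi^*_{\theta}}$: (i) irreducibility, (ii) positive recurrence, and (iii) the integrability condition $\mathbb{E}^{\pi^*_{\theta}}_{\boldsymbol x}\big[\sum_{i=0}^{\tau_{0^d}-1} c(\boldsymbol X(i),\pi^*_{\theta}(\boldsymbol X(i)))\big]<\infty$ for every $\boldsymbol x\in\mathcal X$.

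Hypotheses (i) and (ii) are immediate from the ergodicity assumptions: by \Cref{assump:geom_erg} (or \Cref{assump:poly_erg}) the Markov process with kernel $P_{\theta}^{\pi^*_{\theta}}$ is irreducible, aperiodic, and geometrically ergodic, hence positive recurrent, so that $0^d$ is positive recurrent with $\mathbb{E}^{\pi^*_{\theta}}_{0^d}[\tau_{0^d}]<\infty$ and $J(\theta)$ is well defined.

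The substance of the argument is hypothesis (iii). I would bound the per-step cost using the polynomial envelope of \Cref{assump:costfn}, namely $c(\boldsymbol X(i),\cdot)\le f_c(\boldsymbol X(i)) = K\sum_{j=1}^d X_j(i)^r \le Kd\,\norm{\boldsymbol X(i)}_\infty^r$, and then control the growth of the state along an excursion using the skip-free-to-the-right property of \Cref{assump:skipfrt}: since each transition increases the $\ell_1$-norm by at most $h$, we have $\norm{\boldsymbol X(i)}_\infty \le \norm{\boldsymbol X(i)}_1 \le \norm{\boldsymbol x}_1 + h i \le \norm{\boldsymbol x}_1 + h\tau_{0^d}$ for all $0\le i<\tau_{0^d}$. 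Summing over the excursion yields the bound already recorded in \eqref{eq:bound_rel_val},
\begin{equation*}
\bar C^{\pi^*_{\theta}}(\boldsymbol x) \le Kd\,\mathbb{E}^{\pi^*_{\theta}}_{\boldsymbol x}\big[(\norm{\boldsymbol x}_1 + h\tau_{0^d})^r\,\tau_{0^d}\big],
\end{equation*}
which, after expanding the binomial, is a polynomial of degree $r+1$ in $\tau_{0^d}$ with $\boldsymbol x$-dependent coefficients. Its finiteness therefore follows once all moments of $\tau_{0^d}$ up to order $r+1$ are finite, which is exactly what \Cref{lem:hitting_time} provides through the polynomial Lyapunov function $V^p$ (alternatively, \Cref{lem:hitting_time_geom_lyap_2} together with \Cref{assump:geom_erg} already gives finiteness of all moments of $\tau_{0^d}$). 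This establishes (iii) for every $\boldsymbol x$, and invoking the cited theorem then closes the proof.

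The delicate point—and the step I expect to require the most care—is (iii): one must combine the skip-free bound on state growth with the polynomial hitting-time moment bounds in the right order, and ensure that the moment of order $r+1$ demanded by the degree-$(r+1)$ polynomial in $\tau_{0^d}$ is genuinely covered. This is precisely why \Cref{assump:poly_erg} imposes $\alpha^p_{\theta_1,\theta_2} \ge r/(r+1)$, which guarantees control of the $(r+1)$-st moment of $\tau_{0^d}$ via the polynomial drift criterion; everything else in the argument is a direct transcription of the general Poisson-equation theorem to the present setting.
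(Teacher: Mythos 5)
Your proposal is correct and follows essentially the same route as the paper: reduce to Theorem 9.5 of Makowski--Shwartz, get positive recurrence (hence $\mathbb{E}^{\pi^*_\theta}_{\boldsymbol x}[\tau_{0^d}]<\infty$) from Assumption \ref{assump:geom_erg}, and bound $\bar C^{\pi^*_\theta}(\boldsymbol x)$ by $Kd\,\mathbb{E}^{\pi^*_\theta}_{\boldsymbol x}[(\cdot+h\tau_{0^d})^r\tau_{0^d}]$ using the polynomial cost envelope and the skip-free property, with finiteness then coming from the moment bounds on $\tau_{0^d}$. The only (immaterial) differences are that you spell out the theorem's hypotheses more explicitly and offer the polynomial-ergodicity route (\Cref{lem:hitting_time}) alongside the geometric one, whereas the paper simply invokes geometric ergodicity for finiteness of all moments.
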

\begin{proof}
	From \cite[Theorem 9.5]{makowski2002poisson}, a solution pair to the Poisson equation exists if $\mathbb{E}^{\pi^*_{\theta}}_{\boldsymbol x}[\tau_{0^d}]$ and $\bar{C}^{\pi^*_{\theta}}(\boldsymbol x)$ are finite for all $\boldsymbol x \in \X$. The former follows from positive recurrence assumed in \Cref{assump:geom_erg} and for the latter, from Assumptions \ref{assump:costfn} and \ref{assump:skipfrt},
	\begin{align*}
		\bar{C}^{\pi^*_{\theta}}(\boldsymbol x)
		&=\mathbb{E}^{\pi^*_{\theta}}_{\boldsymbol x}\Big[ \sum_{i=0}^{\tau_{0^d}-1}c(\boldsymbol X\left(i\right),\pi^*_{\theta}(\boldsymbol X\left(i\right))) \Big] 
		\\&\leq \mathbb{E}^{\pi^*_{\theta}}_{\boldsymbol x}\Big[ \sum_{i=0}^{\tau_{0^d}-1} \sum_{j=1}^d  K\left(X_j\left(i\right)\right)^r \Big] \\
		&\leq \mathbb{E}^{\pi^*_{\theta}}_{\boldsymbol x}\Big[ \sum_{i=0}^{\tau_{0^d}-1}K d \left(\norm{\boldsymbol x}_{\infty}+hi\right)^r \Big] 
	\\&	\leq \mathbb{E}^{\pi^*_{\theta}}_{\boldsymbol x}\left[  Kd  \left(\norm{\boldsymbol x}_{\infty}+h\tau_{0^d}\right)^r \tau_{0^d}\right],
	\end{align*}
	which is finite from geometric ergodicity (\Cref{assump:geom_erg}) and the discussion following that.
\end{proof}

\section{Proofs of regret analysis} \label{sec:proof_regret_analysis}

In the subsequent sections, several equalities and inequalities in the proofs are between random variables and hold almost surely (\textit{a.s.}). Throughout the remainder, we will omit the explicit mention of \textit{a.s.}, but any such statement should be interpreted in this context.


\subsection{Proof of \Cref{lem:max_geom_rv}}\label{subsec:proof_max_geom_rv}

\begin{proof}
    Let $\{\alpha_i\}_{i \geq 0}$ be the sequence of hitting times of state $0^d$ starting from  $0^d$ (set $\alpha_0 = 0$).
    Define $\tau_{0^d}^{(i)}$ as the length of the $i$-th recurrence time of state $0^d$ for $i \in \mathbb{N}$, i.e.,  
    $  \tau_{0^d}^{(i)}= \alpha_i-\alpha_{i-1}. $ 
Each such recurrence time is generated using  policy $\pi^*_{\theta_k}$ that is determined using the algorithm in operation in an MDP governed by parameter $\boldsymbol\theta^*$. 
Furthermore,  $\{\tau_{0^d}^{(i)}\}_{i \in \mathbb{N}}$ are independent with length at least $1$, but they need not be identically distributed. The time $T$ can be in the middle of one of these recurrence times, hence the current recurrence interval count is $N(T)=\inf\{n:\sum_{i=1}^n \tau_{0^d}^{(i)} \geq T\}$. Note that the lower bound of $1$ on every $ \tau_{0^d}^{(i)}$ says that $N(T) \leq T$ a.s. Further, from the skip-free to the right property, the most any component of state can increase in during recurrence time $ \tau_{0^d}^{(i)}$ is $ h  \tau_{0^d}^{(i)}$. Hence, the most any component of the state (and also the $\|\cdot\|_{\infty}$ norm of the state) can increase is give by $h \max_{i=1,\dotsc, T}  \tau_{0^d}^{(i)}$ where the random variables are independent with geometrically decaying tails with a worst case rate of $$\sup_{\theta_1,\theta_2 \in \Theta} \tilde\gamma^g_{\theta_1,\theta_2}=  1-\Big(\sup_{\theta_1,\theta_2 \in \Theta}\tilde{b}^g_{\theta_1,\theta_2}\Big)^{-1};$$ 
see \Cref{lem:dist_hit_0}. From Lemmas \ref{lem:hitting_time} and \ref{lem:dist_hit_0}, we have
\begin{align}\label{eq:unif_bound_tau_0}
   \tilde{b}^g_{\theta_1,\theta_2} &=\frac{3b^g_{\theta_1,\theta_2}+1}{1-\gamma^g_{\theta_1,\theta_2}} \Bigg( \ |C^g_{\theta_1,\theta_2}|^2 \max_{\boldsymbol u \in C^g_{\theta_1,\theta_2} \setminus \{0^d\}} {\E}_{\boldsymbol u}^{\pi_{\theta_2}^*} [\tau_{0^d}]  \Bigg)\notag \\
   &\leq \frac{3b_*^g+1}{1-\gamma^g_*} \Bigg( \ |C^g_*|^2 \sup_{\substack{\boldsymbol u \in C^g_* \setminus \{0^d\}\\ \theta_1,\theta_2 \in \Theta}}\phi^p_{\theta_1,\theta_2}(1)\Big( V^p_{\theta_1,\theta_2}(\boldsymbol u) + b^p_{\theta_1,\theta_2}\alpha_{C_{\theta_1,\theta_2}^p}\Big)\Bigg)\notag\\
      &\leq \frac{3b_*^g+1}{1-\gamma^g_*} \Bigg( \ |C^g_*|^2 \sup_{\substack{\boldsymbol u \in C^g_* \setminus \{0^d\}\\ \theta_1,\theta_2 \in \Theta}}\frac{1}{\beta^p_{\theta_1,\theta_2}}\Big( s_{\theta_1,\theta_2}^p \norm{\boldsymbol u}_{\infty}^{r_{\theta_1,\theta_2}^p} + \frac{b^p_{\theta_1,\theta_2}}{\min_{\boldsymbol y\in C_{\theta_1,\theta_2}^p} K_{\theta_1,\theta_2}(\boldsymbol y)}\Big)\Bigg) \notag\\
            &\leq \frac{3b_*^g+1}{1-\gamma^g_*} \Bigg( \ |C^g_*|^2 \sup_{{\boldsymbol u \in C^g_* \setminus \{0^d\}}}\frac{1}{\beta^p_*}\Big( s_{*}^p \norm{\boldsymbol u}_{\infty}^{r_{*}^p} + \frac{b^p_{*}}{K_*}\Big)\Bigg)\\
            &:=\tilde{b}^g_*,\notag
\end{align} and we define $\tilde \gamma^g_*:=1-(\tilde{b}^g_*)^{-1}$. From the definition of $b^g_{\theta_1,\theta_2}$ in \Cref{assump:geom_erg},  $b^g_{\theta_1,\theta_2}$ is greater than or equal to $2$ and $\tilde b^g_{\theta_1,\theta_2} \geq 7$. Further, we have 
\begin{equation*}
    \sup_{\theta_1,\theta_2 \in \Theta} c^g_{\theta_1,\theta_2}=\sup_{\theta_1,\theta_2 \in \Theta} \frac{b^g_{\theta_1,\theta_2}\left(\tilde{b}^g_{\theta_1,\theta_2}\right)^2}{\tilde{b}^g_{\theta_1,\theta_2}-1} \leq \frac{b^g_{*}\left(\tilde{b}^g_{*}\right)^2}{6}:=c^g_{*},
\end{equation*} and as a result of \Cref{lem:dist_hit_0},
	\begin{equation}\label{eq:geom_erg_tau_0}
			{\Pr}_{0^d}(\tau_{0^d}^{(i)}>n )\leq c_*^g \left(\gamma_*^g\right)^n, \qquad 1\leq i \leq T.
		\end{equation}
We upper bound $\E \left[ M^T_{\boldsymbol\theta^*}\right]$ using the independence of $\{\tau_{0^d}^{(i)}\}_{i \in \mathbb{N}}$ and the above equation,
  		\begin{align*}
		\E \left[ M^T_{\boldsymbol\theta^*}\right]&\leq h	\E[\max_{1\leq i \leq T} \tau_{0^d}^{(i)}]=h\sum_{n=0}^{\infty} 	\Pr(\max_{1\leq i  \leq T} \tau_{0^d}^{(i)} >n ) \\
  &=h \sum_{n=0}^{\infty} \left(	1- \Pr(\max_{1\leq i  \leq T} \tau_{0^d}^{(i)} \leq n ) \right)
  =h \sum_{n=0}^{\infty} \left( 1-  \prod_{i=1}^T  \Pr\left(\tau_{0^d}^{(i)} \leq n \right)  \right) \\
  & \leq hn_0+  h\sum_{n=n_0}^{\infty}1- \left(1-c^g_*\left(\gamma_*^g\right)^{n_0}\left(\gamma_*^g\right)^{n-n_0}\right)^T
\\& \leq h(n_0+1)+  h\sum_{n=n_0+1}^{\infty}1- \left(1-\left(\gamma_*^g\right)^{n-n_0}\right)^T,
		\end{align*} 
  where $n_0$ is the smallest $n \geq 0$ such that $c^g_* \left(\gamma_*^g\right)^n < 1$.
  By Reimann sum approximation, we get 
		\begin{align*}
			\E \left[ M^T_{\boldsymbol\theta^*}\right]   &\leq  h(n_0+1)+  h\sum_{n=1}^{\infty}1- \left(1-\left(\gamma_*^g\right)^{n}\right)^T 
   \\& <h(n_0+1)+ 
   h\int_0^{\infty} 1- \left(1- \left(\gamma_*^g\right)^u\right)^T \, du
   \\&= h(n_0+1)+\frac{h}{\log \gamma_*^g} \int_{0}^{1} \frac{1-u^T}{1-u} \, du \\& \leq h(n_0+1)+\frac{h}{\log \gamma_*^g}\left(\log T +1\right),
		\end{align*} 
  where the last inequality follows from $\sum_{n=1}^T n^{-1} \leq \log T +1$ and thus $\E \left[ M^T_{\boldsymbol\theta^*}\right]$ is $O(\log T)$. We now extend the result to moments of order greater than one.  From \eqref{eq:geom_erg_tau_0}, for $1\leq i \leq T$, 
  	\begin{equation*}
			{\Pr}_{0^d}(\tau_{0^d}^{(i)}>n )\leq c^g_*\left(\gamma_*^g\right)^n = c^g_*\left(\gamma_*^g\right)^{n_0}\left(\gamma_*^g\right)^{n-n_0}<\left(\gamma_*^g\right)^{n-n_0}.
		\end{equation*}
  For $n\geq n_0$, let $t=n-n_0 \geq 0$ and $Y_i=\max(\tau_{0^d}^{(i)}-n_0,0)$ to get
\begin{equation*}
   {\Pr}_{0^d}(Y_i>t)= {\Pr}_{0^d}(\tau_{0^d}^{(i)}-n_0>t )<\left(\gamma_*^g\right)^{t},
\end{equation*}
which means random variables $\{Y_i\}_{i=1}^T$ are stochastically dominated by independent and identically distributed geometric random variables with parameter $1-\gamma_*^g$.
  Furthermore, \cite{szpankowski1990yet} argues that the $p$-th moment of the maximum of $T$ independent and identically distributed geometric random variables is $O(\log^p T)$. Thus, the $p$-th moment of  $\max_{1 \leq i \leq T} Y_i$ is $O(\log^p T)$ and
  \begin{align*}
\max_{1 \leq i \leq T} Y_i&=\max(\tau_{0^d}^{(1)}-n_0,\ldots, \tau_{0^d}^{(T)}-n_0,0)=\max(\tau_{0^d}^{(1)},\ldots, \tau_{0^d}^{(T)},n_0)-n_0 \\
&\geq \max(\tau_{0^d}^{(1)},\ldots, \tau_{0^d}^{(T)})-n_0 \geq h^{-1}  M^T_{\boldsymbol\theta^*}-n_0,
  \end{align*} which gives
  \begin{align*}
  \E \left[\left( M^T_{\boldsymbol\theta^*}\right)^p\right] \leq  h^p  \E \Big[\Big( \max_{1\leq i \leq T}  \tau_{0^d}^{(i)}\Big)^p\Big] \leq   h^p \E \left[ \left( \max_{1 \leq i \leq T} Y_i +n_0 \right)^p \right].
  \end{align*}
  Since the right-hand side of the above equation is $O(\log^p T)$, the claim is proved.
	\end{proof}


\subsection{Proof of \Cref{lem:number_macro_episodes}}\label{subsec:proof_number_macro_episodes}

	\begin{proof}
	Let $K_M{(\boldsymbol x,a)}$ be the number of episodes $k$ such that $1\leq k \leq K_T$ and in which the number of visits to the state-action pair $(\boldsymbol x,a)$ is  increased more than twice at episode $k$, or
	\begin{align*}
		& K_M{(\boldsymbol x,a)} = |\{k \leq K_T:  N_{\tilde t_{k+1}}(\boldsymbol x,a) > 2N_{t_{k}}(\boldsymbol x,a)\}|.
	\end{align*}
	 As for every episode in the above set the number of visits to $(x,a)$ doubles, 
	$$K_M{(\boldsymbol x,a)} \leq \log_2(N_{T+1}(\boldsymbol x,a)) +1,$$ 
	and we can upper bound $K_M$  as follows
	\begin{align*} 
		K_M&= \sum_{\boldsymbol x \in \mathcal{X},a \in \A }  K_M{(\boldsymbol x,a)} =\sum_{\substack{\norm{\boldsymbol x}_{\infty} \leq  M^T_{\boldsymbol\theta^*}  \\a \in \A } } K_M(\boldsymbol x,a)\nonumber  \\ 
		&\leq \sum_{\substack{\norm{\boldsymbol x}_{\infty} \leq  M^T_{\boldsymbol\theta^*}  \\a \in \A } } \left( 1+ \log_2 N_{T+1}(\boldsymbol x,a)\right)\nonumber 
		\leq |\A| \left(M^T_{\boldsymbol\theta^*}+1\right)^d (1+\log_2 T).
	\end{align*}
\end{proof}


\subsection{Proof of \Cref{lem:number_episodes}}\label{subsec:proof_number_episodes}

	\begin{proof}
We define macro episodes with start times $t_{n_k}$, $k=1,2,\dots,K_M+1$ where $t_{n_1} = t_1$, $t_{n_{K_M+1}} = T+1$ (which is equivalent to $n_{K_M+1}=K_T+1$), and for $1 < k < K_M+1$
	\begin{align*}
		t_{n_{k+1}} = \min\{ & t_j > t_{n_k}:\quad 
		N_{t_j}(\boldsymbol x,a) > 2N_{t_{j-1}}(\boldsymbol x,a) \text{ for some }(\boldsymbol x,a) 
		\},
	\end{align*} which are episodes wherein the second stopping criterion is triggered. Any episode (except for the last episode) in a macro episode must be triggered by the first stopping criterion; equivalently,  $\tilde{T}_{j} = \tilde{T}_{j-1} +1$ for all $j =n_k, n_{k}+1,\dots,n_{k+1}-2$.
	For $1 \leq  k \leq  K_M$, let $ T^M_k = \sum_{j=n_{k}}^{n_{k+1}-1} T_{j}$ be the length of the $k$-th macro episode. We have
	\begin{align*}
	 T^M_k = \sum_{j=n_{k}}^{n_{k+1}-1} T_{j}
		\geq\sum_{j=n_{k}}^{n_{k+1}-1} \tilde T_{j}
		\geq 1+\sum_{j=n_{k}}^{n_{k+1}-2}  (j-n_k+2) = 0.5(n_{k+1}-n_k)(n_{k+1}-n_k+1).
	\end{align*}
	Consequently, $n_{k+1} - n_{k} \leq \sqrt{2 T^M_k}$ for all $1 \leq  k \leq  K_M$.
	From this, we obtain
	\begin{align*}
		K_T = &n_{K_M+1}-1
		= \sum_{k=1}^{K_M} (n_{k+1} - n_{k})
		\leq \sum_{k=1}^{K_M} \sqrt{ 2 T^M_k} .
	\end{align*}
	Using the above equation and the fact that $\sum_{k=1}^{K_M} T^M_k = T$ we get
	\begin{align*}
		K_T \leq \sum_{k=1}^{K_M} \sqrt{2  T^M_k} \leq & 
		\sqrt{ K_M\sum_{k=1}^{K_M} 2 T^M_k } 
		= \sqrt{ 2 K_MT}.
	\end{align*}
 Finally, from \Cref{lem:number_macro_episodes} we get 
 \begin{equation*}
    K_T \leq \sqrt{ 2 K_MT} \leq 2\sqrt {|\A| \left(M^T_{\boldsymbol\theta^*}+1\right)^d T\log_2 T}.
 \end{equation*}
	\end{proof}


\subsection{Proof of \Cref{lem:R_0_bound}}\label{subsec:R_0_bound}

\begin{proof}
	Let $E_k=T_k-\tilde{T}_k \geq 0$ be the settling time needed to return to state $0^d$ after a stopping criterion is realized in episode $k$. We have
	\begin{align}
	R_0 &= \E\Big[ \sum_{k=1}^{K_T}T_k J(\theta_k) \Big]- T \E\Big[J(\boldsymbol\theta^*) \Big] \notag\\
	&= 	 \E\Big[ \sum_{k=1}^{K_T}\tilde T_k J(\theta_k) \Big] + \E\Big[ \sum_{k=1}^{K_T}E_k J(\theta_k) \Big]- T \E\Big[J(\boldsymbol\theta^*) \Big] \label{eq:R_0_decomp}.
	\end{align}
We first simplify the first term in the above summation. From the monotone convergence theorem, 
\begin{align*}
\E\Big[ \sum_{k=1}^{K_T}\tilde T_k J(\theta_k) \Big] 
	= \sum_{k=1}^{\infty}\E\Big[ \mathbb{I}_{\{t_{k}\leq T\}} \tilde T_kJ(\theta_{k})\Big] .
\end{align*}
Note that the first stopping criterion of \Cref{alg:TSDE} ensures that $\tilde{T}_k \leq \tilde{T}_{k-1}+1$ at all episodes $k\geq 1$.
Hence
\begin{align*}
\E\Big[ \mathbb{I}_{\{t_{k}\leq T\}} \tilde T_kJ(\theta_{k})\Big]
	\leq &\E\Big[ \mathbb{I}_{\{t_{k}\leq T\}}(\tilde{T}_{k-1}+1) J(\theta_{k})\Big].
\end{align*}
Since $\mathbb{I}_{\{t_{k}\leq T\}}(\tilde T_{k-1}+1)$ is measurable with respect to $\mathcal H_{t_k}$,  by \eqref{eq:TS_property} we get
\begin{align*}
	\E\Big[ \mathbb{I}_{\{t_{k}\leq T\}}(\tilde{T}_{k-1}+1) J(\theta_{k})\Big]
	= &\E\Big[\mathbb{I}_{\{t_{k}\leq T\}}(\tilde{T}_{k-1}+1) J(\boldsymbol\theta^*) \Big].
\end{align*}
Therefore,
\begin{align*}%
    \E\Big[ \sum_{k=1}^{K_T}\tilde T_k J(\theta_k) \Big]  &\leq \sum_{k=1}^{\infty}\E\Big[\mathbb{I}_{\{t_{k}\leq T\}}(\tilde{T}_{k-1}+1) J(\boldsymbol\theta^*) \Big] 
  =   \E\Big[ \sum_{k=1}^{K_T}(\tilde{T}_{k-1}+1) J(\boldsymbol\theta^*)  \Big] . \notag 
\end{align*}
Thus, 
\begin{align}\label{eq:R_0_first_term}
     \E\Big[ \sum_{k=1}^{K_T}\tilde T_k J(\theta_k) \Big] - T \E\Big[J(\boldsymbol\theta^*) \Big] &\leq  \E\Big[ J(\boldsymbol\theta^*) \sum_{k=1}^{K_T}(\tilde{T}_{k-1}+1)  \Big] -  \E\Big[J(\boldsymbol\theta^*)\sum_{k=1}^{K_T}{T}_{k} \Big]\notag \\
     &=\E\Big[ J(\boldsymbol\theta^*) \Big(K_T+1-T_{K_T} -\sum_{k=1}^{K_T-1}E_k \Big)\Big]\notag \\
     & \leq \E\Big[J(\boldsymbol\theta^*) K_T\Big].
\end{align}
For the second term  in \eqref{eq:R_0_decomp}, from \Cref{assump:sup_J_V} 
\begin{align}\label{eq:R_0_second_term}
	 \E\Big[ \sum_{k=1}^{K_T}E_k J(\theta_k) \Big]\leq J^* \E\Big[ \sum_{k=1}^{K_T}E_k\Big]\leq  J^*  \E[K_T \max_{1\leq i  \leq T} \tau_{0^d}^{(i)}].
\end{align} 
Substitutinh \eqref{eq:R_0_first_term} and \eqref{eq:R_0_second_term} in \eqref{eq:R_0_decomp}, we get
\begin{align*}
	R_0 
	 & \leq \E\left[K_T J(\boldsymbol\theta^*)\right]
	+ J^*  \E[K_T \max_{1\leq i  \leq T} \tau_{0^d}^{(i)}]\\
	&\leq  J^* \E\left[K_T \right]
+   J^*  \E[K_T \max_{1\leq i  \leq T} \tau_{0^d}^{(i)}] \\
&= J^*  \E\Big[K_T\Big(\max_{1\leq i  \leq T} \tau_{0^d}^{(i)}+1\Big)\Big].
\end{align*}
\end{proof}


\subsection{Proof of \Cref{lem:R_1_bound}}\label{subsec:R_1_bound}

\begin{proof}
We note that the state of the  MDP is equal to $0^d$ at the beginning of all episodes and the relative value function $v(\boldsymbol x; \theta)$ is equal to 0 at $\boldsymbol x=0^d$ for all $\theta$. Thus,
\begin{align*}
    R_1 &= \E\Big[\sum_{k=1}^{K_T}\sum_{t=t_k}^{t_{k+1}-1}  \Big[v\left(\boldsymbol X\left(t\right);\theta_k\right) -v\left(\boldsymbol X\left(t+1\right);\theta_k\right)\Big]\Big]\\&=
     \E\Big[\sum_{k=1}^{K_T} \Big[v\left(\boldsymbol X\left(t_k\right);\theta_k\right) -v\left(\boldsymbol X\left(t_{k+1}\right);\theta_k\right)\Big]\Big]\\&=
     \E\Big[\sum_{k=1}^{K_T-1} \Big[v\left(0^d;\theta_k\right) -v\left(0^d;\theta_k\right)\Big]+ v\left(0^d;\theta_{K_T}\right)-v\left(\boldsymbol X(T+1);\theta_{K_T}\right)\Big]  \\
     &= -\E[v\left(\boldsymbol X(T+1);\theta_{K_T}\right)].
\end{align*}
From the lower bound derived for the relative value function in   \eqref{eq:bound_rel_val_low},
\begin{align*}
   -v(\boldsymbol x;\theta) &\leq J^* \mathbb{E}^{\pi^*_{\theta}}_{\boldsymbol x}[\tau_{0^d}] \leq \frac{J^*}{\beta^p_*}\Big( s_{*}^p \norm{\boldsymbol x}_{\infty}^{r_{*}^p} + \frac{b^p_{*}}{K_*}\Big),
\end{align*} where the second inequality follows from \eqref{eq:unif_bound_tau_0} in the proof of \Cref{lem:max_geom_rv}.
 We also note that $\norm{\boldsymbol X(T+1)}_{\infty} \leq M^T_{\boldsymbol\theta^*} +h$. Thus, 
\begin{align*}
     R_1=-\E[v\left(\boldsymbol X(T+1);\theta_{K_T}\right)] \notag
    & \leq  \E\Big[ \frac{J^*}{\beta^p_*}\Big( s_{*}^p (M^T_{\boldsymbol\theta^*} +h)^{r_{*}^p} + \frac{b^p_{*}}{K_*}\Big)\Big]. 
\end{align*} 
From the inequality $(a+b)^r \leq 2^r(a^r+b^r)$, we have 
\begin{align*}
  R_1 \leq \frac{J^*2^{r_*^p} s_*^p}{\beta^p_*}  \E \Big[\left( M^T_{\boldsymbol\theta^*}\right)^{r_*^p}\Big] +\frac{J^*}{\beta^p_*}\left( s_*^p \left(2h\right)^{r_*^p} + \frac{b^p_{*}}{K_*}\right).
\end{align*}
\end{proof}


\subsection{Proof of \Cref{lem:R_2_bound}}\label{subsec:R_2_bound}

\begin{proof}
Let $\boldsymbol Z(t)=\Big( \boldsymbol X(t),\pi^*_{\theta_k} ( \boldsymbol X(t))\Big)$ be the state-action pair at $t_k \leq t <t_{k+1}$. $R_2$ can be upper bounded as 
\begin{align}\label{eq:R_2_upp}
	R_2 &= \E\Big[\sum_{k=1}^{K_T}\sum_{t=t_k}^{t_{k+1}-1}  \Big[ v\left(\boldsymbol X\left(t+1\right);\theta_k\right) -\sum_{\boldsymbol y \in \mathcal{X}} P_{\theta_k}\left(\boldsymbol y \Bc \boldsymbol X\left(t\right),\pi^*_{\theta_k} \left( \boldsymbol X\left(t\right)\right)\right) 	v\left(\boldsymbol y;\theta_k\right)\Big] \Big]
 \notag \\ 
	&\leq \E\Big[\sum_{k=1}^{K_T}\sum_{t=t_k}^{t_{k+1}-1}    \Big[ \sum_{\boldsymbol y \in \mathcal{X}}  \left|P_{\boldsymbol\theta^*}(\boldsymbol y| \boldsymbol Z\left(t\right))-P_{\theta_k}(\boldsymbol y| \boldsymbol Z\left(t\right))\right| \left|v(\boldsymbol y;\theta_k)\right|\Big]\Big] \notag \\
	&\leq \sum_{t=1}^T \E\Big[ \Big( \max _{\substack{1 \leq k \leq K_T \\ \norm{\boldsymbol x}_{\infty} \leq  M^T_{\boldsymbol\theta^*} } }\left| v(\boldsymbol x;\theta_k) \right|\Big)   \norm{P_{\boldsymbol\theta^*}(\cdot| \boldsymbol Z\left(t\right))-P_{\theta_k}(\cdot| \boldsymbol Z\left(t\right))}_1
 \Big].
\end{align}
We have
\begin{align*}
\norm{P_{\boldsymbol\theta^*}(\cdot| \boldsymbol Z(t))-P_{\theta_k}(\cdot| \boldsymbol Z(t))}_1 \leq \norm{P_{\boldsymbol\theta^*}(\cdot| \boldsymbol Z(t))-P_{\hat \theta_k}(\cdot| \boldsymbol Z(t))}_1+\norm{P_{\theta_k}(\cdot| \boldsymbol Z(t))-P_{\hat \theta_k}(\cdot| \boldsymbol Z(t))}_1,
\end{align*}	
where $P_{\hat \theta_k}(\boldsymbol y| \boldsymbol Z\left(t\right))$ is the empirical transition probability defined as
\begin{equation*}
	P_{\hat{\theta}_k}(\boldsymbol y| \boldsymbol Z\left(t\right))=\frac{N_{t_k}\left(\boldsymbol Z\left(t\right),\boldsymbol y\right)}{\max\left(1,N_{t_k}\left(\boldsymbol Z\left(t\right)\right)\right)},
\end{equation*}
and for any tuple $(\boldsymbol x,a,\boldsymbol y)$, we define $N_1(\boldsymbol x,a,\boldsymbol y)=0$ and for $t>1$, 
\begin{align*}
	N_t(\boldsymbol x,a,\boldsymbol y) = |\{t_{k} \leq i< \tilde  t_{k+1} \leq t \,  \text{ for some } k\geq 1:   \left(\boldsymbol X\left(i\right),A\left(i\right),\boldsymbol X\left(i+1\right)\right) = (\boldsymbol x,a,\boldsymbol y)\}|. 
\end{align*}
Thus, from \eqref{eq:R_2_upp} and defining random variable  $v_M= \max _{\substack{1 \leq k \leq K_T \\ \norm{\boldsymbol x}_{\infty} \leq  M^T_{\boldsymbol\theta^*} } }|v(\boldsymbol x;\theta_k) |$,
\begin{align}\label{eq:R_2_decomp}
    R_2 \leq \sum_{t=1}^T \E\Big[ v_M  \norm{P_{\boldsymbol\theta^*}(\cdot| \boldsymbol Z\left(t\right))-P_{\hat\theta_k}(\cdot| \boldsymbol Z\left(t\right))}_1\Big] 
    +\sum_{t=1}^T \E\Big[ v_M\norm{P_{\theta_k}(\cdot| \boldsymbol Z\left(t\right))-P_{\hat\theta_k}(\cdot| \boldsymbol Z\left(t\right))}_1\Big].
\end{align}
We define set  $B_k$ as the set of parameters $\theta$ for which the transition kernel  $P_{\theta}(\cdot|\boldsymbol z)$ is close to the empirical transition kernel $P_{\hat{\theta}_k}(\cdot|\boldsymbol z)$ at episode $k$ for every state-action pair $\boldsymbol z =(\boldsymbol x,a)$, or
\begin{align*}
	B_k = \left\{\theta:  \norm{P_{\theta}(\cdot|\boldsymbol z)-P_{\hat\theta_k}(\cdot| \boldsymbol z)}_1 \leq \beta_k(\boldsymbol z), \,  \, \boldsymbol z =(\boldsymbol x,a) \in \{0,1,\cdots,hT\}^d\times \mathcal A \right\}, 
\end{align*}
where  $\beta_k(\boldsymbol z) = \sqrt{\frac{14 \prod_{i=1}^ {d} (x_i+h)}{\max(1,N_{t_k}(\boldsymbol z))}\log \left(\frac{2|\mathcal A| T }{\tilde\delta}\right)}$ for $\boldsymbol x=(x_1,\ldots,x_d)$ and some $0 <\tilde \delta < 1$, which will be determined later. 
We simplify the $\ell_1$-difference of the real and empirical transition kernels as follows
\begin{align*}
    & \norm{P_{\boldsymbol\theta^*}(\cdot| \boldsymbol Z\left(t\right))-P_{\hat\theta_k}(\cdot| \boldsymbol Z\left(t\right))}_1\notag \\
    &=\mathbb{I}_{\{\boldsymbol\theta^* \notin B_{k}\}}\norm{P_{\theta_*}(\cdot| \boldsymbol Z\left(t\right))-P_{\hat\theta_k}(\cdot| \boldsymbol Z\left(t\right))}_1  + \mathbb{I}_{\{\boldsymbol\theta^* \in B_{k}\}}\norm{P_{\theta_*}(\cdot| \boldsymbol Z\left(t\right))-P_{\hat\theta_k}(\cdot| \boldsymbol Z\left(t\right))}_1 \notag \\
    &\leq 2\mathbb{I}_{\{\boldsymbol\theta^* \notin B_{k}\}}+\beta_k\left(\boldsymbol Z\left(t\right)\right). 
\end{align*} Similarly, we have 
\begin{align*}
     \norm{P_{\theta_k}(\cdot| \boldsymbol Z\left(t\right))-P_{\hat\theta_k}(\cdot| \boldsymbol Z\left(t\right))}_1\notag 
    \leq 2\mathbb{I}_{\{\theta_k \notin B_{k}\}}+\beta_k\left(\boldsymbol Z\left(t\right)\right). 
\end{align*} Substituting in \eqref{eq:R_2_decomp}, we get
\begin{align}\label{eq:R_2_decomp_2}
    R_2 \leq  \E\Big[\sum_{k=1}^{K_T}\sum_{t=t_k}^{t_{k+1}-1} 2v_M  \left[ \mathbb{I}_{\{\boldsymbol\theta^* \notin B_{k}\}}+\mathbb{I}_{\{\theta_k \notin B_{k}\}}\right]\Big] 
    + \E\Big[ \sum_{k=1}^{K_T}\sum_{t=t_k}^{t_{k+1}-1}2v_M\beta_k\left(\boldsymbol Z\left(t\right)\right)\Big].
\end{align}
We first find an upper bound for $v_M= \max _{\substack{1 \leq k \leq K_T \\ \norm{\boldsymbol x}_{\infty} \leq  M^T_{\boldsymbol\theta^*} } }|v(\boldsymbol x;\theta_k)|$ using the bounds derived in \eqref{eq:bound_rel_val} and \eqref{eq:bound_rel_val_low}.  From \eqref{eq:bound_rel_val},
\begin{align}\label{eq:bnd_v_theta_k}
   v(\boldsymbol x;\theta_k) &\leq \mathbb{E}_{\boldsymbol x}^{\pi^*_{\theta_k}}\left[ K d  \left(\norm{\boldsymbol x}_{\infty}+h\tau_{0^d}\right)^r\tau_{0^d} \right] \notag
 \\   &\leq \mathbb{E}_{\boldsymbol x}^{\pi^*_{\theta_k }}\left[ 2^r Kd  \left(\norm{\boldsymbol x}_{\infty}^r+h^r(\tau_{0^d})^r\right)\tau_{0^d} \right]  \notag\\
    &= K d(2\norm{\boldsymbol x}_{\infty})^r\mathbb{E}_{\boldsymbol x}^{\pi^*_{\theta_k }}\left[ \tau_{0^d} \right]+ Kd(2h)^r \mathbb{E}_{\boldsymbol x}^{\pi^*_{\theta_k }}\left[ (\tau_{0^d})^{r+1} \right]\notag\\
    &\leq Kd2^r\left( \norm{\boldsymbol x}_{\infty}^r+h^r\right)\mathbb{E}_{\boldsymbol x}^{\pi^*_{\theta_k }}\left[ (\tau_{0^d})^{r+1} \right] \notag
    \\
    &\leq K  d(r+1)2^r\left( \norm{\boldsymbol x}_{\infty}^r+h^r\right)  \phi^p_{\theta_k}(r+1)\left( V^p_{\theta_k}(\boldsymbol x) + b^p_{\theta_k}\alpha_{C_{\theta_k}^p}\right) \notag
    \\
    &\leq K  d(r+1)2^r\left( \norm{\boldsymbol x}_{\infty}^r+h^r\right)  \phi^p_{\theta_k}(r+1)\left( s_*^p \norm{\boldsymbol x}_{\infty}^{r_*^p}+ b_*^p(K_*)^{-1}\right),
\end{align}
where the second line follows from the inequality $(a+b)^r \leq 2^r(a^r+b^r)$, the fifth line from \Cref{lem:hitting_time}, and the last line from \Cref{assump:poly_erg} and the same arguments as in \eqref{eq:unif_bound_tau_0}. 
To simplify notation, we have used $b^p_{\theta_k}$ as a shorthand for $b^p_{\theta_k,\theta_k}$, and this convention applies to similar values throughout our work. We have
\begin{align*}
    \phi^p_{\theta_1,\theta_2}(r+1)&= \prod_{j=1}^{r+1} \frac{1}{\beta_{\theta_1,\theta_2}^{\eta_j} } \left( 2^{j-1}+\left(j-1\right) \alpha_{C^p_{\theta_1,\theta_2}}b_{\theta_1,\theta_2}^{\eta_j} \right)
    \\& \leq \prod_{j=1}^{r+1} \frac{r+1}{\min(1,\beta^p_*) } \left( 2^{j-1}+\left(j-1\right) (K_*)^{-1}b_{\theta_1,\theta_2}^{\eta_j} \right),
\end{align*}
where using the definition of $b_{\theta_1,\theta_2}^{\eta_j}$ in \eqref{eq:poly_erg_param},
\begin{align*}
    b^{\eta_j}_{\theta_1,\theta_2} =\Big( b^p_{\theta_1,\theta_2}\Big)^{\eta_j}+\eta_j \tilde \beta^p_{\theta_1,\theta_2} \max\Big(1, \left( \tilde\beta^p_{\theta_1,\theta_2}\right)^{(\alpha^p_{\theta_1,\theta_2}+\eta_j-1)/(1-\alpha^p_{\theta_1,\theta_2})}\Big) 
    \leq 1+b_*^p+\beta_*^p.
\end{align*}
We also define $$\phi^p_*(r+1):=\prod_{j=1}^{r+1} \frac{r+1}{\min(1,\beta^p_*) } \left( 2^{j-1}+\left(j-1\right) (K_*)^{-1}(1+b_*^p+\beta_*^p) \right),$$
and get the following upper bound for $v(\boldsymbol x; \theta_k)$ from \eqref{eq:bnd_v_theta_k} as follows,
\begin{equation}\label{eq:final_upp_bnd_v}
    v(\boldsymbol x; \theta_k) \leq K  d(r+1)2^r\phi^p_{*}(r+1)\left( \norm{\boldsymbol x}_{\infty}^r+h^r\right)  \left( s_*^p \norm{\boldsymbol x}_{\infty}^{r_*^p}+ b_*^p(K_*)^{-1}\right).
\end{equation}
We next find a lower bound for $v(\boldsymbol x;\theta_k)$  using \eqref{eq:bound_rel_val_low} and \eqref{eq:unif_bound_tau_0}:
\begin{align*}
v(\boldsymbol x;\theta_k) \geq -J^*\mathbb{E}_{\boldsymbol x}^{\pi^*_{\theta_k }}\left[ \tau_{0^d} \right] \geq -\frac{J^*}{\beta^p_*}\Big( s_{*}^p \norm{\boldsymbol x}_{\infty}^{r_{*}^p} + \frac{b^p_{*}}{K_*}\Big).
\end{align*}
Combining \eqref{eq:final_upp_bnd_v} and the above equation, we get a uniform upper bound for $|v(\boldsymbol x;\theta_k)|$ over $\Theta$, which we use to upper bound  $v_M= \max _{\substack{1 \leq k \leq K_T \\ \norm{\boldsymbol x}_{\infty} \leq  M^T_{\boldsymbol\theta^*} } }|v(\boldsymbol x;\theta_k)|$ as below
\begin{align}\label{eq:max_v}
v_M
    & \leq  \notag \left(J^*+Kd(r+1)2^r\right) \phi^p_*(r+1)\left(\left( M^T_{\boldsymbol\theta^*}\right)^r+h^r\right) \left( s_*^p \left( M^T_{\boldsymbol\theta^*}\right)^{r_*^p}+  b_*^p(K_*)^{-1}\right)
     \\ & =c_{p_1}\left(\left( M^T_{\boldsymbol\theta^*}\right)^r+h^r\right)  \left( s_*^p \left( M^T_{\boldsymbol\theta^*}\right)^{r_*^p}+ b_*^p(K_*)^{-1}\right)\notag \\
     &\leq c_{p_2} \left( M^T_{\boldsymbol\theta^*}\right)^{r+r_*^p},
\end{align} where the constant terms are defined as 
\begin{align*}
    c_{p_1}:=\left(J^*+Kd(r+1)2^r\right)\phi^p_*(r+1), \quad c_{p_2}:=\max\left(1,c_{p_1}(h^r+1)(s_*^p +b_*^p(K_*)^{-1})\right).
\end{align*} 
A deterministic upper bound on $v_M$ can also be found from the above equation. Noting that from \Cref{assump:skipfrt}, until time $T$ only states with each component less than or equal to $hT$ are visited,  we have 
\begin{equation*}
	v_M \leq c_{p_2} \left( M^T_{\boldsymbol\theta^*}\right)^{r+r_*^p}
 \leq c_{p_2} (hT)^{r+r_*^p}
  :=Q(T),
\end{equation*} where $Q(T)$ is a polynomial defined as above.
Using the bounds derived for $v_M$, we bound $R_2$ starting with the first term on the right-hand side of \eqref{eq:R_2_decomp_2}. We have
\begin{align}\label{eq:R_2_decomp_2_simp}
     \E\Big[\sum_{k=1}^{K_T}\sum_{t=t_k}^{t_{k+1}-1} 2v_M  \left[ \mathbb{I}_{\{\boldsymbol\theta^* \notin B_{k}\}}+\mathbb{I}_{\{\theta_k \notin B_{k}\}}\right]\Big] &\leq 2Q(T)   \E\Big[\sum_{k=1}^{K_T}\sum_{t=t_k}^{t_{k+1}-1}  \mathbb{I}_{\{\boldsymbol\theta^* \notin B_{k}\}}+\mathbb{I}_{\{\theta_k \notin B_{k}\}}\Big] \notag \\
    &\leq  2TQ(T)   \E\Big[\sum_{k=1}^{K_T}\mathbb{I}_{\{\boldsymbol\theta^* \notin B_{k}\}}+\mathbb{I}_{\{\theta_k \notin B_{k}\}}\Big] \notag \\
&\leq  2TQ(T)  \sum_{k=1}^{T} \E\Big[\mathbb{I}_{\{\boldsymbol\theta^* \notin B_{k}\}}+\mathbb{I}_{\{\theta_k \notin B_{k}\}}\Big] \notag \\
&\leq  4TQ(T)  \sum_{k=1}^{T} \Pr\{\boldsymbol\theta^* \notin B_{k}\},
\end{align} where the last inequality follows from \eqref{eq:TS_property} and the fact that set $B_k$ is $\mathcal H_{t_k}-$measurable.
To further simplify the first term in \eqref{eq:R_2_decomp_2}, we find an upper bound for  $\Pr\left\{\boldsymbol\theta^* \notin B_{k}\right\}$ using \cite{weissman2003inequalities}.  For a fixed $\boldsymbol z=(\boldsymbol x,a)$ and $n$ independent samples of the distribution  $P_{\boldsymbol{\theta}^* }(.|\boldsymbol z)$, the $L^1$-deviation of the true  distribution  $P_{\boldsymbol{\theta}^* }(.|\boldsymbol z)$ and empirical distribution at the end of episode $k$, $P_{\hat\theta_k}(.|\boldsymbol z)$, is bounded in \cite{auer2008near} as 
\begin{align*} 
	\Pr \left\{ \norm{P_{\boldsymbol\theta^*}(\cdot|\boldsymbol z)-P_{\hat\theta_k}(\cdot|\boldsymbol z)}_1
 \geq\sqrt{\frac{14 \prod_{i=1}^ {d} (x_i+h)}{n}\log \left(\frac{2|\mathcal A|T }{\tilde\delta}\right)}
	\right\}  \leq \frac{\tilde\delta }{20  |\mathcal A| T^7\prod_{i=1}^ {d} (x_i+h)}.
\end{align*}
Therefore, 
\begin{align*}
\Pr \left\{ \norm{P_{\boldsymbol\theta^*}(\cdot|\boldsymbol z)-P_{\hat\theta_k}(\cdot|\boldsymbol z)}_1 \geq \beta_k(\boldsymbol z) \Bc N_{t_k}(\boldsymbol z)=n \right\} \leq \frac{\tilde\delta }{20 |\mathcal A| T^7\prod_{i=1}^ {d} (x_i+h)}, 
\end{align*} and
\begin{align*}
& \Pr \left\{ \norm{P_{\boldsymbol\theta^*}(\cdot|\boldsymbol z)-P_{\hat\theta_k}(\cdot|\boldsymbol z)}_1 \geq \beta_k(\boldsymbol z) \right\} \notag\\
 &=\sum_{n=1}^T \Pr \left\{ \norm{P_{\boldsymbol\theta^*}(\cdot|\boldsymbol z)-P_{\hat\theta_k}(\cdot|\boldsymbol z)}_1 \geq \beta_k(\boldsymbol z) \Bc N_{t_k}(\boldsymbol z)=n \right\}
 \Pr \left\{ N_{t_k}(\boldsymbol z)=n\right\}\notag \\
 &\leq \frac{\tilde\delta }{20 |\mathcal A| T^6\prod_{i=1}^ {d} (x_i+h)}.
\end{align*}
The probability that at episode $k \leq T$, the true parameter $\boldsymbol{\theta}^* $ does not belong to the confidence set $B_k$ can be bounded using the above and union bound as 
\begin{align*}
	\Pr\{\boldsymbol{\theta}^*  \notin B_{k}\}&\leq \sum_{\boldsymbol z \in \{0,1,\cdots,hT\}^d\times \mathcal A}   			\Pr \left\{ \norm{P_{\boldsymbol\theta^*}(\cdot|\boldsymbol z)-P_{\hat\theta_k}(\cdot|\boldsymbol z)}_1 \geq \beta_k(\boldsymbol z) \right\} \notag\\
	 & \leq
	\sum_{\boldsymbol z \in \{0,1,\cdots,hT\}^d\times \mathcal A}\frac{\tilde\delta }{20 |\mathcal A| T^6\prod_{i=1}^ {d} (x_i+h)}\notag\\ & =
	\sum_{\boldsymbol x \in \{0,1,\cdots,hT\}^d} \frac{\tilde\delta }{20 T^6\prod_{i=1}^ {d} (x_i+h)} \notag \\ 
 & \leq\frac{\tilde\delta  }{20 T^6  } \left(\log \left(h(T+1)\right)+1\right)^d\notag \\ & \leq
	\frac{\tilde\delta  }{20 k^6  } \left(\log \left(h(T+1)\right)+1\right)^d. \notag 
 \end{align*} 
In the summation in the above equation, we have simplified the expression by summing over $x_i \leq hT$ instead of considering the more detailed summation over $x_i \leq M^T_{\boldsymbol \theta^*}$. However, this simplification does not affect the final evaluation of regret, as this term is not dominant and only contributes to a logarithmic term in the regret bound.
Substituting in \eqref{eq:R_2_decomp_2_simp}, 
 \begin{align}
      \E\Big[\sum_{k=1}^{K_T}\sum_{t=t_k}^{t_{k+1}-1} 2v_M  \left[ \mathbb{I}_{\{\boldsymbol\theta^* \notin B_{k}\}}+\mathbb{I}_{\{\theta_k \notin B_{k}\}}\right]\Big] &\leq  4TQ(T)  \sum_{k=1}^{T} \Pr\{\boldsymbol\theta^* \notin B_{k}\} \notag \\ 
      & \leq \frac{\tilde\delta\left(\log \left(h(T+1)\right)+1\right)^d  TQ(T)}{5 }  \sum_{k=1}^{\infty}\frac{1}{k^6}\notag \\
       & < {\tilde\delta \left(\log \left(h(T+1)\right)+1\right)^d  TQ(T)}. \label{eq:first_term_R_2_simp}
 \end{align}
 We now upper bound the second term in \eqref{eq:R_2_decomp_2}. From \eqref{eq:max_v},  
\begin{align}\label{eq:second_term_R_2_simp}
    \E\Big[ \sum_{k=1}^{K_T}\sum_{t=t_k}^{t_{k+1}-1}2v_M\beta_k\left(\boldsymbol Z\left(t\right)\right)\Big] \leq  2c_{p_2}\E\Big[  \left( M^T_{\boldsymbol\theta^*}\right)^{r+r_*^p}\sum_{k=1}^{K_T}\sum_{t=t_k}^{t_{k+1}-1}\beta_k\left(\boldsymbol Z\left(t\right)\right)\Big].
\end{align} 
To bound the regret term resulting from the summation of $\beta_k\left(\boldsymbol Z\left(t\right)\right)$, we note that  from the second stopping criterion, $N_{t}\left(\boldsymbol Z\left(t\right)\right) \leq 2 N_{t_k}\left(\boldsymbol Z\left(t\right)\right)$ for all $t_k \leq t <t_{k+1}$ and 
\begin{align} \label{eq:regret_R2}
	&\sum_{k=1}^{K_T}\sum_{t=t_k}^{t_{k+1}-1}\beta_{k}\left(\boldsymbol Z\left(t\right)\right) \nonumber \\ 
 &=\sum_{k=1}^{K_T}\sum_{t=t_k}^{t_{k+1}-1}  \sqrt{\frac{14 \prod_{i=1}^ {d} (\boldsymbol X_i\left(t\right)+h)}{\max(1,N_{t_k}(\boldsymbol Z\left(t\right)))}\log \left(\frac{2|\mathcal A|T }{\tilde\delta}\right)} \nonumber\\
	& \leq \sqrt{ 14 \log \left(\frac{2|\mathcal A| T }{\tilde\delta}\right)} \left[\sum_{k=1}^{K_T}\sum_{t=t_k}^{\tilde t_{k+1}-1} \sqrt{\frac{2 \prod_{i=1}^ {d} (\boldsymbol X_i\left(t\right)+h)}{\max(1,N_{t}(\boldsymbol Z\left(t\right)))}} +\sum_{k=1}^{K_T}\sum_{t=\tilde t_{k+1}}^{t_{k+1}-1} \sqrt{\prod_{i=1}^ {d} (\boldsymbol X_i\left(t\right)+h)}\right].
\end{align}
The first summation can be simplified as
\begin{align*}
	\sum_{k=1}^{K_T}\sum_{t=t_k}^{\tilde t_{k+1}-1}   \sqrt{\frac{2 \prod_{i=1}^ {d} (\boldsymbol X_i\left(t\right)+h)}{\max(1,N_{t}(\boldsymbol Z\left(t\right)))}} 
	&\leq \sqrt{2(M^T_{\boldsymbol\theta^*}+h)^{d}}\sum_{k=1}^{K_T}\sum_{t=t_k}^{\tilde t_{k+1}-1}    \frac{1}{\sqrt{{\max(1,N_{t}(\boldsymbol Z\left(t\right)))}}} \notag \\ 
	& \leq3\sqrt{2(M^T_{\boldsymbol\theta^*}+h)^{d}}\sum_{\boldsymbol z \in \{0,1,\cdots,M^T_{\boldsymbol\theta^*}\}^d\times \mathcal A}  \sqrt{N_{T+1}(\boldsymbol z)} \notag \\ 
	& \leq 3\sqrt{2|\mathcal A|}{(M^T_{\boldsymbol\theta^*}+h)^{d} }\sqrt{\sum_{\boldsymbol z \in \{0,1,\cdots,M^T_{\boldsymbol\theta^*}\}^d\times \mathcal A}  N_{T+1}(\boldsymbol z)}\Big]  \notag \\ & \leq 3 \sqrt{  2|\mathcal A|T} {(M^T_{\boldsymbol\theta^*}+h)^{d} },
\end{align*} where the second inequality is due to the following arguments,
\begin{align*}
    \sum_{k=1}^{K_T}\sum_{t=t_k}^{\tilde t_{k+1}-1}    \frac{1}{\sqrt{{\max(1,N_{t}(\boldsymbol Z\left(t\right)))}}}& = \sum_{\boldsymbol z \in \{0,1,\cdots,M^T_{\boldsymbol\theta^*}\}^d\times \mathcal A} \left( \mathbb{I}_{\{ N_{T+1}(\boldsymbol z)>0  \}} +\sum_{i=1}^{N_{T+1}(\boldsymbol z)-1} \frac{1}{\sqrt{i}}
    \right) \\
    &\leq 3 \sum_{\boldsymbol z \in \{0,1,\cdots,M^T_{\boldsymbol\theta^*}\}^d\times \mathcal A}  \sqrt{N_{T+1}(\boldsymbol z)}.
\end{align*}
For the second term in \eqref{eq:regret_R2}, we get 
\begin{align*}
	\sum_{k=1}^{K_T}\sum_{t=\tilde t_{k+1}}^{t_{k+1}-1}  \sqrt{\prod_{i=1}^ {d} (\boldsymbol X_i\left(t\right)+h)}&= \sqrt{(M^T_{\boldsymbol\theta^*}+h)^{d} }\sum_{k=1}^{K_T} E_k \notag \\
	& \leq  K_T \left(\max_{1\leq i  \leq T} \tau_{0^d}^{(i)} \right)\sqrt{(M^T_{\boldsymbol\theta^*}+h)^{d}}\notag\\
	& \leq 2\sqrt{|\mathcal A| T \log_2 T  } \left(\max_{1\leq i  \leq T} \tau_{0^d}^{(i)} \right){(M^T_{\boldsymbol\theta^*}+h)^{d}} , 
\end{align*} where $E_k=T_k-\tilde{T}_k$, and $K_T$ is bounded from \Cref{lem:number_episodes}. Thus $\sum_{k=1}^{K_T}\sum_{t=t_k}^{t_{k+1}-1}\beta_{k}\left(\boldsymbol Z\left(t\right)\right)$ is bounded as
\begin{align*}
    \sum_{k=1}^{K_T}\sum_{t=t_k}^{t_{k+1}-1}\beta_{k}\left(\boldsymbol Z\left(t\right)\right) \leq 24\sqrt{|\mathcal A| T \log_2 T\log \left(\frac{2|\mathcal A| T }{\tilde\delta}\right)} \left(\max_{1\leq i  \leq T} \tau_{0^d}^{(i)} \right){(M^T_{\boldsymbol\theta^*}+h)^{d}}.
\end{align*}
Substituting the above bound in \eqref{eq:second_term_R_2_simp},
\begin{align*}
   & \E\Big[ \sum_{k=1}^{K_T}\sum_{t=t_k}^{t_{k+1}-1}2v_M\beta_k\left(\boldsymbol Z\left(t\right)\right)\Big] \notag \\ 
    &\leq 48c_{p_2}\sqrt{|\mathcal A| T \log_2 T\log \left(\frac{2|\mathcal A| T }{\tilde\delta}\right)}\E\Big[  \left( M^T_{\boldsymbol\theta^*}\right)^{r+r_*^p} {(M^T_{\boldsymbol\theta^*}+h)^{d}}\left(\max_{1\leq i  \leq T} \tau_{0^d}^{(i)} \right)\Big]\notag \\ 
    &\leq c_{p_3}\sqrt{|\mathcal A| T \log_2 T\log \left(\frac{2|\mathcal A| T }{\tilde\delta}\right)}\E\Big[   {(M^T_{\boldsymbol\theta^*}+h)^{d+r+r_*^p}}\left(\max_{1\leq i  \leq T} \tau_{0^d}^{(i)} \right)\Big],
\end{align*} where $c_{p_3}:=48c_{p_2}$. Finally, from the above equation, \eqref{eq:first_term_R_2_simp}, and \eqref{eq:R_2_decomp_2},
\begin{align*}
    R_2 &\leq  {\tilde\delta  \left(\log \left(h(T+1)\right)+1\right)^d  TQ(T)} \\&+ c_{p_3}\sqrt{|\mathcal A| T \log_2 T\log \left(\frac{2|\mathcal A| T }{\tilde\delta}\right)}\E\Big[   {(M^T_{\boldsymbol\theta^*}+h)^{d+r+r_*^p}}\Big(\max_{1\leq i  \leq T} \tau_{0^d}^{(i)} \Big)\Big].
\end{align*} 
By choosing $\tilde\delta=\frac{1 }{TQ(T)}$, we get
\begin{align*}
    &R_2\\ &\leq (\log (h(T+1))+1)^d+ c_{p_3}\sqrt{|\mathcal A| T \log_2 T\log ({ 2|\mathcal A|T^2Q(T) })}\E\Big[   {(M^T_{\boldsymbol\theta^*}+h)^{d+r+r_*^p}}\Big(\max_{1\leq i  \leq T} \tau_{0^d}^{(i)} \Big)\Big], \\
    &\leq   (\log (h(T+1))+1)^d+ c_{p_3}\sqrt{|\mathcal A| T }\log_2 \left({ 2|\mathcal A|T^2Q(T) }\right)\E\Big[   {(M^T_{\boldsymbol\theta^*}+h)^{d+r+r_*^p}}\Big(\max_{1\leq i  \leq T} \tau_{0^d}^{(i)} \Big)\Big],
\end{align*} where $Q(T)= c_{p_2} (hT)^{r+r_*^p}$.
\end{proof}

\subsection{Proof of \Cref{thm:main}}
\begin{proof}
	Lemmas \ref{lem:R_0_bound}, \ref{lem:R_1_bound}, and \ref{lem:R_2_bound} along with Cauchy-Schwarz inequality showed that the regret terms $R_0$ and $R_2$ are of the order $\tilde O(Kr d J^* h^{d+2r+r_*^p}\sqrt{|\mathcal A| T })$ and the term $R_1$ is $\tilde O(J^*(h)^{r^p_*})$. Therefore, from 
	$R(T,\pi_{TSDE})= R_0+R_1+R_2$,
	the regret of \Cref{alg:TSDE}, $R(T,\pi_{TSDE})$, is $\tilde O(Kr d J^* h^{d+2r+r_*^p}\sqrt{|\mathcal A| T })$. 
\end{proof}

\subsection{Proof of \Cref{thm:Q2_eps_optimal}} \label{subsec:proof_Q2_eps_optimal}
To implement our algorithm, 
we need to find the optimal policy for each model sampled by the algorithm---optimal policy for \Cref{thm:main} and optimal policy within policy class $\Pi$ for \Cref{cor:main}; this has also been used in past work \cite{gopalan2015thompson,graves1997asymptotically,lai1995machine}. In the finite state-space setting, \cite{ouyang2017learning} provides a schedule of $\epsilon$ values and selects $\epsilon$-optimal policies to obtain $\tilde{O}(\sqrt{T})$ regret guarantees. The issue with extending the analysis of \cite{ouyang2017learning} to the countable state-space setting is that we need to ensure (uniform) ergodicity for the chosen $\epsilon$-optimal policies; the $\limsup$ or $\liminf$ of the time-average expected reward (used to define the average cost problem) being finite doesn't imply ergodicity. In other words, we must formulate (and verify) ergodicity assumptions for a potentially large set of close-to-optimal algorithms whose structure is undetermined. Another issue is that, to the best of our knowledge, there isn't a general structural characterization of all $\epsilon$-optimal stationary policies for countable state-space MDPs or even a characterization of the policy within this set that is selected by any computational procedure in the literature; current results only discuss existence and characterization of the stationary optimal policy. In the absence of such results, stability assumptions with the same uniformity across models as in our submission will be needed, which are likely too strong to be useful. 

If we could verify the stability requirements of Assumptions \ref{assump:geom_erg} and \ref{assump:poly_erg} for a subset of policies, the optimal oracle is not needed, and instead, by choosing approximately optimal policies within this subset, we can follow the same proof steps as \cite{ouyang2017learning} to guarantee regret performance similar to \Cref{cor:main} (without knowledge of model parameters). 
To theoretically analyze the performance of the algorithm that follows an approximately optimal policy rather than the optimal one, we assume that for a specific sequence of $\{\epsilon_k\}_{k=1}^{\infty}$, an $\epsilon_k$-optimal policy is given, which is defined below. 
\begin{Definition}
	Policy $\pi \in \Pi$ is called an $\epsilon$-optimal policy if for every $\theta \in \Theta$, 
	$$ c(\boldsymbol x,\pi(\boldsymbol x))+  \sum_{\boldsymbol y \in \mathcal{X}} P_{\theta}(\boldsymbol y | \boldsymbol x,\pi(\boldsymbol x)) 	v(\boldsymbol y;\theta) \leq  c(\boldsymbol x,\pi^*_{\theta}(\boldsymbol x))+  \sum_{\boldsymbol y \in \mathcal{X}} P_{\theta}(\boldsymbol y | \boldsymbol x,\pi^*_{\theta}(\boldsymbol x)) 	v(\boldsymbol y;\theta)+\epsilon  ,$$ 
	where $\pi^*_{\theta}$ is the optimal policy in the policy class $\Pi$ corresponding to parameter $\theta$ and $v(.;\theta)$ is the solution to Poisson equation \eqref{eq:ACOE_case2}.
\end{Definition}
Given $\epsilon$-optimal policies that satisfy Assumptions \ref{assump:geom_erg} and \ref{assump:poly_erg}, in \Cref{thm:Q2_eps_optimal} we extend the regret guarantees of \Cref{cor:main} to the algorithm employing  $\epsilon$-optimal policy,  instead of the best-in-class policy, and show that the same regret upper bounds continue to apply. 

\begin{Theorem}
	Consider a non-negative sequence $\{\epsilon_k\}_{k=1}^{\infty}$ such that for every $k \in \mathbb{N}$, $\epsilon_k$ is bounded above by  $\frac{1}{k+1}$ and an $\epsilon_k$-optimal policy satisfying Assumptions \ref{assump:geom_erg} and \ref{assump:poly_erg} is given.
	The regret incurred by \Cref{alg:TSDE} while using the $\epsilon_k$-optimal policy during any episode $k$ is $\tilde O(dh^d\sqrt{|\mathcal A|T})$. 
\end{Theorem}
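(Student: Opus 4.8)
The plan is to re-run the regret decomposition of \eqref{eq:regret_decomposition} with the optimal in-class policy $\pi^*_{\theta_k}$ replaced by the given $\epsilon_k$-optimal policy, which I denote $\tilde\pi_{\theta_k}$, in each episode $k$, and then to isolate the single extra term that the approximation contributes. Keeping $v(\cdot;\theta_k)$ as the Poisson solution of the \emph{optimal} in-class policy from \eqref{eq:ACOE_case2}, the defining inequality of an $\epsilon_k$-optimal policy evaluated at $\theta=\theta_k$ and $\boldsymbol x=\boldsymbol X(t)$, combined with the Poisson equation for $\pi^*_{\theta_k}$, gives
\[
c(\boldsymbol X(t),\tilde\pi_{\theta_k}(\boldsymbol X(t)))\leq J(\theta_k)+v(\boldsymbol X(t);\theta_k)+\epsilon_k-\sum_{\boldsymbol y\in\mathcal X}P_{\theta_k}(\boldsymbol y\,|\,\boldsymbol X(t),\tilde\pi_{\theta_k}(\boldsymbol X(t)))v(\boldsymbol y;\theta_k).
\]
Adding and subtracting $v(\boldsymbol X(t+1);\theta_k)$ and summing over $t$ and $k$ yields $R(T,\pi_{TSDE})\leq R_0+R_1+R_2+R_\epsilon$, where $R_0,R_1,R_2$ are the three terms of \eqref{eq:R_0}--\eqref{eq:R_2} (with $\tilde\pi_{\theta_k}$ in place of $\pi^*_{\theta_k}$ inside $R_2$), and the new term is $R_\epsilon=\E\big[\sum_{k=1}^{K_T}\sum_{t=t_k}^{t_{k+1}-1}\epsilon_k\big]=\E\big[\sum_{k=1}^{K_T}T_k\epsilon_k\big]$, where this last identity uses that $\tilde\pi_{\theta_k}$ (and hence $\epsilon_k$-optimality) is in force for the entire episode, including the settling period.

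Next I would argue that the bounds on $R_0$, $R_1$, and $R_2$ from Lemmas \ref{lem:R_0_bound}, \ref{lem:R_1_bound}, and \ref{lem:R_2_bound} carry over essentially verbatim. The decisive observation is that every ingredient of those proofs depends on the policy in use only through the ergodicity Assumptions \ref{assump:geom_erg} and \ref{assump:poly_erg}, which by hypothesis hold for the $\epsilon_k$-optimal policies with the \emph{same} uniform constants. Concretely: the algorithm still starts and returns to $0^d$ in each episode, so the telescoping argument giving $R_1=-\E[v(\boldsymbol X(T+1);\theta_{K_T})]$ is unchanged; \Cref{lem:max_geom_rv}, \Cref{lem:number_macro_episodes}, and \Cref{lem:number_episodes} still apply because the recurrence times $\tau_{0^d}^{(i)}$, now generated under $\tilde\pi_{\theta_k}$, retain geometric tails at the uniform rate; the relative-value bounds \eqref{eq:bound_rel_val} and \eqref{eq:bound_rel_val_low} are statements about $v(\cdot;\theta)$ of the optimal policy and are untouched; and the Thompson-sampling identity \eqref{eq:TS_property} holds for any $\mathcal H_{t_k}$-measurable policy, hence for $\tilde\pi_{\theta_k}$. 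Consequently $R_0+R_1+R_2=\tilde O(dh^d\sqrt{|\mathcal A|T})$ exactly as before.

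The core of the argument is bounding $R_\epsilon$. Splitting each episode length as $T_k=\tilde T_k+E_k$, with $E_k$ the settling time, and using $\epsilon_k\leq 1/(k+1)$, I would write
\[
R_\epsilon\leq \E\Big[\sum_{k=1}^{K_T}\frac{\tilde T_k}{k+1}\Big]+\E\Big[\sum_{k=1}^{K_T}\frac{E_k}{k+1}\Big].
\]
The first stopping criterion enforces $\tilde T_k\leq \tilde T_{k-1}+1$, and with $\tilde T_0=1$ this gives $\tilde T_k\leq k+1$, so every summand of the first sum is at most $1$ and the first sum is bounded by $K_T$. For the second sum, $1/(k+1)\leq 1$ and $\sum_{k=1}^{K_T}E_k\leq K_T\max_{1\leq i\leq T}\tau_{0^d}^{(i)}$. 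Hence $R_\epsilon\leq \E[K_T]+\E\big[K_T\max_{1\leq i\leq T}\tau_{0^d}^{(i)}\big]$, and by \Cref{lem:number_episodes}, \Cref{lem:max_geom_rv}, and Cauchy--Schwarz this is $\tilde O(h^d\sqrt{|\mathcal A|T})$, i.e.\ no larger than the existing terms. Combining the four bounds gives $R(T,\pi_{TSDE})=\tilde O(dh^d\sqrt{|\mathcal A|T})$.

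The main obstacle I anticipate is the bookkeeping required to certify that the $R_0$, $R_1$, $R_2$ derivations genuinely depend on the policy only through the ergodicity assumptions and not through exact optimality; in particular, one must check that the concentration step for $R_2$ (the confidence sets $B_k$ and the Weissman-type $\ell_1$ bound) is insensitive to which admissible policy generated the state-action counts, and that the uniform ergodicity constants are assumed to hold simultaneously across all the $\epsilon_k$-optimal policies actually used. The genuinely new phenomenon is the cancellation in $R_\epsilon$: the linear-in-$k$ growth of $\tilde T_k$ is exactly offset by the $1/(k+1)$ decay of $\epsilon_k$, which is precisely why the schedule $\epsilon_k\leq 1/(k+1)$ is imposed in the hypothesis.
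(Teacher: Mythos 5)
Your proposal is correct and follows essentially the same route as the paper: the same decomposition $R_0+R_1+R_2+\E[\sum_k T_k\epsilon_k]$ via the $\epsilon_k$-optimality inequality and the Poisson equation for $\pi^*_{\theta_k}$, the same observation that the bounds on $R_0,R_1,R_2$ depend on the policy only through Assumptions \ref{assump:geom_erg} and \ref{assump:poly_erg}, and the same split $T_k=\tilde T_k+E_k$ with $\tilde T_k\le k+1$ cancelling against $\epsilon_k\le 1/(k+1)$. The only (immaterial) difference is in the settling-time term, where the paper keeps the $1/(k+1)$ factor to obtain $\E[\max_i\tau_{0^d}^{(i)}\log(K_T+1)]$ while you drop it and bound by $\E[K_T\max_i\tau_{0^d}^{(i)}]$; both are $\tilde O(h^d\sqrt{|\mathcal A|T})$.
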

\begin{proof}
	For the $\epsilon_k$-optimal policy used in episode $k$, shown by $\pi^{\epsilon_k}$,  we have
	\begin{align*}
	&	c(\boldsymbol x,\pi^{\epsilon_k}(\boldsymbol x))+  \sum_{\boldsymbol y \in \mathcal{X}} P_{\theta_k}(\boldsymbol y | \boldsymbol x,\pi^{\epsilon_k}(\boldsymbol x)) 	v(\boldsymbol y;\theta_k)  \\&\leq  c(\boldsymbol x,\pi^*_{\theta_k}(\boldsymbol x))+  \sum_{\boldsymbol y \in \mathcal{X}} P_{\theta_k}(\boldsymbol y | \boldsymbol x,\pi^*_{\theta_k}(\boldsymbol x)) 	v(\boldsymbol y;\theta_k)+\epsilon_k \\
		& = J(\theta_k)+v(\boldsymbol x;\theta_k)+\epsilon_k.
	\end{align*}
	Thus, 
	\begin{align*}
		R(T,\pi_{TSDE})&= \E \Big[\sum_{k=1}^{K_T} \sum_{t=t_k}^{t_{k+1}-1} c(\boldsymbol X(t),\pi^{\epsilon_k}( \boldsymbol X(t)))\Big]- T\E \left[J\left(\boldsymbol \theta^*\right)\right]  \\ &= R_0+R_1+R_2+\E\Big[ \sum_{k=1}^{K_T}T_k \epsilon_k \Big]\\
		\text{with } R_0  = & \E\Big[ \sum_{k=1}^{K_T}T_k J(\theta_k) \Big]- T \E\Big[J(\boldsymbol\theta^*) \Big],
		\\
		R_1  = &\E\Big[\sum_{k=1}^{K_T}\sum_{t=t_k}^{t_{k+1}-1}  \Big[v(\boldsymbol X(t);\theta_k) -v(\boldsymbol X(t+1);\theta_k)\Big]\Big],\\
		R_2  = & \E\Big[\sum_{k=1}^{K_T}\sum_{t=t_k}^{t_{k+1}-1}  \Big[ 
		v(\boldsymbol X(t+1);\theta_k)- \sum_{\boldsymbol y \in \mathcal{X}} P_{\theta_k}(\boldsymbol y | \boldsymbol X(t),\pi^{\epsilon_k}( \boldsymbol X(t))) 	v(\boldsymbol y;\theta_k)\Big] \Big].
	\end{align*}
	We assumed that given $\epsilon$-optimal policies satisfy Assumptions \ref{assump:geom_erg} and \ref{assump:poly_erg}.
	As a result, we can utilize the proof of \Cref{thm:main} to deduce that the term $ R_0+R_1+R_2$ is of the order $\tilde{O}(dh^d\sqrt{|\mathcal A|T})$. Moreover, we can simplify the term $\E\Big[ \sum_{k=1}^{K_T}T_k \epsilon_k \Big]$ as below:
	\begin{align}\label{eq:T_k_decomp}
		\E\Big[ \sum_{k=1}^{K_T}T_k \epsilon_k \Big]&= \E\Big[ \sum_{k=1}^{K_T}\tilde T_k \epsilon_k  \Big] + \E\Big[ \sum_{k=1}^{K_T}E_k \epsilon_k  \Big].
	\end{align}
	From the second stopping condition of \Cref{alg:TSDE}, we have $\tilde T_k \leq \tilde T_{k-1}+1 \leq \ldots  \leq k+1 $ and
	\begin{align*}
		\E\Big[ \sum_{k=1}^{K_T}T_k \epsilon_k \Big] \leq \E[K_T],
	\end{align*}
	where we have used the assumption  that $\epsilon_k \leq \frac{1}{k+1}$. For the second term of \eqref{eq:T_k_decomp}, from \eqref{eq:R_0_second_term}
	\begin{align}
		\E\Big[ \sum_{k=1}^{K_T}E_k \epsilon_k \Big] &\leq  \E\Big[ \sum_{k=1}^{K_T}\frac{E_k}{k+1}\Big]\nonumber\\ &\leq  \E\Big[ \max_{1\leq i  \leq T} \tau_{0^d}^{(i)} \sum_{k=1}^{K_T}\frac{1}{k+1}\Big]\nonumber \\ &\leq \E\Big[ \max_{1\leq i  \leq T} \tau_{0^d}^{(i)} \log(K_T+1)\Big],
	\end{align} where in the last inequality we have used $\sum_{i=1}^n \frac{1}{n}\leq 1+\log(n)$. Finally, as a result of \Cref{lem:max_geom_rv} and \Cref{lem:number_episodes},  the result follows. 
\end{proof}

\section{Bounds on hitting times under polynomial and geometric ergodicity}
\label{sec:max_hit_numbe_eps}

\subsection[Polynomial upper bounds for the moments of hitting time of state 0]{Polynomial upper bounds for the moments of hitting time of state $0^d$}\label{subsec:hit_poly}
For any $\theta_1,\theta_2 \in \Theta$, consider the Markov process with transition kernel $P_{\theta_1}^{\pi^*_{\theta_2}}$ obtained from the MDP $\left(\mathcal{X},\mathcal{A},c,P_{\theta_1}\right)$ by following policy $\pi^*_{\theta_2}$. \cite[Lemma 3.5]{jarner2002polynomial} establishes that if the process is polynomially ergodic, equivalently satisfies \eqref{eq:poly_drift_eq}, then for every $0<\eta\leq 1$, there exists constants $\beta^\eta_{\theta_1,\theta_2}$, $b^\eta_{\theta_1,\theta_2}>0$ such that the following holds:
\begin{equation}\label{eq:poly_erg_result}
		\Delta \left(V^p_{\theta_1,\theta_2}\right)^{\eta}(\boldsymbol x) \leq -\beta^\eta_{\theta_1,\theta_2} \left(V^p_{\theta_1,\theta_2} 
(\boldsymbol x)\right)^{\alpha^p_{\theta_1,\theta_2}+\eta-1}+b^\eta_{\theta_1,\theta_2}\mathbb{I}_{C^p_{\theta_1,\theta_2}}(\boldsymbol x), \quad \boldsymbol x\in \mathcal{X},
\end{equation}
where for $\eta \in (0,1)$,  $ \tilde\beta^p_{\theta_1,\theta_2}:= \min(\beta^p_{\theta_1,\theta_2},1)$ and
\begin{align}\label{eq:poly_erg_param}
\beta^\eta_{\theta_1,\theta_2}=\eta \tilde\beta^p_{\theta_1,\theta_2},\  b^\eta_{\theta_1,\theta_2} =\left( b^p_{\theta_1,\theta_2}\right)^{\eta}+\eta \tilde \beta^p_{\theta_1,\theta_2} \max\left(1, \left( \tilde\beta^p_{\theta_1,\theta_2}\right)^{(\alpha^p_{\theta_1,\theta_2}+\eta-1)/(1-\alpha^p_{\theta_1,\theta_2})}\right),
\end{align}
and for $\eta=1$, $\beta^\eta_{\theta_1,\theta_2}=\beta^p_{\theta_1,\theta_2}$ and $b^\eta_{\theta_1,\theta_2}=b^p_{\theta_1,\theta_2}$.
Consequently, the following result is immediate from the proof of \cite[Theorem 3.6]{jarner2002polynomial}; for completeness, we provide the proof in \Cref{subsec:proof_jrpoly}.
\begin{Lemma}\label{lem:jrpoly}
Suppose a finite set $C^p_{\theta_1,\theta_2}$, constants $\beta^p_{\theta_1,\theta_2},b^p_{\theta_1,\theta_2}>0$, $r/(r+1) \leq \alpha^p_{\theta_1,\theta_2} <1$, and a function $V^p_{\theta_1,\theta_2} :\X \rightarrow [1,+\infty)$ exist such that \eqref{eq:poly_drift_eq} holds.
  Then, there exist a sequence of non-negative functions 
  $V^i_{{\theta_1,\theta_2}}: \mathcal{X} \rightarrow [1,+\infty)$ for $i=0, \dotsc, r+1$ that satisfy the following system of drift equations for finite sets $C^i_{{\theta_1,\theta_2}}$, constants $b^i_{{\theta_1,\theta_2}}\geq 0$ and $\beta^i_{{\theta_1,\theta_2}}>0$: 
		\begin{equation}\label{eq:system_drift_cond_2}
		\Delta V^{i-1}_{{\theta_1,\theta_2}}(\boldsymbol x) \leq -\beta^i_{{\theta_1,\theta_2}}V^i_{{\theta_1,\theta_2}}(\boldsymbol x)+b^i_{{\theta_1,\theta_2}} \mathbb{I}_{C^i_{{\theta_1,\theta_2}}}(\boldsymbol x),  \qquad \boldsymbol x \in \mathcal{X},\  i=1,\ldots,r+1.
	\end{equation} 
\end{Lemma}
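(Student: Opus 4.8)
The plan is to build the entire tower of Lyapunov functions by taking decreasing powers of the single polynomial Lyapunov function $V^p_{\theta_1,\theta_2}$ supplied by \eqref{eq:poly_drift_eq}, reading off each drift inequality directly from \eqref{eq:poly_erg_result} (i.e. \cite[Lemma 3.5]{jarner2002polynomial}). Fix the pair $\theta_1,\theta_2$, which I suppress below, and set
$$\eta_i := 1 - i\,(1-\alpha^p), \qquad V^i := (V^p)^{\eta_i}, \qquad i = 0,1,\dots,r+1,$$
so that $V^0 = V^p$ and the exponents step down by the fixed amount $1-\alpha^p$ at each level.

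First I would record the arithmetic identity that drives the telescoping: for each $i\in\{1,\dots,r+1\}$ one has $\alpha^p + \eta_{i-1} - 1 = \eta_i$, which is immediate from the definition of $\eta_i$. Applying \eqref{eq:poly_erg_result} with exponent $\eta = \eta_{i-1}$ then yields
$$\Delta V^{i-1} = \Delta (V^p)^{\eta_{i-1}} \le -\beta^{\eta_{i-1}}\,(V^p)^{\alpha^p+\eta_{i-1}-1} + b^{\eta_{i-1}}\,\mathbb{I}_{C^p} = -\beta^{\eta_{i-1}}\,V^i + b^{\eta_{i-1}}\,\mathbb{I}_{C^p},$$
so that \eqref{eq:system_drift_cond_2} holds verbatim with $\beta^i := \beta^{\eta_{i-1}}$, $b^i := b^{\eta_{i-1}}$ and $C^i := C^p$ (using the $\eta=1$ convention of \eqref{eq:poly_erg_param} for the first step $i=1$). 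Strict positivity of $\beta^i$ and non-negativity of $b^i$ are read off from \eqref{eq:poly_erg_param}, since $\beta^\eta = \eta\,\tilde\beta^p > 0$ whenever $\eta > 0$.

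The only place the hypothesis is actually used is in checking admissibility: every exponent fed into \eqref{eq:poly_erg_result} must satisfy $\eta_{i-1}\in(0,1]$ for $i=1,\dots,r+1$, and every constructed $V^i$ must map into $[1,+\infty)$, i.e. $\eta_i\ge 0$. Since $i\mapsto\eta_i$ is decreasing with $\eta_0=1$, it suffices to control the endpoints: $\eta_r = 1 - r(1-\alpha^p)\ge 1 - \tfrac{r}{r+1} = \tfrac{1}{r+1}>0$ and $\eta_{r+1} = 1 - (r+1)(1-\alpha^p)\ge 0$, both of which follow exactly from the standing bound $\alpha^p\ge r/(r+1)$ (equivalently $1-\alpha^p\le 1/(r+1)$). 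Hence $\eta_0,\dots,\eta_r\in(0,1]$ are all legitimate exponents for \eqref{eq:poly_erg_result}, while $\eta_0,\dots,\eta_{r+1}\in[0,1]$ guarantee $V^i\ge 1$; the boundary case $\alpha^p = r/(r+1)$ forces $V^{r+1}\equiv 1$, collapsing the top inequality to the classical Foster criterion for a finite-mean return time, which is precisely what anchors the moment estimates of \Cref{lem:hitting_time}. I anticipate no genuine obstacle here—the argument is pure exponent bookkeeping resting on the single inequality $\alpha^p\ge r/(r+1)$—so the only remaining care is to express all constants uniformly in $\theta_1,\theta_2$ so that the supremum/infimum reductions of \Cref{assump:poly_erg} can subsequently be invoked.
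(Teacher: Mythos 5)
Your proof is correct and is essentially the paper's own argument: both define the tower $V^i=(V^p)^{1-i(1-\alpha^p)}$, invoke \cite[Lemma 3.5]{jarner2002polynomial} (i.e.\ \eqref{eq:poly_erg_result}) at each admissible exponent, take $C^i=C^p$, and read the constants off \eqref{eq:poly_erg_param}; your $\eta_{i-1}$ is just the paper's $\eta_i$ under a shifted index. The exponent bookkeeping via $\alpha^p\ge r/(r+1)$ matches the paper's verification that all exponents lie in $(0,1]$.
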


Notice that $r$ is the maximum degree of the cost function $c$ defined in \Cref{assump:costfn}.
 Following the proof and approach of \cite{jarner2002polynomial} and using the set of equations \eqref{eq:system_drift_cond_2}, we can find an upper-bound for ${\E}_{\boldsymbol x}[(\tau_{0^d})^i]$ for $i=1,\ldots,r+1$ in \Cref{lem:hitting_time}. In order to establish upper bounds for the first $r+1$ moments of $\tau_{0^d}$, it is crucial to choose the value of $\alpha^p_{\theta_1,\theta_2}$  greater than or equal to $\frac{r}{r+1}$, as demonstrated in the proof of \Cref{lem:hitting_time} in \Cref{subsec:proof_htpoly}

\begin{Lemma}\label{lem:hitting_time}
For $i=1,\ldots,r+1$, and for all $\boldsymbol x \in\mathcal{X}$
\begin{equation*}
    {\E}_{\boldsymbol x}^{\pi_{\theta_2}^*}[(\tau_{0^d})^i] \leq i \phi^p_{\theta_1,\theta_2}(i)\left( V^p_{\theta_1,\theta_2}(\boldsymbol x) + b^p_{\theta_1,\theta_2}\alpha_{C_{\theta_1,\theta_2}^p}\right),
\end{equation*}
where $\phi^p_{\theta_1,\theta_2}(i)= \prod_{j=1}^i \frac{1}{\beta^{\eta_j}_{\theta_1,\theta_2}} \left( 2^{j-1}+\left(j-1\right) \alpha_{C^p_{\theta_1,\theta_2}}b_{\theta_1,\theta_2}^{\eta_j} \right)$, $\eta_i=1-(i-1)(1-\alpha^p_{\theta_1,\theta_2})$ , $b_{\theta_1,\theta_2}^{\eta_i}$ and $\beta^{\eta_i}_{\theta_1,\theta_2}$ defined in  \eqref{eq:poly_erg_param}, and $\alpha_{C_{\theta_1,\theta_2}^p}=\left(\min_{\boldsymbol y\in C_{\theta_1,\theta_2}^p} K_{\theta_1,\theta_2}(\boldsymbol y)\right)^{-1}$.
\end{Lemma}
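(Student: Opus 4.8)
The plan is to combine the telescoping system of drift inequalities from \Cref{lem:jrpoly} with a comparison (Dynkin-type) argument for the hitting time $\tau_{0^d}$, inducting on the moment order $i$, following the approach of \cite{jarner2002polynomial}. Throughout I suppress the subscripts $\theta_1,\theta_2$. \Cref{lem:jrpoly} furnishes functions $V^0,\dots,V^{r+1}$ with $\Delta V^{i-1}\leq -\beta^{\eta_i}V^i+b^{\eta_i}\mathbb{I}_{C^p}$ for $i=1,\dots,r+1$; rewriting \eqref{eq:poly_erg_result} shows these are exactly $V^i=(V^p)^{\eta_{i+1}}$ with $\eta_{i+1}=1-i(1-\alpha^p)$, since $\alpha^p+\eta_i-1=\eta_{i+1}$. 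The assumption $\alpha^p\geq r/(r+1)$ is used precisely here: it guarantees $\eta_{r+2}=1-(r+1)(1-\alpha^p)\geq 0$, so every exponent stays non-negative and hence each $V^i\geq 1$; this is what lets the last ($i=r+1$) drift inequality be used to control the top moment.

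Two ingredients drive the argument. First, the Foster--Lyapunov moment bound (\Cref{lem:FL_bounds}): for a drift $\Delta W\leq -U+b\mathbb{I}_{C^p}$ with $W,U\geq 0$, an optional-stopping computation applied to $\tau_{0^d}$ yields $\mathbb{E}_{\boldsymbol x}[\sum_{k=0}^{\tau_{0^d}-1}U(\boldsymbol X_k)]\leq W(\boldsymbol x)+b\,\mathbb{E}_{\boldsymbol x}[\sum_{k=0}^{\tau_{0^d}-1}\mathbb{I}_{C^p}(\boldsymbol X_k)]$. Second, the set-to-point conversion: the expected number of visits to $C^p$ before hitting $0^d$ is bounded uniformly by $\alpha_{C^p}$, i.e. $\mathbb{E}_{\boldsymbol x}[\sum_{k=0}^{\tau_{0^d}-1}\mathbb{I}_{C^p}(\boldsymbol X_k)]\leq \alpha_{C^p}$, which follows because $K(\boldsymbol y)=\sum_n 2^{-n-2}P^n(\boldsymbol y,0^d)$ lower-bounds the chance of reaching $0^d$ from each $\boldsymbol y\in C^p$, so the number of returns to $C^p$ before absorption at $0^d$ is stochastically dominated by a geometric variable with mean $\leq(\min_{\boldsymbol y\in C^p}K(\boldsymbol y))^{-1}=\alpha_{C^p}$. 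Applying these with $W=V^0=V^p$ and $U=\beta^{\eta_1}$ (using $V^1\geq 1$) gives the base case $\mathbb{E}_{\boldsymbol x}[\tau_{0^d}]\leq (\beta^{\eta_1})^{-1}(V^p(\boldsymbol x)+b^p\alpha_{C^p})=\phi^p(1)(V^p(\boldsymbol x)+b^p\alpha_{C^p})$, matching the claim for $i=1$.

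For the inductive step I would pass to time-weighted excursion sums. The telescoping bound $(k+1)^i-k^i\leq i(k+1)^{i-1}$ gives $\tau_{0^d}^i\leq i\sum_{k=0}^{\tau_{0^d}-1}(k+1)^{i-1}$, reducing the $i$-th moment to controlling $\mathbb{E}_{\boldsymbol x}[\sum_{k=0}^{\tau_{0^d}-1}(k+1)^{i-1}]$. I would bound this by a space--time Dynkin argument: treating $(k+1)^{i-1}V^{i-1}(\boldsymbol X_k)$ as the relevant increment and feeding back the drift $\Delta V^{i-1}\leq -\beta^{\eta_i}V^i+b^{\eta_i}\mathbb{I}_{C^p}$, one peels off one power of the time weight and one level of the Lyapunov hierarchy at a time. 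Each peeling contributes the factor $(\beta^{\eta_j})^{-1}\big(2^{j-1}+(j-1)\alpha_{C^p}b^{\eta_j}\big)$: the $2^{j-1}$ arises from the binomial cross terms produced when the shifted time weight $(k+1)^{j-1}$ interacts with the one-step drift, while the $(j-1)\alpha_{C^p}b^{\eta_j}$ collects the indicator corrections converted through the visit-count bound $\alpha_{C^p}$. Multiplying over $j=1,\dots,i$ produces exactly $\phi^p(i)$, with residual term $V^p(\boldsymbol x)+b^p\alpha_{C^p}$, giving $\mathbb{E}_{\boldsymbol x}[\tau_{0^d}^i]\leq i\,\phi^p(i)\,(V^p(\boldsymbol x)+b^p\alpha_{C^p})$.

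The main obstacle is the simultaneous bookkeeping in the induction: one must track, level by level, both the descending Lyapunov functions $V^0\succ V^1\succ\cdots$ and the accumulating time weights, while converting every $\mathbb{I}_{C^p}$ correction into the point-hitting quantity through $\alpha_{C^p}$, and verify that the constants assemble into exactly $2^{j-1}+(j-1)\alpha_{C^p}b^{\eta_j}$. A secondary but essential point is justifying the Dynkin manipulations: all expectations must be known to be finite \emph{a priori} before they are bounded, which is supplied by the geometric ergodicity of \Cref{assump:geom_erg} (via \Cref{lem:hitting_time_geom_lyap_2}), guaranteeing every moment of $\tau_{0^d}$ is finite so that the optional-stopping and interchange-of-summation steps are legitimate.
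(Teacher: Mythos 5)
Your proposal follows the paper's proof essentially step for step: the telescoping drift hierarchy of \Cref{lem:jrpoly}, the comparison bound of \Cref{lem:FL_bounds}, the conversion of the indicator $\mathbb{I}_{C^p}$ into the constant $\alpha_{C^p}$, and an induction on the moment order via time-weighted sums $\sum_{k<\tau_{0^d}}(k+1)^{i-1}V_i(\boldsymbol X_k)$, with the same bookkeeping producing the factors $2^{j-1}+(j-1)\alpha_{C^p}b^{\eta_j}$ and the final $\tfrac{1}{i}\mathbb{E}_{\boldsymbol x}[(\tau_{0^d})^i]\leq \mathbb{E}_{\boldsymbol x}[\sum_{k<\tau_{0^d}}(k+1)^{i-1}V_i(\boldsymbol X_k)]$ conversion.

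The one place where your reasoning does not hold up is the justification of the visit-count bound $\mathbb{E}_{\boldsymbol x}\big[\sum_{k=0}^{\tau_{0^d}-1}\mathbb{I}_{C^p}(\boldsymbol X_k)\big]\leq\alpha_{C^p}$. You argue that the number of visits to $C^p$ before absorption at $0^d$ is stochastically dominated by a geometric variable because $K(\boldsymbol y)$ ``lower-bounds the chance of reaching $0^d$.'' But $K(\boldsymbol y)=\sum_{n}2^{-n-2}P^n(\boldsymbol y,0^d)$ is a resolvent, not the probability of hitting $0^d$ before returning to $C^p$: it accumulates mass from visits to $0^d$ that may occur only after many intermediate returns to $C^p$, so $K(\boldsymbol y)\geq K_*$ does not supply the per-excursion success probability that geometric domination requires. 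The paper instead proves the bound directly: it uses the pointwise inequality $\mathbb{I}_{C^p}(\boldsymbol x)\leq\alpha_{C^p}K(\boldsymbol x)$ from \eqref{eq:C_i_up}, rewrites $\mathbb{E}_{\boldsymbol x}\big[\sum_{k<\tau_{0^d}}K(\boldsymbol X_k)\big]=\sum_{n}2^{-n-2}\,\mathbb{E}_{\boldsymbol x}\big[\sum_{k=n}^{\tau_{0^d}+n-1}\mathbb{I}_{0^d}(\boldsymbol X_k)\big]$, and observes that since $0^d$ is not visited before $\tau_{0^d}$ the inner sum is at most $n+1$, whence the whole expression is at most $\sum_{n}2^{-n-2}(n+1)=1$. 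The same device is needed again in the inductive step, where $\mathbb{E}_{\boldsymbol x}\big[\sum_{k<\tau_{0^d}}(k+1)^{i}\mathbb{I}_{C^p}(\boldsymbol X_k)\big]$ is bounded by $\alpha_{C^p}\mathbb{E}_{\boldsymbol x}[(\tau_{0^d})^{i}]$ and then fed back through the already-established $i$-th moment bound to produce the $(j-1)\alpha_{C^p}b^{\eta_j}$ term. With that substitution your argument coincides with the paper's.
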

Based on \Cref{lem:hitting_time}, we impose the conditions of Assumption \ref{assump:poly_erg} to obtain uniform (over model class) and polynomial (in norm of the state) upper-bounds on the moments of hitting times to $0^d$. Moreover, these conditions lead to a uniform characterization of parameters of \Cref{lem:hitting_time} over all models in our class.

\subsection[Distribution of return times to state 0]{Distribution of return times to state $0^d$} \label{subsec:hit_geom}
For any $\theta_1,\theta_2 \in \Theta$, consider the Markov process with transition kernel $P_{\theta_1}^{\pi^*_{\theta_2}}$ obtained from the MDP $\left(\mathcal{X},\mathcal{A},c,P_{\theta_1}\right)$ by following policy $\pi^*_{\theta_2}$.
In the following lemma, we show that the tail probabilities of the return times to the common state $0^d$, again $\tau_{0^d}$, converge geometrically fast to $0$, and characterize the convergence parameters in terms of the constants given in \Cref{assump:geom_erg}. Explicitly, we show
	\begin{equation*} 
			{\Pr}_{0^d}(\tau_{0^d}>n )\leq c_{\theta_1,\theta_2}^g \left(\tilde\gamma^g_{\theta_1,\theta_2}\right)^n,
		\end{equation*}
  for problem and policy dependent constants $c_{\theta_1,\theta_2}^g $ and  $\tilde\gamma^g_{\theta_1,\theta_2}$. We will follow the method outlined in \cite{hordijk1992ergodicity} with the goal to identify problem dependent parameters that will be relevant to our results. Proof of the following lemma is given in \Cref{subsec:proof_dist_hit_0} and follows the methodology of \cite{hordijk1992ergodicity}.

\begin{Lemma}\label{lem:dist_hit_0}
For every $\theta_1,\theta_2 \in \Theta$ in the Markov process  obtained from the Markov decision process $\left(\mathcal{X},\mathcal{A},c,P_{\theta_1}\right)$ following policy $\pi^*_{\theta_2}$, the return time to state $0$ starting from state $0$ satisfies the following:
	\begin{equation*} 
			{\Pr}_{0^d}(\tau_{0^d} >n )\leq c_{\theta_1,\theta_2}^g \left(\tilde{\gamma}^g_{\theta_1,\theta_2}\right)^n,
		\end{equation*}
  where 
  \begin{align*}
     c_{\theta_1,\theta_2}^g =\frac{b^g_{\theta_1,\theta_2}\left(\tilde{b}^g_{\theta_1,\theta_2}\right)^2}{\tilde{b}^g_{\theta_1,\theta_2}-1} \quad  \text{ and } \quad \tilde{\gamma}^g_{\theta_1,\theta_2} = 1-\frac{1}{\tilde{b}^g_{\theta_1,\theta_2}},
  \end{align*}
  with 
\begin{equation*}
    \tilde{b}^g_{\theta_1,\theta_2} =\frac{3b^g_{\theta_1,\theta_2}+1}{1-\gamma^g_{\theta_1,\theta_2}} \left( \ |C^g_{\theta_1,\theta_2}|^2 \max_{\boldsymbol u \in C^g_{\theta_1,\theta_2} \setminus \{0^d\}} {\E}_{\boldsymbol u}^{\pi_{\theta_2}^*} [\tau_{0^d}]  \right).
\end{equation*}
\end{Lemma}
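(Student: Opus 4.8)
The goal is to bound the tail of the return time $\tau_{0^d}$ to state $0^d$ starting from $0^d$, showing geometric decay with explicit constants. My overall strategy is to leverage the geometric drift condition \eqref{eq:geom_drift_eq} from \Cref{assump:geom_erg} to control excursions away from the finite set $C^g_{\theta_1,\theta_2}$, and then handle the finite set separately using the fact that expected hitting times of $0^d$ from states in $C^g_{\theta_1,\theta_2}$ are finite (bounded via \Cref{lem:hitting_time} using the polynomial Lyapunov function). Let me drop the subscripts $\theta_1,\theta_2$ for readability. First I would recall the taboo-probability machinery from \Cref{lem:hitting_time_geom_lyap_2}: that proof constructs a modified Lyapunov function $\tilde V^g$ satisfying a clean geometric drift $P\tilde V^g(\boldsymbol x) \le (1-\tfrac{1}{2\tilde b^g})\tilde V^g(\boldsymbol x) + (\text{const})\,\mathbb{I}_{0^d}(\boldsymbol x)$, and yields $\mathbb{E}_{\boldsymbol x}[\kappa^{\tau_{0^d}}] \le c_1 V^g(\boldsymbol x)$ for any $\kappa \in [1, \tfrac{2\tilde b^g}{2\tilde b^g - 1}]$. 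This already gives a geometric moment bound for hitting times from a \emph{fixed} start; the present lemma specializes to the return time from $0^d$ and repackages the tail as a function of $\tilde b^g$.

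\textbf{Main steps.} My plan has three parts. First, establish the auxiliary bound $\tilde V^g(\boldsymbol y) \le \tilde b^g V^g(\boldsymbol y)$ where $\tilde V^g = \sum_{n\ge 0}\prescript{}{0^d}{}P^n V^g$ is the taboo-summed Lyapunov function; this is the quantitative heart of the argument and is exactly where the stated formula for $\tilde b^g$ in terms of $|C^g|$, $\gamma^g$, $b^g$, and $\max_{\boldsymbol u \in C^g\setminus\{0^d\}}\mathbb{E}_{\boldsymbol u}[\tau_{0^d}]$ will emerge. I would split $\tilde V^g$ into the contribution from excursions that stay outside $C^g$ (controlled directly by the geometric drift, contributing a geometric series summing to $\tfrac{1}{1-\gamma^g}$-type factors) and the contribution from returns to $C^g$ (controlled by the number of visits to $C^g$ before hitting $0^d$, which is where the $|C^g|^2$ and the expected hitting-time terms come in via a renewal/first-entrance decomposition). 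Second, feed $\kappa = \tfrac{2\tilde b^g}{2\tilde b^g - 1} = (1 - \tfrac{1}{2\tilde b^g})^{-1}$ into the Markov-inequality conversion: since $\mathbb{E}_{0^d}[\kappa^{\tau_{0^d}}] \le c_1 V^g(0^d)$, Markov's inequality gives $\Pr_{0^d}(\tau_{0^d} > n) \le \kappa^{-n}\mathbb{E}_{0^d}[\kappa^{\tau_{0^d}}]$. Third, I would carefully track constants so that $\kappa^{-1}$ rewrites as $\tilde\gamma^g = 1 - \tfrac{1}{\tilde b^g}$ (note the factor-of-2 bookkeeping between the $\tilde V^g$ construction's $2\tilde b^g$ and the final $\tilde b^g$ in the statement must be reconciled, likely by a clean re-choice of $\kappa$) and the prefactor collapses to $c^g = \tfrac{b^g(\tilde b^g)^2}{\tilde b^g - 1}$.

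\textbf{Recovering the explicit prefactor.} The prefactor $c^g = \tfrac{b^g (\tilde b^g)^2}{\tilde b^g - 1}$ should come out of $\kappa \cdot \epsilon^{-1}(1 + 2\tilde b^g)$-style expressions from \Cref{lem:hitting_time_geom_lyap_2} evaluated at the starting state $0^d$, where the $\epsilon^{-1}$ from \cite[Theorem 15.2.5]{meyn2012markov} must be pinned down as a concrete function of $\tilde b^g$ rather than left as an abstract constant. This requires re-running the Meyn–Tweedie bound with the drift coefficient $(1 - \tfrac{1}{2\tilde b^g})$ made explicit, so that $\epsilon$ is identified as roughly $\tfrac{1}{\tilde b^g}$ and the geometric series $\sum_i \kappa^i \Pr(\tau > i)$ telescopes to the claimed form.

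\textbf{The main obstacle.} I expect the hardest part to be the first step: proving $\tilde V^g(\boldsymbol y) \le \tilde b^g V^g(\boldsymbol y)$ with the \emph{precise} constant $\tilde b^g = \tfrac{3b^g+1}{1-\gamma^g}\bigl(|C^g|^2 \max_{\boldsymbol u \in C^g\setminus\{0^d\}}\mathbb{E}_{\boldsymbol u}[\tau_{0^d}]\bigr)$, since this is deferred to \Cref{subsec:proof_dist_hit_0} and is not a routine computation. The difficulty is that the geometric drift only controls the chain \emph{outside} $C^g$, so bounding $\tilde V^g$ requires decomposing each path from $\boldsymbol y$ to $0^d$ into alternating sojourns outside $C^g$ and returns into $C^g$, bounding the per-excursion Lyapunov accumulation by the drift inequality, and then bounding the \emph{number} of excursions through $C^g$ before absorption at $0^d$. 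Controlling this excursion count uniformly is where the $|C^g|^2$ factor and the maximal expected hitting time $\max_{\boldsymbol u}\mathbb{E}_{\boldsymbol u}[\tau_{0^d}]$ (itself bounded polynomially via \Cref{lem:hitting_time}) must be combined, and getting the constant $3b^g + 1$ exactly right rather than up to an unspecified factor is the delicate bookkeeping that the whole uniform-over-$\Theta$ analysis in \Cref{lem:max_geom_rv} ultimately depends on.
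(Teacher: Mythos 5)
Your first step --- proving $\tilde V^g(\boldsymbol y) \le \tilde b^g V^g(\boldsymbol y)$ for the taboo-summed function $\tilde V^g = \sum_{n\ge 0}\prescript{}{0^d}{}P^n V^g$ via a last-exit/first-entrance decomposition around $C^g\setminus\{0^d\}$ --- is exactly the paper's central estimate, and your account of where the $|C^g|^2$ factor, the maximal expected hitting time, and the $\tfrac{1}{1-\gamma^g}$ geometric series enter is accurate. The gap is in your second and third steps. You propose to convert this into the tail bound by feeding $\kappa$ into the exponential-moment machinery of \Cref{lem:hitting_time_geom_lyap_2} and applying Markov's inequality, and you flag the ``factor-of-2 bookkeeping'' as something to be ``reconciled by a clean re-choice of $\kappa$.'' That reconciliation is not available within the framework you chose: the rescaled function in \Cref{lem:hitting_time_geom_lyap_2} (extended to include $0^d$) only satisfies a drift inequality with coefficient $1-\tfrac{1}{2\tilde b^g}$, so \cite[Theorem 15.2.5]{meyn2012markov} caps $\kappa$ at $\bigl(1-\tfrac{1}{2\tilde b^g}\bigr)^{-1}$ and delivers the strictly worse rate $1-\tfrac{1}{2\tilde b^g}$, not the claimed $\tilde\gamma^g = 1-\tfrac{1}{\tilde b^g}$; moreover the prefactor would carry the unquantified $\epsilon^{-1}$, not $\tfrac{b^g(\tilde b^g)^2}{\tilde b^g-1}$.

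The paper avoids this loss entirely by never leaving the killed chain. From the identity $\prescript{}{0^d}{}P\tilde V^g = \tilde V^g - V^g$ together with $V^g \ge \tilde V^g/\tilde b^g$, one gets the taboo-kernel drift $\prescript{}{0^d}{}P\tilde V^g \le \bigl(1-\tfrac{1}{\tilde b^g}\bigr)\tilde V^g$ with the full margin $\tfrac{1}{\tilde b^g}$, and then a direct induction on $n$ gives $\Pr_{\boldsymbol x}(\tau_{0^d}>n)\le \tilde V^g(\boldsymbol x)\bigl(1-\tfrac{1}{\tilde b^g}\bigr)^n$ for $\boldsymbol x\ne 0^d$ (base case: $\Pr_{\boldsymbol x}(\tau_{0^d}>1)\le \tilde V^g(\boldsymbol x)-V^g(\boldsymbol x)$); a final one-step conditioning from $0^d$ and the definition of $b^g$ produce the stated $c^g_{\theta_1,\theta_2}$ exactly. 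So the fix is not a re-choice of $\kappa$ but a replacement of the Meyn--Tweedie/Markov-inequality route by this elementary induction on the tail probabilities of the killed chain.
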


Based on \Cref{lem:dist_hit_0}, it is necessary to impose the conditions in Assumption \ref{assump:geom_erg} to obtain uniform tail probability bounds on $\tau_{0^d}$ for all model parameters and policy choices in $\Theta$. Moreover, these conditions lead to a uniform characterization of $c_{\theta_1,\theta_2}^g$ and $\tilde{\gamma}^g_{\theta_1,\theta_2}$ over $\Theta$; see \Cref{subsec:proof_max_geom_rv}. Furthermore, as a result of \Cref{lem:hitting_time} and uniformity conditions of 
\Cref{assump:poly_erg},  ${\E}_{\boldsymbol u}^{\pi_{\theta_2}^*} [\tau_{0^d}]$ has a uniform bound over $\Theta$ and $C^g_{\theta_1,\theta_2} \setminus \{0^d\}$, which is in terms of the polynomial Lyapunov function and is shown in \eqref{eq:unif_bound_tau_0}. 

\section{Proofs of hitting time bounds}


\subsection{Proof of \Cref{lem:jrpoly}}\label{subsec:proof_jrpoly}

\begin{proof} In the proof, to avoid cumbersome notation we will drop the indices $\theta_1,\theta_2$.
Following the proof of Theorem 3.6 in \cite{jarner2002polynomial}, we  choose    $\eta_i=1-(i-1)(1-\alpha^p)$ 
for $i=1,\dotsc,r+1$ and note that as $\alpha^p\in [\frac{r}{r+1},1)$, we have $\eta_i \in (0,1]$. As a result, we can apply \eqref{eq:poly_erg_result} to each $\eta_i$ to get 
\begin{equation*}
		\Delta \left(V^p\right)^{\eta_i}(\boldsymbol x) \leq -\beta^{\eta_i}\left(V^p
(\boldsymbol x)\right)^{i\alpha^p-i+1} 
+b^{\eta_i
}\mathbb{I}_{C^p}(\boldsymbol x), \quad i=1,\dotsc,r+1. \end{equation*}
Thus, the system of drift equations \eqref{eq:system_drift_cond_2} hold for 
\begin{align*}
   & V^i=\left(V^p \right)^{1-i(1-\alpha^p)}, &&i=0,\dotsc,r+1,
   \\
      & \beta^i=\beta^{\eta_i}, &&i=1,\dotsc,r+1,
   \\
   &b^i=b^{\eta_i}, &&i=1,\dotsc,r+1,
   \\
   &C^i = C^p, &&i=1,\dotsc,r+1,
\end{align*}where $\beta^{\eta_i}$ and $b^{\eta_i}$ are defined in \eqref{eq:poly_erg_param}.
\end{proof}


\subsection{Proof of \Cref{lem:hitting_time}}\label{subsec:proof_htpoly}

The proof of \Cref{lem:hitting_time} uses the following lemma.
\begin{Lemma}[Proposition 11.3.2, \cite{meyn2012markov}]\label{lem:FL_bounds}
	Suppose for nonnegative functions $f$, $g$, and $V$ on the state space $\mathcal{X}$ and every $k \in \mathbb Z_+$, the following holds:
	$$ \E[V(X_{k+1})|\mathcal{F}_k]\leq V(X_k)-f(X_k)+g(X_k).$$
	Then, for any initial condition $x$ and stopping time $\tau$
	$$ {\E}_x \Big[ \sum_{k=0}^{\tau-1} f(X_k)\Big] \leq V(x)+{\E}_x \Big[ \sum_{k=0}^{\tau-1} g(X_k)\Big].$$
\end{Lemma}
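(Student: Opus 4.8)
The plan is to prove this as a standard Foster--Lyapunov comparison (optional-stopping) result, with the only real work being measure-theoretic care because $V,f,g$ may be unbounded and $\tau$ may be infinite. The backbone is the observation that, under the hypothesis, the process
\begin{align*}
M_n := V(X_n) + \sum_{k=0}^{n-1}\big(f(X_k) - g(X_k)\big)
\end{align*}
is an $\{\mathcal F_n\}$-supermartingale: adding the $\mathcal F_n$-measurable quantity $\sum_{k=0}^{n-1}(f-g)(X_k)$ to both sides of the drift inequality at step $n$ gives $\mathbb{E}[M_{n+1}\mid\mathcal F_n]\le M_n$. Equivalently, and in a form that keeps every quantity nonnegative, the drift hypothesis reads $\mathbb{E}[V(X_{k+1})+f(X_k)\mid\mathcal F_k]\le V(X_k)+g(X_k)$.

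First I would work with the bounded stopping time $\tau_N:=\tau\wedge N$ to sidestep integrability of $M_n$ as a whole. Using the pathwise telescoping identity $V(X_{\tau_N})=V(x)+\sum_{k=0}^{\tau_N-1}\big(V(X_{k+1})-V(X_k)\big)$, the $\mathcal F_k$-measurability of $\{\tau>k\}$, and the tower property, each increment is controlled by the drift inequality:
\begin{align*}
\mathbb{E}_x\big[\mathbb{I}_{\{\tau>k\}}\big(V(X_{k+1})-V(X_k)\big)\big]\le \mathbb{E}_x\big[\mathbb{I}_{\{\tau>k\}}\big(g(X_k)-f(X_k)\big)\big].
\end{align*}
Summing over $k=0,\dots,N-1$ and recognizing $\sum_{k=0}^{N-1}\mathbb{I}_{\{\tau>k\}}f(X_k)=\sum_{k=0}^{\tau_N-1}f(X_k)$ (and likewise for $g$) yields
\begin{align*}
\mathbb{E}_x\big[V(X_{\tau_N})\big]+\mathbb{E}_x\Big[\sum_{k=0}^{\tau_N-1}f(X_k)\Big]\le V(x)+\mathbb{E}_x\Big[\sum_{k=0}^{\tau_N-1}g(X_k)\Big].
\end{align*}
Discarding the nonnegative term $\mathbb{E}_x[V(X_{\tau_N})]\ge0$ gives the claim for $\tau_N$. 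I would then remove the truncation by letting $N\to\infty$: since $f,g\ge0$, the truncated sums increase monotonically (pathwise, including on $\{\tau=\infty\}$) to $\sum_{k=0}^{\tau-1}f(X_k)$ and $\sum_{k=0}^{\tau-1}g(X_k)$, so the monotone convergence theorem transfers the inequality to the limit.

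I expect the main obstacle to be entirely in the measure-theoretic bookkeeping rather than in any idea. Three points need care: (i) the conditional expectations must be well defined without assuming $V(X_k)$ integrable --- this is handled by $V\ge0$, so $\mathbb{E}[V(X_{k+1})\mid\mathcal F_k]$ always makes sense in $[0,\infty]$ and the drift bound is meaningful termwise; (ii) the termwise handling of $g-f$ should be organized so as never to produce an $\infty-\infty$, which is why I keep the nonnegative form $V(X_{k+1})+f(X_k)$ and only drop $\mathbb{E}_x[V(X_{\tau_N})]$ at the very end; and (iii) the exchange of limit and expectation is legitimate precisely because $f,g$ are nonnegative, invoking monotone convergence. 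Finally, the case $\mathbb{E}_x[\sum_{k=0}^{\tau-1}g(X_k)]=+\infty$ makes the target inequality vacuous, so no separate argument is required there.
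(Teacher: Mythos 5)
Your proof is correct, and since the paper itself gives no proof of this lemma---it simply cites it as the Comparison Theorem (Proposition 11.3.2) of Meyn and Tweedie---your argument is exactly the standard one behind that citation: truncate at $\tau\wedge N$, telescope the drift inequality on the $\mathcal{F}_k$-measurable events $\{\tau>k\}$ while keeping everything in the nonnegative form $\E[V(X_{k+1})+f(X_k)\mid\mathcal F_k]\le V(X_k)+g(X_k)$, discard $\E_x[V(X_{\tau_N})]\ge 0$, and pass to the limit by monotone convergence. Your measure-theoretic cautions (conditional expectations valued in $[0,\infty]$, avoiding $\infty-\infty$, and the vacuous case $\E_x[\sum_{k=0}^{\tau-1}g(X_k)]=+\infty$) are precisely the right ones and match how the textbook proof handles unbounded $V$, $f$, $g$.
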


\begin{proof}[Proof of \Cref{lem:hitting_time}]
Following \cite{jarner2002polynomial}, the proof uses an induction argument. We will use the notation of Lemma~\ref{lem:jrpoly} for simplicity. Similarly, in this proof we will also denote $\phi^p_{\theta_1,\theta_2}(i)$ as $\phi(i)$,  $K_{\theta_1,\theta_2}(\cdot)$ as $K(\cdot)$, and $V^i_{\theta_1,\theta_2}$, $b^i_{{\theta_1,\theta_2}}$, $\beta^i_{{\theta_1,\theta_2}}$, $C^i_{{\theta_1,\theta_2}}$ as $V_i$, $b_i$, $\beta_i$, $C_i$. 

From irreducibility, for all $\boldsymbol x\in \mathcal{X}$, $K(\boldsymbol x)$ is positive  and finite. Considering the system of drift equations found in \Cref{lem:jrpoly},
 $C_i=C^p$ is a finite set for all $i=1,\dotsc,r+1$. Thus, $\min_{\boldsymbol y\in C_i} K(\boldsymbol y)$ is strictly positive. For all $\boldsymbol x\in \mathcal{X}$ and $i=1,\dotsc,r+1$, we have 
\begin{equation} \label{eq:C_i_up}
	\mathbb{I}_{C_i}(\boldsymbol x) \leq \Big(\min_{\boldsymbol y\in C_i} K(\boldsymbol y)\Big)^{-1} K(\boldsymbol x).
\end{equation} 
We set $\alpha_{C^p}:=\left(\min_{\boldsymbol y\in C_i} K(\boldsymbol y)\right)^{-1}=\left(\min_{\boldsymbol y\in C^p} K(\boldsymbol y)\right)^{-1}$. From \Cref{lem:jrpoly}, for $j=1$ and $\boldsymbol x\in \mathcal{X}$
$$\Delta V_{0}(\boldsymbol x) \leq -\beta_1V_1(\boldsymbol x)+b_1 \mathbb{I}_{C_1}(\boldsymbol x).$$ 
By applying \Cref{lem:FL_bounds}, for all $\boldsymbol x\in \mathcal{X}$ we get
		\begin{equation}\label{eq:upper_sum_V1}
			\beta_1{\E}_{\boldsymbol x} \Big[ \sum_{k=0}^{\tau_{0^d} -1}{V_1}(\boldsymbol X_k)\Big] \leq 
  V_0(\boldsymbol x)+ b_1{\E}_{\boldsymbol x} \Big[ \sum_{k=0}^{\tau_{0^d}-1}  \mathbb{I}_{C_1}(\boldsymbol X_k)\Big].
		\end{equation}
Using \eqref{eq:C_i_up} and \eqref{eq:upper_sum_V1}, followed by noting that 
$$K(\boldsymbol x)= \sum_{n=0}^{\infty} 2^{-n-2} P^n
(\boldsymbol x,0^d) = \sum_{n=0}^{\infty} 2^{-n-2} \mathbb{E}_{\boldsymbol x}[\mathbb{I}_{0^d}\left(\boldsymbol X_{n}\right)],$$
 we get
		\begin{align*}
		{\E}_{\boldsymbol x} \Big[ \sum_{k=0}^{\tau_{0^d} -1}{V_1}\left(\boldsymbol X_k\right)\Big]& \leq 
 \frac{1}{\beta_1} V_0(\boldsymbol x)+ \frac{b_1 \alpha_{C^p}}{\beta_1}{\E}_{\boldsymbol x} \Big[  \sum_{n=0}^{\infty} 2^{-n-2} \sum_{k=0}^{\tau_{0^d}-1}  \mathbb{I}_{0^d}\left(\boldsymbol X_{k+n}\right)\Big] \\
 & =  \frac{1}{\beta_1} V_0(\boldsymbol x)+ \frac{b_1 \alpha_{C^p}}{\beta_1}{\E}_{\boldsymbol x} \Big[  \sum_{n=0}^{\infty} 2^{-n-2} \sum_{k=n}^{\tau_{0^d}-1+n}  \mathbb{I}_{0^d}\left(\boldsymbol X_{k}\right)\Big]\\
  & \leq \frac{1}{\beta_1} V_0(\boldsymbol x)+ \frac{b_1 \alpha_{C^p}}{\beta_1}{\E}_{\boldsymbol x} \Big[  \sum_{n=0}^{\infty} 2^{-n-2} \sum_{k=n\vee \tau_{0^d}}^{\tau_{0^d}-1+n}  \mathbb{I}_{0^d}\left(\boldsymbol X_{k}\right)\Big]\\
 &\leq 
 \frac{1}{\beta_1} V_0(\boldsymbol x)+ \frac{b_1 \alpha_{C^p}}{\beta_1}  \sum_{n=0}^{\infty} 2^{-n-2} (n+1)  \\
 & =   \frac{1}{\beta_1} V_0(\boldsymbol x)+ \frac{b_1 \alpha_{C^p}}{\beta_1}. 
\end{align*}
As $V_1(\boldsymbol x) \geq 1$, this gives us a bound on  ${\E}_{\boldsymbol x}[\tau_{0^d}]$ as follows:
\begin{equation*}
  {\E}_{\boldsymbol x}[\tau_{0^d}] \leq    \frac{1}{\beta_1} V_0(\boldsymbol x)+ \frac{b_1 \alpha_{C^p}}{\beta_1}.
\end{equation*}
Assume for $i\geq 1$, by the induction assumption we have 
\begin{equation}\label{eq:ind_hyp}
    {\E}_{\boldsymbol x} \Big[ \sum_{k=0}^{\tau_{0^d}-1} (k+1)^{i-1}{V_i}\left(\boldsymbol X_k\right)\Big] \leq \phi(i)\left( V_0(\boldsymbol x) + b_1\alpha_{C^p}\right). 
\end{equation}
Set $j=i+1$ in \eqref{eq:system_drift_cond_2}, which yields
  \begin{equation*}
      \Delta V_{i}(\boldsymbol x) \leq -\beta_{i+1}V_{i+1}(\boldsymbol x)+b_{i+1} \mathbb{I}_{C^p}(\boldsymbol x).
  \end{equation*}
  Define $Z_k=k^i V_i(\boldsymbol X_k)$. From the above equation, we have
		\begin{align*}
			\E[Z_{k+1}|\boldsymbol {X}_k]&\leq (k+1)^i \left(V_i\left(\boldsymbol X_k\right)-\beta_{i+1}V_{i+1}(\boldsymbol X_k)+b_{i+1} \mathbb{I}_{C^p}(\boldsymbol X_k)  \right)\\
			&\leq  Z_k+2^i(k+1)^{i-1}V_i\left(\boldsymbol X_k\right)+(k+1)^i b_{i+1} \mathbb{I}_{C^p}(\boldsymbol X_k)- 	(k+1)^i \beta_{i+1}V_{i+1}(\boldsymbol X_k).
		\end{align*}  
By applying \Cref{lem:FL_bounds} to the above equation, we get
  		\begin{align}\label{eq:upper_sum_Vi}
			&\beta_{i+1}{\E}_{\boldsymbol x} \Big[ \sum_{k=0}^{\tau_{0^d}-1} (k+1)^i{V_{i+1}}\left(\boldsymbol X_k\right)\Big] \notag\\&\leq 2^i{\E}_{\boldsymbol x} \Big[ \sum_{k=0}^{\tau_{0^d}-1}(k+1)^{i-1} V_{i}\left(\boldsymbol X_k\right)\Big]+ b_{i+1}{\E}_{\boldsymbol x} \Big[ \sum_{k=0}^{\tau_{0^d}-1} (k+1)^i \mathbb{I}_{C^p}\left(\boldsymbol X_k\right)\Big] \notag \\
   &\leq 2^i \phi(i)\left( V_0(\boldsymbol x) + b_1\alpha_{C^p}\right)+ \alpha_{C^p}b_{i+1}{\E}_{\boldsymbol x}[(\tau_{0^d})^i],
		\end{align}
  where the second inequality follows from  \eqref{eq:C_i_up} and the induction hypothesis \eqref{eq:ind_hyp}. Thereafter, from \eqref{eq:ind_hyp} (by using integral lower bound after using $V_i\geq 1$), we have 
\begin{equation*}\label{eq:tau_i}
   \frac{1}{i} {\E}_{\boldsymbol x}[(\tau_{0^d})^i]  \leq {\E}_{\boldsymbol x} \Big[ \sum_{k=0}^{\tau_{0^d}-1} (k+1)^{i-1}{V_i}\left(\boldsymbol X_k\right)\Big] \leq \phi(i)\left( V_0(\boldsymbol x) + b_1\alpha_{C^p}\right).
\end{equation*}
Substituting in \eqref{eq:upper_sum_Vi}, we get
  		\begin{align*}
			\beta_{i+1}{\E}_{\boldsymbol x} \Big[ \sum_{k=0}^{\tau_{0^d}-1} (k+1)^i{V_{i+1}}\left(\boldsymbol X_k\right)\Big]
   &\leq 2^i \phi(i)\left( V_0(\boldsymbol x) + b_1\alpha_{C^p}\right)+ ib_{i+1} \alpha_{C^p} \phi(i)\left( V_0(\boldsymbol x) + b_1\alpha_{C^p}\right)\\
   &= \left(2^i+ib_{i+1} \alpha_{C^p}\right)  \phi(i)\left( V_0(\boldsymbol x) + b_1\alpha_{C^p}\right) \\
   &= \beta_{i+1}\phi(i+1)\left( V_0(\boldsymbol x) + b_1\alpha_{C^p}\right).
		\end{align*} 
  This completes the proof.
\end{proof}

\subsection{Proof of \Cref{lem:dist_hit_0}}\label{subsec:proof_dist_hit_0}

\begin{proof}
In the proof, to avoid cumbersome notation we will drop the indices $\theta_1,\theta_2$. Based on \Cref{assump:geom_erg}, there exists a finite set $C^g$, constants $b^g$, $\gamma^g\in (0,1)$, and a function $V^g :\X \rightarrow [1,+\infty)$ satisfying 
	\begin{equation}\label{eq:geom_erg_lyap_2}
				\Delta V^g(\boldsymbol x) \leq -\left(1-\gamma^g \right)V^g(\boldsymbol x)+b^g\mathbb{I}_{C^g }(\boldsymbol x), \quad \boldsymbol x \in \mathcal{X}. 
	\end{equation}
For $n \geq 1 $,
define the $n$-step taboo probabilities \cite{meyn2012markov}  as 
$$ \prescript{}{A}{}P^n_{\boldsymbol xB}={\Pr}_{\boldsymbol x}\left(\boldsymbol X_n \in B, \tau_A >  n \right),$$
where $A, B \subseteq \mathcal{X},$ and $\tau_A$ is the first hitting time of set $A$. We also let $\prescript{}{A}{}P^0_{\boldsymbol xB} = \mathbb{I}_B(\boldsymbol x )$ and
$\tilde{V}^g=\sum_{n=0}^{\infty} \prescript{}{0^d}{}P^n V^g$.
Applying the last exit decomposition on $C^g \setminus \{0^d\}$ for all $x \in \mathcal{X}$, we obtain
\begin{align} 
&\tilde{V}^g(\boldsymbol x) \notag\\
&=\sum_{n=0}^{\infty} \sum_{\boldsymbol y \in \X}\prescript{}{0^d}{}P^n_{\boldsymbol x \boldsymbol y} V^g(\boldsymbol y) \notag\\
&= V^g(\boldsymbol x)+\sum_{n=1}^{\infty} \sum_{\boldsymbol y \in \X}\prescript{}{C^g}{}P^n_{\boldsymbol x \boldsymbol y} V^g(\boldsymbol y) \notag\\
&+ \sum_{n=1}^{\infty} \sum_{\boldsymbol y \in \X}  \sum_{m=1}^{n-1} \sum_{\boldsymbol z \in C^g \setminus \{0^d\}} \prescript{}{0^d}{} P_{\boldsymbol x \boldsymbol z }^m \prescript{}{C^g}{} P_{\boldsymbol z \boldsymbol y}^{n-m} V^g(\boldsymbol y ) +\sum_{n=1}^{\infty} \sum_{\boldsymbol y \in \X}   \sum_{\boldsymbol z \in C^g \setminus \{0^d\}} \prescript{}{0^d}{} P_{\boldsymbol x \boldsymbol z }^n \prescript{}{C^g}{} P_{\boldsymbol z \boldsymbol y}^{0} V^g(\boldsymbol y ) \notag\\
&=  V^g(\boldsymbol x)+{\sum_{n=1}^{\infty} \sum_{\boldsymbol y \in \X}\prescript{}{C^g}{}P^n_{\boldsymbol x \boldsymbol y} V^g(\boldsymbol y)}\label{eq:exit_decomp_term_1} \\
&+  \underbrace{\sum_{\boldsymbol y \in \X}   \sum_{\boldsymbol z \in C^g \setminus \{0^d\}} \left( \sum_{m=1}^{\infty} \prescript{}{0^d}{} P_{\boldsymbol x \boldsymbol z}^m \right) \left(\sum_{n=1}^{\infty} \prescript{}{C^g}{}  P_{\boldsymbol z \boldsymbol y}^{n} V^g(\boldsymbol y )  \right)}_{\text{Term 1}}+\underbrace{\sum_{n=1}^{\infty} \sum_{\boldsymbol z \in C^g \setminus \{0^d\}} \prescript{}{0^d}{} P_{\boldsymbol x \boldsymbol z }^n  V^g(\boldsymbol z )}_{\text{Term 2}}, \label{eq:exit_decomp_term_2}
\end{align}
where we break up the trajectories starting at state $\boldsymbol x$ and reaching state $\boldsymbol y$ while avoiding state $0^d$ into two: ones that never visit the set $C^g$, and the others that visit $C^g\setminus\{0^d\}$ up until time $m$ but not afterwards and exit $C^g\setminus\{0^d\}$ at time $m$. 

We first bound Term 1 in \eqref{eq:exit_decomp_term_2} by finding an upper bound for the  probability term $\sum_{m=1}^{\infty} \prescript{}{0^d}{} P_{\boldsymbol x \boldsymbol z}^m $ using the first entrance decomposition on $C^g \setminus \{0^d\}$ while noting that $\boldsymbol z \in C^g \setminus \{0^d\}$:
\begin{align} \notag
   \sum_{m=1}^{\infty} \prescript{}{0^d}{} P_{\boldsymbol x \boldsymbol z}^m & =  \sum_{m=1}^{\infty}\sum_{l=1}^{m} \sum_{\substack{\boldsymbol u \in C^g \setminus \{0^d\}\\\boldsymbol v \notin C^g }}  \prescript{}{C^g}{}P_{\boldsymbol x\boldsymbol v}^{l-1} P_{\boldsymbol v \boldsymbol u}  \prescript{}{0^d}{} P_{\boldsymbol u\boldsymbol z}^{m-l}  \\\notag
    &= \sum_{\boldsymbol u \in C^g \setminus \{0^d\}} \left( \sum_{l=0}^{\infty}  \sum_{\boldsymbol v \notin C^g }  \prescript{}{C^g}{} P_{\boldsymbol x\boldsymbol v}^l P_{\boldsymbol  v \boldsymbol u} \right)
\left(  \sum_{m=0}^{\infty}\prescript{}{0^d}{} P_{\boldsymbol u\boldsymbol z}^{m} \right) \\
&\leq \sum_{\boldsymbol u \in C^g \setminus \{0^d\}} \sum_{m=0}^{\infty} \prescript{}{0^d}{} P_{\boldsymbol u\boldsymbol z}^{m} \notag \\
&\leq \sum_{\boldsymbol u \in C^g \setminus \{0^d\}} \sum_{m=0}^{\infty} {\Pr}_{\boldsymbol u}(\tau_{0^d} > m ) \notag \\
&\leq |C^g| \max_{\boldsymbol u \in C^g \setminus \{0^d\}} {\E}_{\boldsymbol u} [\tau_{0^d}],\label{eq:second_term_p1}
\end{align} 
where the third line follows from the fact that 
 $\sum_{l=0}^{\infty}  \sum_{\boldsymbol v \notin C^g }  \prescript{}{C^g}{} P_{\boldsymbol x\boldsymbol v}^l P_{\boldsymbol  v \boldsymbol u} $
is the probability of entrance to $C^g$ through $\boldsymbol u\in C^g\setminus \{0\}$, so it is less than $1$. Irreducibility and positive recurrence combined with $|C^g|<\infty$ imply that $\max_{\boldsymbol u \in C^g \setminus \{0^d\}} {\E}_{\boldsymbol u} [\tau_{0^d}] < \infty$, which shows  $\sum_{m=0}^{\infty}\prescript{}{0^d}{} P_{\boldsymbol x\boldsymbol z}^m$ is finite. Next, by induction we prove that for $n \geq 1$ and $\boldsymbol z\in C^g \setminus \{0^d\}$ we have
\begin{equation} \label{eq:taboo_prob_C}
    \sum_{\boldsymbol y \in \X}    \prescript{}{C^g}{} P_{\boldsymbol z \boldsymbol y}^{n} V^g(\boldsymbol y) \leq \left(\gamma^g\right)^{n-1} b^g. 
\end{equation}
For $n=1$, we have using \eqref{eq:geom_erg_lyap_2} that
$$ \sum_{\boldsymbol y \in \X}    \prescript{}{C^g}{} P_{\boldsymbol z \boldsymbol y} V^g(\boldsymbol y)\leq  \sum_{\boldsymbol y \in \X}    P_{\boldsymbol z \boldsymbol y} V^g(\boldsymbol y) \leq b^g. $$ 
Assuming that \eqref{eq:taboo_prob_C} holds for $n$, for $n+1$ we have
\begin{align*}
 \sum_{\boldsymbol y \in \X}    \prescript{}{C^g}{} P_{\boldsymbol z \boldsymbol y}^{n+1} V^g(\boldsymbol y)&\leq  \sum_{\substack{\boldsymbol y  \in \mathcal X \\\boldsymbol v \notin C^g}}   \prescript{}{C^g}{} P_{\boldsymbol z \boldsymbol v}^{n}P_{\boldsymbol v \boldsymbol y} V^g(\boldsymbol y)\leq  \gamma^g\sum_{{\boldsymbol v \notin C^g}}   \prescript{}{C^g}{} P_{\boldsymbol z \boldsymbol v}^{n} V^g(\boldsymbol v)  &&\text{(Using \eqref{eq:geom_erg_lyap_2})}\\
&\leq  \gamma\sum_{{\boldsymbol v  \in \mathcal X }}   \prescript{}{C^g}{} P_{\boldsymbol z \boldsymbol v}^{n} V^g(\boldsymbol v) \leq \left(\gamma^g\right)^{n} b^g, &&\text{(By induction step)}
\end{align*}
so \eqref{eq:taboo_prob_C} is shown. We collect these bounds later on for our result on Term $2$.

We now simplify the summation in  \eqref{eq:exit_decomp_term_1}. Similar to previous arguments, we will use induction for $n \geq 1$ and show for all $\boldsymbol x\in \mathcal{X}$
\begin{equation}\label{eq:ind_hyp_2}
   \sum_{\boldsymbol y \in \X}\prescript{}{C^g}{}P^n_{\boldsymbol x \boldsymbol y} V^g(\boldsymbol y)\leq \left(\gamma^g\right)^{n-1} \left( \gamma^g V^g(\boldsymbol x)+b^g\right).
\end{equation}
For $n=1$, we have
\begin{align*}
      \sum_{\boldsymbol y \in \X}\prescript{}{C^g}{}P_{\boldsymbol x \boldsymbol y} V^g(\boldsymbol y)\leq     \sum_{\boldsymbol y \in \X}P_{\boldsymbol x \boldsymbol y} V^g(\boldsymbol y) \leq \gamma^g V^g(\boldsymbol x)+b^g.
\end{align*}
Assuming that \eqref{eq:ind_hyp_2} holds for $n$, for $n+1$ we have
\begin{align*}
  \sum_{\boldsymbol y \in \X}\prescript{}{C^g}{}P^{n+1}_{\boldsymbol x \boldsymbol y} V^g(\boldsymbol y)
  &\leq  \sum_{\boldsymbol z \notin C^g}\prescript{}{C^g}{} P^{n}_{\boldsymbol x \boldsymbol z}  \sum_{\boldsymbol y \in \mathcal X}P_{\boldsymbol z \boldsymbol y}V^g(\boldsymbol y) \leq \gamma^g \sum_{\boldsymbol z \notin C^g} \prescript{}{C^g}{} P^{n}_{\boldsymbol x \boldsymbol z}V^g(\boldsymbol z )\\ 
  & \leq \gamma^g \sum_{\boldsymbol z  \in \mathcal{X}} \prescript{}{C^g}{} P^{n}_{\boldsymbol x \boldsymbol z}V^g(\boldsymbol z )\leq  \left(\gamma^g\right)^{n} \left( \gamma^g V^g(\boldsymbol x)+b^g\right),
\end{align*} 
where the first and second inequalities follow from the definition of taboo probabilities and \eqref{eq:geom_erg_lyap_2}. Thus, \eqref{eq:ind_hyp_2} is proved. Lastly, for Term 2 in \eqref{eq:exit_decomp_term_2}, we note
\begin{align*}
    \sum_{n=1}^{\infty} \sum_{\boldsymbol z \in C^g \setminus \{0^d\}} \prescript{}{0^d}{} P_{\boldsymbol x \boldsymbol z }^n  V^g(\boldsymbol z )&\leq \max_{\boldsymbol y \in C^g \setminus \{0^d\}} V^g(\boldsymbol y )  \sum_{\boldsymbol z \in C^g \setminus \{0^d\}}\sum_{n=1}^{\infty} \prescript{}{0^d}{} P_{\boldsymbol x \boldsymbol z }^n \\
    &\leq  b^g|C^g| \max_{\boldsymbol u \in C^g \setminus \{0^d\}} {\E}_{\boldsymbol u} [\tau_{0^d}]. &&\text{(From \eqref{eq:second_term_p1})}
\end{align*}
From the above equation, \eqref{eq:second_term_p1}, \eqref{eq:taboo_prob_C}, and \eqref{eq:ind_hyp_2}, we bound $\tilde{V}^g(\boldsymbol x)$ as follows:
\begin{align*}
  &\tilde{V}^g(\boldsymbol x)\\
  & \leq  {V}^g(\boldsymbol x)+  \left( \gamma^g V^g(\boldsymbol x)+b^g\right) \sum_{n=1}^{\infty}\left(\gamma^g\right)^{n-1} +|C^g|^2b^g \max_{\boldsymbol u \in C^g \setminus \{0^d\}} {\E}_{\boldsymbol u} [\tau_{0^d}] \left(1+\sum_{n=1}^{\infty}\left(\gamma^g\right)^{n-1}\right) \\
  &\leq  \frac{{V}^g(\boldsymbol x)}{1-\gamma^g}+\frac{3|C^g|^2b^g}{1-\gamma^g}  \max_{\boldsymbol u \in C^g \setminus \{0^d\}} {\E}_{\boldsymbol u} [\tau_{0^d}]\\
  &\leq {V}^g(\boldsymbol x) \left(  \frac{3b^g+1}{1-\gamma^g} \left( \ |C^g|^2 \max_{\boldsymbol u \in C^g \setminus \{0^d\}} {\E}_{\boldsymbol u} [\tau_{0^d}]  \right)  \right),
\end{align*} 
where 
the last line is due to  ${V}^g(\boldsymbol x)\geq 1$. Taking
$$\Tilde{b}^g:= \frac{3b^g+1}{1-\gamma^g} \left( \ |C^g|^2 \max_{\boldsymbol u \in C^g \setminus \{0^d\}} {\E}_{\boldsymbol u} [\tau_{0^d}]  \right)>1,$$ 
we have shown that 
\begin{equation} \label{eq:b_tilde}
   \tilde{V}^g(\boldsymbol x) \leq \tilde{b}^g {V}^g(\boldsymbol x), \quad \boldsymbol x \in \mathcal{X}.
\end{equation} 
We now upper-bound ${\Pr}_{0^d}(\tau_{0^d} >n )$ for all $n\geq 1$ in an inductive manner, starting with ${\Pr}_{0^d}(\tau_{0^d} >1 )$. As a part of showing this, for every $\boldsymbol x\neq 0^d$ we argue that for all $n\geq 1$
\begin{equation}\label{eq:geomtail}
     {\Pr}_{\boldsymbol x}(\tau_{0^d} >n )\leq \tilde{V}^g(\boldsymbol x) \left( 1-\frac{1}{\tilde{b}^g} \right)^n.
\end{equation}
First note that 
\begin{equation} \label{eq:V_tilde_geq_1}
    \tilde{V}^g(\boldsymbol x)\geq {V}^g(\boldsymbol x) \geq 1.
\end{equation}
Thus, 
\begin{align} \label{eq:base_ind_3_1}
   {\Pr}_{\boldsymbol x}(\tau_{0^d} >1 )&=  \sum_{\boldsymbol y \in \mathcal X}\prescript{}{0^d}{}P_{\boldsymbol x\boldsymbol y} 
   \leq  \sum_{\boldsymbol y \in \mathcal X}\prescript{}{0^d}{}P_{\boldsymbol x\boldsymbol y}\tilde{V}^g(\boldsymbol y )\notag \\
   &= \sum_{\boldsymbol y \in \mathcal X}\prescript{}{0^d}{}P_{\boldsymbol x\boldsymbol y}  \sum_{n=0}^{\infty} \sum_{\boldsymbol z \in \mathcal X}\prescript{}{0^d}{}P^n_{\boldsymbol y \boldsymbol z} V^g(\boldsymbol z)= \sum_{\boldsymbol z \in \mathcal X} \sum_{n=1}^{\infty}\prescript{}{0^d}{}P^n_{\boldsymbol x\boldsymbol z} V^g(\boldsymbol z).
\end{align}
We now apply the bound in \eqref{eq:b_tilde} to get
\begin{align} \label{eq:base_ind_3_2}
    {\Pr}_{\boldsymbol x}(\tau_{0^d} >1 )\leq \sum_{\boldsymbol z \in \mathcal X} \sum_{n=1}^{\infty}\prescript{}{0^d}{}P^n_{\boldsymbol x\boldsymbol z} V^g(\boldsymbol z)= \tilde{V}^g(\boldsymbol x)-V^g(\boldsymbol x)\leq \tilde{V}^g(\boldsymbol x) \left( 1-\frac{1}{\tilde{b}^g} \right).
\end{align} 
With the base of induction established, we assume the statement in \eqref{eq:geomtail} is true for $n$, and show that it continues to hold for $n+1$ as follows:
\begin{align*}
    {\Pr}_{\boldsymbol x}(\tau_{0^d} >n+1 )&=\sum_{\boldsymbol y \neq 0^d} P_{\boldsymbol x\boldsymbol y}  {\Pr}_{\boldsymbol y}(\tau_{0^d} >n ) \\ &\leq\left( 1-\frac{1}{\tilde{b}^g} \right)^n \sum_{\boldsymbol y \neq 0^d} P_{\boldsymbol x\boldsymbol y}\tilde{V}^g(\boldsymbol y) \\&\leq \tilde{V}^g(\boldsymbol x) \left( 1-\frac{1}{\tilde{b}^g} \right)^{n+1},
\end{align*}
where the final inequality uses the same arguments as in \eqref{eq:base_ind_3_1} and \eqref{eq:base_ind_3_2}. 

Finally, using the tail probabilities of hitting time of state $0^d$ from any state $\boldsymbol x\neq 0^d$, we bound the tail probability of the return time to state $0^d$ (starting from $0^d$) as follows
\begin{align*}
     {\Pr}_{0^d}(\tau_{0^d} >n+1 )&= \sum_{\boldsymbol x \neq 0^d} P_{0\boldsymbol x} {\Pr}_{\boldsymbol x} (\tau_{0^d} >n ) \leq \left( 1-\frac{1}{\tilde{b}^g} \right)^n \sum_{\boldsymbol x \neq 0^d} P_{0\boldsymbol x}\tilde{V}^g(\boldsymbol x)\\ &\leq \tilde{b}^g\left( 1-\frac{1}{\tilde{b}^g} \right)^n \sum_{\boldsymbol x \neq 0^d} P_{0\boldsymbol x}{V}^g(\boldsymbol x) \leq  b^g\tilde{b}^g\left( 1-\frac{1}{\tilde{b}^g} \right)^n,
\end{align*}
where the final inequality follows from the definition of $b^g$, and we have 
\begin{align*}
\Tilde{\gamma}^g=1-\frac{1}{\tilde{b}^g}, \text{ and } c^g=\frac{b^g\left(\tilde{b}^g\right)^2}{\tilde{b}^g-1},
\end{align*}
and the proof is complete. 
\end{proof}
\section{Queueing model examples}\label{sec:queueing_models}

\subsection{Model 1: Two-server queueing system with a common buffer}\label{subsec:queue_1}

We consider a continuous-time queueing system with two heterogeneous servers with unknown service rate vector $\boldsymbol \theta^*=(\theta_1^*, \theta_2^*)$ and a common infinite buffer, shown in \Cref{fig:QS1}. 
Arrivals to the system are according to a Poisson process with rate $\lambda$ and service times are exponentially distributed with parameter $\theta^*_i$, depending on the assigned server.  The service rate vector $\boldsymbol \theta^*$ is sampled from the prior distribution $\nu_0$ defined on the space $\Theta$ given as
\begin{equation}\label{eq:def_theta}
	\Theta= \left\{(\theta_1,\theta_2) \in {\R}_+^2: \frac{\lambda}{\theta_1+\theta_2} \leq \frac{1-\delta}{1+\delta}, 1 \leq \frac{\theta_1}{\theta_2} \leq R \right\},
\end{equation} for fixed $\delta \in (0,0.5)$ and $R\geq 1$. Note that for any $(\theta_1,\theta_2) \in \Theta$, we have $\theta_1\geq \theta_2$ and the stability requirement $\lambda<\theta_1+\theta_2$ holds.
The countable state space $\mathcal{X}$ is defined as below
$$\mathcal{X}=\left\{\boldsymbol x=(x_0,x_1,x_2):x_0 \in \mathbb{N} \cup \left\{0\right\}, x_1,x_2 \in \{0,1\}\right\},$$ in which $x_0$ is the length of the queue, and $x_i, i=1,2$ is equal to 1 if server $i$ is busy serving a job. At each time instance $r \in \R_+$, the dispatcher can assign jobs from the (non-empty) buffer to an available server.
Thus, the action space $\mathcal{A}$ is equal to 
$$\mathcal{A}=\{h, b, 1,2\},
$$ where $h$ indicates no action, $b$ sends a job to both of the servers, and $i=1,2$ assigns a job to server $i$. 
The goal of the dispatcher is to minimize the expected sojourn time of customers, which by Little's law \cite{srikant2013communication} is equivalent to minimizing the average number of customers in the system, or 
\begin{equation} \label{eq:cont_time_op}
	\inf_{\pi \in \Pi} \limsup_{T\rightarrow\infty} \frac{1}{T}  \int_{0}^{T} \norm{\boldsymbol{X}(r)}_1 \, dr,
\end{equation} where $\boldsymbol{X}(r)$ is the state of the system at time $r \in \R_+$,  immediately after the arrival/departure and just before the action is taken. 
In \cite{lin1984optimal}, it is argued that 
from uniformization \cite{lippman1975applying}
and sampling the continuous-time Markov process at a rate of $\lambda+\theta_1^*+ \theta_2^*$, a discrete-time Markov chain is obtained, which converts the original continuous-time  problem shown in \eqref{eq:cont_time_op} to an equivalent discrete-time problem as below:
\begin{equation} \label{eq:cont_disc}
	\inf_{\pi\in \Pi} \limsup_{T\rightarrow\infty} \frac{1}{T}  \int_{0}^{T} \norm{\boldsymbol{X}(r)}_1 \, dr=	\inf_{\pi\in \Pi} \limsup_{T\rightarrow\infty}  \frac{1}{T} \sum_{i=0}^{T-1}  \norm{\boldsymbol{X}(i)}_1. 
\end{equation} 	 To obtain a uniform sampling rate of $\lambda+\theta_1^*+ \theta_2^*$, the continuous-time system is sampled at arrivals, real and dummy customer departures. 
In \cite{lin1984optimal}, it is further shown that the optimal policy  that achieves the infimum in \eqref{eq:cont_disc} is a threshold policy $\pi_t$ with the optimal finite threshold $t^*(\theta) \in \mathbb{N}$, with the policy defined as below:
\begin{equation*} 
\pi_t(\boldsymbol x)=
\begin{cases}
\text{h } \text{  if  } \{x_0=0\} \text{  or  } \{\norm{\boldsymbol x}_1 \leq t ,x_1=1\}  \text{  or  } \{x_1=x_2=1\} \\
\text{1 } \text{  if  } \{x_0\geq 1,x_1=0\} \\ 
\text{2 } \text{  if  } \{x_0\geq 1,\norm{\boldsymbol x}_1 \geq t+1,x_1=1,x_2=0\};
\end{cases}
\end{equation*}
note that action $b$ is not used.
Policy $\pi_t$ assigns a job to the faster (first) server whenever there is a job waiting in the queue and the first server is available. In contrast, $\pi_t$ dispatches a job to the second server only if the number of jobs in the system are greater than threshold $t$ and the second server is available. If neither of these conditions hold, no action or  $h$ is taken. Consequently, we can restrict the set of all policies $\Pi$ in \eqref{eq:cont_disc} to the set $\Pi_t$, which is the set of all possible threshold policies corresponding to some $t \in \mathbb{N}$. 

In the rest of this subsection, our aim is to show that Assumptions 1-5 are satisfied for the discrete-time Markov process obtained by uniformization of the described queueing system and hence, conclude that \Cref{alg:TSDE} can be used to learn the unknown service rate vector $\boldsymbol \theta^*$ with the expected regret of order $\tilde O(\sqrt{|\mathcal A|T})$.

\textbf{Assumption 1.} Cost function is given as $c(\boldsymbol x,a)=\norm{\boldsymbol x}_1$, which satisfies \Cref{assump:costfn} with $f_c(\boldsymbol x)=x_0+x_1+x_2$ and $K=r=1$.

\textbf{Assumption 2.} For any state-action pair $(\boldsymbol x,a)$ and $\theta \in \Theta$, we have $P_\theta(A(\boldsymbol x);\boldsymbol x,a)=0$ where $A(\boldsymbol x)=\{\boldsymbol y \in \mathcal X: \left|{\norm{\boldsymbol y}}_1-{\norm{\boldsymbol x}}_1 \right|>1 \}$; thus, \Cref{assump:skipfrt} holds with $h=1$.

\textbf{Assumption 3.} 
Consider the queueing system  with parameter $
\theta$ following threshold policy $\pi_t$ for some $t \in \mathbb N$.  
The uniformized discrete-time Markov chain is irreducible and aperiodic on a subset of  state space given as  
$\mathcal{X}_t=\mathcal{X} \setminus\left( \{ (i,0,0): i \geq \min (t,2)\} \cup \{(0,1,1)\}\right)$. 
 In \cite{lin1984optimal}, it is proved that for every $t$, the chain consists of a single positive recurrent class and the corresponding average number of customers, depicted by $J^t(\theta)$, is calculated. Moreover, it is shown that for every $\theta \in \Theta$ the optimal threshold $t(\theta)$ can be numerically found as  the smallest $i \in \mathbb{N}$ for which $J^i (\theta)< J^{i+1}(\theta)$. Define the set 
${T^*}$ as the set of all optimal thresholds corresponding to at least one $\theta \in \Theta$, that is,
\begin{equation*}
    {T^*}= \{ t: t=t( \theta) \text{ for } \theta \in \Theta \}.
\end{equation*}

\begin{Remark} There is a discrepancy between the class of MDPs defined in this section and in \Cref{subsec:Preliminaries}, as in the former the MDPs are not irreducible in the whole state space $\mathcal X$. Specifically, for every Markov process generated by a queueing system  with parameter $
\theta$ following threshold policy $\pi_t$, 
irreducibility holds on $\mathcal X_t \subset \mathcal X$. Nevertheless, the results of \Cref{sec:main_results} are valid as starting from state $(0,0)$,  the visited states are positive recurrent; see \Cref{rem:irr}. 
\end{Remark}

In the following proposition, we verify the geometric ergodicity of the discrete-time chain governed by any parameter $\theta \in \Theta$ and obtained by following any threshold policy $\pi_t$ for $t \in T^*$; proof is given in \Cref{subsec:proof_KL_geom_erg}.

\begin{Proposition} \label{prop:KL_geom_erg}
The discrete-time Markov process obtained from the queueing system governed by parameter $\theta=(\theta_1,\theta_2) \in \Theta$ and following threshold policy $\pi_t$  for some $t \in T^*$ is geometrically ergodic. Equivalently, 
the following holds
    \begin{equation*}
		\Delta V_{\theta,t}^g(\boldsymbol x) \leq -\left(1-\gamma^g_{\theta,t} \right)V_{\theta,t}^g(\boldsymbol x)+b^g_{\theta,t}\mathbb{I}_{C^g_{\theta,t}}(\boldsymbol x), \quad \boldsymbol x \in \mathcal{X}_t,
	\end{equation*} for 
\begin{align}
&V_{\theta,t}^g(\boldsymbol x)=\exp(-\log (1-\delta)\norm{\boldsymbol x}_1),\notag\\
   \label{eq:C_def_geom}& C^g_{\theta,t}=\left\{(x_0,x_1,0): x_0<t \right\}  \cup \{(0,0,1)\}, \\
    &b^g_{\theta,t}=\max_{\boldsymbol x \in  C^g_{\theta,t}}\exp\left(-\log (1-\delta)\left(\norm{\boldsymbol x}_1+1\right)\right), \label{eq:b_def_geom}\\
    \label{eq:gamma_def_geom}&\gamma^g_{\theta,t}=\frac{1}{2}-\frac{1}{2(\theta_1+\theta_2+\lambda)}\left((\theta_1+\theta_2)(1-\delta)+\lambda\left(1-\delta\right)^{-1}\right).
\end{align}
\end{Proposition}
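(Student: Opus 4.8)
The plan is to verify the Foster--Lyapunov geometric drift inequality \eqref{eq:geomergodicA} directly for the uniformized chain, reading off the contraction coefficient from the dynamics of the total job count. Write $\Lambda=\lambda+\theta_1+\theta_2$ for the uniformization rate and $z=(1-\delta)^{-1}>1$, so that $V^g_{\theta,t}(\boldsymbol x)=z^{\norm{\boldsymbol x}_1}$ is increasing in the number of jobs $\norm{\boldsymbol x}_1=x_0+x_1+x_2$ and bounded below by $1$. The essential observation is that $V^g_{\theta,t}$ depends on $\boldsymbol x$ only through $\norm{\boldsymbol x}_1$, so I only need to track how the total count moves under one step of the uniformized, policy-controlled chain. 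Under $\pi_t$ an arrival always raises $\norm{\boldsymbol x}_1$ by one (probability $\lambda/\Lambda$), a genuine service completion always lowers it by one, and a sampled completion at an idle server is a self-loop.

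First I would argue that the exceptional set can be taken to be exactly $C^g_{\theta,t}$. For $\boldsymbol x\in\mathcal X_t\setminus C^g_{\theta,t}$, the policy $\pi_t$ keeps both servers busy after its assignment step: the complement of $C^g_{\theta,t}$ removes precisely the states with server $2$ idle and a short queue (the $(x_0,x_1,0)$ with $x_0<t$) together with $(0,0,1)$, so in the remaining states either both servers are already occupied or the threshold rule immediately fills server $2$. Consequently, from such $\boldsymbol x$ the count performs a skip-free birth--death step, $\norm{\boldsymbol x}_1\mapsto\norm{\boldsymbol x}_1+1$ with probability $\lambda/\Lambda$ and $\norm{\boldsymbol x}_1\mapsto\norm{\boldsymbol x}_1-1$ with probability $(\theta_1+\theta_2)/\Lambda$, with no self-loop.

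The drift computation is then immediate: on $\mathcal X_t\setminus C^g_{\theta,t}$ one gets $P^{\pi_t}_\theta V^g_{\theta,t}(\boldsymbol x)/V^g_{\theta,t}(\boldsymbol x)=\tfrac{\lambda}{\Lambda}z+\tfrac{\theta_1+\theta_2}{\Lambda}z^{-1}=\Lambda^{-1}\big(\lambda(1-\delta)^{-1}+(\theta_1+\theta_2)(1-\delta)\big)$, which is the contraction coefficient $\gamma^g_{\theta,t}$ and yields $\Delta V^g_{\theta,t}(\boldsymbol x)\le-(1-\gamma^g_{\theta,t})V^g_{\theta,t}(\boldsymbol x)$. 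For $\boldsymbol x\in C^g_{\theta,t}$, since any one-step transition changes $\norm{\boldsymbol x}_1$ by at most one, $P^{\pi_t}_\theta V^g_{\theta,t}(\boldsymbol x)\le z\,V^g_{\theta,t}(\boldsymbol x)=\exp(-\log(1-\delta)(\norm{\boldsymbol x}_1+1))$, and maximizing over the finite set $C^g_{\theta,t}$ produces the additive constant $b^g_{\theta,t}$ of \eqref{eq:b_def_geom}; together these give the global inequality \eqref{eq:geomergodicA}. Finally I would verify $\gamma^g_{\theta,t}\in(0,1)$: elementary algebra reduces $\gamma^g_{\theta,t}<1$ to $\lambda<(\theta_1+\theta_2)(1-\delta)$, i.e.\ $\lambda/(\theta_1+\theta_2)<1-\delta$, which follows from the stability constraint $\lambda/(\theta_1+\theta_2)\le(1-\delta)/(1+\delta)<1-\delta$ defining $\Theta$. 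Irreducibility and aperiodicity on $\mathcal X_t$ are inherited from \cite{lin1984optimal} (single positive recurrent class) together with the self-loops at idle-server states, and the equivalence recalled in \Cref{subsec:erg_not} upgrades the drift to geometric ergodicity.

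The main obstacle I anticipate is the bookkeeping around the action timing and the precise definition of $C^g_{\theta,t}$: I must confirm that every state in $\mathcal X_t\setminus C^g_{\theta,t}$ genuinely yields the clean birth--death step with death rate $\theta_1+\theta_2$ (in particular that no state with an idle server and small queue survives outside $C^g_{\theta,t}$), and check the boundary transitions into and out of $C^g_{\theta,t}$ so that the indicator term is supported on the finite set. Once this structural reduction is in place, the remaining drift estimate is a one-line ratio computation driven entirely by the stability condition.
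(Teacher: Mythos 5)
Your proposal follows essentially the same route as the paper's proof: the exponential Lyapunov function $V^g_{\theta,t}(\boldsymbol x)=z^{\norm{\boldsymbol x}_1}$ with $z=(1-\delta)^{-1}$, the observation that outside $C^g_{\theta,t}$ the post-action chain has both servers busy so that $\norm{\boldsymbol x}_1$ performs a skip-free birth--death step (up with probability $\lambda/\Lambda$, down with probability $(\theta_1+\theta_2)/\Lambda$), the one-step ratio computation, and the reduction of the drift condition to $\lambda<(\theta_1+\theta_2)(1-\delta)$, which the stability constraint in $\Theta$ supplies. Your structural claim about $\mathcal X_t\setminus C^g_{\theta,t}$ checks out case by case ($(i,0,1)$, $(i,1,0)$ with $i\geq t$, and $(i,1,1)$ all become two-busy-server states after the threshold rule acts), and your verification that the ratio is below $1$ is actually cleaner than the paper's.

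The one point to fix is the identification of the constant: you assert that the one-step ratio $\rho:=\Lambda^{-1}\bigl(\lambda(1-\delta)^{-1}+(\theta_1+\theta_2)(1-\delta)\bigr)$ \emph{is} $\gamma^g_{\theta,t}$, but \eqref{eq:gamma_def_geom} defines $\gamma^g_{\theta,t}=\tfrac12-\tfrac12\rho$, not $\rho$. With the printed constant, the literal inequality $\Delta V\leq-(1-\gamma^g_{\theta,t})V$ outside $C^g_{\theta,t}$ would require $\rho\leq(1-\rho)/2$, which does not hold in general; what your computation (and, in fact, the paper's own proof) actually establishes is $PV\leq\tfrac{1+\rho}{2}V$, i.e.\ $\Delta V\leq-\gamma^g_{\theta,t}V$ with $\gamma^g_{\theta,t}=(1-\rho)/2$, so the roles of $\gamma^g$ and $1-\gamma^g$ are swapped somewhere between the proposition statement and its verification. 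Your version, read as $\Delta V\leq-(1-\rho)V$ with contraction coefficient $\rho$, is a valid instance of the drift template in \Cref{assump:geom_erg} (indeed a tighter one), but you should either state that your coefficient differs from \eqref{eq:gamma_def_geom} or insert the extra factor of $\tfrac12$ to match the printed constant; as written, the sentence equating $\rho$ with $\gamma^g_{\theta,t}$ is false.
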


Having described all the terms explicitly, we verify the rest of the conditions of \Cref{assump:geom_erg}, which lead to uniform (over model class)  upper-bounds on the moments of hitting time to $0^d$ as follows:
\begin{enumerate}[leftmargin=*]
    \item  From \eqref{eq:gamma_def_geom}, $\sup_{\theta \in \Theta,t \in T^*} \gamma^g_{\theta,t}\leq 1/2<1$.
    \item From \eqref{eq:C_def_geom}, we can see that state $(0,0)$ belongs to $C^g_{\theta,t}$ for all $\theta \in \Theta$ and $t \in T^*$. In order for $C^g_*=\cup_{\theta \in \Theta,t \in T^*} C^g_{\theta,t}$ to be a finite set, the supremum of the optimal threshold $t({\theta})$ over $\Theta$ should be finite. In \cite{larsen1981control} with service rate vector $(\theta_1, \theta_2)$, it is shown that the optimal threshold is bounded above by $\sqrt{2}\theta_1/\theta_2$, which further gives
    \begin{equation}\label{eq:sup_t_KL}
       t({\theta}) \leq  \sqrt{2}\frac{\theta_1}{\theta_2} \leq \sqrt{2}R.
    \end{equation}
    Thus, $\sup_{\theta \in \Theta}t({\theta}) \leq \sqrt{2}R $, which is finite.  To confirm a uniform upper bound for $b^g_{\theta,t}$, we note that 
    from \eqref{eq:b_def_geom},
    \begin{equation*}
       \sup_{\theta \in \Theta,t \in T^*} b^g_{\theta,t} = \frac{2-\delta}{1-\delta}\max_{x \in  C^g_*}\exp(-\log (1-\delta)\norm{\boldsymbol x}_1),
    \end{equation*} which is finite as $|C^g_*|<\infty$.
\end{enumerate}

\textbf{Assumption 4.} To find an upper bound on the second moment of hitting times, we verify \Cref{assump:poly_erg} and show that there exists a finite set $C^p_{\theta,t}$, constants $\beta^p_{\theta,t}$, $b^p_{\theta,t}>0$, $ r/(r+1) \leq \alpha^p_{\theta,t}<1$, and a function $V^p_{\theta,t} :\X_t \rightarrow [1,+\infty)$ satisfying 
	\begin{equation}\label{eq:drift_poly _erg_lyap}
		\Delta V^p_{\theta,t}(\boldsymbol x) \leq -\beta^p_{\theta,t} \left(V_{\theta,t}^p 
(\boldsymbol x)\right)^{\alpha^p_{\theta,t}}+b^p_{\theta,t}\mathbb{I}_{C^p_{\theta,t}}(\boldsymbol x), \quad \boldsymbol  x\in \mathcal{X}_t.
\end{equation}

\begin{Proposition} \label{prop:KL_poly_erg}
The discrete-time Markov process obtained from the queueing system governed by parameter $\theta=(\theta_1,\theta_2) \in \Theta$ and following threshold policy $\pi_t$  for some $t \in T^*$ is polynomially ergodic. This is true because \eqref{eq:drift_poly _erg_lyap} holds for 
\begin{align}
&V_{\theta,t}^p(\boldsymbol x)=\norm{\boldsymbol x}_1^2,\label{eq:V_def_poly}\\
   \label{eq:C_def_poly} &C^p_{\theta,t}=\left\{(x_0,x_1,0): x_0<t \right\}  \cup \left\{(x_0,x_1,x_2): x_0<\frac{2\lambda}{\theta_1+\theta_2-\lambda}, x_1+x_2 \geq 1 \right\}, \\
       &b^p_{\theta,t}=\max_{\boldsymbol x \in  C^p_{\theta,t}}\left(\norm{\boldsymbol x}_1+1\right)^2 ),  \label{eq:b_def_poly} \\
    \label{eq:beta_def_poly}&\beta^p_{\theta,t}=1-\frac{2\lambda}{\theta_1+\theta_2+\lambda},\\
    \label{eq:alpha_def_poly}&\alpha^p_{\theta,t}=\frac{1}{2}.
\end{align}
\end{Proposition}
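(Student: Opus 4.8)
The plan is to verify the single polynomial drift inequality \eqref{eq:drift_poly _erg_lyap} directly for $V^p_{\theta,t}(\boldsymbol x)=\norm{\boldsymbol x}_1^2$ by computing the one-step drift of the uniformized chain and then invoking the Foster--Lyapunov criterion \eqref{eq:poly_erg} (with $\alpha=1/2=r/(r+1)$, since here $r=1$) to conclude polynomial ergodicity. First I would write down the uniformized transition probabilities: at each sampled instant exactly one of three events occurs, an arrival with probability $\lambda/\Lambda$, a server-$1$ event with probability $\theta_1/\Lambda$, or a server-$2$ event with probability $\theta_2/\Lambda$, where $\Lambda=\lambda+\theta_1+\theta_2$; a server event is a genuine departure when that server is busy after the threshold action is applied, and a dummy self-transition otherwise. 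Since assigning queued jobs to idle servers leaves the total count $n:=\norm{\boldsymbol x}_1$ unchanged, the only change in $n$ comes from the subsequent event, so the increment $\Delta n\in\{-1,0,+1\}$ with $\Pr(\Delta n=+1)=\lambda/\Lambda$ and $\Pr(\Delta n=-1)=(\theta_1\tilde x_1+\theta_2\tilde x_2)/\Lambda$, where $\tilde x_i$ is the post-action busy indicator of server $i$. Writing $V^p_{\theta,t}=n^2$, the drift decomposes as
\begin{equation*}
\Delta V^p_{\theta,t}(\boldsymbol x)=2n\,\E[\Delta n]+\E[(\Delta n)^2],\qquad \E[(\Delta n)^2]\leq 1.
\end{equation*}

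The crux is the following structural claim: for every $\boldsymbol x\in\mathcal{X}_t\setminus C^p_{\theta,t}$, the threshold policy $\pi_t$ leaves both servers busy after the action, so that $\tilde x_1=\tilde x_2=1$ and $\E[\Delta n]=(\lambda-\theta_1-\theta_2)/\Lambda=-\beta^p_{\theta,t}$. I would establish this by a short case analysis on the pre-action configuration, using two facts: (i) the reachable set $\mathcal{X}_t$ excludes the states $(i,0,0)$ with $i\geq\min(t,2)$ and the state $(0,1,1)$; and (ii) the policy sends a waiting job to server $1$ whenever $x_0\ge1,x_1=0$ and to server $2$ whenever $x_0\ge1,\norm{\boldsymbol x}_1\ge t+1,x_1=1,x_2=0$. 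Concretely, if $x_2=0$, then being outside the first component $\{(x_0,x_1,0):x_0<t\}$ of $C^p_{\theta,t}$ forces $x_0\ge t$, whence $x_1=1$ (else $\boldsymbol x=(x_0,0,0)\notin\mathcal{X}_t$) and the rule assigns the second server; if $x_2=1$, then either $x_1=1$ already, or $x_0\ge1$ triggers assignment to server $1$, the sole exception $(0,0,1)$ lying in $C^p_{\theta,t}$. In every surviving case both servers serve, giving $\E[(\Delta n)^2]=1$ and hence $\Delta V^p_{\theta,t}(\boldsymbol x)=-2\beta^p_{\theta,t}n+1$.

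Next I would upgrade $-2\beta^p_{\theta,t}n+1$ to the required $-\beta^p_{\theta,t}n=-\beta^p_{\theta,t}(V^p_{\theta,t})^{1/2}$, which needs $n\ge 1/\beta^p_{\theta,t}$ on $\mathcal{X}_t\setminus C^p_{\theta,t}$. Here the choice of the second component $\{x_0<2\lambda/(\theta_1+\theta_2-\lambda),\,x_1+x_2\ge1\}$ of $C^p_{\theta,t}$ is dictated by the identity $1/\beta^p_{\theta,t}=1+2\lambda/(\theta_1+\theta_2-\lambda)$. Being outside this component (together with the emptiness, within $\mathcal{X}_t$, of the "both idle" sub-case noted above, since such states already lie in the first component of $C^p_{\theta,t}$) forces $x_0\ge 2\lambda/(\theta_1+\theta_2-\lambda)$, while the preceding case analysis guarantees at least one of $x_1,x_2$ equals $1$; hence $n\ge x_0+1\ge 1/\beta^p_{\theta,t}$, so $\beta^p_{\theta,t}n\ge1$ and $\Delta V^p_{\theta,t}\le -\beta^p_{\theta,t}n$ off $C^p_{\theta,t}$. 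On the finite set $C^p_{\theta,t}$ the crude bound $\Delta V^p_{\theta,t}\le 2n+1$ (from $\Delta n\le 1$) is absorbed by $b^p_{\theta,t}=\max_{\boldsymbol x\in C^p_{\theta,t}}(\norm{\boldsymbol x}_1+1)^2$, since $2n+1+\beta^p_{\theta,t}n\le(n+1)^2\le b^p_{\theta,t}$ on $C^p_{\theta,t}$. This verifies \eqref{eq:drift_poly _erg_lyap}; finiteness of $C^p_{\theta,t}$ and $V^p_{\theta,t}\ge1$ away from the origin (with $0^d\in C^p_{\theta,t}$ handled by the indicator term) complete the Foster--Lyapunov hypotheses.

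The main obstacle is the structural claim of the second paragraph: one must verify carefully that "outside $C^p_{\theta,t}$" genuinely rules out every configuration in which a server sits idle while jobs wait, and this hinges on combining the policy's assignment rule with the exact list of unreachable states removed to form $\mathcal{X}_t$. The supporting algebra—the identity relating $1/\beta^p_{\theta,t}$ to the buffer size $2\lambda/(\theta_1+\theta_2-\lambda)$—is routine once the case analysis is correct, and the uniformity constants across the model class (notably finiteness of $\sup_{\theta\in\Theta}t(\theta)$ via \eqref{eq:sup_t_KL}) are already secured in the verification of Assumption~\ref{assump:poly_erg} that follows.
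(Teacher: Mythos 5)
Your proof is correct and takes essentially the same approach as the paper: a direct verification of the drift inequality for $V^p_{\theta,t}(\boldsymbol x)=\norm{\boldsymbol x}_1^2$ on the uniformized chain, with the set $C^p_{\theta,t}$ absorbing the states where either a server can idle or the linear term has not yet dominated the constant. The only difference is organizational — the paper expands $P^t_\theta V^p_{\theta,t}$ explicitly for each configuration $(i,0,1)$, $(i,1,0)$, $(i,1,1)$, whereas you unify these cases via the decomposition $\Delta V = 2n\,\E[\Delta n]+\E[(\Delta n)^2]$ together with the observation that both servers are busy post-action outside $C^p_{\theta,t}$ — and the resulting arithmetic (coefficient $2\tilde\lambda-1=-\beta^p_{\theta,t}$ of $n$, threshold $x_0\geq 2\lambda/(\theta_1+\theta_2-\lambda)$) matches the paper's exactly.
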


 Proof  of \Cref{prop:KL_poly_erg} is given in \Cref{subsec:proof_KL_poly_erg}. We define the normalized rates as $ \tilde \lambda=   \frac{\lambda}{\lambda+\theta_1+\theta_2}$ and $ \tilde \theta_i=   \frac{\theta_i}{\lambda+\theta_1+\theta_2}$, for $i=1,2$. From the choice of parameter space $\Theta$, we have $\tilde\lambda\leq 0.5-0.5\delta$, $\tilde\theta_1+\tilde\theta_2 \geq 0.5+0.5\delta$, and $\tilde \theta_1\geq 0.25+0.25\delta$.
We verify the remaining conditions of \Cref{assump:poly_erg} as follows:

\begin{enumerate}[leftmargin=*]
    \item From \eqref{eq:V_def_poly}, the first condition holds with $r_*^p=2$ and $s_*^p=2$. 
    \item From \eqref{eq:C_def_poly}, we can see that state $(0,0)$ belongs to $C^p_{\theta,t}$ for all $\theta \in \Theta$ and $t \in T^*$. Furthermore,
    \begin{equation*}
        \sup_{\theta \in \Theta,t \in T^*} \frac{2\lambda}{\theta_1+\theta_2-\lambda}\leq \frac{1-\delta}{\delta},
    \end{equation*} which follows from the stability condition $\tilde\lambda\leq0.5-0.5\delta$. Thus, from the definition of $C^p_{\theta,t}$ in \eqref{eq:C_def_poly}, and the fact that $\sup_{\theta \in \Theta}t({\theta}) \leq \sqrt{2}R $ as argued in  in \eqref{eq:sup_t_KL}, $C^p_*=\cup_{\theta \in \Theta,t \in T^*} C^p_{\theta,t}$   is a finite set. We also note that $ \sup_{\theta \in \Theta,t \in T^*} b^p_{\theta,t} $ is finite as $|C^p_*|<\infty$. It remains to show that $ \inf_{\theta \in \Theta,t \in T^*} \beta^p_{\theta,t}$ is positive, which is equivalent to verifying that $\sup_{\theta \in \Theta,t \in T^*} \tilde\lambda<1/2 $, which follows from the stability condition $\tilde\lambda\leq0.5-0.5\delta$.
    \item We need to show that $K_{\theta,t}(\boldsymbol x):= \sum_{n=0}^{\infty} 2^{-n-2}\left(P_{\theta}^{\pi_t}\right)^n(\boldsymbol x,0^d)$ is strictly bounded away from zero. We notice that from any non-zero state $\boldsymbol x$, the queueing system hits $0^d$ in $\norm{\boldsymbol x}_1$ transitions only if all transitions are real departures. Hence,
\begin{align*}
    K_{\theta,t}(\boldsymbol x) &\geq 2^{-\norm{\boldsymbol x}_1-2}\left(P_{\theta}^{\pi_t}\right)^{\norm{\boldsymbol x}_1}(\boldsymbol x,0^d)\\
    &\geq 2^{-\norm{\boldsymbol x}_1-2} \left(\tilde \theta_1 \right)^{\norm{\boldsymbol x}_1} \left(\tilde \theta_2\right)^{\norm{\boldsymbol x}_1} \\
    &\geq  2^{-\norm{\boldsymbol x}_1-2} R^{-\norm{\boldsymbol x}_1}  \left(\tilde \theta_1 \right)^{2\norm{\boldsymbol x}_1} \\
    &\geq  2^{-\norm{\boldsymbol x}_1-2} R^{-\norm{\boldsymbol x}_1}  \left(\frac{1}{4} +\frac{\delta}{4}\right)^{2\norm{\boldsymbol x}_1}, 
\end{align*} 
where the third and fourth inequalities follow from the definition of $\Theta$ in \eqref{eq:def_theta}. Thus, the infimum of  $ K_{\theta,t}(\boldsymbol x)$ over the finite set $C^p_*$ and sets $\Theta$ and $T^*$ is strictly greater than zero. 
\end{enumerate}

\textbf{Assumption 5.} We finally verify \Cref{assump:sup_J_V}, which asserts that $\sup_{\theta \in \Theta} J{(\theta)}$ is finite. We have
\begin{align*}
    J{(\theta)}={\E}_{\boldsymbol X \sim \mu_{\theta,t(\theta)} } \left[ c(\boldsymbol X)\right]={\E}_{\boldsymbol X \sim \mu_{\theta,t(\theta)} }\left[ \norm{\boldsymbol X}_1\right]={\E}_{\boldsymbol X \sim \mu_{\theta,t(\theta)} }\left[  \sqrt{V_{\theta,t(\theta)}^p\left( \boldsymbol X\right)}\right],  
\end{align*}where $\mu_{\theta,t(\theta)}$ is the stationary distribution of the discrete-time process governed by parameter $\theta$ and following the optimal policy according to $\theta$. From \eqref{eq:drift_poly _erg_lyap} and \Cref{eq:rem_stat_poly},
$$\mu_{\theta,t(\theta)}\left( \sqrt{V_{\theta,t(\theta)}^p\left( \boldsymbol X\right)}\right)\leq \frac{ b^p_*}{\beta_*^p}.$$
Thus,
\begin{equation*}
  \sup_{\theta \in \Theta}  J{(\theta)} \leq \frac{b^p_*}{\beta_*^p} <\infty.
\end{equation*}

{After verifying the assumptions of \Cref{thm:main}, we conclude that \Cref{alg:TSDE} can be used to learn the optimal policy with an $\tilde{O}(\sqrt{T})$ regret. 
}

\subsection{Model 2: Two heterogeneous parallel queues}\label{subsec:queue_2}

 We consider two parallel  queues with infinite buffers, each with its own single server, and unknown service rate  vector $\boldsymbol \theta^*=(\theta_1^*, \theta_2^*)$, shown in \Cref{fig:QS2}. The service rate vector $\boldsymbol \theta^*$ is sampled from the prior distribution $\nu_0$ defined on the space $\Theta$ given as
\begin{equation}\label{eq:PL_def_theta}
	\Theta= \left\{(\theta_1,\theta_2) \in {\R}_+^2: \frac{\lambda}{\theta_1+\theta_2} \leq \frac{1-\delta}{1+\delta}, 1 \leq \frac{\theta_1}{\theta_2} \leq R \right\},
\end{equation} for fixed $\delta \in (0,0.5)$ and $R\geq 1$, which ensures the stability of the queueing system. 
 Consider the discrete-time MDP $(\mathcal{X},\mathcal{A}, P_{\boldsymbol \theta^*},c)$ obtained by sampling the queueing system at the Poisson arrival sequence. The countably infinite state space $\mathcal{X}$ is defined as below
$$\mathcal{X}=\left\{\boldsymbol x =(x_1,x_2):x_i \in \mathbb{N} \cup \left\{0\right\}\right\},
$$ where the state of the system is the number of jobs in the server-queue pair $i$ just before an arrival.
Furthermore, the action space $\mathcal{A}$ is equal to 
$$\mathcal{A}=\{1,2\},
$$ where action $i \in \mathcal A$ indicates the arrival dispatched to queue $i$. 
The unbounded cost function $c: \mathcal X \times \mathcal A \rightarrow \mathbb N \cup \{0\}$ is defined as the total number of jobs in the queueing system, i.e., 
$c(\boldsymbol x,a)=\norm{\boldsymbol x}_1$. For every $\omega \in \R_+$, we
define policy $\pi_{\omega} : \X \rightarrow \mathcal A$, which routes the arrival according to the weighted queue lengths, as
$$\pi_{\omega} (\boldsymbol x)=\argmin \left(1+x_1,{\omega} \left(1+x_2\right)\right),
$$ where the tie is broken in favor of the first server.  
We also define policy class $\Pi$ as the set of policies $\pi_{\omega}$ such that ${\omega}$ belongs to a compact interval; in other words,
$$\Pi=\Big\{\pi_{\omega}; \, {\omega}\in \Big[\frac{1}{c_RR},c_RR\Big] \Big\},$$
where $R$ is defined in \eqref{eq:PL_def_theta} and $c_R \geq 1$.
We aim to minimize the infinite-horizon average cost in the policy class $\Pi$, that is, 
\begin{equation}\label{eq:PL_opt_J}
J(\theta)=\inf_{\pi\in \Pi} \limsup_{T\rightarrow\infty} \frac{1}{T}  \E \Big[ \sum_{t=1}^{T} c\left(\boldsymbol X\left(t\right),A\left(t\right)\right)\Big],
\end{equation}
where $\boldsymbol X(t)=\left(X_1(t),X_2(t)\right)$ is the occupancy vector of the queueing system just before arrival $t$.
Even with the controlled Markov process transition kernel fully-specified (by the values of the arrival rate and the two service rates), the optimal policy\footnote{When $\theta_1=\theta_2$, then the policy with ${\omega}=1$ (Join-the-Shortest-Queue) is the optimal policy~\cite{ephremides1980simple} for the underling MDP.} that satisfies \eqref{eq:PL_opt_J} in policy class $\Pi$ is not known except when $\theta_1=\theta_2$ where the optimal value is ${\omega}=1$, and so, to learn it, we will use Proximal Policy Optimization for countable state space controlled Markov processes as developed in \cite{dai2022queueing}. 
Note that \cite{dai2022queueing} requires full knowledge of the controlled Markov process, which holds in our learning scheme since we use the parameters sampled from the posterior for determining the policy at the beginning of each episode. Furthermore, for each policy in the set of applicable policies $\Pi$, \cite{dai2022queueing} also requires that the resulting Markov process be geometrically ergodic, which we will establish below.

	\begin{Proposition} \label{prop:geom_ergodic_pi_a}
The discrete-time Markov process obtained from the queueing system governed by parameter $\theta=(\theta_1,\theta_2) \in \Theta$ and following policy $\pi_{\omega} \in  \Pi$ is geometrically ergodic. Equivalently, 
the following holds
    \begin{equation}\label{eq:PL_geom_drift}
		\Delta V_{\theta,{\omega}}^g(\boldsymbol x) \leq -\left(1-\gamma^g_{\theta,{\omega}} \right)V_{\theta,{\omega}}^g(\boldsymbol x)+b^g_{\theta,{\omega}}\mathbb{I}_{C^g_{\theta,{\omega}}}(\boldsymbol x), \quad \boldsymbol x \in \mathcal{X},
	\end{equation} for 
 	\begin{align}
 &V_{\theta,{\omega}}^g(\boldsymbol x)=\frac{{\omega}}{{\omega}+1} \exp\left(a^g_{\theta,\omega}\frac{x_1+1}{{\omega}}\right)+\frac{1}{{\omega}+1}\exp\left(a^g_{\theta,\omega}\left(x_2+1\right)\right),\notag  \\
&a^g_{\theta,\omega}=\min\left( {\omega\log (1+\delta)},\log (1+\delta),{\omega\log  \frac{1-0.5\delta}{1-\delta} },\log  \frac{1-0.5\delta}{1-\delta} ,\frac{\delta(1-\delta^2)}{4c_R R(1-0.5\delta)}\right), \label{eq:PL_a_geom}\\
		&C^g_{\theta,{\omega}}= \left\{(x_1,x_2) \in \mathcal{X}: x_i \leq \max \left(x_{i,\theta,\omega}^{g_j},0\right), i,j=1,2\right\}, \label{eq:PL_C_geom}\\ 
  &b^g_{\theta,{\omega}}= \max_{\boldsymbol x \in C^g_{\theta,{\omega}}}\left( \frac{{2\omega}}{{\omega}+1} \exp\left(a^g_{\theta,\omega}\frac{x_1+2}{{\omega}}\right)+\frac{2}{{\omega}+1}\exp\left(a^g_{\theta,\omega}\left(x_2+2\right)\right)\right),\label{eq:PL_b_geom} \\
		&\gamma^g_{\theta,\omega}=  \frac{1}{2} +\frac{1}{2}\max\left(\zeta_{1,\theta,\omega},\zeta_{2,\theta,\omega},\frac{\zeta_{1,\theta,\omega}\omega}{1+\omega}  \exp\left(\frac{a^g_{\theta,\omega}}{\omega}\right) + \frac{\zeta_{2,\theta,\omega}}{1+\omega} ,\frac{\zeta_{1,\theta,\omega}\omega}{1+\omega} + \frac{\zeta_{2,\theta,\omega}}{1+\omega} \exp\left({a^g_{\theta,\omega}}\right)\right) \label{eq:PL_gamma_geom}, 
	\end{align} and problem-dependent constants $x_{i,\theta,\omega}^{g_j}$ and $\zeta_{i,\theta,\omega}$ for $i,j=1,2$. 
	\end{Proposition}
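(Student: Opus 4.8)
The plan is to establish the geometric Foster--Lyapunov drift \eqref{eq:PL_geom_drift} directly, by computing the one-step generator applied to the stated $V_{\theta,\omega}^g$, which is the convex combination (weights $\tfrac{\omega}{\omega+1}$ and $\tfrac{1}{\omega+1}$) of the two exponentials $u_1(x_1)=\exp(a^g_{\theta,\omega}(x_1+1)/\omega)$ and $u_2(x_2)=\exp(a^g_{\theta,\omega}(x_2+1))$. First I would describe the one-step transition of the chain sampled at arrivals: at state $\boldsymbol x$ the incoming job is routed by $\pi_\omega$, and then over the $\mathrm{Exp}(\lambda)$ inter-arrival interval each busy server completes a geometric number of jobs $D_i$ with marginal parameter $p_i=\lambda/(\lambda+\theta_i)$, so that $X_i'=(x_i+\mathbb{I}[\text{arrival to }i]-D_i)^+$. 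Since $V_{\theta,\omega}^g$ is additive, only the marginal law of each $D_i$ enters each term, so the coupling of $D_1,D_2$ through the shared inter-arrival time does not bite. Each coordinate then contributes a geometric moment-generating factor $\zeta_{i,\theta,\omega}=p_i\big(1-(1-p_i)e^{-a^g_{\theta,\omega}/\omega}\big)^{-1}$ for the first and the analogue with $e^{-a^g_{\theta,\omega}}$ for the second, while an arrival routed to queue $i$ multiplies that factor by $e^{a^g_{\theta,\omega}/\omega}$ (resp. $e^{a^g_{\theta,\omega}}$).

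The crucial observation, which ties the policy to the Lyapunov function, is that the routing rule is exactly the comparison of the two exponential components: $\pi_\omega(\boldsymbol x)=1 \iff 1+x_1\le \omega(1+x_2)\iff u_1(x_1)\le u_2(x_2)$, using $a^g_{\theta,\omega}>0$. Neglecting the $(\cdot)^+$ truncation (the large-state regime), when the arrival goes to queue $1$ one obtains $P_\theta^{\pi_\omega}V_{\theta,\omega}^g(\boldsymbol x)=\tfrac{\omega}{\omega+1}\zeta_{1,\theta,\omega}e^{a^g_{\theta,\omega}/\omega}u_1+\tfrac{1}{\omega+1}\zeta_{2,\theta,\omega}u_2$, whose ratio to $V_{\theta,\omega}^g(\boldsymbol x)$ is a convex combination of $\zeta_{1,\theta,\omega}e^{a^g_{\theta,\omega}/\omega}$ and $\zeta_{2,\theta,\omega}$. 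Maximizing that convex combination over the admissible region $u_1\le u_2$ yields $\max\{\zeta_{2,\theta,\omega},\ \tfrac{\omega}{\omega+1}\zeta_{1,\theta,\omega}e^{a^g_{\theta,\omega}/\omega}+\tfrac{1}{\omega+1}\zeta_{2,\theta,\omega}\}$, the second being attained at $u_1=u_2$; the symmetric computation in the queue-$2$ region gives $\max\{\zeta_{1,\theta,\omega},\ \tfrac{\omega}{\omega+1}\zeta_{1,\theta,\omega}+\tfrac{1}{\omega+1}\zeta_{2,\theta,\omega}e^{a^g_{\theta,\omega}}\}$. Taking the maximum over the two regions reproduces precisely the four terms inside the $\max$ in \eqref{eq:PL_gamma_geom}; I denote this quantity $M$.

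It then remains to (i) verify $M<1$ and (ii) control the truncation near the boundary. For (i), the definition of $a^g_{\theta,\omega}$ in \eqref{eq:PL_a_geom} is engineered so that each of the four terms is strictly below $1$: the constraints involving $\log(1+\delta)$ and $\log\tfrac{1-0.5\delta}{1-\delta}$ keep the factors $\zeta_{i,\theta,\omega}$ finite and contractive, while the last constraint $\tfrac{\delta(1-\delta^2)}{4c_RR(1-0.5\delta)}$ caps the arrival inflation so that the mixed terms stay below $1$; these are checked using the normalized-rate bounds $\tilde\lambda\le\tfrac12-\tfrac{\delta}{2}$, $\tilde\theta_1+\tilde\theta_2\ge\tfrac12+\tfrac{\delta}{2}$, and $\tilde\theta_1\ge\tfrac14+\tfrac{\delta}{4}$ implied by \eqref{eq:PL_def_theta}. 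For (ii), the event $\{D_i>x_i+\mathbb{I}[\cdot]\}$ on which the truncation is active contributes at most $u_i(0)$ times a geometric tail of order $(1-p_i)^{x_i}$, hence a bounded quantity that is \emph{exponentially small relative to} $u_i(x_i)$. Choosing the finite set $C^g_{\theta,\omega}$ of \eqref{eq:PL_C_geom} with thresholds $x^{g_j}_{i,\theta,\omega}$ large enough that this correction is dominated gives $P_\theta^{\pi_\omega}V_{\theta,\omega}^g(\boldsymbol x)\le M\,V_{\theta,\omega}^g(\boldsymbol x)+(\text{bounded correction})$ off $C^g_{\theta,\omega}$; since $V_{\theta,\omega}^g(\boldsymbol x)\to\infty$, the correction is absorbed into the slack $\gamma^g_{\theta,\omega}-M=\tfrac{1-M}{2}>0$, which is exactly why $\gamma^g_{\theta,\omega}$ is taken as $\tfrac12+\tfrac12 M$ in \eqref{eq:PL_gamma_geom}. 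On $C^g_{\theta,\omega}$ the one-step expectation is bounded by $b^g_{\theta,\omega}$ of \eqref{eq:PL_b_geom}, yielding \eqref{eq:PL_geom_drift}, whence geometric ergodicity follows from the equivalence recorded in \Cref{subsec:erg_not}.

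I expect the main obstacle to be the boundary/truncation bookkeeping in step (ii): making the thresholds $x^{g_j}_{i,\theta,\omega}$ explicit so that the bounded truncation correction is provably dominated by the geometric slack, and verifying $M<1$ \emph{uniformly} over $\theta\in\Theta$ and $\omega\in[\tfrac{1}{c_RR},c_RR]$ rather than pointwise. The alignment of the routing rule with the level sets of $V_{\theta,\omega}^g$ and the convex-combination-bounded-by-its-maximum argument form the conceptually clean core, whereas the parameter chase confirming that the five constraints in \eqref{eq:PL_a_geom} collectively force $M<1$ is the tedious part.
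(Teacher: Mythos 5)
Your proposal is correct and follows essentially the same route as the paper's proof: the same two-exponential Lyapunov function, the same per-coordinate geometric moment-generating factors $\zeta_{i,\theta,\omega}$ arising from the departure distribution over an inter-arrival interval, the same absorption of the boundary/truncation terms into a bounded correction dominated outside a finite set, and the same role of the constraints defining $a^g_{\theta,\omega}$ in forcing the contraction factor below $1$. Your "convex combination maximized at $u_1=u_2$" framing is a slightly cleaner way to see where the four terms in \eqref{eq:PL_gamma_geom} come from, but it is the same argument the paper carries out via the inequality $\exp(a^g_{\theta,\omega}(x_1+1)/\omega)\le\exp(a^g_{\theta,\omega}(x_2+1))$ on the routing region.
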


Proof of \Cref{prop:geom_ergodic_pi_a} is given in \Cref{subsec:proof_geom_ergodic_pi_a}. In the rest of this subsection, our aim is to show that Assumptions 1-5 are satisfied for the discrete-time MDP and conclude that \Cref{alg:TSDE} can be used to learn the unknown service rate vector $\boldsymbol \theta^*$ with expected regret of order $\tilde O(\sqrt{|\mathcal A|T})$.

\textbf{Assumption 1.} Cost function is given as $c(\boldsymbol x,a)=\norm{\boldsymbol x}_1$, which satisfies \Cref{assump:costfn} with $f_c(\boldsymbol x)=x_0+x_1+x_2$ and $K=r=1$.

\textbf{Assumption 2.} For any state-action pair $(\boldsymbol x,a)$ and $\theta \in \Theta$, we have $P_\theta(A(\boldsymbol x);\boldsymbol x,a)=0$ where $A(\boldsymbol x)=\{\boldsymbol y \in \mathcal X: {\norm{\boldsymbol y}}_1-{\norm{\boldsymbol x}}_1 >1 \}$; thus, the MDP is skip-free to the right with  $h=1$. Moreover, from any $(\boldsymbol x,a)$, the finite set  $\{\boldsymbol y \in \mathcal X: {\norm{\boldsymbol y}}_1\leq {\norm{\boldsymbol x}}_1 +1 \}$ is only accessible in one step; thus, \Cref{assump:skipfrt} holds.

\textbf{Assumption 3.} In \Cref{prop:geom_ergodic_pi_a}, we verified the geometric ergodicity of the discrete-time chain governed by parameter $\theta=(\theta_1,\theta_2) \in \Theta$ and following policy $\pi_{\omega} \in \Pi$ and thus, it only remains to verify the uniform model conditions. We define the normalized rates as $ \tilde \lambda=   \frac{\lambda}{\lambda+\theta_1+\theta_2}$ and $ \tilde \theta_i=   \frac{\theta_i}{\lambda+\theta_1+\theta_2}$, for $i=1,2$. From the choice of parameter space $\Theta$, we have $\tilde\lambda\leq 0.5-0.5\delta$, $\tilde\theta_1+\tilde\theta_2 \geq 0.5+0.5\delta$, and $\tilde\theta_1\geq 0.25+0.25\delta$.

\begin{enumerate}[leftmargin=*]
    \item 
   We first argue that $\zeta_{1,\theta,\omega}$ is bounded away from $1$ as follows
    \begin{align*}
      1-  \zeta_{1,\theta,\omega}&=1-\frac{\frac{\lambda}{\theta_1+\lambda}  }{1-\exp\left(-\frac{a^g_{\theta,\omega}}{\omega}\right) \frac{\theta_1}{\theta_1+\lambda}}
      = \frac{\frac{\theta_1}{\theta_1+\lambda} \left(1-\exp\left(-\frac{a^g_{\theta,\omega}}{\omega}\right)\right)  }{1-\exp\left(-\frac{a^g_{\theta,\omega}}{\omega}\right) \frac{\theta_1}{\theta_1+\lambda}}\\ 
      &\geq {\frac{\theta_1}{\theta_1+\lambda} \left(1-\exp\left(-\frac{a^g_{\theta,(c_RR )^{-1}}}{c_R R}\right)\right)  } 
      > \tilde\theta_1 \left(1-\exp\left(-\frac{a^g_{\theta,(c_RR )^{-1}}}{c_R R}\right)\right)  \\ 
        &>(0.25+0.25\delta)\left(1-\exp\left(-\frac{a^g_{\theta,(c_RR )^{-1}}}{c_R R}\right)\right),
    \end{align*} where the first line follows from the definition of $\zeta_{1,\theta,\omega}$ in \Cref{subsec:proof_geom_ergodic_pi_a}, the second line from \eqref{eq:PL_a_geom} and the definition of  policy class $ \Pi$. As $a^g_{\theta,\omega}$ does not depend on $\theta$, $\sup_{{\theta \in \Theta, \omega \in [\frac{1}{c_RR},c_RR] }}\zeta_{1,\theta,\omega}<1$. Furthermore, by similar arguments it can be shown that $\zeta_{2,\theta,\omega}$ is bounded away from $1$.  We next argue that $\frac{\zeta_{1,\theta,\omega}\omega}{1+\omega}  \exp\left(\frac{a^g_{\theta,\omega}}{\omega}\right) + \frac{\zeta_{2,\theta,\omega}}{1+\omega}$ is bounded away from 1 using an upper bound found in \Cref{subsec:proof_geom_ergodic_pi_a} as below,
\begin{align}\label{eq:pl_unif_gamma}
    &1-\frac{\zeta_{1,\theta,\omega}\omega}{1+\omega}  \exp\left(\frac{a^g_{\theta,\omega}}{\omega}\right) - \frac{\zeta_{2,\theta,\omega}}{1+\omega}  \notag\\
    &\geq 	1-\frac{\frac{\lambda}{1+\omega}\left({\omega}+{a^g_{\theta,\omega} \zeta_4}\right)}{\lambda+\frac{ \theta_1a^g_{\theta,\omega} \zeta_3}{\omega}}-\frac{  \frac{\lambda}{1+\omega}}{\lambda+\theta_2a^g_{\theta,\omega} \zeta_3} \notag\\
    &= \frac{	a^g_{\theta,\omega}\left(-a^g_{\theta,\omega}  \zeta_3 \theta_2 \left(\lambda \zeta_4 -  \frac{\zeta_3\theta_1(1+\omega)}{\omega}\right)+\lambda \zeta_3 \left(\theta_1+\theta_2\right)-\lambda^2 \zeta_4\right)}{(1+\omega)(\lambda+{ \theta_1a^g_{\theta,\omega} \zeta_3}{\omega^{-1}})(\lambda+\theta_2a^g_{\theta,\omega} \zeta_3)}\notag\\
        &>  \frac{	(\zeta_3a^g_{\theta,\omega})^2\; \tilde \theta_1\tilde \theta_2}{\omega(\tilde\lambda+{ \tilde\theta_1a^g_{\theta,\omega} \zeta_3}{\omega^{-1}})(\tilde\lambda+\tilde\theta_2a^g_{\theta,\omega} \zeta_3)}\notag\\
       &>  \frac{(\zeta_3 a^g_{\theta,(c_RR )^{-1}})^2(0.25+0.25\delta)^2}{c_RR^2(1+{ c_RR\zeta_3a^g_{\theta,c_RR } })^2},
\end{align} 
where $\zeta_3=(1+\delta)^{-1}$, $\zeta_4=\frac{1-0.5\delta}{1-\delta}$, and we have used the arguments of \Cref{subsec:proof_geom_ergodic_pi_a} and the definition of $\Theta$.
Using a similar argument, we can show that $\frac{\zeta_{1,\theta,\omega}\omega}{1+\omega} + \frac{\zeta_{2,\theta,\omega}}{1+\omega} \exp\left({a^g_{\theta,\omega}}\right) $ is bounded away from one, and finally, we conclude that $\sup_{{\theta \in \Theta, \omega \in [\frac{1}{c_RR},c_RR] }}\gamma^g_{\theta,\omega}<1$.

    \item From \eqref{eq:PL_C_geom}, we can see that state $(0,0)$ belongs to $C^g_{\theta,\omega}$ for all $\theta \in \Theta$ and $\omega \in [\frac{1}{c_RR},c_RR] $. In order for $C^g_*$ to be a finite set, the supremum of $x_{i,\theta,\omega}^{g_j}$ over $\Theta$ and $\Pi$  should be finite. From the definition of $x_{1,\theta,\omega}^{g_1}$ in \Cref{subsec:proof_geom_ergodic_pi_a},
    \begin{align*}
        x_{1,\theta,\omega}^{g_1}&=\frac{\omega}{a^g_{\theta,\omega}}\log \frac{ (c_R R+1)\exp(c_R R a^g_{\theta,\omega})}{(\omega+1)\gamma^g_{\theta,\omega}-\omega \zeta_{1,\theta,\omega} \exp\left(\frac{a^g_{\theta,\omega}}{\omega}\right)-\zeta_{2,\theta,\omega}} \\
        &\leq\frac{c_RR}{a^g_{\theta,(c_RR )^{-1}}}\log \frac{ (c_R R+1)\exp(c_R R a^g_{\theta,c_RR })}{(\omega+1)\gamma^g_{\theta,\omega}-\omega \zeta_{1,\theta,\omega} \exp\left(\frac{a^g_{\theta,\omega}}{\omega}\right)-\zeta_{2,\theta,\omega}},
    \end{align*} and we can derive a lower bound for the denominator from \eqref{eq:pl_unif_gamma}. Similarly, we can show that the supremum of $x_{2,\theta,\omega}^{g_2}$ over $\theta \in \Theta$ and $ \omega \in [\frac{1}{c_RR},c_RR]$ is finite. We next find a uniform upper bound for $ x_{2,\theta,\omega}^{g_1}$ from \Cref{subsec:proof_geom_ergodic_pi_a},
    \begin{align*}
     &x_{2,\theta,\omega}^{g_1}\\ &=   \frac{1}{a^g_{\theta,\omega}}\log  \frac{(c_R R+1)\exp(c_R R a^g_{\theta,\omega})+\omega \exp\left(a^g_{\theta,\omega}\frac{x_{1,\theta,\omega}^{g_1}+1}{\omega}\right)  \left(\zeta_{1,\theta,\omega} \exp\left(\frac{a^g_{\theta,\omega}}{\omega}\right) -\gamma^g_{\theta,\omega}\right)   }{\gamma^g_{\theta,\omega}-\zeta_{2,\theta,\omega}} \\
    & \leq \frac{1}{a^g_{\theta,(c_RR )^{-1}}} \log \frac{ (2c_R R+1)\exp\left(c_R R a^g_{\theta,c_RR}  \left(x_{1,\theta,\omega}^{g_1}+2\right)\right)}{1-\gamma^g_{\theta,\omega}}, 
    \end{align*}which is uniformly bounded as $\gamma^g_{\theta,\omega}$ is unformly bounded away from 1 and the second line follows from \eqref{eq:PL_gamma_geom} and the fact that $\gamma^g_{\theta,\omega}-\zeta_{2,\theta,\omega} \geq 1-\gamma^g_{\theta,\omega}$. Arguments verifying the finiteness of the supremum of $ x_{1,\theta,\omega}^{g_2}$ follow similarly, and we conclude that $|C^g_*|<\infty$. To confirm a uniform upper bound for $b^g_{\theta,\omega}$, we note that 
    from \eqref{eq:PL_b_geom},
    \begin{equation*}
      \sup_{{\theta \in \Theta, \omega \in [\frac{1}{c_RR},c_RR] }}  b^g_{\theta,\omega} \leq \max_{x \in  C^g_*}\left(2 \exp\left(c_R R a^g_{\theta,c_RR} (x_1+2)\right)+2\exp\left(a^g_{\theta,c_RR}\left(x_2+2\right)\right)\right),
    \end{equation*} which is finite as $a^g_{\theta,c_RR}$ is independent of the choice of $\theta$ and $|C^g_*|<\infty$.
\end{enumerate}

\textbf{Assumption 4.} We next verify \Cref{assump:poly_erg} and show that there exists a finite set $C^p_{\theta,\omega}$, constants $\beta^p_{\theta,\omega}$, $b^p_{\theta,\omega}>0$, $ r/(r+1) \leq \alpha^p_{\theta,\omega}<1$, and a function $V^p_{\theta,\omega} :\X\rightarrow [1,+\infty)$ satisfying 
	\begin{equation}\label{eq:pl_drift_poly _erg_lyap}
		\Delta V^p_{\theta,\omega}(\boldsymbol x) \leq -\beta^p_{\theta,\omega} \left(V_{\theta,\omega}^p 
(\boldsymbol x)\right)^{\alpha^p_{\theta,\omega}}+b^p_{\theta,t}\mathbb{I}_{C^p_{\theta,\omega}}(\boldsymbol x), \quad \boldsymbol  x\in \mathcal{X}.
\end{equation}

	\begin{Proposition} \label{prop:pl_ergodic_pi_a}
The discrete-time Markov process obtained from the queueing system governed by parameter $\theta=(\theta_1,\theta_2) \in \Theta$ and following policy $\pi_{\omega} \in  \Pi$ is polynomially ergodic. This follow because \eqref{eq:pl_drift_poly _erg_lyap} holds for 
 	\begin{align}
 &V_{\theta,{\omega}}^p(\boldsymbol x)=\frac{x_1^2}{\omega} +{x_2^2},\label{eq:PL_V_def_poly}\\
&C^p_{\theta,{\omega}}= \left\{(x_1,x_2) \in \mathcal{X}: x_i \leq \left(16c_R^2R^{3-i}+101c_RR\right)\frac{\lambda+\theta_i}{\theta_i}, i=1,2\right\}, \label{eq:PL_C_poly}\\ 
&\beta^p_{\theta,\omega}= \min\left( \frac{\theta_2}{2(\theta_2+\lambda)\sqrt{\omega+1}},\frac{\theta_1+\theta_2-\lambda}{(\theta_1+\theta_2+\lambda)\sqrt{\omega+1}} , \frac{\theta_2}{2(\theta_2+\lambda)},  \frac{\theta_1}{2(\theta_1+\lambda)\sqrt\omega}\right) \label{eq:PL_beta_poly}, \\
&b^p_{\theta,{\omega}}=(\beta^p_{\theta,\omega}+1)\max_{\boldsymbol x \in  C^p_{\theta,\omega}}\left(\frac{(x_1+1)^2}{\omega} +{(x_2+1)^2}\right),\label{eq:PL_b_poly} \\
    \label{eq:pl_alpha_def_poly}&\alpha^p_{\theta,\omega}=\frac{1}{2}.
	\end{align} 
	\end{Proposition}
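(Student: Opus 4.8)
The objective is to establish the polynomial Foster--Lyapunov inequality of \Cref{prop:pl_ergodic_pi_a} for the weighted quadratic $V^p_{\theta,\omega}(\boldsymbol x)=x_1^2/\omega+x_2^2$ in \eqref{eq:PL_V_def_poly} with $\alpha^p_{\theta,\omega}=1/2$. Since $\bigl(V^p_{\theta,\omega}\bigr)^{1/2}$ grows only linearly in $\boldsymbol x$ while $V^p_{\theta,\omega}$ is quadratic, the plan is the classical ``negative mean drift of a quadratic'' argument. First I would write the one-step drift (with $\Delta X_i=X_i(t+1)-x_i$) as
\[
\Delta V^p_{\theta,\omega}(\boldsymbol x)=\tfrac1\omega\Bigl(2x_1\,\E[\Delta X_1\mid\boldsymbol x]+\E[(\Delta X_1)^2\mid\boldsymbol x]\Bigr)+\Bigl(2x_2\,\E[\Delta X_2\mid\boldsymbol x]+\E[(\Delta X_2)^2\mid\boldsymbol x]\Bigr),
\]
and reduce everything to bounding the first two conditional moments of the per-queue increments.

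To obtain these moments I would unfold one step of the arrival-embedded chain: at $\boldsymbol x$ the arrival is routed by $a=\pi_\omega(\boldsymbol x)$ to the queue of smaller weighted length, producing the post-arrival state $\boldsymbol x+e_a$, after which the arrival-free service dynamics run for an independent $\mathrm{Exp}(\lambda)$ inter-arrival interval and each queue sheds $D_i$ jobs. Using the competing-exponential structure, while both servers are busy the number of service completions before the next arrival is geometric with mean $(\theta_1+\theta_2)/\lambda$, a fraction $\theta_i/(\theta_1+\theta_2)$ of which are from queue $i$; hence $\E[D_i\mid\boldsymbol x]=\theta_i/\lambda$ whenever queue $i$ stays non-empty, and $D_i$ has geometric tails so that $\E[(\Delta X_i)^2\mid\boldsymbol x]$ is bounded by a constant depending only on the (normalized, hence bounded) rates. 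Thus $\E[\Delta X_i\mid\boldsymbol x]=\mathbb{I}\{a=i\}-\E[D_i\mid\boldsymbol x]$, taking the clean value $\theta_i/\lambda$ in the interior and a smaller value only when queue $i$ can empty mid-interval.

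The central step is the region-by-region sign analysis, which is exactly where the several terms in the minimum defining $\beta^p_{\theta,\omega}$ in \eqref{eq:PL_beta_poly} and the thresholds in $C^p_{\theta,\omega}$ in \eqref{eq:PL_C_poly} come from. Consider the region where the arrival goes to queue $1$, i.e.\ $1+x_1\le\omega(1+x_2)$. The non-receiving queue $2$ then has $\E[\Delta X_2\mid\boldsymbol x]\approx-\theta_2/\lambda<0$, while queue $1$ contributes $2x_1(1-\theta_1/\lambda)/\omega$, which may be \emph{positive} if $\theta_1<\lambda$. The crucial observation is that the routing constraint $x_1\le\omega(1+x_2)$ lets me trade the possibly-positive queue-$1$ term against queue $2$: substituting this bound produces a net coefficient proportional to $1-(\theta_1+\theta_2)/\lambda<0$ by the stability condition $\lambda<\theta_1+\theta_2$ (this is the origin of the factor $\theta_1+\theta_2-\lambda$ in \eqref{eq:PL_beta_poly}), giving negative drift of order $-x_2$. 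The symmetric region is handled identically. In either region, if the ``large'' coordinate is in fact small, then the routing constraint forces the whole state into a bounded box, which is absorbed into the finite set $C^p_{\theta,\omega}$ through the additive $b^p_{\theta,\omega}\mathbb{I}_{C^p_{\theta,\omega}}$ term; the explicit thresholds in \eqref{eq:PL_C_poly} are the smallest boxes outside of which the above inequalities force strict negativity. Finally I would convert the resulting linear lower bound into the required $-\beta^p_{\theta,\omega}\bigl(V^p_{\theta,\omega}\bigr)^{1/2}$ form via elementary norm equivalences, which is the source of the $\sqrt{\omega+1}$ normalizations in $\beta^p_{\theta,\omega}$.

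The main obstacle is not the interior mean-drift computation but the uniform control of the coupled, boundary-reflected departure process $(D_1,D_2)$ over a random inter-arrival interval: the identity $\E[D_i]=\theta_i/\lambda$ degrades once a queue can empty, the two queues interact through the shared arrival-versus-service race, and one must keep enough quantitative control near the switching curve $1+x_1=\omega(1+x_2)$ to guarantee negativity uniformly in $\theta\in\Theta$ and $\omega\in[(c_RR)^{-1},c_RR]$. I would reuse the competing-exponential and geometric-tail estimates already developed for the exponential test function in \Cref{prop:geom_ergodic_pi_a}, specializing them to the quadratic $V^p_{\theta,\omega}$, and mirror the structure of the analogous Model~1 argument in \Cref{prop:KL_poly_erg}. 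Once the per-coordinate moment bounds hold uniformly, the case analysis together with the admissibility check $\alpha^p_{\theta,\omega}=1/2\ge r/(r+1)=1/2$ (with equality, as $r=1$) required by \Cref{assump:poly_erg} closes the proof.
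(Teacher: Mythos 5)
Your proposal is correct and follows essentially the same route as the paper's proof: the separable quadratic $V^p_{\theta,\omega}$ with $\alpha^p_{\theta,\omega}=1/2$, per-coordinate first and second moments of the departure count over an exponential inter-arrival interval (the paper evaluates these exactly via the marginal transition probabilities $\frac{\lambda}{\theta_i+\lambda}\bigl(\frac{\theta_i}{\theta_i+\lambda}\bigr)^{x_i-x_i'}$, which is precisely your geometric-tail computation in closed form), the routing constraint $1+x_1\le\omega(1+x_2)$ traded against the stability condition to produce the $\theta_1+\theta_2-\lambda$ term, and absorption of the remaining bounded region into $C^p_{\theta,\omega}$. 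The only bookkeeping you elide is the paper's additional subcase split on whether the receiving queue's own drift coefficient $2-2\theta_1/\lambda$ is already negative (its $d_1\ge 0.8$ versus $d_1<0.8$ cases), which is where the extra entries in the minimum defining $\beta^p_{\theta,\omega}$ arise, but this is a refinement of, not a departure from, your argument.
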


Proof of \Cref{prop:pl_ergodic_pi_a} is given in \Cref{subsec:proof_pl_ergodic_pi_a}. Next, we verify the remaining conditions of \Cref{assump:poly_erg}.

\begin{enumerate}[leftmargin=*]
    \item From \eqref{eq:PL_V_def_poly} and the fact that ${\omega}\in [\frac{1}{c_RR},c_RR] $, the first condition holds with 
    \begin{align*}
        r_*^p=2 \quad \text{ and } \quad s_*^p=\sup_{{\theta \in \Theta, \omega \in [\frac{1}{c_RR},c_RR] }}s_{\theta,\omega}= c_RR+1.
    \end{align*}
    \item From \eqref{eq:PL_C_poly}, state $(0,0)$ belongs to $C^p_{\theta,\omega}$ for all $\theta \in \Theta$ and $ \omega \in [\frac{1}{c_RR},c_RR]$. Furthermore, for $i=1,2$,
    \begin{equation}\label{eq:sup_theta_ratio}
       \sup_{{\theta \in \Theta, \omega \in [\frac{1}{c_RR},c_RR] }}\frac{\lambda+\theta_i}{\theta_i} \leq  \sup_{{\theta \in \Theta }}\frac{1}{\tilde\theta_2}\leq  \sup_{{\theta \in \Theta }}\frac{R}{\tilde\theta_1}\leq  \frac{4R}{1+\delta},
    \end{equation} 
    which follows from the fact that $\theta_1 \leq R\theta_2 $ and $\tilde\theta_1\geq 0.25+0.25\delta$. Thus, from the definition of $C^p_{\theta,\omega}$ in \eqref{eq:PL_C_poly}, $C^p_*=\cup_{\theta \in \Theta, \omega \in [\frac{1}{c_RR},c_RR] } C^p_{\theta,\omega}$   is a finite set.  We next verify that the infimum of $\beta^p_{\theta,\omega}$, found in \eqref{eq:PL_beta_poly}, is positive. In \eqref{eq:sup_theta_ratio}, we showed that infimum of $\frac{\lambda+\theta_i}{\theta_i}$ over $\Theta$ is lower bounded by $\frac{1+\delta}{4}$. From this, the fact that $\omega$ belongs to a compact set, and $\theta_1+\theta_2+\lambda \geq \delta$, it follows that $\inf_{{\theta \in \Theta, \omega \in [\frac{1}{c_RR},c_RR] }} \beta^p_{\theta,\omega} >0 $. Furthermore, it is easy to see that $\beta^p_{\theta,\omega} \leq \sqrt{c_R R}$. Hence, from \eqref{eq:PL_b_poly},
    \begin{align*}
     \sup_{{\theta \in \Theta, \omega \in [\frac{1}{c_RR},c_RR] }}  b^p_{\theta,{\omega}} &= \sup_{{\theta \in \Theta, \omega \in [\frac{1}{c_RR},c_RR] }}  (\beta^p_{\theta,\omega}+1)\max_{\boldsymbol x \in  C^p_{\theta,\omega}}\left(\frac{(x_1+1)^2}{\omega} +{(x_2+1)^2}\right)  \\
     & \leq (\sqrt{c_R R}+1)\max_{\boldsymbol x \in  C^p_*} \left( C_RR(x_1+1)^2+(x_2+1)^2\right),
    \end{align*} which is finite as  $|C^p_*|<\infty$.
        \item We need to show that $K_{\theta,\omega}(\boldsymbol x):= \sum_{n=0}^{\infty} 2^{-n-2}\left(P_{\theta}^{\pi_\omega}\right)^n(\boldsymbol x,0^d)$ is strictly bounded away from zero.  We show this using the fact that from any state $\boldsymbol x$, the queueing system hits $(0,0)$ in one step with positive probability. Take $x_{i,\theta,\omega}=\max_{\boldsymbol x\in C_{\theta,\omega}} x_i$ for $i=1,2$. 
    We have 
\begin{align*}
  \inf_{{\theta \in \Theta, \omega \in [\frac{1}{c_RR},c_RR] }}  \min_{\boldsymbol x\in C_{\theta,\omega}} K(\boldsymbol x) &\geq  \inf_{{\theta \in \Theta, \omega \in [\frac{1}{c_RR},c_RR] }} \min_{\boldsymbol x\in C_{\theta,\omega}} P(\boldsymbol x,0^d) \\ &\geq  \inf_{{\theta \in \Theta, \omega \in [\frac{1}{c_RR},c_RR] }} P\left(\left(x_{1,\theta,\omega},x_{2,\theta,\omega}\right),0^d\right).
\end{align*} The infimum in the right-hand side of the above equation is attained for the minimum normalized service rates possible for each server, or
     $\tilde \theta_1=\frac{1+\delta}{4}$ and $\tilde  \theta_2=\frac{1+\delta}{4R}$.
  Therefore, the infimum of  $ K_{\theta,\omega}(\boldsymbol x)$ over the finite set $C^p_*$, $\Theta$, and interval $[\frac{1}{c_RR},c_RR]$  is strictly greater than zero.
\end{enumerate}

\textbf{Assumption 5.}  We finally verify that $\sup_{{\theta \in \Theta}} J(\theta) $ is finite. Denoting the optimal ${\omega}\in [\frac{1}{c_RR},c_RR] $ according to $\theta$ by $\omega(\theta)$, for $\boldsymbol x=(x_1,x_2)$ we  get
\begin{equation*}
    (x_1+x_2)^2 \leq 2\max(\omega(\theta),1) \left( \frac{x_1^2}{\omega(\theta)} +{x_2^2}\right)=2\max(\omega(\theta),1)V_{\theta,\omega(\theta)}^p\left( \boldsymbol x\right).
\end{equation*}
From the above equation, 
\begin{align*}
   J{(\theta)}&={\E}_{\boldsymbol X \sim \mu_{\theta,\omega(\theta)} } \left[ c(\boldsymbol X)\right]\\&={\E}_{\boldsymbol X \sim \mu_{\theta,\omega(\theta)} }\left[ \norm{\boldsymbol X}_1\right]\\&\leq \sqrt{2\max(\omega(\theta),1) }{\E}_{\boldsymbol X \sim \mu_{\theta,\omega(\theta)} }\left[ \sqrt{V_{\theta,\omega(\theta)}^p\left( \boldsymbol X\right)}\right],  
   \end{align*}
where $\mu_{\theta,\omega(\theta)}$ is the stationary distribution of the discrete-time process governed by parameter $\theta$ and following the best in-class policy according to $\theta$, shown by $\pi_{\omega(\theta)}$.  From \Cref{eq:rem_stat_poly},
$$\mu_{\theta,\omega(\theta)}\left( \sqrt{V_{\theta,\omega(\theta)}^p\left( \boldsymbol X\right)}\right)\leq \frac{b_*^p}{\beta_*^p}.$$
Thus,
\begin{equation*}
  \sup_{\theta \in \Theta}  J{(\theta)} \leq \frac{\sqrt{2c_RR}b_*^p}{\beta_*^p} <\infty.
\end{equation*}
{After verifying the assumptions of \Cref{cor:main}, we conclude that \Cref{alg:TSDE} can be used to learn the best-in-class policy while incurring a regret bound of $\tilde{O}(\sqrt{T})$. We note that in order to implement \Cref{alg:TSDE}, for every $\theta \in \Theta$, we need to know the optimal $\omega \in \Big[\frac{1}{c_RR},c_RR\Big]$ that satisfies $\eqref{eq:PL_opt_J}$. However, as mentioned before, this optimal value is not known, and we use the PPO algorithm to approximate it. \Cref{thm:Q2_eps_optimal} provides us with a performance guarantee for the case of \Cref{alg:TSDE} computing the $\epsilon$-optimal policy instead of the best-in-class policy. Furthermore, for the queueing model of \Cref{fig:QS2}, as a result of  Propositions \ref{prop:geom_ergodic_pi_a} and \ref{prop:pl_ergodic_pi_a},  every $\epsilon$-optimal policy is geometrically and polynomially ergodic with ergodicity parameters satisfying Assumptions \ref{assump:geom_erg} and \ref{assump:poly_erg}. Using this observation and \Cref{thm:Q2_eps_optimal},
we can conclude that the algorithm employing $\epsilon_k$-optimal policy,  instead of the best-in-class policy, incurs an $\tilde{O}(\sqrt{T})$ regret, when $\epsilon_k \leq \frac{1}{k+1}$.}

\section{Proofs related to the queueing model examples}


\subsection{Proof of \Cref{prop:KL_geom_erg}}\label{subsec:proof_KL_geom_erg}

 \begin{proof}
We define the normalized rates as
\begin{align}\label{eq:normalized_rate}
 \tilde \lambda=   \frac{\lambda}{\lambda+\theta_1+\theta_2}, \quad \tilde \theta_i=   \frac{\theta_i}{\lambda+\theta_1+\theta_2},
\end{align} for $i=1,2$. From the choice of parameter space $\Theta$, we have $\tilde\lambda\leq 0.5-0.5\delta$, $\theta_1+\theta_2 \geq 0.5+0.5\delta$, and $\theta_1\geq 0.25+0.25\delta$.
To prove geometric ergodicity, from the discussions of \Cref{subsec:Preliminaries}, it suffices to show that there exists a finite set $C^g_{\theta,t}$, constants $b^g_{\theta,t}>0$, $\gamma^g_{\theta,t}\in (0,1)$, and a function $V_{\theta,t}^{g} :\X_t \rightarrow [1,+\infty)$ satisfying 
    \begin{equation} \label{eq:drift_geom}
		\Delta V_{\theta,t}^g(\boldsymbol x) \leq -\left(1-\gamma^g_{\theta,t} \right)V_{\theta,t}^g(\boldsymbol x)+b^g_{\theta,t}\mathbb{I}_{C^g_{\theta,t}}(\boldsymbol x), \quad \boldsymbol x \in \mathcal{X}_t. 
	\end{equation}
Take $ V_{\theta,t}^g(\boldsymbol x)=\exp(a^g_{\theta,t}\norm{\boldsymbol x}_1)$ for some $a^g_{\theta,t}>0$.  
For $i \geq 1$ and $\boldsymbol x=(i,1,1)$,
\begin{equation*}
	P^t_{\theta}V_{\theta,t}^g(i,1,1)= \tilde  \lambda V_{\theta,t}^g(i+1,1,1)+ \tilde \theta_1 V_{\theta,t}^g(i,0,1)+ \tilde \theta_2 V_{\theta,t}^g(i,1,0),
\end{equation*} where $P^t_{\theta}$ is the corresponding transition kernel. Thus,
\begin{align*}
	&P^t_{\theta}V_{\theta,t}^g(i,1,1)-(1-\gamma^g_{\theta,t} )V_{\theta,t}^g(i,1,1) \\&= \tilde \lambda \exp\left(a^g_{\theta,t}\left(i+3\right)\right)+( \tilde \theta_1+ \tilde \theta_2)\exp\left(a^g_{\theta,t}\left(i+1\right)\right)-(1-\gamma^g_{\theta,t})\exp\left(a^g_{\theta,t}\left(i+2\right)\right)  \\
	& =\exp\left(a^g_{\theta,t}\left(i+1\right)\right)\left( \tilde \lambda \exp(2a^g_{\theta,t})+ \tilde \theta_1+ \tilde \theta_2-(1-\gamma^g_{\theta,t}) \exp(a^g_{\theta,t})
	\right).
\end{align*} Take $\tilde{a}_{\theta,t}=\exp(a^g_{\theta,t})$. We need to find $\tilde{a}_{\theta,t}> 1$ and $0<\gamma^g_{\theta,t}<1$ such that 
\begin{equation}\label{eq:cond_geom_erg}
	 \tilde \lambda \tilde{a}_{\theta,t}^2-(1-\gamma^g_{\theta,t}) \tilde{a}_{\theta,t}+ \tilde \theta_1+ \tilde \theta_2<0.
\end{equation}
	Take $\tilde{a}_{\theta,t}=\left(1-\delta\right)^{-1} >1$ and 
 \begin{equation*}
     \tilde{\gamma}_{\theta,t}:=1-\gamma^g_{\theta,t}=\frac{1}{2}\left(1+(1-\tilde\lambda)(1-\delta)+\tilde \lambda\left(1-\delta\right)^{-1}\right).
 \end{equation*}
We need to have $\tilde{\gamma}_{\theta,t}<1$ which follows from the stability condition $\tilde\lambda\leq0.5-0.5\delta$ as below:
\begin{align*}\label{eq:sum_beta}
\tilde{\gamma}_{\theta,t}&=\frac{1}{2}+\frac{1}{2}\left( (1-\tilde\lambda)(1-\delta)+\frac{\tilde  \lambda}{1-\delta}  \right)   
        =\frac{1}{2}+\frac{1}{2}\left(1-\delta-\tilde\lambda(1-\delta)+\frac{\tilde\lambda}{1-\delta}  \right)  \notag \\
        &=\frac{1}{2}+\frac{1}{2}\left(1-\delta+\tilde\lambda\frac{1-(1-\delta)^2}{1-\delta}  \right) 
         =\frac{1}{2}+\frac{1}{2}\left(1-\delta+\tilde\lambda\frac{\delta(2-\delta)}{1-\delta}  \right)   \notag \\
         &\leq \frac{1}{2}+\frac{1}{2}\left(1-\delta+\frac{\delta(2-\delta)}{2}  \right)
         =1-\frac{\delta^2}{4} <1.\notag
\end{align*}
 We now verify \eqref{eq:cond_geom_erg}:
\begin{align*}
    \tilde \lambda \tilde{a}_{\theta,t}^2-(1-\gamma^g_{\theta,t}) \tilde{a}_{\theta,t}+ \tilde \theta_1+ \tilde \theta_2
    &=\frac{\tilde\lambda}{(1-\delta)^2}-\frac{1}{2(1-\delta)}-\frac{1-\tilde\lambda}{2}-\frac{\tilde\lambda}{2(1-\delta)^2}+1-\tilde\lambda\\
    &=\frac{\tilde\lambda}{2(1-\delta)^2}+\frac{1-\tilde\lambda}{2}-\frac{1}{2(1-\delta)} \\
    &=\frac{1}{2(1-\delta)^2} \left( \tilde\lambda+(1-\tilde\lambda)(1-\delta)^2-(1-\delta)\right) \\
    &= \frac{\delta}{2(1-\delta)^2}\left( \delta-1-\tilde\lambda\delta+2\tilde\lambda \right)\\
    &= \frac{\delta}{2(1-\delta)^2}\left(\tilde \lambda\left(2-\delta\right)+\delta-1 \right) \\
   & <0,
\end{align*} where the last line follows from $\tilde\lambda\leq0.5-0.5\delta < (1-\delta)/\left(2-\delta\right)$. 

For $\boldsymbol x=(i,0,1)$ and $i \geq 1$, we have 
\begin{equation*}
	P^t_{\theta}V_{\theta,t}^g(i,0,1)= \tilde  \lambda V_{\theta,t}^g(i,1,1)+ \tilde \theta_1 V_{\theta,t}^g(i-1,0,1)+ \tilde \theta_2 V_{\theta,t}^g(i-1,1,0),
\end{equation*} and
\begin{align*}
	&P^t_{\theta}V_{\theta,t}^g(i,0,1)-(1-\gamma^g_{\theta,t} )V_{\theta,t}^g(i,0,1) \\&= \tilde \lambda \exp\left(a^g_{\theta,t}\left(i+2\right)\right)+( \tilde \theta_1+ \tilde \theta_2)\exp\left(a^g_{\theta,t}i\right)-(1-\gamma^g_{\theta,t})\exp\left(a^g_{\theta,t}\left(i+1\right)\right)  \\
	& =\exp\left(a^g_{\theta,t}i\right)\left( \tilde \lambda \exp(2a^g_{\theta,t})+ \tilde \theta_1+ \tilde \theta_2-(1-\gamma^g_{\theta,t}) \exp(a^g_{\theta,t})
	\right),
\end{align*} which results in the same conditions as previously discussed. When $\boldsymbol x=(i,1,0)$ and $i \geq t$ also same argument holds. 

Finally, \eqref{eq:drift_geom} holds for 
\begin{align*}
 C^g_{\theta,t}&=\left\{(x_0,x_1,0): x_0<t \right\}  \cup \{(0,0,1)\}, \\
    a^g_{\theta,t}&=-\log (1-\delta),\\
\gamma^g_{\theta,t}&=\frac{1}{2}-\frac{1}{2}\left((1-\tilde\lambda)(1-\delta)+\tilde \lambda\left(1-\delta\right)^{-1}\right),\\
    V_{\theta,t}^g(\boldsymbol x)&=\exp(a^g_{\theta,t}\norm{\boldsymbol x}_1),\notag\\
    b^g_{\theta,t}&=\max_{\boldsymbol x \in  C^g_{\theta,t}}\exp(a^g_{\theta,t}\norm{\boldsymbol x}_1) \left(\exp(a^g_{\theta,t})+1\right), 
\end{align*} where the last line holds because $PV_{\theta,t}^g(\boldsymbol x)\leq V_{\theta,t}^g(\boldsymbol y)$ for $\boldsymbol y$ such that $\norm{\boldsymbol y}_1=\norm{\boldsymbol x}_1+1$.
\end{proof}	


\subsection{Proof of \Cref{prop:KL_poly_erg}}\label{subsec:proof_KL_poly_erg}

\begin{proof} In order to show polynomially ergodicity, we will verify  \eqref{eq:drift_poly _erg_lyap}. We define $ V^p_{\theta,t}(\boldsymbol  x)=\norm{\boldsymbol x}_1^2$ and $\alpha^p_{\theta,t}=1/2$, which is equal to $r/(r+1)$ for $r=1$; $r$ is defined in \Cref{assump:costfn}. For $\boldsymbol x=(i,0,1)$ and $i \geq 1$,
\begin{equation*}
	P^t_{\theta}V_{\theta,t}^p(i,0,1)= \tilde  \lambda V_{\theta,t}^p(i,1,1)+ \tilde \theta_1 V_{\theta,t}^p(i-1,0,1)+ \tilde \theta_2 V_{\theta,t}^p(i-1,1,0),
\end{equation*} in which $\tilde \lambda$, $\tilde\theta_1$, and $\tilde\theta_2$ are the normalized rates defined in \eqref{eq:normalized_rate}.
Thus,
\begin{align*}
&P^t_{\theta}V_{\theta,t}^p(i,0,1)-V_{\theta,t}^p(i,0,1)+\beta^p_{\theta,t} \sqrt{V_{\theta,t}^p(i,0,1)}\\ 
&=\tilde\lambda (i+2)^2+(\tilde\theta_1+\tilde\theta_2)i^2-(i+1)^2 +\beta^p_{\theta,t} (i+1) \\
& = i(4\tilde\lambda-2+\beta^p_{\theta,t})+4\tilde\lambda-1+\beta^p_{\theta,t}.
\end{align*}
For $\beta^p_{\theta,t}=1-2\tilde\lambda$, the right-hand side of above equation is non-positive for $i\geq \frac{2\tilde\lambda}{1-2\tilde\lambda}$. For $\boldsymbol x=(i,1,0)$ and $i \geq t$,
\begin{equation*}
P^t_{\theta}V_{\theta,t}^p(i,1,0)= \tilde  \lambda V_{\theta,t}^p(i,1,1)+ \tilde \theta_1 V_{\theta,t}^p(i-1,0,1)+ \tilde \theta_2 V_{\theta,t}^p(i-1,1,0). 
\end{equation*} Thus,
\begin{align*}
	&P^t_{\theta}V_{\theta,t}^p(i,1,0)-V_{\theta,t}^p(i,1,0)+\beta^p_{\theta,t} \sqrt{V_{\theta,t}^p(i,1,0)}\\
 &=\tilde\lambda (i+2)^2+(\tilde\theta_1+\tilde\theta_2)i^2-(i+1)^2  +\beta^p_{\theta,t}(i+1) \\
& = i(4\tilde\lambda-2+\beta^p_{\theta,t})+4\tilde\lambda-1+\beta^p_{\theta,t},
\end{align*}
which is also non-positive under the same conditions as the previous case. For $i \geq 1$ and $\boldsymbol x=(i,1,1)$,
\begin{equation*}
P^t_{\theta}V_{\theta,t}^p(i,1,1)=\tilde  \lambda V_{\theta,t}^p(i+1,1,1)+\tilde \theta_1 V_{\theta,t}^p(i,0,1)+ \tilde \theta_2 V_{\theta,t}^p(i,1,0). 
\end{equation*} Thus,
\begin{align*}
&P^t_{\theta}V_{\theta,t}^p(i,1,1)-V_{\theta,t}^p(i,1,1)+\beta^p_{\theta,t} \sqrt{V_{\theta,t}^p(i,1,1)}\\&=\tilde\lambda (i+3)^2+(\tilde\theta_1+\tilde\theta_2)(i+1)^2-(i+2)^2  +\beta^p_{\theta,t}(i+2) \\
	& = i(4\tilde\lambda-2+\beta^p_{\theta,t} )+8\tilde\lambda-3+2\beta^p_{\theta,t} ,
\end{align*} which is non-positive under the same conditions as the first case.  Finally, \eqref{eq:drift_poly _erg_lyap} holds for 
\begin{align*}
    C^p_{\theta,t}&=\left\{(x_0,x_1,0): x_0<t \right\}  \cup \left\{(x_0,x_1,x_2): x_0<\frac{2\tilde\lambda}{1-2\tilde\lambda}, x_1+x_2 \geq 1 \right\}, \\
   \beta^p_{\theta,t}&=1-2\tilde\lambda,\\
   \alpha^p_{\theta,t}&=\frac{1}{2},\\
    V_{\theta,t}^p(\boldsymbol x)&=\norm{\boldsymbol x}_1^2,\\
    b^p_{\theta,t}&=\max_{\boldsymbol x \in  C^p_{\theta,t}}\left(\norm{\boldsymbol x}_1+1\right)^2 ), 
\end{align*} where the last line holds because $PV_{\theta,t}^p(\boldsymbol x)\leq V_{\theta,t}^p(\boldsymbol y)$ for $\boldsymbol y$ such that $\norm{\boldsymbol y}_1=\norm{\boldsymbol x}_1+1$.
\end{proof}


\subsection{Proof of \Cref{prop:geom_ergodic_pi_a}}\label{subsec:proof_geom_ergodic_pi_a}

\begin{proof}
	To show geometric ergodicity of the chain that follows $\pi_{\omega}$, we verify  \eqref{eq:PL_geom_drift}. 
	Take $a^g_{\theta,\omega}>0$ and
	\begin{equation} \label{eq:pl_def_v_g}
		V_{\theta,{\omega}}^g(\boldsymbol x)=\frac{{\omega}}{{\omega}+1} \exp\left(a^g_{\theta,\omega}\frac{x_1+1}{{\omega}}\right)+\frac{1}{{\omega}+1}\exp\left(a^g_{\theta,\omega}\left(x_2+1\right)\right).
	\end{equation}
	 First, we find $PV_{\theta,{\omega}}^g(\boldsymbol x)$ for the function defined above. We have
	\begin{equation}\label{eq:PL_PV}
		PV_{\theta,{\omega}}^g(\boldsymbol x)={\E}_{\boldsymbol x} ^{\pi_{\omega}}\left[\frac{{\omega}}{{\omega}+1} \exp\left(a^g_{\theta,\omega}\frac{ X_1(2)+1}{{\omega}}\right)\right]+{\E}_{\boldsymbol x} ^{\pi_{\omega}}\left[\frac{1}{\omega+1}\exp\left(a^g_{\theta,\omega}\left(X_2(2)+1\right) \right)\right],
	\end{equation} where  $\boldsymbol X(2)=\left(X_1(2),X_2(2)\right)$ is the state of the system at the second arrival, starting from state $\boldsymbol x$.
	To find the above expectations, we first find the corresponding transition probabilities. If the  number of departures from  server $i$ during a fixed interval with length $t$ is less than the total number of jobs in the queue of that server, the  number of departures follows a Poisson distribution with parameter $\theta_it$. Let $	\Pr \left(\left(x_1,x_2\right)\rightarrow \left(x'_1,\mathcal{X}\right)\right)$ be the probability of transitioning from a system with $x_i$ jobs in server-queue pair $i$ (just after the assignment of the arrival) to a queueing system with $x'_1$ jobs in the first server-queue pair  (just before the upcoming arrival). 
	For $1 \leq x'_1 \leq  x_1$, we have 
	\begin{align}\label{eq:tran_prob_1}
		\Pr \left(\left(x_1,x_2\right)\rightarrow \left(x'_1,\mathcal{X}\right)\right)= \int_{0}^{\infty} \lambda \exp(-\lambda t) \frac{(\theta_1 t )^{x_1-x'_1}}{(x_1-x'_1)!} \exp(-\theta_1 t)\; dt 
		= \frac{\lambda}{\theta_1+\lambda} \left(\frac{\theta_1}{\theta_1+\lambda} \right)^{x_1-x'_1},
	\end{align}
	and
	\begin{equation}\label{eq:tran_prob_2}
		\Pr \left(\left(x_1,x_2\right)\rightarrow \left(0,\mathcal{X}\right)\right)= 1-\sum_{i=1}^{x_1}  \frac{\lambda}{\theta_1+\lambda} \left(\frac{\theta_1}{\theta_1+\lambda} \right)^{x_1-i}= \left(\frac{\theta_1}{\theta_1+\lambda} \right)^{x_1}.
	\end{equation}
	Assume $ 1+x_1\leq \omega(1+x_2)$, which results in the new arrival being assigned to the first server. For the first term in \eqref{eq:PL_PV}, we have 
	\begin{align} \label{eq:exp_X1}
		&{\E}_{\boldsymbol x} ^{\pi_{\omega}}\left[ \exp\left(a^g_{\theta,\omega}\frac{ X_1(2)}{{\omega}}\right)\right] \notag\\
  &=\sum_{i=0}^{x_1+1}\Pr \left(\left(x_1+1,x_2\right)\rightarrow \left(i,\mathcal{X}\right)\right) \exp\left(a^g_{\theta,\omega}\frac{i}{\omega}\right)\nonumber \\
		&=\left(\frac{\theta_1}{\theta_1+\lambda} \right)^{x_1+1}+\sum_{i=1}^{x_1+1}\exp\left(a^g_{\theta,\omega}\frac{i}{\omega}\right)\frac{\lambda}{\theta_1+\lambda} \left(\frac{\theta_1}{\theta_1+\lambda} \right)^{x_1+1-i} \nonumber\\
		&= \left(\frac{\theta_1}{\theta_1+\lambda} \right)^{x_1+1}+\frac{\lambda}{\theta_1+\lambda}\exp\left(a^g_{\theta,\omega}\frac{x_1+1}{\omega}\right) \frac{1-\exp\left(-a^g_{\theta,\omega}\frac{x_1+1}{\omega}\right) \left(\frac{\theta_1}{\theta_1+\lambda} \right)^{x_1+1 } }{1-\exp\left(-\frac{a^g_{\theta,\omega}}{\omega}\right) \frac{\theta_1}{\theta_1+\lambda}}, \notag \\
  &<  \left(\frac{\theta_1}{\theta_1+\lambda} \right)^{x_1+1}+\frac{\lambda}{\theta_1+\lambda}\exp\left(a^g_{\theta,\omega}\frac{x_1+1}{\omega}\right) \frac{1 }{1-\exp\left(-\frac{a^g_{\theta,\omega}}{\omega}\right) \frac{\theta_1}{\theta_1+\lambda}}. 
	\end{align}
	Similarly, for the second term in \eqref{eq:PL_PV}, we have 
	\begin{equation} \label{eq:exp_X2}
		{\E}_{\boldsymbol x} ^{\pi_{\omega}}\left[\exp\left(a^g_{\theta,\omega}X_2(2) \right)\right]\leq  \left(\frac{\theta_2}{\theta_2+\lambda} \right)^{x_2}+\frac{\lambda}{\theta_2+\lambda}\exp\left(a^g_{\theta,\omega}{x_2}\right) \frac{1 }{1-\exp\left(-{a^g_{\theta,\omega}}\right) \frac{\theta_2}{\theta_2+\lambda}}.
	\end{equation}
	To satisfy \eqref{eq:PL_geom_drift}, for some $0<\gamma^g_{\theta,\omega}<1$ and all but finitely many $\boldsymbol x$, the following should hold,
	\begin{equation*}
		PV_{\theta,{\omega}}^g(\boldsymbol x)\leq \gamma^g_{\theta,\omega}V_{\theta,{\omega}}^g(\boldsymbol x),
	\end{equation*}
	or from \eqref{eq:pl_def_v_g} and \eqref{eq:PL_PV},
	\begin{align*}
		&{\E}_{\boldsymbol x} ^{\pi_{\omega}}\left[\omega \exp\left(a^g_{\theta,\omega}\frac{ X_1(2)+1}{{\omega}}\right)\right]+{\E}_{\boldsymbol x} ^{\pi_{\omega}}\left[\exp\left(a^g_{\theta,\omega}\left(X_2(2)+1\right) \right)\right] \\ 
  &\leq \gamma^g_{\theta,\omega}\left(\omega\exp\left(a^g_{\theta,\omega}\frac{x_1+1}{{\omega}}\right)+\exp\left(a^g_{\theta,\omega}\left(x_2+1\right)\right)\right).
	\end{align*}
 Notice that 
  \begin{equation*}
\omega  \left(\frac{\theta_1}{\theta_1+\lambda} \right)^{x_1+1}+ \left(\frac{\theta_2}{\theta_2+\lambda} \right)^{x_2} \leq c_R R+1.
 \end{equation*}
	From \eqref{eq:exp_X1} and \eqref{eq:exp_X2}, it suffices to have
	\begin{align}\label{eq:beta_inequality}
		&(c_R R+1)\exp(c_R R a^g_{\theta,\omega})+\frac{\omega \frac{\lambda}{\theta_1+\lambda}\exp\left(a^g_{\theta,\omega}\frac{x_1+2}{\omega}\right)  }{1-\exp\left(-\frac{a^g_{\theta,\omega}}{\omega}\right) \frac{\theta_1}{\theta_1+\lambda}}+\frac{ \frac{\lambda}{\theta_2+\lambda} \exp\left(a^g_{\theta,\omega}\left(x_2+1\right)\right)  }{1-\exp\left(-{a^g_{\theta,\omega}}\right) \frac{\theta_2}{\theta_2+\lambda}}\notag \\
		&\leq  \gamma^g_{\theta,\omega}\left(\omega\exp\left(a^g_{\theta,\omega}\frac{x_1+1}{{\omega}}\right)+\exp\left(a^g_{\theta,\omega}\left(x_2+1\right)\right)\right).  
	\end{align}
Define 
\begin{align*}
	\zeta_{1,\theta,\omega}=\frac{\frac{\lambda}{\theta_1+\lambda}  }{1-\exp\left(-\frac{a^g_{\theta,\omega}}{\omega}\right) \frac{\theta_1}{\theta_1+\lambda}},\qquad 	&\zeta_{2,\theta,\omega}=\frac{\frac{\lambda}{\theta_2+\lambda}  }{1-\exp\left(-{a^g_{\theta,\omega}}\right) \frac{\theta_2}{\theta_2+\lambda}}.
\end{align*}
	Simplifying \eqref{eq:beta_inequality}, we need the following to hold
	\begin{align}\label{eq:final_eq}
		&(c_R R+1)\exp(c_R R a^g_{\theta,\omega})+\omega \exp\left(a^g_{\theta,\omega}\frac{x_1+1}{\omega}\right)  \left(\zeta_{1,\theta,\omega} \exp\left(\frac{a^g_{\theta,\omega}}{\omega}\right) -\gamma^g_{\theta,\omega}\right)   \notag\\
		&+ \exp\left(a^g_{\theta,\omega}\left(x_2+1\right)\right) \left(\zeta_{2,\theta,\omega}-\gamma^g_{\theta,\omega} \right) \leq 0.
	\end{align}
	As $\zeta_{i,\theta,\omega}<1$,  there exists $\gamma^g_{\theta,\omega}$ such that 
	\begin{equation*} 
	\zeta_{2,\theta,\omega}<\gamma^g_{\theta,\omega} <1.
	\end{equation*}
 From the assumption  $ 1+x_1\leq \omega(1+x_2)$ and the above equation, \eqref{eq:final_eq} can be further simplified as
 \begin{align}\label{eq:final_eq_2}
     (c_R R+1)\exp(c_R R a^g_{\theta,\omega})+ \exp\left(a^g_{\theta,\omega}\frac{x_1+1}{\omega}\right) \left(\omega \zeta_{1,\theta,\omega} \exp\left(\frac{a^g_{\theta,\omega}}{\omega}\right)+\zeta_{2,\theta,\omega} -(\omega+1)\gamma^g_{\theta,\omega}
     \right) \leq 0.
 \end{align}
 For the above to hold outside a finite set, we need to have
	\begin{equation} \label{eq:neg_obj}
		\frac{\zeta_{1,\theta,\omega}\omega}{1+\omega}  \exp\left(\frac{a^g_{\theta,\omega}}{\omega}\right) + \frac{\zeta_{2,\theta,\omega}}{1+\omega} <\gamma^g_{\theta,\omega}.
	\end{equation} 
Define
\begin{align}	\label{eq:def_c_d}
\zeta_3= \frac{1}{1+\delta},  \quad \zeta_4 = \frac{1-0.5\delta}{1-\delta}  .
\end{align}
Note that $\zeta_3<1$ and $\zeta_4  >1$.
Defining function $f(y):=1+\zeta_4y-\exp(y)$, we note that for $y \leq \log \zeta_4$, $f(y) >0$, where $\log \zeta_4$ is the maximizer of $f(y)$. Similarly, taking $g(y):=1-\zeta_3y-\exp(-y)$, for $y \leq -\log \zeta_3$, $g(y) >0$, where $-\log \zeta_3$ is the maximizer of $g(y)$. Thus, we conclude that 
for $a^g_{\theta,\omega} \leq \min\left(- {\omega\log \zeta_3},- {\log \zeta_3},{\omega\log \zeta_4}\right)$,
	\begin{align}
	\exp(-y) &\leq 1-\zeta_3y \quad \text{holds for } \quad y \leq \max\left(\frac{a^g_{\theta,\omega}}{\omega},a^g_{\theta,\omega}\right),   \label{eq:exp_bounds_1}\\
	\exp(y) &\leq 1+\zeta_4y \quad \text{holds for } \quad y\leq\frac{a^g_{\theta,\omega}}{\omega}. \label{eq:exp_bounds_2}
\end{align}
To guarantee the existence of $0<\gamma^g_{\theta,\omega}<1$ that satisfies \eqref{eq:neg_obj}, we need to ensure the left-hand side of \eqref{eq:neg_obj} is strictly less than 1. Using the bounds found in \eqref{eq:exp_bounds_1} and \eqref{eq:exp_bounds_2} and the definition of $\zeta_{1,\theta,\omega}$ and $\zeta_{2,\theta,\omega}$, we simplify  \eqref{eq:neg_obj} to get
\begin{align*} 
	\frac{\frac{\lambda}{1+\omega}\left({\omega}+{a^g_{\theta,\omega} \zeta_4}\right)}{\lambda+\frac{ \theta_1a^g_{\theta,\omega} \zeta_3}{\omega}}+\frac{  \frac{\lambda}{1+\omega}}{\lambda+\theta_2a^g_{\theta,\omega} \zeta_3} <1 ,
\end{align*}
which is equivalent to 
\begin{equation} \label{eq:alpha_c_d_satisfy}
	a^g_{\theta,\omega}  \zeta_3 \theta_2 \left(\lambda \zeta_4 -  \frac{\zeta_3\theta_1(1+\omega)}{\omega}\right)< \lambda \zeta_3 \left(\theta_1+\theta_2\right)-\lambda^2 \zeta_4.
 \end{equation}
To make sure there exists $a^g_{\theta,\omega}>0$  that satisfies \eqref{eq:alpha_c_d_satisfy},  the right-hand side of \eqref{eq:alpha_c_d_satisfy} needs to be  positive, which follows as below:
\begin{align}\label{eq:alpha_1}
   \lambda \zeta_3 \left(\theta_1+\theta_2\right)-\lambda^2 \zeta_4&= \lambda \left( \frac{\theta_1+\theta_2}{1+\delta}-\lambda\frac{1-0.5\delta}{1-\delta}   \right)\notag \\
    &= \lambda(\theta_1+\theta_2+\lambda) \left( \frac{1-\tilde \lambda}{1+\delta}-\tilde\lambda\frac{1-0.5\delta}{1-\delta}   \right)\notag \\
        &= \lambda(\theta_1+\theta_2+\lambda) \left( \frac{1}{1+\delta}-\tilde\lambda \left(\frac{1}{1+\delta}+\frac{1-0.5\delta}{1-\delta}  \right) \right)\notag \\
           &\geq \lambda(\theta_1+\theta_2+\lambda) \left( \frac{1}{1+\delta}-\frac{1-\delta}{2} \left(\frac{1}{1+\delta}+\frac{1-0.5\delta}{1-\delta}  \right) \right) \notag\\
           &= \frac{\delta}{4}\lambda(\theta_1+\theta_2+\lambda)
\end{align} 
where $\tilde \lambda$, $\tilde \theta_1$, and $\tilde \theta_2$ are the normalized rates defined in \eqref{eq:normalized_rate} and we have used the stability condition $\tilde\lambda\leq0.5-0.5\delta$. We further simplify the left-hand side of \eqref{eq:alpha_c_d_satisfy} as
\begin{align*}\label{eq:alpha_2}
    \zeta_3 \theta_2 \left(\lambda \zeta_4 -  \frac{\zeta_3\theta_1(1+\omega)}{\omega}\right)<   \theta_2 \lambda \zeta_3  \zeta_4\leq  
    \frac{1-0.5\delta}{1-\delta^2} (\theta_1+\theta_2+\lambda)\tilde\theta_2 \lambda <   \frac{1-0.5\delta}{1-\delta^2} (\theta_1+\theta_2+\lambda) \lambda.
\end{align*}
From the above equation and \eqref{eq:alpha_1}, $a^g_{\theta,\omega}$ needs to satisfy
\begin{equation*}
    a^g_{\theta,\omega} \leq \frac{\delta(1-\delta^2)}{8(1-0.5\delta)}.
\end{equation*}
Finally,  we take $a^g_{\theta,\omega}$ as
	\begin{equation*}
		a^g_{\theta,\omega}=\min\left(- {\omega\log \zeta_3},- {\log \zeta_3},{\omega\log \zeta_4},\frac{\delta(1-\delta^2)}{8(1-0.5\delta)}\right) .
	\end{equation*}
	After finding an appropriate $a^g_{\theta,\omega}$, we can choose $0<\gamma^g_{\theta,\omega}<1$ such that \eqref{eq:neg_obj} holds or
 $$\gamma^g_{\theta,\omega}\geq  \frac{1}{2} \left(1+\frac{\zeta_{1,\theta,\omega}\omega}{1+\omega}  \exp\left(\frac{a^g_{\theta,\omega}}{\omega}\right) + \frac{\zeta_{2,\theta,\omega}}{1+\omega} \right). 
 $$
Moreover, from \eqref{eq:final_eq_2} a lower bound $x_{1,\theta,\omega}^{g_1}$ for $x_1$ is derived; In other words,\eqref{eq:final_eq_2} holds for $x_1 > x_{1,\theta,\omega}^{g_1}$. From \eqref{eq:final_eq}, we can find the corresponding $x_{2,\theta,\omega}^{g_1}$ and  take  $\boldsymbol x_{\theta,\omega}^{g_1}=(x_{1,\theta,\omega}^{g_1},x_{2,\theta,\omega}^{g_1})$. By repeating the same arguments when $ 1+x_1 < \omega(1+x_2)$,  we finally conclude that 
	\begin{equation*} 
				\Delta V_{\theta,{\omega}}^g(\boldsymbol x) \leq -\left(1-\gamma^g_{\theta,{\omega}} \right)V_{\theta,{\omega}}^g(\boldsymbol x)+b^g_{\theta,{\omega}}\mathbb{I}_{C^g_{\theta,{\omega}}}(\boldsymbol x), \quad \boldsymbol x \in \mathcal{X},
	\end{equation*}
	for 
	\begin{align*}
 &V_{\theta,{\omega}}^g(\boldsymbol x)=\frac{{\omega}}{{\omega}+1} \exp\left(a^g_{\theta,\omega}\frac{x_1+1}{{\omega}}\right)+\frac{1}{{\omega}+1}\exp\left(a^g_{\theta,\omega}\left(x_2+1\right)\right), \\
&a^g_{\theta,\omega}=\min\left( {\omega\log (1+\delta)},\log (1+\delta),{\omega\log  \frac{1-0.5\delta}{1-\delta} },\log  \frac{1-0.5\delta}{1-\delta} ,\frac{\delta(1-\delta^2)}{4c_R R(1-0.5\delta)}\right),\\
		&C^g_{\theta,{\omega}}=\{(x_1,x_2) \in \mathcal{X}: x_i \leq \max \left(x_{i,\theta,\omega}^{g_j},0\right), i,j=1,2\}, \\ 
		&\gamma^g_{\theta,\omega}=  \frac{1}{2} +\frac{1}{2}\max\left(\zeta_{1,\theta,\omega},\zeta_{2,\theta,\omega},\frac{\zeta_{1,\theta,\omega}\omega}{1+\omega}  \exp\left(\frac{a^g_{\theta,\omega}}{\omega}\right) + \frac{\zeta_{2,\theta,\omega}}{1+\omega} ,\frac{\zeta_{1,\theta,\omega}\omega}{1+\omega} + \frac{\zeta_{2,\theta,\omega}}{1+\omega} \exp\left({a^g_{\theta,\omega}}\right)\right) , \\
   		&b^g_{\theta,{\omega}}= \max_{\boldsymbol x \in C^g_{\theta,{\omega}}}\left( \frac{{2\omega}}{{\omega}+1} \exp\left(a^g_{\theta,\omega}\frac{x_1+2}{{\omega}}\right)+\frac{2}{{\omega}+1}\exp\left(a^g_{\theta,\omega}\left(x_2+2\right)\right)\right), \\
     	&\zeta_{1,\theta,\omega}=\frac{\frac{\lambda}{\theta_1+\lambda}  }{1-\exp\left(-\frac{a^g_{\theta,\omega}}{\omega}\right) \frac{\theta_1}{\theta_1+\lambda}},\\ 	&\zeta_{2,\theta,\omega}=\frac{\frac{\lambda}{\theta_2+\lambda}  }{1-\exp\left(-{a^g_{\theta,\omega}}\right) \frac{\theta_2}{\theta_2+\lambda}},\\
      &x_{1,\theta,\omega}^{g_1}=\frac{\omega}{a^g_{\theta,\omega}}\log \frac{ (c_R R+1)\exp(c_R R a^g_{\theta,\omega})}{(\omega+1)\gamma^g_{\theta,\omega}-\omega \zeta_{1,\theta,\omega} \exp\left(\frac{a^g_{\theta,\omega}}{\omega}\right)-\zeta_{2,\theta,\omega}}, \\
       &x_{2,\theta,\omega}^{g_1}=\frac{1}{a^g_{\theta,\omega}}\log  \frac{(c_R R+1)\exp(c_R R a^g_{\theta,\omega})+\omega \exp\left(a^g_{\theta,\omega}\frac{x_{1,\theta,\omega}^{g_1}+1}{\omega}\right)  \left(\zeta_{1,\theta,\omega} \exp\left(\frac{a^g_{\theta,\omega}}{\omega}\right) -\gamma^g_{\theta,\omega}\right)   }{\gamma^g_{\theta,\omega}-\zeta_{2,\theta,\omega}},  \\ 
        &x_{2,\theta,\omega}^{g_2}=\frac{1}{a^g_{\theta,\omega}}\log \frac{ (c_R R+1)\exp(c_R R a^g_{\theta,\omega})}{(\omega+1)\gamma^g_{\theta,\omega}-\omega \zeta_{1,\theta,\omega} -\zeta_{2,\theta,\omega}\exp\left({a^g_{\theta,\omega}}\right)}, \\
        &x_{1,\theta,\omega}^{g_2}= \frac{\omega}{a^g_{\theta,\omega}}\log  \frac{(c_R R+1)\exp(c_R R a^g_{\theta,\omega})+\exp\left(a^g_{\theta,\omega}(x_{2,\theta,\omega}^{g_2}+1)\right)  \left(\zeta_{2,\theta,\omega}\exp\left({a^g_{\theta,\omega}}\right) -\gamma^g_{\theta,\omega}\right)   }{\omega \left(\gamma^g_{\theta,\omega}-\zeta_{1,\theta,\omega}\right)}.
	\end{align*}
\end{proof}


\subsection{Proof of \Cref{prop:pl_ergodic_pi_a}} \label{subsec:proof_pl_ergodic_pi_a}

\begin{proof}
Define $V_{\theta,{\omega}}^p(\boldsymbol x)=\frac{x_1^2}{\omega} +{x_2^2}$,  and $\alpha_{\theta,{\omega}}^p=1/2$. 
Assume that $x_1=0$ and  $x_2>(1-\omega)/\omega$; which means
the new job will be  assigned to the first server. The transition probabilities of the discrete-time chain sampled at Poisson arrivals is given in \eqref{eq:tran_prob_1} and \eqref{eq:tran_prob_2}, and we calculate $ PV_{\theta,{\omega}}^p(\boldsymbol x)$ as 
\begin{align}\label{eq:PV_0}
P V_{\theta,{\omega}}^p(\boldsymbol x)&=\frac{\lambda}{\omega(\lambda+\theta_1)}+ \sum_{i=1}^{x_2}{i^2}\frac{\lambda}{\lambda+\theta_2}\left(\frac{\theta_2}{\theta_2+\lambda} \right)^{x_2-i} 
< c_R R +\sum_{i=1}^{x_2}{i^2}\frac{\lambda}{\lambda+\theta_2}\left(\frac{\theta_2}{\theta_2+\lambda} \right)^{x_2-i}.
\end{align}
We define $d_i:={\theta_i}/({\theta_i+\lambda})$ for $i=1,2$ and
\begin{align}\label{eq:exp_x_2_drift}
&\sum_{i=1}^{x_2}{i^2}\frac{\lambda}{\lambda+\theta_2}\left(\frac{\theta_2}{\theta_2+\lambda} \right)^{x_2-i}\notag \\
&= \frac{1}{(1-d_2)^2} \left( -d_2^{x_2} \left(d_2+d_2^2 \right)+d_2^2\left(x_2^2+2x_2+1 \right)+d_2\left(-2x_2^2-2x_2+1 \right)+x_2^2\right)\notag \\
&=\frac{1}{(1-d_2)^2} \left(\left( 1-d_2^{x_2}\right) \left(d_2+d_2^2 \right) + x_2^2 \left( d_2^2-2d_2+1\right) +x_2\left( 2d_2^2-2d_2\right)\right) \notag\\
&= x_2^2-\frac{2d_2}{1-d_2}x_2+\frac{\left( 1-d_2^{x_2}\right) \left(d_2+d_2^2 \right)}{(1-d_2)^2}.
 \end{align} 
From \eqref{eq:PV_0}, 
\begin{equation*}
   P V_{\theta,{\omega}}^p(\boldsymbol x)-V_{\theta,{\omega}}^p(\boldsymbol x)  +\beta_{\theta,{\omega}}^p x_2<\left(-\frac{2d_2}{1-d_2}+\beta_{\theta,{\omega}}^p \right)x_2+\frac{\left( 1-d_2^{x_2}\right) \left(d_2+d_2^2 \right)}{(1-d_2)^2} +c_R R.
\end{equation*}
Outside a finite set, we need the above equation to be non-positive; which is equivalent to 
\begin{equation*}
    \left(-2+\beta_{\theta,{\omega}}^p \frac{1-d_2}{d_2} \right)x_2+\frac{\left( 1-d_2^{x_2}\right) \left(1+d_2 \right)}{1-d_2} +c_R R \frac{1-d_2}{d_2} \leq 0.
\end{equation*} 
As $d_2<1$,
\begin{equation}\label{eq:upp_bnd_geom}
    \frac{ 1-d_2^{y} }{1-d_2}=1+d_2+\ldots+d_2^{y-1}\leq y \qquad \text{for } \; y\geq 1.
\end{equation}
Thus, 
\begin{align*}
       &\left(-2+\beta_{\theta,{\omega}}^p \frac{1-d_2}{d_2} \right)x_2+\frac{\left( 1-d_2^{x_2}\right) \left(1+d_2 \right)}{1-d_2} +c_R R \frac{1-d_2}{d_2}  \\&\leq \left(d_2-1+\beta_{\theta,{\omega}}^p \frac{1-d_2}{d_2} \right)x_2+c_R R  \frac{1-d_2}{d_2}. 
\end{align*}
By taking $\beta_{\theta,{\omega}}^p\leq d_2/2$, it suffices for the following to be non-positive,
\begin{align*}
-\frac{1-d_2}{2}x_2+c_R R   \frac{1-d_2}{d_2} \leq 0,
\end{align*} which holds for $x_2 \geq 2c_R R  /d_2$. Thus, for $x_1=0$ and $x_2 \geq \max\left(2c_R R (\lambda+\theta_2)/\theta_2, (1-\omega)/\omega\right)=2c_R R (\lambda+\theta_2)/\theta_2$, \eqref{eq:pl_drift_poly _erg_lyap} holds. The case of $x_2=0$ and non-zero $x_1$ follows same arguments and \eqref{eq:pl_drift_poly _erg_lyap} holds for $\beta_{\theta,{\omega}}^p\leq d_1/2\sqrt{\omega}$, $x_2=0$, and $x_1 \geq \max\left(2c_RR(\lambda+\theta_1)/ \theta_1, \omega-1\right)=2c_R R (\lambda+\theta_1)/\theta_1$.  We now consider the case of $x_1,x_2>0$ and ${x_1+1} \leq \omega(x_2+1)$, and note that 
\begin{equation*}
   \sqrt{ V_{\theta,{\omega}}^p (\boldsymbol x)}= \sqrt{\frac{x_1^2}{\omega} +{x_2^2}} \leq \sqrt{\frac{(x_1+1)^2}{\omega} +{(x_2+1)^2}}\leq \sqrt{\omega+1}(x_2+1).
\end{equation*} 
Hence, it suffices to find finite set $C_{\theta,{\omega}}^p $, constants $b_{\theta,{\omega}}^p $ and $\beta_{\theta,{\omega}}^p >0$, such that the following holds for $V_{\theta,{\omega}}^p (\boldsymbol x)=\frac{x_1^2}{\omega} +{x_2^2}$,
\begin{equation*} 
	\Delta V_{\theta,{\omega}}^p (\boldsymbol x) \leq -\sqrt{\omega+1}\beta_{\theta,{\omega}}^p (x_2+1)+b_{\theta,{\omega}}^p \mathbb{I}_{C_{\theta,{\omega}}^p}(\boldsymbol x).
\end{equation*}
As ${x_1+1} \leq \omega(x_2+1)$, the new arrival is assigned to the first queue and we  find the value of $\Delta V_{\theta,{\omega}}^p (\boldsymbol x) +\sqrt{\omega+1}\beta_{\theta,{\omega}}^p (x_2+1)$
using the same calculations as \eqref{eq:exp_x_2_drift}.
\begin{align}
  &\Delta V_{\theta,{\omega}}^p (\boldsymbol x) +\sqrt{\omega+1}\beta_{\theta,{\omega}}^p (x_2+1) \notag\\\notag 
  &= \frac{1}{\omega} \left((x_1+1)^2-\frac{2d_1}{1-d_1}(x_1+1)+\frac{\left( 1-d_1^{x_1+1}\right) \left(d_1+d_1^2 \right)}{(1-d_1)^2}-x_1^2\right) \\
  &-\frac{2d_2}{1-d_2} x_2+\frac{\left( 1-d_2^{x_2}\right) \left(d_2+d_2^2 \right)}{(1-d_2)^2}+\sqrt{\omega+1}\beta_{\theta,{\omega}}^p (x_2+1) \notag\\
  &=\frac{x_1}{\omega}\left(2-\frac{2d_1}{1-d_1} \right)+\frac{1-3d_1}{\omega(1-d_1)}+\frac{\left( 1-d_1^{x_1+1}\right) \left(d_1+d_1^2 \right)}{\omega(1-d_1)^2}\label{eq:drift_x_1_pos} \\
  &+(x_2+1) \left( -\frac{2d_2}{1-d_2}   + \sqrt{\omega+1}\beta_{\theta,{\omega}}^p \right)+ \frac{2d_2}{1-d_2}+\frac{\left( 1-d_2^{x_2}\right) \left(d_2+d_2^2 \right)}{(1-d_2)^2}.\label{eq:drift_x_2_pos} 
\end{align}

We next consider two different cases based on the value of $d_1$ and analyze them separately.

\textbf{One. $ 0.8\leq d_1 <1:$} We first notice that the coefficient of $x_1$ in \eqref{eq:drift_x_1_pos} is negative, as $d_1>1/2$. For $x_1\geq 1$, \eqref{eq:drift_x_1_pos} is equal to
\begin{align*}
& \frac{1}{\omega(1-d_1)} \left( (2-4d_1)x_1+1-3d_1+ (d_1+d_1^2)\sum_{i=0}^{x_1} d_1^i\right) \\
&= \frac{1}{\omega(1-d_1)} \left( (2-4d_1)(x_1-1)+ d_1^3(1+d_1)\sum_{i=0}^{x_1-2} d_1^i+d_1(1+d_1)^2+3-7d_1\right)\\
&\leq  \frac{1}{\omega(1-d_1)} \left( (2-4d_1)(x_1-1)+ d_1^3(1+d_1)(x_1-1)+d_1(1+d_1)^2+3-7d_1\right)\\
&=  \frac{1}{\omega(1-d_1)} \left( (d_1^4+d_1^3-4d_1+2)(x_1-1)+d_1(1+d_1)^2+3-7d_1\right)\\
&= \frac{-d_1^3-2d_1^2-2d_1+2}{\omega}(x_1-1)+ \frac{-d_1^2-3d_1+3}{\omega}\\
&<0,
\end{align*}
where the third line follows from \eqref{eq:upp_bnd_geom}, and the last line from the fact that when $ 0.8\leq d_1 <1$, both terms $-d_1^3-2d_1^2-2d_1+2$ and $-d_1^2-3d_1+3$ are negative. Next, we notice that \eqref{eq:drift_x_2_pos} is equal to
\begin{align*}
    &x_2\left( -\frac{2d_2}{1-d_2}   + \sqrt{\omega+1}\beta_{\theta,{\omega}}^p \right)+\sqrt{\omega+1}\beta_{\theta,{\omega}}^p +\frac{\left( 1-d_2^{x_2}\right) \left(d_2+d_2^2 \right)}{(1-d_2)^2} \\ 
     &\leq x_2\left( -\frac{2d_2}{1-d_2}   + \sqrt{\omega+1}\beta_{\theta,{\omega}}^p \right)+\frac{d_2+d_2^2 }{1-d_2}x_2+\sqrt{\omega+1}\beta_{\theta,{\omega}}^p  \\
      &= x_2\left( -\frac{2d_2}{1-d_2} +\frac{d_2+d_2^2 }{1-d_2}  + \sqrt{\omega+1}\beta_{\theta,{\omega}}^p \right)+\sqrt{\omega+1}\beta_{\theta,{\omega}}^p \\
       &= x_2\left( -d_2  + \sqrt{\omega+1}\beta_{\theta,{\omega}}^p \right)+\sqrt{\omega+1}\beta_{\theta,{\omega}}^p, 
\end{align*} where the second line follows from \eqref{eq:upp_bnd_geom}. Taking $\beta_{\theta,{\omega}}^p \leq d_2/2\sqrt{\omega+1}$, we get 
\begin{align*}
    x_2\left( -\frac{2d_2}{1-d_2}   + \sqrt{\omega+1}\beta_{\theta,{\omega}}^p \right)+\sqrt{\omega+1}\beta_{\theta,{\omega}}^p +\frac{\left( 1-d_2^{x_2}\right) \left(d_2+d_2^2 \right)}{(1-d_2)^2} \leq  -\frac{d_2}{2}x_2+\frac{d_2}{2}, 
\end{align*}which is non-positive for $x_2\geq 1$. Finally, when  $0.8\leq d_1 <1$, $x_1,x_2>0$, and ${x_1+1} \leq \omega(x_2+1)$, \eqref{eq:pl_drift_poly _erg_lyap} holds for $\beta_{\theta,{\omega}}^p \leq d_2/2\sqrt{\omega+1}$.

\textbf{Two. $  d_1 <0.8:$} Taking  $\beta_{\theta,{\omega}}^p \leq \frac{d_2}{\sqrt{\omega+1}(1-d_2)}$, we note that the coefficient of $x_2$ in \eqref{eq:drift_x_2_pos} is negative. Thus, from ${x_1+1} \leq \omega(x_2+1)$,  \eqref{eq:drift_x_1_pos} and  \eqref{eq:drift_x_2_pos},
\begin{align}
  &\Delta V_{\theta,{\omega}}^p (\boldsymbol x) +\sqrt{\omega+1}\beta_{\theta,{\omega}}^p (x_2+1) \notag\\
  &\leq\frac{x_1+1}{\omega}\left(2-\frac{2d_1}{1-d_1} \right)-\frac{1}{\omega}+\frac{\left( 1-d_1^{x_1+1}\right) \left(d_1+d_1^2 \right)}{\omega(1-d_1)^2}\notag\\
  &+\frac{x_1+1}{\omega} \left( -\frac{2d_2}{1-d_2}   + \sqrt{\omega+1}\beta_{\theta,{\omega}}^p \right)+ \frac{2d_2}{1-d_2}+\frac{\left( 1-d_2^{x_2}\right) \left(d_2+d_2^2 \right)}{(1-d_2)^2}\notag \\
  &< \frac{x_1+1}{\omega}\left(2-\frac{2d_1}{1-d_1} -\frac{2d_2}{1-d_2}   + \sqrt{\omega+1}\beta_{\theta,{\omega}}^p \right)+ \frac{2d_2}{1-d_2}+\frac{d_1+d_1^2 }{\omega(1-d_1)^2} +\frac{d_2+d_2^2 }{(1-d_2)^2}. \label{eq:upp_bnd_case_2} 
\end{align}
As $d_i={\tilde\theta_i}/({\tilde\theta_i+\tilde\lambda})$ in terms of the normalized rates, we get
\begin{equation*}
    2-\frac{2d_1}{1-d_1} -\frac{2d_2}{1-d_2}= 2-\frac{2\tilde\theta_1}{\tilde\lambda} -\frac{2\tilde\theta_1}{\tilde\lambda}=\frac{-2(\tilde\theta_1+\tilde\theta_2-\tilde\lambda)}{\tilde\lambda},
\end{equation*} which is negative from the stability condition. For $\beta_{\theta,{\omega}}^p \leq \frac{\tilde\theta_1+\tilde\theta_2-\tilde\lambda}{\tilde\lambda\sqrt{\omega+1}}$, from \eqref{eq:upp_bnd_case_2} we get
\begin{align*}
  &\Delta V_{\theta,{\omega}}^p (\boldsymbol x) +\sqrt{\omega+1}\beta_{\theta,{\omega}}^p (x_2+1) \\
  &<\frac{-(\tilde\theta_1+\tilde\theta_2-\tilde\lambda)}{\omega\tilde\lambda}(x_1+1)+ \frac{2d_2}{1-d_2}+\frac{d_1+d_1^2 }{\omega(1-d_1)^2} +\frac{d_2+d_2^2 }{(1-d_2)^2} \\
  &=\frac{-(\tilde\theta_1+\tilde\theta_2-\tilde\lambda)}{\omega\tilde\lambda}(x_1+1)+ \frac{2\tilde\theta_2}{\tilde\lambda}+\frac{\tilde\theta_1(2\tilde\theta_1+\tilde\lambda) }{\omega\tilde\lambda^2} +\frac{\tilde\theta_2(2\tilde\theta_2+\tilde\lambda)}{\tilde\lambda^2},
\end{align*}
which is non-positive for 
\begin{align*}
    x_1+1 \geq \frac{\tilde\theta_1(2\tilde\theta_1+\tilde\lambda)+\omega\tilde\theta_2(2\tilde\theta_2+3\tilde\lambda) }{\tilde\lambda(\tilde\theta_1+\tilde\theta_2-\tilde\lambda)}.
\end{align*} As $d_1<0.8$, we can see that $\tilde\lambda>\tilde\theta_1/4$; thus, 
\begin{align*}
\frac{\tilde\theta_1(2\tilde\theta_1+\tilde\lambda)+\omega\tilde\theta_2(2\tilde\theta_2+3\tilde\lambda) }{\tilde\lambda(\tilde\theta_1+\tilde\theta_2-\tilde\lambda)}<
\frac{4\tilde\theta_1(2\tilde\theta_1+\tilde\lambda)+4\omega\tilde\theta_2(2\tilde\theta_2+3\tilde\lambda) }{\tilde\theta_1(\tilde\theta_1+\tilde\theta_2-\tilde\lambda)} <\frac{4c_RR(1+2\tilde\lambda)}{\delta}\leq 4c_RR,
\end{align*} where we have used the fact that $\tilde\theta_1\geq \tilde\theta_2$, $\omega\leq c_RR$, $\tilde\theta_1+\tilde\theta_2-\tilde\lambda\geq\delta$, and $\tilde \lambda \leq 0.5-0.5\delta$ and it suffices for $x_1$ to be greater than or equal to $4c_RR$. For $x_1<4c_RR$, \eqref{eq:drift_x_1_pos} can be upper bounded as
\begin{align*}
    \frac{8c_RR}{\omega}+\frac{1-3d_1}{\omega(1-d_1)}+\frac{d_1+d_1^2}{\omega(1-d_1)^2} \leq  \frac{8c_RR}{\omega}+\frac{2}{\omega(1-d_1)^2}<\frac{8c_RR+50}{\omega},
\end{align*}where in the last inequality we have used $d_1<0.8$. From \eqref{eq:drift_x_2_pos} and taking  $\beta_{\theta,{\omega}}^p \leq d_2/2\sqrt{\omega+1}$,
\begin{align*}
    &\Delta V_{\theta,{\omega}}^p (\boldsymbol x) +\sqrt{\omega+1}\beta_{\theta,{\omega}}^p (x_2+1)\\ &\leq \frac{8c_RR+50}{\omega}+ \left( -\frac{2d_2}{1-d_2}   + \frac{d_2}{2} \right) (x_2+1)+ \frac{2d_2}{1-d_2}+\frac{\left( 1-d_2^{x_2}\right) \left(d_2+d_2^2 \right)}{(1-d_2)^2}\\
    &\leq\left( -\frac{2d_2}{1-d_2}+\frac{d_2}{2 }+ \frac{d_2+d_2^2}{1-d_2} \right)x_2 +\frac{d_2}{2 }+\frac{8c_RR+50}{\omega} \\
     &= -\frac{d_2}{2 }x_2+ \frac{d_2}{2}+\frac{8c_RR+50}{\omega}, 
\end{align*} which is negative for 
\begin{align*}
    x_2 \geq 1+\frac{16c_RR+100}{\omega d_2}.
\end{align*}

Finally, when ${x_1+1} \leq \omega(x_2+1)$ and $x_1,x_2>0$,  \eqref{eq:pl_drift_poly _erg_lyap} holds for  $\beta_{\theta,{\omega}}^p \leq \frac{1}{\sqrt{\omega+1}}\min\left( \frac{\tilde\theta_2}{2(\tilde\theta_2+\tilde\lambda)},\tilde\theta_1+\tilde\theta_2-\tilde\lambda \right)$, $x_1 \geq 4c_RR$, and $x_2 \geq 1+\frac{16c_RR+100}{\omega d_2}$.
Repeating the same arguments when $x_1,x_2>0$ and ${x_1+1} > \omega(x_2+1)$, 
\eqref{eq:pl_drift_poly _erg_lyap} holds  for $\beta_{\theta,{\omega}}^p \leq \frac{1}{\sqrt{\omega+1}}\min\left( \frac{\tilde\theta_1}{2(\tilde\theta_1+\tilde\lambda)},\tilde\theta_1+\tilde\theta_2-\tilde\lambda \right)$, $x_1 \geq  1+\frac{\omega(16c_RR^2+100)}{ d_1}$, and $x_2 \geq 4c_RR^2$.
Finally, \eqref{eq:pl_drift_poly _erg_lyap} holds with 
\begin{align*}
 &V_{\theta,{\omega}}^p(\boldsymbol x)=\frac{x_1^2}{\omega} +{x_2^2},\notag  \\
&C^p_{\theta,{\omega}}= \left\{(x_1,x_2) \in \mathcal{X}: x_i \leq \left(16c_R^2R^{3-i}+101c_RR\right)\frac{\lambda+\theta_i}{\theta_i}, i=1,2\right\},\\ 
&\beta^p_{\theta,\omega}= \min\left( \frac{\tilde\theta_2}{2(\tilde\theta_2+\tilde\lambda)\sqrt{\omega+1}},\frac{\tilde\theta_1+\tilde\theta_2-\tilde\lambda}{\sqrt{\omega+1}} , \frac{\tilde\theta_2}{2(\tilde\theta_2+\tilde\lambda)},  \frac{\tilde\theta_1}{2(\tilde\theta_1+\tilde\lambda)\sqrt\omega}\right) , \\
  &b^p_{\theta,{\omega}}=(\beta^p_{\theta,\omega}+1)\max_{\boldsymbol x \in  C^p_{\theta,\omega}}\left(\frac{(x_1+1)^2}{\omega} +{(x_2+1)^2}\right) , \\
   &\alpha^p_{\theta,\omega}=\frac{1}{2},
	\end{align*} where the fourth line holds since $PV_{\theta,\omega}^p(\boldsymbol x)\leq V_{\theta,\omega}^p(\boldsymbol y)$ for $\boldsymbol y=(y_1,y_2)$ such that $y_i=x_i+1$ for $i=1,2$.
\end{proof}

\newpage
\section{Numerical results of Model 2} \label{sec:table}

\begin{table}[ht]
	\centering
	\begin{tabular}{|c|c|c|c|}
		\hline
		$\theta^*_1$ &$\theta^*_2$ & $\omega$ &   $J(\boldsymbol \theta^*) $ \\
		\hline
		0.7   &     0.5     &   1.5  &      1.04 \\
		\hline
		0.9   &     0.5     &   1.5  &      0.82  \\
		\hline
		1.1  &     0.5     &  2  &      0.67  \\
		\hline
		1.3 &     0.5     &  2.5  &      0.56  \\
		\hline
		1.5 &     0.5     &  2.5  &      0.47 \\
		\hline
		1.7 &     0.5     &  3.5  &      0.41  \\
		\hline
		1.9 &     0.5     &  3.5  &      0.35 \\
		\hline
		0.9   &     0.7     &   1.5  &      0.70 \\
		\hline
		1.1  &     0.7    &   1.5  &      0.59  \\
		\hline
		1.3 &     0.7     &  2  &      0.51  \\
		\hline
		1.5 &     0.7     &  2  &      0.44 \\
		\hline
		1.7 &     0.7     &  2.5  &      0.39  \\
		\hline
		1.9 &     0.7   &  2.5  &      0.34 \\
		\hline
		1.1  &     0.9    &   1.5  &      0.54  \\
		\hline
		1.3 &     0.9     &  1.5  &      0.47  \\
		\hline
		1.5 &     0.9     &  1.5  &      0.42 \\
		\hline
		1.7 &     0.9     &  2  &      0.37  \\
		\hline
		1.9 &     0.9   &  2  &      0.33 \\
		\hline
		1.3 &     1.1     &  1.5  &      0.44  \\
		\hline
		1.5 &     1.1     &  1.5  &      0.39 \\
		\hline
		1.7 &     1.1     &  1.5  &      0.35  \\
		\hline
		1.9 &     1.1   &  2  &      0.32 \\
		\hline
		1.5 &     1.3    &  1.5  &      0.37 \\
		\hline
		1.7 &     1.3     &  1.5  &      0.33  \\
		\hline
		1.9 &     1.3   &  1.5  &      0.30 \\
		\hline
		1.7 &     1.5 &  1.5  &      0.32  \\
		\hline
		1.9 &     1.5   &  1.5  &      0.29 \\
		\hline
		1.9 &     1.7   &  1.5  &      0.28 \\
		\hline
	\end{tabular}
	\caption{Optimal values of weight $w$ in set $\{1.5,2,2.5,3,3.5\}$ and the corresponding average cost $J(\boldsymbol \theta^*)$ for different service rate values $(\theta^*_1,\theta^*_2) \in [0.5,0.7,\ldots,1.9]^2$. }
	\label{tab:mytable}
\end{table}

\newpage

\section{Notations used in algorithms and analysis}

\begin{center}
\begin{tabular}{|c|c|}
\hline
Notation & Description \\\hline
$\X$ & Countably infinite state space \\\hline
$\mathcal A$ & Finite action space \\\hline
$\Theta$ & General parameter space \\\hline
$c$ & Cost function \\\hline
$d$ & Dimension of state space $\X$ \\\hline
$P_{\theta}$ & Transition kernel parameterized by $\theta$\\\hline
$\nu$ & Prior distribution on $\Theta$ \\\hline
$\boldsymbol{\theta}^*$  & Unknown real parameter of the MDP \\\hline
$\pi \in \Pi$ & Stationary policy $\pi$ belonging to policy class $\Pi$ \\ \hline
$J(\theta)$ & Minimum infinite-horizon average cost in MDP $\left(\mathcal{X},\mathcal{A},c,P_{\theta}\right)$ \\\hline
$v(\boldsymbol x,\theta)$ & Relative value function and solution of ACOE for MDP $\left(\mathcal{X},\mathcal{A},c,P_{\theta}\right)$  \\\hline
$\pi^*_{\theta}$ & Optimal policy for parameter $\theta$ \\ \hline
$P_{\theta_1}^{\pi^*_{\theta_2}}$ & Transition kernel of MDP $\left(\mathcal{X},\mathcal{A},c,P_{\theta_1}\right)$ following policy $\pi^*_{\theta_2}$  \\\hline
$\boldsymbol{X}_{\theta_1,\theta_2}$ & Markov process  obtained by MDP $\left(\mathcal{X},\mathcal{A},c,P_{\theta_1}\right)$ following policy $\pi^*_{\theta_2}$\\\hline
$\pi^\epsilon_\theta$ & $\epsilon$-optimal best-in-class policy from $\Pi$ for parameter $\theta$\\ \hline
$\boldsymbol X(t)=(X_1(t),\ldots,X_d(t))$ & State of the MDP at time $t$ \\\hline
$A(t)$ & Action taken at time $t$ \\\hline
$T$ & Time horizon \\\hline
$R(T,\pi)$ & Bayesian regret of policy $\pi$ until time horizon $T$ \\\hline
$R_0,R_1,R_2$ & Terms in regret decomposition  \\\hline
$f_c(\boldsymbol x):= K \sum_{i=1}^d x_i^r$ & Upper bound for $c(\boldsymbol x)$  determind by constants $K>0$ and $r \in \mathbb N$\\\hline 
$h$ & Skip-free constant \\\hline
		$\mu_{\theta_1,\theta_2}$ & Stationary distribution of Markov process  $\boldsymbol{X}_{\theta_1,\theta_2}$ \\\hline
		$V_{\theta_1,\theta_2}^{g}$ & Geometric Lyapunov function for Markov process  $\boldsymbol{X}_{\theta_1,\theta_2}$   \\\hline
$\gamma^g_{\theta_1,\theta_2}$ & Geometric ergodicity parameter for Markov process  $\boldsymbol{X}_{\theta_1,\theta_2}$ \\ \hline
		$C^g_{\theta_1,\theta_2},b^g_{\theta_1,\theta_2}$ & Geometric ergodicity parameters for Markov process  $\boldsymbol{X}_{\theta_1,\theta_2}$   \\\hline
    		$\gamma_*^g$, $b_*^g$ & Supremum of geometric ergodicity parameters over $\Theta$\\\hline
		$C_*^g$ & Union of sets $C^g_{\theta_1,\theta_2}$ over $\Theta$ \\\hline
  $\tau_{0^d}$ & First hitting time of state $0^d$\\\hline
  $\tau_{0^d}^{(i)}$ & Length of the $i$-th recurrence time of state $0^d$ \\\hline
  		$V_{\theta_1,\theta_2}^{p}$ & Polynomial Lyapunov function for Markov process  $\boldsymbol{X}_{\theta_1,\theta_2}$   \\\hline
$s_*^p=\sup_{\theta_1,\theta_2} s^p_{\theta_1,\theta_2}$ & Sum of coefficients of $V_{\theta_1,\theta_2}^{p}$ and its supremum over $\Theta$  \\\hline
$r_*^p=\sup_{\theta_1,\theta_2} r^p_{\theta_1,\theta_2}$ & Maximum degree of $V_{\theta_1,\theta_2}^{p}$ and its supremum over $\Theta$   \\\hline
$C^p_{\theta_1,\theta_2},b^p_{\theta_1,\theta_2},\beta^p_{\theta_1,\theta_2},\alpha^p_{\theta_1,\theta_2}$ & Polynomial ergodicity parameters for Markov process  $\boldsymbol{X}_{\theta_1,\theta_2}$   \\\hline
    		$\beta_*^p$, $b_*^p$ & Supremum of polynomial ergodicity parameters over $\Theta$\\\hline
$C_*^p$ & Union of sets $C^p_{\theta_1,\theta_2}$ over $\Theta$ \\\hline
  $K_*=\inf_{\substack{\theta_1,\theta_2 \\ \boldsymbol x\in C^p_*} }K_{\theta_1,\theta_2}(\boldsymbol x)$ & Resolvent of Markov chain  $\boldsymbol{X}_{\theta_1,\theta_2}$ and its infimum over $\Theta$ and $C^p_*$ \\\hline
  $\nu_t,\theta_t$ & Posterior distribution and estimate at time $t$ \\\hline
$t_k$ & Start time of the $k$-th episode \\\hline
$\tilde t_{k+1}$ &The first time after $t_k$ that stopping criterions are triggered \\\hline
\end{tabular}
\end{center}

\begin{tabular}{|c|c|}
\hline
Notation & Description \\\hline
$T_k$ & Length of episode $k$\\\hline
$\tilde T_k$ & Time interval between $t_k$ and $\tilde t_{k+1}$ \\\hline
$N_t(\boldsymbol x,a)$ & Number of visits to $(\boldsymbol x,a)$ by $t$ and between $t_k$ and $\tilde t_{k+1}$ (for some $k$) \\\hline
$K_T$ & Number of episodes by $T$\\\hline
$K_M$ &Number of episodes triggered by the second stopping criterion by $T$\\\hline

$M^T_{\boldsymbol\theta^*}$ & Maximum $\ell_{\infty}$-norm of the state vector achieved up to or at time $T$ \\\hline
$\delta$ & Parameter ensuring the stability of studied queueing models\\\hline
$R$ & Upper bound on ratio of service rates of studied queueing models\\\hline
$t(\theta)$ & Optimal threshold in the first queueing model with parameter $\theta$ \\\hline
$\omega(\theta)$ & Optimal weight in the second queueing model with parameter $\theta$ \\\hline
$c_R$ &  Parameter of the policy class in the second queueing model\\\hline
$ \phi^p_{\theta_1,\theta_2}(i),b^i_{\theta_1,\theta_2},\beta^i_{\theta_1,\theta_2},\alpha_{C_{\theta_1,\theta_2}^p}$ & Constants used in the upper bound of ${\E}_{\boldsymbol x}[\tau_{0^d}^i]$    \\\hline
$ c_{\theta_1,\theta_2}^g, \tilde{\gamma}^g_{\theta_1,\theta_2}, \tilde{b}^g_{\theta_1,\theta_2}$ & Constants used in the upper bound of ${\Pr}_0(\tau_{0^d} >n )$  \\\hline
\end{tabular}

\end{document}